\documentclass[sn-apa]{sn-jnl}
\usepackage{graphicx}
\usepackage{multirow}
\usepackage{amsmath}
\allowdisplaybreaks
\usepackage{amssymb}
\usepackage{amsfonts}
\usepackage{amsthm}
\usepackage{mathrsfs}
\usepackage[title]{appendix}
\usepackage{xcolor}
\usepackage{textcomp}
\usepackage{tabularx}
\usepackage{subcaption}
\usepackage{manyfoot}
\usepackage{booktabs}
\usepackage{placeins}
\usepackage{algorithm}
\usepackage{algorithmicx}
\usepackage{algpseudocode}
\usepackage{listings}
\usepackage{soul}
\usepackage{mathtools}
\usepackage{stmaryrd}
\DeclarePairedDelimiter{\ib}{\llbracket}{\rrbracket}
\DeclareMathOperator*{\argmax}{argmax}
\newtheorem{definition}{Definition}
\newtheorem{theorem}{Theorem} 
\usepackage{titlesec}
\renewcommand{\figurename}{Figure}
\renewcommand{\tablename}{Table}
\usepackage[font=small,labelfont=bf,labelsep=period,justification=justified,singlelinecheck=false]{caption}
\usepackage{enumitem}
\usepackage{kantlipsum}
\usepackage{lineno}

\newcommand*{\fullnameref}[1]{\hyperref[#1]{\ref*{#1}.\ \nameref*{#1}}}
\newcommand*{\fullref}[1]{\hyperref[{#1}]{\autoref*{#1}}}
\newcommand*{\defref}[1]{\hyperref[{#1}]{Definition \autoref*{#1}}}
\newcommand{\eqnref}[1]{eq.~\eqref{#1}}

\begin{document}
\title[Article Title]{On Agentic Behavioral Modeling}
\author[1]{\fnm{Dirk} \sur{Ostwald}}
\author[2,3]{\fnm{Rasmus} \sur{Bruckner}}
\author[4]{\fnm{Franziska} \sur{Usée}}
\author[1]{\fnm{Belinda} \sur{Fleischmann}}
\author[1]{\fnm{Joram} \sur{Soch}}
\author[1]{\fnm{Sean} \sur{Mulready}}
\affil[1]{\orgdiv{Institute of Psychology}, \orgname{Otto von Guericke University Magdeburg}}
\affil[2]{\orgdiv{Institute of Psychology}, \orgname{University of Hamburg}}
\affil[3]{\orgdiv{Department of Education and Psychology}, \orgname{Freie Universität Berlin}}
\affil[4]{\orgdiv{Institute of Psychology}, \orgname{Philipps-Universität Marburg}}
\abstract{Integrating theoretical neuroscience, decision theory, and probabilistic inference offers a promising route to understanding human cognition, yet concrete methodological bridges between agentic AI models and behavioral data analysis remain formally underdeveloped. We advance this synthesis under the framework of \textit{agentic behavioral modeling} (ABM), which treats artificial agents as latent, generative hypotheses about cognitive mechanisms and evaluates them by their statistical adequacy in explaining human behavior. After outlining its conceptual foundations, we apply the framework to two minimal laboratory paradigms: a binary perceptual contrast-discrimination task and a symmetric two-armed bandit learning task. We formalize each task–agent–data system as a joint probability model, derive explicit conditional log-likelihoods for behavioral inference, validate different model variants using model and parameter recovery simulations,  and evaluate them in light of empirical data. Using these minimal examples, we provide an agent-centric interpretation of the psychometric function, derive optimal policies for both tasks, and show the equivalence between Rescorla–Wagner learning and Bayesian inference in symmetric bandits. More broadly, this work may serve as a conceptual and practical foundation for applying ABM to cognitive behavioral science.}

\keywords{Artificial agents, human behavior, statistical inference}
\maketitle

\vspace{2mm}

\small 
Corresponding author: Dirk Ostwald (dirk.ostwald@ovgu.de)

\normalsize 
\clearpage
\section{Introduction}\label{sec:introduction}

The formulation and empirical evaluation of neurocomputational theories of human behavior are central to contemporary cognitive behavioral science. It is a widely-held belief that integrating perspectives from probabilistic inference, decision theory, and neuroscience in this research area can contribute to a more principled understanding of human cognition \citep[e.g.,][]{lewandowsky2011, hsiao2024, naselaris2018, kriegeskorte2018}. Such research may, in turn, inform approaches to longstanding problems in artificial intelligence (AI), support the development of individualized interventions in computational psychiatry, and improve the general transparency and interpretability of cognitive behavioral science \citep[e.g.,][]{guest2021, hassabis2017, montague2012, series2020}. 

Both AI research and cognitive behavioral science have developed the concept of \textit{artificial agents} as central theoretical and methodological constructs. An agent is a computational system that perceives its environment through inputs, selects actions according to some internal policy, and acts on that environment to achieve specified objectives \citep[e.g.,][]{russell2010, sutton2018, botvinick2020, friston2025}. Although AI research and cognitive behavioral science share a deeply intertwined history and intellectual tradition, an important distinction between these fields lies in their fundamental goals. In AI research, the primary objective is to develop agents that outperform prior approaches and, in contemporary commercial contexts, yield deployable products. In contrast, cognitive behavioral science treats artificial agents as scientific hypotheses about the computational mechanisms of cognition. This approach implies that, from the perspective of behavioral researchers, artificial agents are latent constructs that can only be observed indirectly through participants’ behavior. Moreover, the central metric for evaluating agents in cognitive behavioral science is their relative statistical merit in explaining observed human behavior, rather than their task performance. Finally, while AI research typically prioritizes scalability, efficiency, and robustness, cognitive behavioral science places greater emphasis on psychological plausibility and, most importantly, interpretability. 

Several recent perspectives have articulated the potential benefits of more fully integrating contemporary agentic AI with cognitive behavioral research \citep[e.g.,][]{lieder2020, chandramouli2024, hess2025, lake2017}. Collectively, these accounts emphasize that combining probabilistic inference and optimization within an agent-centric framework — most commonly formalized in terms of reinforcement learning (RL) and partially observable Markov decision processes (POMDPs) — offers a promising avenue for advancing cognitive behavioral science. Although such high-level reviews are valuable for mapping the conceptual landscape and motivating this synthesis, they typically fall short of providing concrete methodological guidance or in-depth worked examples of applications in basic scientific settings. In the present work, we aim to address this gap by making this program concrete under the moniker of \textit{agentic behavioral modeling} (ABM) \citep{horvath2021, usee2025}.

Specifically, we first provide a conceptual overview of the ABM framework (\autoref{sec:conceptual-framework}). In our formulation, ABM entails an explicit decomposition of the model space into a \textit{task model}, an \textit{agent model}, and a \textit{data model}. The task model furnishes a formalized specification of the experimental paradigm. The agent model provides a representation of the hypothesized cognitive mechanisms underlying task execution and specifies the process by which the agent’s internal beliefs are transformed into goal-directed decisions. Finally, the data model supplies the statistical framework that integrates observed behavioral data with the task and agent models, enabling parameter estimation and  model comparison. In essence, then, the ABM framework constitutes a language that enables the explicit formal specification and systematic differentiation of the assumptions associated with each individual model component, as well as the rigorous formalization of their interactions.

Having provided a conceptual overview of the ABM framework, we examine its application to two deliberately simple laboratory tasks drawn from decision neuroscience and behavioral economics. The first is a binary perceptual decision-making task in which participants discriminate between the contrasts of two simultaneously presented Gabor stimuli (\autoref{sec:gabor-contrast-discrimination}). The second is a symmetric two-armed bandit learning task in which participants seek to maximize reward by identifying the more profitable of two actions (\autoref{sec:symmetric-bandit-learning}). Using these minimal and well-studied experimental scenarios, we demonstrate how such tasks can be embedded within the ABM framework and clarify the analytical consequences of doing so. To this end, we place particular emphasis on treating the interaction between task, agent, and data as a joint probability distribution in which core agent variables remain latent from the experimenter's perspective. Echoing this perspective, we derive explicit forms of the conditional log-likelihood functions that serve as the central link between agent-based cognitive models and behavioral data analysis. From a global point of view, the two case studies discussed herein can be regarded as two complementary edge cases of an RL-POMDP landscape: The first case study concerns agent-internal (perceptual) uncertainty but omits trial-by-trial state dependencies, while the second case study concerns agent-internal trial-by-trial dependencies (learning) but omits additive internal uncertainty \citep{bach2012, summerfield2012, bruckner2025}. While our case studies and behavioral findings align with established results in the literature, their in-depth study in the ABM framework also allows us to make several specific novel contributions: first, we provide a new, agent-centric interpretation of the well-known psychometric function in binary perceptual choice; second, we provide explicit derivations of optimal behavioral policies for both experimental paradigms; and third, we explicate the equivalence of Rescorla-Wagner learning rules and Bayesian inference in symmetric bandit learning. 

Notational conventions are summarized in \fullnameref{sec:notational-conventions}, and experimental details are provided in \fullnameref{sec:participants-and-procedure}. The key mathematical results are presented in the main text, while their formal derivations are deferred to \fullnameref{sec:gabor-contrast-discrimination-supplement} and \fullnameref{sec:symmetric-bandit-learning-supplement}. Throughout, we make direct reference to the implementation level by indexing the corresponding Python functions in the study's associated data repository (\url{https://doi.org/10.23668/psycharchives.21896}).

\clearpage
\section{Conceptual framework}\label{sec:conceptual-framework}

In ABM, we are concerned with the fundamental setting of individual participants performing a cognitive task. Our aim is to devise and compare different algorithmic-level theories that participants may entertain to perform well on the task, as inferred from their observable behavior. To achieve this, we partition the fundamental setting into a \textit{task model}, an \textit{agent model}, and a \textit{data model} (\fullref{fig:1}A). By the task model, we understand a probabilistic formulation of the choice environment; by the agent model, we understand an algorithmic formulation of a task solution; and by the data model, we understand the statistically embedded task and agent models. In the following, we briefly indicate what these three models typically entail, without claim to completeness or exclusivity. 

A \textit{task model} commonly comprises structural components, such as sets of \textit{states}, \textit{actions}, and \textit{rewards}, as well as dynamical components, such as \textit{state-state transitions} and \textit{action- and state-dependent reward probabilities}. From the perspective of an agent interacting with the task, the task state may comprise both indirectly observable (latent) and observable components. Often, the observable state components have distributions that are conditional on the latent state component. In the ABM framework, the task model makes the cognitive task explicit and amenable to algorithmic solution. From a practical perspective, the task model is best conceived as an abstract formulation of the program implementing the task, such as a PsychoPy \citep{peirce2007, peirce2019}, Psychotoolbox \citep{brainard1997}, or Unity \citep{brookes2020} script. Ideally, such a program is fully integrated into the ABM framework and constitutes a special agent that interacts with the human participant performing the task, minimizing programming code variance between the actual experiment and its computational model.

An \textit{agent model} is a computational system that interacts with the task in order to achieve a certain goal and serves as a hypothesis about the participants' cognitive processes when performing the task. An agent model typically embodies some knowledge of core task aspects. This is usually represented by including some task model features as properties of the agent model object. From an experimental perspective, this corresponds to carefully informing the participant about the task aspects that the agent model uses to solve the task. For example, if an action yields a reward with a given but not directly observable probability, and this probability is learned by an agent model, it is sensible to inform the participant about the probabilistic nature of the action-conditional reward (but not its precise value). On a trial-by-trial basis, agents generally have some knowledge about the current state of the environment, which is captured by their \textit{belief state} \citep[e.g.,][]{astrom1965, smallwood1973, georgee.monahan1982, puterman1994, bertsekas1995, sutton1998, krishnamurthy2016, kochenderfer2022}. Belief states are probability distributions on latent task states that can, for example, be used to infer the most likely current task state value \citep[e.g.,][]{kalman1960, briers2010, sarkka2013}. To entertain belief states, agents require a \textit{belief state update function} that determines how recent observations are integrated with the agent's previous knowledge. Based on their beliefs about the task environment, agents typically assign values to potential avenues of action \citep[e.g.,][]{pratt1995, russell2010, glimcher2014}. We describe this value allocation by means of \textit{decision value functions} that ascribe numerical values to the decisions available to the agent. Because agents are assumed to make a choice on every trial of the task, a core principle of the framework is that they choose the option that maximizes their decision value function, a process we take to be implemented by a \textit{decision function} \citep[e.g.,][]{wald1950, raiffa1967, berger1985}.

Formally, the composition of a belief state function, a decision value function, and a decision function constitutes a function that maps the agent's current knowledge state to a decision. Functions of this form are commonly referred to as \textit{policies} in the operations research and RL literature \citep[e.g.,][]{puterman1994, krishnamurthy2016}, a convention we follow here. However, in contrast to work in these fields, the focus in behavioral modeling is not necessarily on \textit{optimal policies}. Although any policy can, of course, be demonstrated to be optimal with respect to a given cost function by means of the complete class theorem \citep[e.g.,][]{wald1949, ng2000a, berger1985}, oftentimes, such considerations play no role in behavioral modeling. An example of this is the common reliance on Rescorla-Wagner-based learning algorithms with free learning rates that may or may not be good estimators of a decision value in a given scenario \citep[e.g.,][]{siegel1996, wilson2019, yau2023, soto2023}. In the current work, we explicitly consider the optimality of some agent policies but do not mean to imply that optimality considerations are a necessary prerequisite for ABM. 

A further important distinction for ABM is between \textit{stochastic} and \textit{deterministic} agents. Like stochastic policies, stochastic agents integrate some degree of internal randomness into their decision architecture, which, in the case of ABM, is not directly observable from the experimenter's perspective. In essence, given identical task histories, the decision of a stochastic agent is non-deterministic, whereas the decision of a deterministic agent is deterministic. From a behavioral modeling perspective, stochastic agents endow an ABM with an additional level of non-observable uncertainty that needs to be integrated out. In the current work, our first case study features a stochastic agent model, whereas the agent models in the second case study are fully deterministic.

The interaction between agent models and task models described thus far is a typical feature of contemporary agentic AI research \citep[e.g.,][]{brockman2016, mnih2015, bertsekas2019, sutton2018}. When applying this paradigm to behavioral research, however, we must carefully distinguish between the observable and non-observable aspects of an agent's behavior. While the aim in agentic AI research is to devise agents that, in some sense, perform optimally on a given task; in behavioral modeling, the aim is ultimately to evaluate different algorithmic theories in light of human behavioral data. In line with standard approaches in statistical inference, it is thus beneficial to view an algorithmic agent as a (deterministic or stochastic) theory of the participant's cognitive process for solving the task at hand while simultaneously exercising some modesty with respect to its universal validity, i.e., to account for our uncertainty as experimentalists. 

\FloatBarrier
\clearpage
\vspace*{\fill}
\begin{center}
\includegraphics[width=\linewidth]{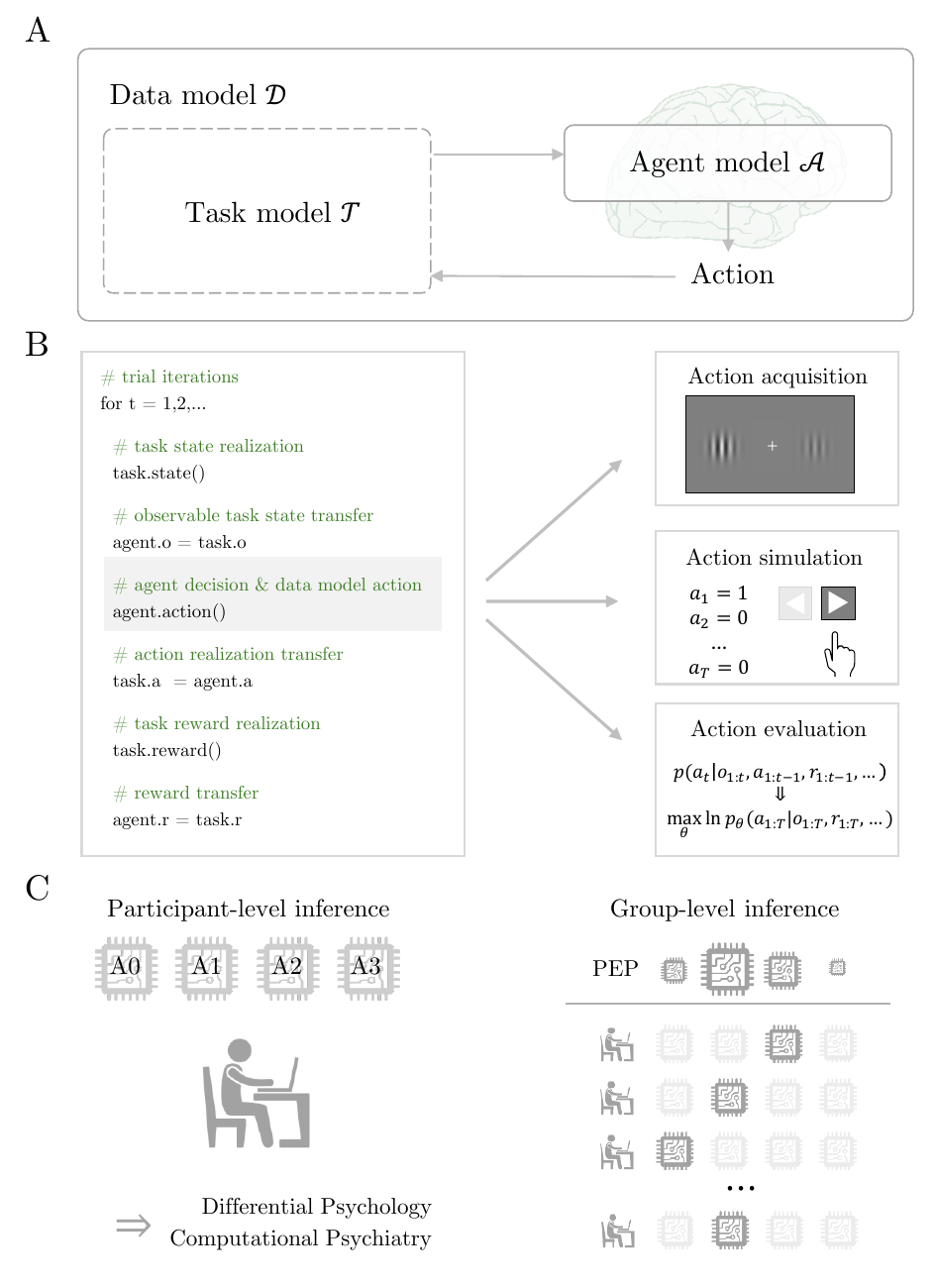}
\end{center}
\vspace*{\fill}
\clearpage

\begingroup
\noindent\rule{\linewidth}{0.4pt}\par
\captionof{figure}{\textit{(Figure on previous page.)} Agentic behavioral modeling (ABM). (A) Basic model partitioning of the ABM scenario. A \textit{task model} describes the agent choice environment, an \textit{agent model} describes an algorithmic trial-by-trial solution of the task, commonly aiming to capture human decision processes, and a \textit{data model} accounts for the latent nature of the agent's internal dynamics and their indirect observation under measurement noise. (B) Core implementation of the ABM framework. At the heart of the implementation of a given ABM scenario is a routine that realizes the trial-by-trial dynamics of the agent interacting with the task while being observed by the experimenter. In the code accompanying the case studies of \autoref{sec:gabor-contrast-discrimination} and \autoref{sec:symmetric-bandit-learning}, this routine is implemented in \textit{abm\_abm.py}. Commonly, such implementations make use of standard object-oriented programming concepts that allow to bind both data and functionality in single entities, such as `task' and `agent'. Crucially, such a core routine usually has to subserve at least three functions, as shown in the right: first, it can be used for experimental presentation to human participants; second, it can simulate artificial behavioral datasets; finally, it can evaluate conditional action probabilities which, in logarithmic form, define the ABM conditional log-likelihood. (C) Levels of inference in ABM. Agent model comparison is facilitated by ABM both at the level of individual participants, serving inferential objectives in differential psychology and computational psychiatry, and, at the group-level, providing nomothetic insights. For the latter, the ABM approach is augmented with a protected exceedance probability (PEP) Bayesian model comparison approach that allows data to be integrated across participants by evaluating null-hypothesis-adjusted group-level posterior model probabilities.}
\label{fig:1}
\endgroup

The common inferential approach to this end is to model both a theory-based mechanistic component of the phenomenon at hand and, at the same time, account for the residual data variation that may not be explained by the mechanistic model \citep[e.g.,][]{gauss1809, laplace1810}. In the statistical literature, this second component is variably referred to as \textit{observation error}, \textit{residual variance}, or \textit{noise} \citep[e.g.,][]{stigler1986, hald2007, searle1971}. In the ABM framework, we make this distinction explicit by distinguishing \textit{agent decisions} from \textit{agent actions} \citep[cf.][]{krpan2025}. Specifically, we assume that the agent's decisions are the direct consequence of the agent's belief state, decision value allocation, and decision value maximization process. Agent actions, on the other hand, are conceived as the post-decision, observable readout of agent decisions that, on a trial-by-trial basis, may or may not correspond to the agent's decisions. As is common, for example, in the application of drift-diffusion models to reaction time and accuracy data, we thus assume a post-decision error process, which we describe probabilistically \citep[cf.][]{ratcliff1978, ratcliff2008}. 

To this end, the \textit{data model} corresponds to a decision-conditional action distribution that is typically parameterized to span a continuum of action distributions. Notably, while such decision-conditional action distributions may take the mathematical form of distributions that are familiar from the RL literature, such as the \textit{softmax} \citep{verhulst1845, reverdy2016}, their role in the ABM framework is very different. In the RL literature, action distributions are often meant to represent a form of random exploration. In ABM, however, random exploration would correspond to a stochastic agent's internal decision process, which would be augmented by the possibility of deviations from it through a decision-conditional action distribution. 

Together, the architecture of task, agent, and data models corresponds to a parameterized joint probability distribution of task, agent, and action random variables across all trials of a given experiment. This joint distribution fully describes the postulated interaction between the agent and the task based on its distributional form and encoded conditional independence properties, and it accounts for the fact that internal agent variables are not directly observable from the experimenter's perspective. It is this joint probability distribution, then, that we refer to as an ABM, and we explicate ABMs as joint probability distributions in the two case studies discussed below. 

Typically, task random variables are fully observable from the perspective of the experimenter but only partially observable from the perspective of the agent. The agent variables, in turn, are fully observable from the perspective of the agent but not from the perspective of the experimenter, whereas the action random variables are observable from the perspectives of the task, the agent, and the experimenter. The fundamental goal of behavioral modeling, then, is to evaluate the conditional probability of a participant's actions given the observed task variables while accounting for possible non-observable internal agent variables. The logarithm of this conditional probability is referred to as the \textit{conditional log-likelihood}. It is usually a function of model parameters; for example, internal agent parameters such as sensory sensitivities or learning rates, as well as the post-decision noise parameter of the data model. A standard approach to estimating these parameters, based on a single participant's dataset, is to determine them using the maximum likelihood principle and to utilize the maximized conditional log-likelihood to evaluate the plausibility of different model architectures in light of the observed data \citep{fisher1922, fisher1925b, wilks1938, cox1974, aldrich1997, wilson2019}.       

There are two principal ways to evaluate the conditional log-likelihood function of an ABM. First, from a computational perspective, given an implementation of the ABM architecture, this architecture can be subjected to the experimental task variables and, instead of simulating agent actions, can be used to evaluate the probability of an observed participant action based on the task-agent interaction history (\fullref{fig:1} B). While this approach necessarily leaves the mathematical form of the conditional log-likelihood function implicit, it can nevertheless be validated using model and parameter recovery simulations and can form a legitimate basis for data analytical procedures \citep{horvath2021, usee2025}. We refer to this mathematically implicit form of the conditional log-likelihood function as an ABM's \textit{agentic conditional log-likelihood function}. Given the complexity of current ABM applications, using agentic conditional log-likelihood functions seems to be the norm; however, this is often not explicitly discussed in scientific communications. Second, from an analytical perspective, given a formulation of the ABM joint distribution, the conditional distribution of all actions given all other variables can be derived (or at least approximated) mathematically. The logarithm of this conditional distribution can then be supplemented with the experimental task variables and observed participant actions and considered a function of the models' parameters. We refer to this mathematically explicit form of the conditional log-likelihood function as an ABM's \textit{analytical conditional log-likelihood function}. Below, we exemplify the derivation and functional form of the analytical conditional log-likelihood function in both case studies. From a theoretical perspective, this serves to explicate this central function in the ABM approach by example and to validate the pragmatic computational approach that dominates the literature. Finally, it may also provide a basis for a more analytical approach to understanding the expressivity and identifiability of ABM frameworks in the future.

As previously mentioned, evaluating the conditional log-likelihood function serves two fundamental purposes in model-based scientific inference: parameter estimation and model comparison \citep[e.g.,][]{shao2003, claeskens2008, aitkin2010, berger2024}. In ABM, model comparisons typically concern both individual participant-level and group-level inference \citep[e.g.,][\fullref{fig:1}C]{cronbach1957, molenaar2004}.  First, one is often interested in determining the most plausible agent model for an individual participant and, if model and parameter identifiability permit, participant-specific model parameter estimates that can serve as measures of individual particularities. This level is of main interest in the study of psychological inter-individual differences and computational psychiatry \citep[e.g.,][]{curran2011, schwartenbeck2016, karvelis2023}. On the other hand, there remains broad interest in how far individual cognitive strategies may generalize at the group-level \citep[e.g.,][]{shadish2001, ince2021, yarkoni2022, lakens2022, meteyard2020, schnack2016, piray2025}.  

A versatile method for simultaneously addressing both goals is the \textit{protected exceedance probabilities} (PEP) framework introduced by \citet{stephan2009} and \citet{rigoux2014}. In brief, the PEP framework rests on a summary-statistics approach to hierarchical inference that uses participant-level approximations of the marginal log-likelihood as the basis for group-level Bayesian model comparison. Accordingly, it necessitates the separate likelihood-based assessment of ABMs for each participant, in line with an inter-individual differences perspective. On the other hand, it integrates participant-level information to evaluate the group posterior probability of a given model being the true but unknown generative model under the alternative hypothesis that not all model probabilities are equal. The original (unprotected) exceedance probability framework has been shown to be less outlier-prone than fixed-effects Bayesian group model comparisons based on log model evidence sums \citep{stephan2009}. In addition, the PEP framework eliminates the overconfidence bias of unprotected exceedance probabilities that arises from discounting the null hypothesis that all model probabilities are equal as a potential explanation for apparent differences in model probabilities. Since the PEP framework for group-level model inference is sufficiently general to accommodate any particular ABM scenario, we only mention it in passing in the case studies below and fully review and derive it in \fullnameref{sec:protected-exceedance-probabilities}. The corresponding implementation is available in \textit{abm\_bmc.py}. 

Having outlined the ABM conceptual framework from the task model to group-level inference, we now make it concrete with two case studies. 

\clearpage
\section{Gabor contrast discrimination}\label{sec:gabor-contrast-discrimination}
\subsection{Experimental paradigm and descriptive statistics}

As a first case study, we consider a perceptual choice task (\fullref{fig:2}). In each task trial, 
participants were presented with two Gabor patches on the left and right of a central fixation cross (\fullref{fig:2}A). One of the Gabor patches' contrasts was larger than that of the other. The participants' task was to identify the laterality of the Gabor patch with the higher contrast by pressing one of two buttons (left cursor or right cursor). The correct response yielded a reward of $+1$, while the incorrect response yielded a reward of $+0$. The laterality of the Gabor patch with the higher contrast and the absolute size of the contrast difference between the Gabor patches were chosen at random in each trial. Each of $n = 59$ participants completed 300 trials in 10 blocks of 30 trials each. On average, the participants correctly identified the laterality of the Gabor patch with the higher contrast in 82.7\% of the trials, with a standard deviation of 5.9\%, corresponding to an average reward rate of 0.827 (SD $0.059$) (\fullref{fig:2}B). Plotting the group average number of ``right'' responses as a function of the normalized contrast difference between the Gabor patches, with positive values of the contrast differences indicating a higher contrast of the Gabor patch on the right side, yielded the familiar sigmoidal psychometric function \citep[e.g][]{green1966, barlow1971, falmagne1980, treutwein1999, wichmann2001, gold2007, hanks2017, carandini2024}. Individual participants showed similar behavior, with some variability concerning the proportion of ``right'' responses toward the boundaries of the normalized contrast difference space (``stimulus-independent errors'' or ``lapses'') and the location of the correct rate of 50\% with respect to a normalized contrast difference of 0 (``bias'') (\fullref{fig:S1}). For more in-depth coverage of the experimental procedures and the descriptive analyses reported, please refer to \fullnameref{sec:participants-and-procedure}, \fullnameref{sec:gabor-contrast-discrimination-supplement}, and \textit{abm\_describe.py}. 

\begin{figure}[!htbp]
\includegraphics[width=\linewidth]{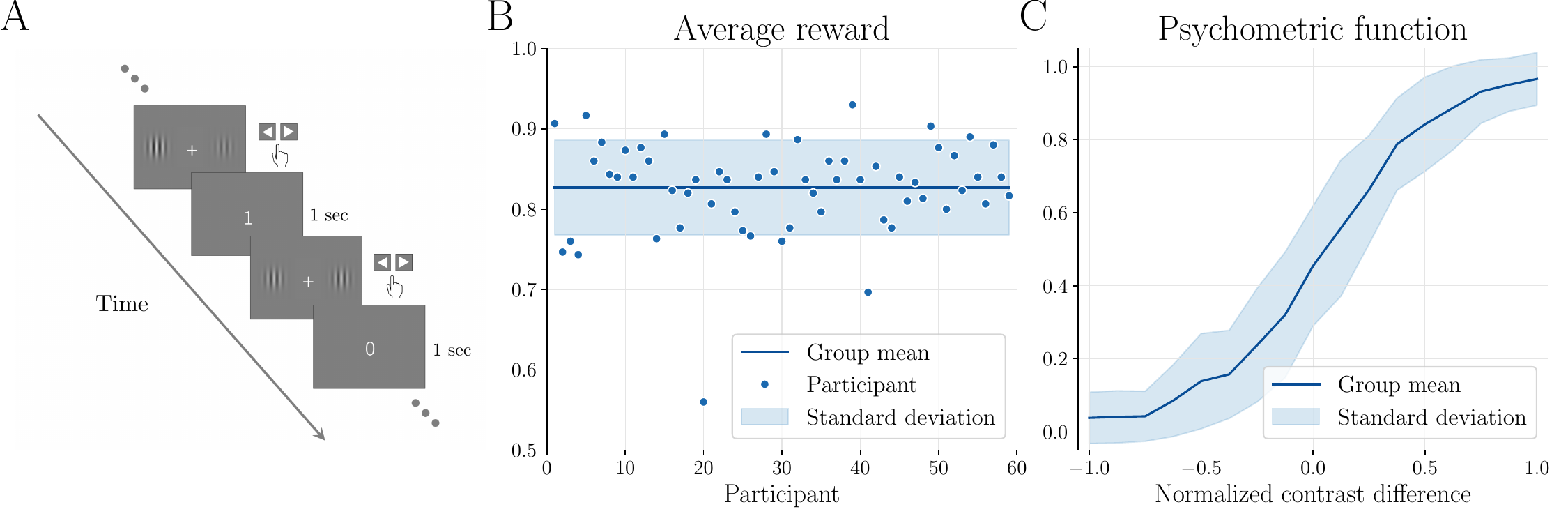}
\caption{Gabor contrast discrimination task and descriptive statistics. (A) Gabor patch contrast discrimination trial sequence. Two Gabor patches differing in contrast were presented to participants, who were required to indicate whether they perceived the contrast of the left or the right Gabor patch as higher. The correct response yielded a reward of $+1$, the incorrect response a reward of $+0$. Rewards were presented immediately after the response for 1 second. (B) Average reward obtained by the group of $n = 59$ participants. On average, participants correctly classified and thus obtained a reward of $+1$ in 82.7\% of the trials with a group standard deviation of 5.9\%. (C) Group psychometric function. The group mean and standard deviation correspond to the proportion of ``right'' responses as a function of the normalized contrast difference. Negative normalized contrast differences correspond to a physically higher contrast of the left Gabor patch, positive normalized contrast difference correspond to a physically higher contrast of the right Gabor patch.}\label{fig:2} 
\end{figure}

\subsection{Agentic behavioral model}

In the following, we formulate an ABM for the Gabor contrast discrimination task. 

\paragraph{Task model} We describe the experimental task using the tuple
\begin{equation}\label{eq:T-1}
\mathcal{T} := (S,C,A,R,p(s_t), p(c_t|s_t), p(r_t|s_t,a_t)),
\end{equation}
where
\begin{itemize}
\item $S := \{0,1\}$ is the state set, modeling the laterality of the higher contrast Gabor patch, with $0$ corresponding to the left and $1$ corresponding to the right, 
\item $C := [-\kappa,\kappa]$ with $\kappa > 0$ is the contrast difference set, modeling the contrast difference in a given trial,
\item $A := \{0,1\}$ is the action set, modeling button presses ($0$ = left, $1$ = right),
\item $R := \{0,1\}$ is the reward set,
\item the state distribution is given by
\begin{equation}
p(s_t) :=\mbox{B}\left(s_t;0.5\right),
\end{equation}
\item the state-conditional contrast difference distribution is given by  
\begin{equation}
p(c_t|s_t) := \mbox{U}\left(c_t;[-\kappa,0]\right)^{1-s_t} \mbox{U}\left(c_t; ]0,\kappa]\right)^{s_t},
\end{equation}
\item the state- and action-conditional reward distribution is given by
\begin{equation}
p(r_t|s_t,a_t) :=\mbox{B}(r_t; \ib{a_t = s_t}).
\end{equation}
\end{itemize}
From a neurocognitive perspective, the interaction between task and agent in each trial $t = 1,...,T$ thus evolves as follows (\fullref{fig:3}, task model). First, the laterality of the higher contrast Gabor patch $s_t$ is realized from $p(s_t)$ based on a Bernoulli distribution with parameter $0.5$. Then, according to $p(c_t|s_t)$, for a higher contrast on the left, the contrast difference $c_t$ is realized based on a uniform distribution on the interval $[-\kappa,0]$, or, for a higher contrast on the right, the contrast difference $c_t$ is realized based on a uniform distribution on the interval $]0,\kappa]$. For a given state, the contrast difference may thus assume a more or less pronounced level. Finally, if the agent's action $a_t$ matches the laterality of the higher contrast Gabor patch, a reward of $+1$ is realized from $p(r_t|s_t,a_t)$; otherwise, a reward of $+0$ is realized from $p(r_t|s_t,a_t)$. Note that the state evolution is independent of the agent's action, and the agent's decision is independent of the realized rewards. For implementational details of the task model, please refer to \textit{abm\_task.py}.

\begin{figure}[!t]
\includegraphics[width=\linewidth]{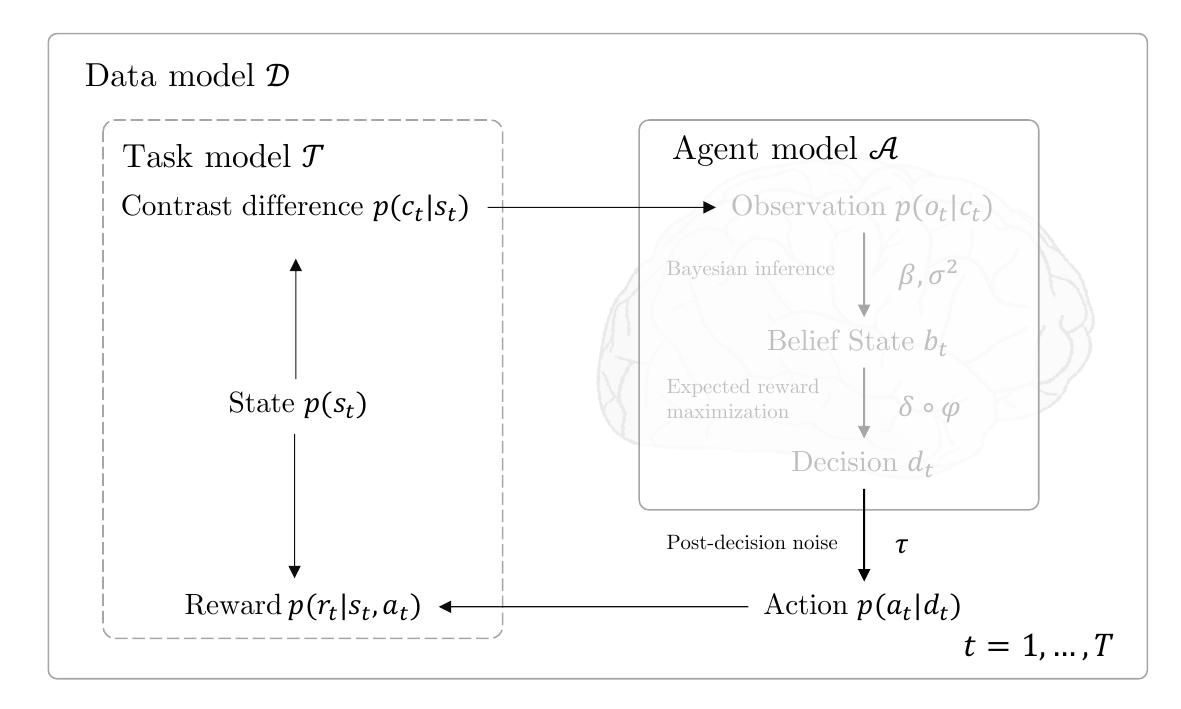}
\caption{Neurocognitive perspective of the Gabor contrast discrimination ABM.}\label{fig:3}   
\end{figure}

\paragraph{Agent model}
We consider the agent model
\begin{equation}\label{eq:A-1}
\mathcal{A} := \left(\mathcal{T}, O, B, V, D, p(o_t|c_t), \beta, \phi,\delta \right),
\end{equation}
where  

\begin{itemize}
\item $\mathcal{T}$ is the task model and represents the agent's knowledge of the task properties,
\item $O := \mathbb{R}$ is the set of observations,
\item $B$ is the set of belief states
\begin{equation}
B := \left\lbrace b | b : S \to \mathbb{R}_{\ge 0} \mbox{ with } \sum_{s \in S} b(s) = 1 \right\rbrace, 
\end{equation}
i.e., the set of all probability mass functions on $S$,
\item $V := \{0,1\}^{2}$ is the decision value set, 
\item $D := \{0,1\}$ is the decision set, modeling the choice to execute a specific button press (0 = left, 1 = right), 
\item $p(o_t|c_t)$ is the observation distribution given by 
\begin{equation}\label{eq:p_o_c}
p(o_t|c_t) := \mbox{N}(o_t;c_t+\eta,\sigma^2),
\end{equation}
for a sensory bias $\eta \in \mathbb{R}$ and a sensory variance $\sigma > 0$,  
\item $\beta$ denotes the belief state function
\begin{equation}\label{eq:beta-1}
\beta : O \to B, o_t \mapsto \beta(o_t) := 
\begin{pmatrix}
p(s_t = 0|o_t) \\
p(s_t = 1|o_t) \\
\end{pmatrix}
=:
\begin{pmatrix}
b^0_t \\
b^1_t \\
\end{pmatrix}
=: b_t,
\end{equation}
where, as shown in \autoref{thm:belief-state-1}, in the absence of sensory bias, i.e. for $\eta = 0$, 
\begin{equation}\label{eq:belief-state-1}
p(s_t|o_t) = 
\frac{\left(\Phi(0;o_t,\sigma^2) - \Phi(-\kappa; o_t, \sigma^2)\right)^{1-s_t}
      \left(\Phi(\kappa;o_t,\sigma^2) - \Phi(0; o_t, \sigma^2)\right)^{s_t}}
     {\Phi(\kappa;o_t,\sigma^2) - \Phi(-\kappa;o_t,\sigma^2)},
\end{equation}
\item $\phi$ denotes the decision value function
\begin{equation}
\phi : B \to V, b_t \mapsto \phi(b_t) := \left(v^d_t\right)_{d\in D} =: v_t
\mbox{ with }
v^d_t := \ib{b^d_t \ge 0.5},
\end{equation}\label{eq:phi-1}
\item $\delta$ denotes the agent's decision function
\begin{equation}
\delta : V \to D, v_t \mapsto \delta(v_t) := \argmax_{d \in D}  v_t =: d_t.
\end{equation}\label{eq:delta-1}
\end{itemize}
The thus formalized agent decision architecture is motivated by both neurocognitive and rational analysis perspectives. First, from a neurocognitive perspective, the agent's decision-making process on a given task trial unfolds as follows (\fullref{fig:3}, agent model). Based on an external contrast difference $c_t$, the agent makes an observation $o_t$ in its sensorium that corresponds to the external contrast difference under the additive contributions of a deterministic, trial-independent sensory bias $\eta$ and a random, trial-dependent, normally distributed, zero-centered error term of variance $\sigma^2$. From a neurobiological perspective, $\eta$ can be interpreted as a slow time-scale shift in the agent's sensory apparatus that induces the observation of a contrast difference when there is, in fact, none in the external world \citep[e.g.,][]{dosher1998, greenlee1988, bao2012, webster2015}. The random, trial-specific error term, on the other hand, can be interpreted as a fast time-scale perceptual bias, such as transient changes in visual attention or contrast gain control \citep[e.g.,][]{georgeson1985, bex2007, carandini2012, kohn2007}. Note that the assumption of such additive noise processes is not unique to the framework considered here but is implicit in many psychometric models, such as the classical signal detection framework by \citet{green1966}. Crucially, while $c_t$ is observable from the perspective of the experimenter, it is only indirectly observable from the perspective of the agent. In contrast, $o_t$ is observable from the perspective of the agent and only indirectly observable from the perspective of the experimenter. 

Upon observation of $o_t$, the agent is conceived to form a belief state $b_t$ with respect to the decision-relevant state $s_t$ by means of the belief state function $\beta$. For each $o_t$, $\beta$ returns the conditional distribution of $s_t$ as a discrete probability distribution over the two possible states of $s_t$ in the form of the two-dimensional stochastic vector $b_t$. The elements of $b_t$ are assumed here to correspond to the probabilities of $s_t = 0$ and $s_t = 1$ given $o_t$, as evaluated by probability calculus. That is, they are evaluated after integrating the intervening variable $c_t$, as detailed in \autoref{thm:belief-state-1} and documented in \eqnref{eq:belief-state-1}. Based on $b_t$, the agent's decision value function $\phi$ assigns a decision value of $1$ to the decision that aligns with the most probable value of $s_t$ and a decision value of $0$ to the other decision. Finally, using the decision function $\delta$, the agent selects the decision $d_t$ that maximizes the decision value vector $v_t$. Notably, like $o_t$, the random vectors $b_t$, $v_t$, as well as the agent's decision $d_t$ are conceived as latent from the perspective of the experimenter. However, conditional on the agent’s observation $o_t$, it is assumed that these quantities follow deterministically through the composition of the agent’s belief state, decision value, and decision functions. 

From a rational analysis perspective, the deterministic Markovian policy implemented by the agent's decision architecture maximizes the agent's expected cumulative reward across all trials of the task, under the assumption of zero post-decision noise, i.e., for $a_t = d_t$ for $t = 1,...,T$, as we show in \autoref{thm:decision-optimality-1}. Finally, as shown in \autoref{thm:conditional-decision-distribution-1}, the agent construction allows one to express the agent's observation-conditional decision distribution, i.e., the distribution of the random variable
\begin{equation}
d_t := (\delta \circ \phi \circ \beta)(o_t)
\end{equation}
in the form 
\begin{equation}\label{eq:p-dt}
p(d_t|o_t) = \ib{o_t < 0}^{1 - d_t}\ib{o_t \ge 0}^{d_t}.
\end{equation}
Using \eqnref{eq:p-dt}, we may thus equivalently express the agent model of \eqnref{eq:A-1} in full probabilistic form as 
\begin{equation}\label{eq:A-p-1} 
\mathcal{A} := \left(\mathcal{T}, O, B, V, D, p(o_t|c_t), p(d_t|o_t)\right), 
\end{equation}
which greatly simplifies the derivation of the ABM's conditional log-likelihood function in the following.

\paragraph{Data model}
We consider the data model
\begin{equation}\label{eq:D-1}
\mathcal{D} := \left(\mathcal{T}, \mathcal{A}, p(a_t|d_t) \right),
\end{equation}
where
\begin{itemize} 
\item $\mathcal{T}$ is the task model,
\item $\mathcal{A}$ is the agent model,
\item $p(a_t|d_t)$ is the decision-conditional action probability distribution 
\begin{equation}\label{eq:p-at-giv-dt}
p(a_t|d_t) := \mbox{B}(a_t;1-\tau)^{d_t}\mbox{B}(a_t;\tau)^{1-d_t} \mbox{ for } \tau \in [0,0.5].
\end{equation}
\end{itemize}
\noindent From a behavioral modeling perspective, this data model embeds the task-agent interaction in a probabilistic framework for statistical inference. In particular, the decision-conditional action probability distribution is constructed so that the parameter $\tau$ determines the probability that an observable action $a_t$ corresponds to the agent's latent decision $d_t$. 

\paragraph{Agentic behavioral model}

Taken together, the definitions of the task model's state, contrast difference, and reward distributions in \eqnref{eq:T-1}, the agent model's observation and decision distributions in \eqnref{eq:A-p-1}, and the data model's conditional action distribution in \eqnref{eq:D-1} imply a joint probability distribution on $s_t$,$c_t$,$o_t$,$d_t$,$a_t$, and $r_t$ across all trials $t = 1,...,T$. In particular, the specific forms of the aforementioned distributions encode conditional independence assumptions for these random variables that render their joint distribution expressible in the form  
\begin{multline}\label{eq:abm-1}
p(s_{1:T}, c_{1:T}, o_{1:T},d_{1:T},a_{1:T}, r_{1:T})   
\\
= 
\prod_{t=1}^T p(s_t)p(c_t|s_t) p(o_t|c_t)p(d_t|o_t)p(a_t|d_t)p(r_t|a_t,s_t).    
\end{multline}
Crucially, this joint distribution encodes the trial-by-trial interaction between the task and the agent, as well as the observation of this process by the experimenter. As such, this joint distribution is what we refer to as an ABM for the Gabor contrast discrimination task.

\paragraph{Analytical conditional log-likelihood function} As shown in \autoref{thm:conditional-log-likelihood-function-1}, the conditional log-likelihood function of the ABM defined by \eqnref{eq:abm-1} takes the form
\begin{multline}\label{eq:llh-1} 
\ell: \mathfrak{S} \times \mathfrak{E}  \times \mathfrak{T}  \to \mathbb{R}, 
(\sigma,  \eta,  \tau) \mapsto \ell(\sigma, \eta, \tau) 
\\ 
:= \sum_{t = 1}^T \ln \left(\mbox{B}(a_t;1-\tau)\Phi(0;c_t + \eta,\sigma^2) +\mbox{B}(a_t;\tau)(1-\Phi(0;c_t + \eta,\sigma^2))\right).
\end{multline}
Here, $\mathfrak{S}$, $\mathfrak{E}$, and $\mathfrak{T}$ denote the parameter space axes, which, in line with the analysis of the model's face validity (cf. \fullref{sec-model-validation-1}), are constrained to $\mathfrak{S} := ]0,1]$, $\mathfrak{E} := [-0.5,0.5]$, and $\mathfrak{T} := [0,0.5]$. A sampling-based validation of the sum terms of the function is provided in \fullref{fig:S4}; exemplary profile functions of \eqnref{eq:llh-1} are visualized in \fullref{fig:S5}.

\paragraph{Agentic conditional log-likelihood function} When simulating agent behavior in a data generative setting, the agent's internal observation distribution (\eqnref{eq:p_o_c}) is sampled to generate the internal agent observation $o_t$. Depending on the outcome of the sampling, the internal observation can thus considerably deviate from the sum of the external contrast difference $c_t$ and the perceptual bias parameter $\eta$. When studying the agent's face validity, such trial-by-trial internal agent variability is one of the generators of action variability. However, when evaluating the conditional log-likelihood function in an agentic manner, i.e., by simulating the agent's internal processes and evaluating an observed action log probability given these processes, reliance on a single observation of $o_t$ is necessarily error-prone and, for fixed experimental and action variables, results in the undesirable feature of variable conditional log-likelihood values. The analytical conditional log-likelihood function of \eqnref{eq:llh-1} eschews this by appropriately marginalizing out all possible values of $o_t$. Specifically, evaluating the contrast difference-conditional action probability on a given trial in the derivation of \eqnref{eq:llh-1} entails the analytical evaluation of  
\begin{multline}\label{eq:agentic-cllh-1}
p(a_t|c_t) 
= \mbox{B}(a_t;1 - \tau)\int_{o_t} N(o_t; c_t + \eta, \sigma^2)\ib{o_t < 0}  
\\ +  \mbox{B}(a_t;\tau)\int_{o_t} N(o_t; c_t + \eta, \sigma^2)\ib{o_t \ge 0}  
\end{multline}
for $t = 1,...,T$. From a computational perspective, the integral terms in \eqnref{eq:agentic-cllh-1} can be approximated using their Monte Carlo estimators \citep{hammersley1964, robert2004}
\begin{equation}\label{eq:monte-carlo-1}
\int_{o_t} N(o_t; c_t + \eta, \sigma^2)\ib{o_t < 0}
\approx \frac{1}{n_s}\sum_{i=1}^{n_s}\ib{o_t^{(i)} < 0} 
= \frac{1}{n_s}\sum_{i=1}^{n_s}\left(\delta \circ \phi \circ \beta \right) \left(o_t^{(i)} \right) 
\end{equation}
and
\begin{equation}\label{eq:monte-carlo-2}
\int_{o_t} N(o_t; c_t + \eta, \sigma^2)\ib{o_t \ge 0}
\approx \frac{1}{n_s}\sum_{i=1}^{n_s}\ib{o_t^{(i)} \ge 0} 
= 1 - \frac{1}{n_s}\sum_{i=1}^{n_s}\left(\delta \circ \phi \circ \beta \right) \left(o_t^{(i)} \right),
\end{equation}
where
\begin{equation}
o_t^{(1)}, ..., o_t^{(n_s)} \sim N\left(c_t + \eta, \sigma^2\right)
\end{equation}
and $n_s$ denotes the number of samples. This means that evaluating the agentic conditional log-likelihood function must rely  on the average of a pre-specified number of $n_s$ transformed internal agent observations $\left(\delta \circ \phi \circ \beta \right) \left(o_t^{(i)} \right)$. In \fullref{fig:S6}, we visualize the variability entailed by this numerical integration approach for evaluating the conditional log-likelihood function with a few samples. 

More generally, the computational burden involved in evaluating the agentic conditional log-likelihood entails an approximately 400-fold decrease in evaluation speed compared to using its analytical counterpart (cf. \fullnameref{sec:gabor-contrast-discrimination-agent-analysis} and \textit{abm\_likelihoods.py}). Having extensively validated our analytical results using simulations, we performed all validation and evaluation analyses reported below using the analytical conditional log-likelihood function form of \eqnref{eq:llh-1} as implemented in \textit{abm\_llh.py}.     

\begin{table}[t]
\centering
\renewcommand{\arraystretch}{1.5}
\begin{tabularx}{\textwidth}{|l|X|}
\hline
A0
& Random null model ($\tau := 0.5$)
\\
&
$\ell(\cdot) 
:= \sum_{t = 1}^T \ln \mbox{B}(a_t;0.5)$
\\\hline
A1
& Perceptual bias-free model ($\eta := 0$)
\\
&
$\ell(\sigma,\tau) 
:= \sum_{t = 1}^T \ln \Big(\mbox{B}(a_t;1-\tau)\Phi(0;c_t,\sigma^2) + \mbox{B}(a_t;\tau)\left(1-\Phi(0;c_t,\sigma^2)\right)
\Big)$
\\\hline
A2
& Full model as defined in \eqnref{eq:abm-1}
\\
&
$\ell(\sigma,\eta,\tau) 
:= \sum_{t = 1}^T \ln \Big(\mbox{B}(a_t;1-\tau)\Phi(0;c_t + \eta,\sigma^2) + \mbox{B}(a_t;\tau)\big(1-\Phi(0;c_t + \eta,\sigma^2)\big)
\Big)$
\\
\hline
\end{tabularx}
\vspace{6pt}
\caption{Intuitions, parameter constraints, and conditional log-likelihood functions of the ABM variants applied in the Gabor contrast discrimination case study.}
\label{tab:abm-variants-1}
\vspace{-8mm}
\end{table}

\paragraph{Agent model variants} 

For evaluating the observed data in light of the agent model \eqref{eq:abm-1}, we consider three nested variants of \eqnref{eq:abm-1} and, in turn, \eqnref{eq:llh-1} (cf. \autoref{tab:abm-variants-1}). Specifically, we consider \eqnref{eq:abm-1} with the action noise parameter fixed at $\tau := 0.5$ as a random null model A0. Based on \eqnref{eq:p-at-giv-dt}, this specification entails that the agent's decisions are rendered irrelevant with respect to the agent's actions, which are determined uniformly at random from the set $\{0,1\}$. Second, we consider a perceptual bias-free variant of \eqnref{eq:abm-1}, for which we set $\eta := 0$ and which we label A1. Finally, we consider \eqref{eq:abm-1} with none of its three parameters pre-specified, which we label A2. For the implementation of the respective model conditional log-likelihood functions, please refer to \textit{abm\_llh.py}.

\subsection{Parameter estimation and model comparison}\label{sec:parameter-estimation-and-model-comparison} 

To obtain parameter estimates and approximations of the respective model's marginal conditional log-likelihood, we used SciPy's minimization function, \textit{minimize}, with the Nelder-Mead algorithm as the optimization method and the parameter constraints noted for the analytical conditional log-likelihood functions \citep[]{gao2012, virtanen2020}. To reduce the risk of identifying local rather than global minimizers over the respective parameter space of the conditional log-likelihood function, the initial parameter estimates were selected to be evenly spaced within the respective parameter domains, and the minimization was repeated for a predefined number of initial parameter estimate values. For all analyses, we specified $\lceil 100^{1/k} \rceil$ initial parameter estimate values for each dimension $k$ for all models. Across all repetitions of the parameter estimation procedure, only the parameter estimates achieving the smallest negative conditional log-likelihood were regarded as maximum likelihood (ML) estimates \citep{wilson2019, horvath2021, usee2025}. For further implementation details on the parameter estimation and model comparison procedure, please refer to \textit{abm\_estimation.py}.

To evaluate how plausible the different behavioral models are relative to one another, given the experimental or simulated data, we first evaluated all model- and participant-specific Bayesian information criterion (BIC) scores according to (\citet{schwarz1978}, cf. \fullnameref{sec:bic})
\begin{equation}\label{eq:bic-main}
\text{BIC}_{ij} := \ell_{ij}(\hat{\theta}) - \frac{k_j}{2} \ln T,
\end{equation}
where $i = 1,...,n$ indexes datasets, $j = 1,...,m$ indexes models, $\ell_{ij}$ denotes the conditional log-likelihood function of model $j$ for dataset $i$, $\hat{\theta}$ denotes the model- and dataset-specific maximum likelihood parameter estimate, $k_j$ denotes the number of free model parameters, and $T$ denotes the total number of trials per dataset. Note that according to \eqnref{eq:bic-main}, higher BIC values indicate higher model plausibilities. As a first group-level indicator for the most plausible model, we summed the model-specific BIC values across all datasets, corresponding to the approximated log joint conditional probability of all datasets under the given model and the assumption of pairwise dataset independence. We then subjected the BIC values of all agents and datasets to the PEP random-effects Bayesian model selection procedure. For each model, this method yields a PEP $\varphi_j \in [0,1]$ with $\sum_{j=1}^m\varphi_j= 1$, representing the group-level probability that the given model is the most probable one among all models under consideration, given the observed data. As previously mentioned, the fine-grained interpretation of a model's PEP is documented in \fullnameref{sec:protected-exceedance-probabilities}.  

\subsection{Model validation}\label{sec-model-validation-1}

To validate the ABM framework for the Gabor contrast discrimination task, we performed simulations that assessed its face validity, recoverability, and parameter identifiability. To this end, we simulated datasets using the A0, A1, and A2 variants of the ABM encoded by \eqnref{eq:abm-1} in a generative setting that mimicked the experimental scenario regarding the number of participants, experimental blocks per participant, and experimental trials per block. For the parameter-free agent A0, the simulated data were generated without further specifications. For models A1 and A2, we opted for a parameter scope that allows the models to express the experimentally observed effects and a parameter coverage that balances computational efficiency with theoretical insight. We therefore decided on a resolution of nine evenly spaced parameter values for each component of the model-specific parameter vector, which we set to $\sigma \in \{0.1$, $0.2$, $0.3$, $0.4$, $0.5$, $0.6$, $0.7$, $0.8$, $0.9\}$, $\eta \in \{-0.5$, $-0.375$, $-0.25$, $-0.125$, $0$,  $0.125$,  $0.25$,  $0.375$,  $0.5\}$, and $\tau \in \{0$, $0.0625$, $0.125$,  $0.1875$,  $0.25$,  $0.3125$,  $0.375$,  $0.4375$,  $0.5\}$.  This resulted in 81 true but unknown parameter values for the two-dimensional model A1 and 729 true but unknown parameter values for the three-dimensional model A2. For each parameter setting, we then generated $n = 59$ datasets comprising 10 blocks and 30 trials per block. For each simulated dataset, we evaluated the group psychometric function as the primary descriptive statistic and assessed the A0, A1, and A2 variants of the ABM \eqnref{eq:abm-1} using the procedures described in \fullref{sec:parameter-estimation-and-model-comparison}. For implementation details, please refer to \textit{abm\_validate.py} and \textit{abm\_validation.py}. \fullref{fig:4} visualizes the model validation results.

\paragraph{Model face validity} To validate the formulation and implementation of the ABM and to explore its behavioral repertoire, we evaluated the effect of each of the three parameters of variant A2 on the group psychometric function (\fullref{fig:4}A). 

The first subpanel shows the effect of varying the sensory variance parameter $\sigma$ on the functional relationship between the normalized contrast difference $c$ and the observed proportion of taking action $a = 1$ (``right'') for a fixed sensory bias parameter of $\eta = 0$ and in the absence of post-decision noise ($\tau = 0$). Increasing the sensory variance at a constant normalized contrast difference increases behavioral variability, i.e., the proportion of the complementary action. At low sensory variance ($\sigma = 0.1$), the psychometric function assumes a more step-function-like appearance; whereas for high sensory variance ($\sigma = 0.9$), the proportion of objectively correct actions increases almost linearly. Between these extremes, the normally distributed internal agent observations generate the familiar sigmoidal form of psychometric functions. Note that in this simulation, the proportion of ``left'' and ``right'' actions at a contrast difference of $c = 0$ is 0.5, corresponding to the absence of a sensory bias ($\eta = 0$). 

The second subpanel of \fullref{fig:4}A shows the effect of varying the sensory bias parameter $\eta$ for a low sensory variance parameter of $\sigma = 0.2$ and in the absence of post-decision noise ($\tau = 0$). Setting the sensory bias parameter to positive or negative values shifts the agent psychometric function to the left or right, as a non-zero sensory bias requires negative or positive contrast differences to result in an equal proportion of ``left'' and ``right'' actions. 

The third subpanel of \fullref{fig:4}A shows the effect of varying the post-decision noise parameter between $\tau = 0$, corresponding to no post-decision noise and the scenario in which the agent's actions equal the agent's decisions, and $\tau = 0.5$, corresponding to a full post-decision noise scenario, in which the agent's actions are independent of the agent's decisions and are distributed uniformly at random, on the psychometric function for $\sigma  = 0.2$ and $\eta = 0$. Crucially, the post-decision noise parameter affects the proportion of ``left'' and ``right'' actions independently of the contrast difference $c$, as is most evident for the extreme values $c = -1.0$ and $c = 1.0$. Taken together, the trial-by-trial action generating architecture of ABM variant A2 is able to reproduce the familiar parameter-dependent characteristics of psychometric functions at the aggregate level.

\paragraph{Model recovery} To assess the degree to which the three variants of the ABM can be reliably differentiated by our model comparison framework, we performed several model recovery analyses. For each of the three ABM variants, we analyzed the data generated by the given ABM variant using all three ABM variants as analytical models. For the generative parameter-free model A0, we obtained the PEPs $\varphi_{A0} = 1.00$, $\varphi_{A1} = 0.00$, and $\varphi_{A2} = 0.00$, indicating that variant A0 can be reliably recovered and offers the most parsimonious explanation for a fully random action set. The model recovery properties of ABM variants A1 and A2 depend on the true but unknown generative parameter values of these models. We visualize these in \fullref{fig:4}B. 

The first subpanel of \fullref{fig:4}B depicts the PEPs for the three model variants as generative model parameter-dependent pie charts. At a given parameter location in the generative model's parameter space, the pie chart shows the relative PEP shares of ABM variants A0, A1, and A2 as light blue, standard blue, and dark blue, respectively. For none of the generative parameter values of ABM variant A1 does ABM variant A2 take a major share, indicating that the penalty incurred for the additional parameter of this variant is sufficient to suppress its ability to score highly in the model comparison. With respect to the sensory noise and action noise parameters, ABM variant A1 exhibits reliable recoverability up to an action noise level of approximately $\tau \approx 0.35$, with somewhat lower action noise levels already hampering the recoverability of variant A1 at very high sensory noise and therefore relatively shallow psychometric functions (cf. first subpanel of \fullref{fig:4}A). Notably, for high-post-decision noise scenarios, the most plausible model variant is A0, which offers a more parsimonious explanation of the respective simulated datasets due to its parameter-freeness. In summary, model variant A1 exhibits sensible model recoverability properties in the context of the ABM variants under study. 

The second and third subpanels of \fullref{fig:4}B depict the model recoverability of variant A2 for the subspaces $\mathcal{S} \times \mathcal{E}$ and $\mathcal{E} \times \mathcal{T}$ of its parameter space for constant generative parameters of $\tau = 0.0625$ and $\sigma = 0.1$, respectively. With respect to the variation of its $\sigma$ and $\eta$ parameters, model variant A2 exhibits good recoverability in this low post-decision noise scenario. In addition, when mimicking model variant A1 by estimating $\hat{\eta} \approx 0$, the more parsimonious model A1, which can achieve a similar level of modeling accuracy, takes the largest share of PEP. This is also true for small positive or negative values of sensory bias ($\eta = 0.125$ and $\eta = -0.125$), particularly if the sensory variance parameter is large ($\sigma > 0.5$). Finally, with respect to the variation of its parameters $\eta$ and $\tau$, the ABM variant A2 also exhibits reasonable recoverability properties. Again, for a generative parameter of $\eta = 0$, ABM variant A1 assumes the highest PEP values, while for action noise levels greater than $\tau \approx 0.35$, ABM variant A0 outperforms variant A2 in explaining the data generated by itself.

\paragraph{Parameter recovery} To gain insight into the parameter identifiability properties of the ABM \eqnref{eq:abm-1}, we assessed the pairwise interactions of the identifiability of a given parameter component under the variation of a second parameter component while fixing the third parameter component for ABM variant A2 (\fullref{fig:4}C-E). The two subpanels of \fullref{fig:4}C visualize the recoverability of the sensory variance and sensory bias parameters as a function of the action noise parameter for a sensory bias of $\eta = 0$ (upper subpanel) and a sensory variance parameter of $\sigma = 0.2$ (lower subpanel). The $x$-axis in these subpanels represents the true but unknown generative parameter, while the $y$-axis represents the respective estimate. The colored lines depict the group average ML estimates and their standard errors. The gray diagonal line represents exact parameter identification. In the absence of action noise and for the current complementary parameters, both the sensory variance and the sensory bias parameters can be reliably identified, with a slight exception for large generative sensory variance parameters of $\sigma > 0.7$. This is in line with the descriptive statistics of \fullref{fig:4}A, which indicate that in the absence of action noise and for large sensory variance parameters, the ABM behavior starts to resemble that of a high action noise scenario, rendering the respective dataset non-indicative with respect to the sensory variance parameter value. Increasing the action noise parameter to $\tau = 0.5$ naturally hampers the identifiability of both $\sigma$ and $\eta$.    

\FloatBarrier
\clearpage
\vspace*{\fill}
\begin{center}
\includegraphics[width=\linewidth]{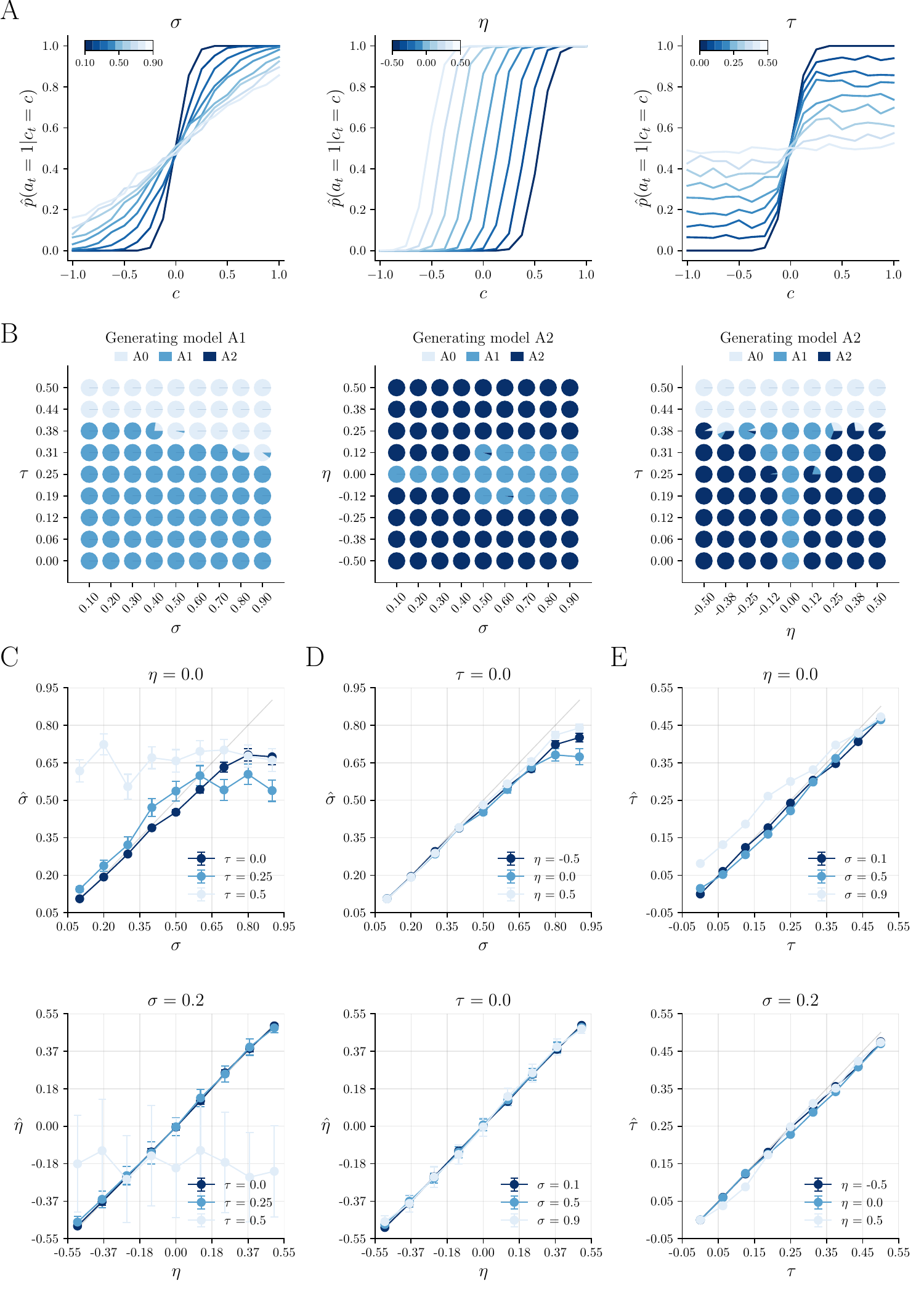}
\end{center}
\clearpage

\begingroup
\noindent\rule{\linewidth}{0.4pt}\par
\captionof{figure}{\textit{(Figure on previous page.)} Model validation for the Gabor contrast discrimination ABM. (A) Model face validity. From left to right, the subpanels visualize the effect of varying one of the parameters of ABM variant A2 on the group psychometric function in isolation. For the effect of $\sigma$ (left subpanel), the complementary parameter settings are $\eta = 0$, $\tau = 0$, for the effect of $\eta$ (center subpanel), the complementary parameter settings are $\sigma = 0.2$ and $\tau = 0$, and for the effect of $\tau$ (right subpanel), the complementary parameter settings are $\sigma = 0.2$ and $\eta = 0$. (B) Model recovery. From left to right, the subpanels visualize the PEP fractions attained by the data-analytical ABM variants A0, A1, and A2 for data generated  by A1 (left subpanel) and A2 (center and right subpanel) as a function of the generating model's parameter configuration in pie chart form. The complementary parameter of A2 was set to $\tau = 0.0625$ for the center subpanel and to $\sigma = 0.1$ for the right subpanel. (C) Recoverability of $\sigma$ and $\eta$ as a function of $\tau$ for $\eta = 0$ (upper subpanel) and $\sigma = 0.2$ (lower subpanel). (D) Recoverability of $\sigma$ as a function of $\eta$ for $\tau = 0$ (upper subpanel) and recoverability of $\eta$ as a function of $\sigma$ for $\tau = 0.0$ (lower subpanel). (E) Recoverability of $\tau$ as a function of $\sigma$ for $\eta = 0.0$ (upper subpanel) and as a function of $\eta$ for $\sigma = 0.2$ (lower subpanel).}
\label{fig:4}
\endgroup

This results in virtually identical group average estimates across the respective generative parameter spaces and, in the case of $\eta$, a marked increase in estimation variance and negative bias.

The two subpanels of \fullref{fig:4}D visualize the dependency of the identifiability of the sensory variance (upper subpanel) and the sensory bias parameters (lower subpanel) on each other in the absence of action noise. Here, increasing the sensory bias parameter in both the positive and negative directions results in a slight underestimation of the sensory variance parameter (upper subpanel). However, varying the sensory variance parameter in the absence of action noise leaves the identifiability of the sensory bias parameter virtually unchanged (lower subpanel). 

The subpanels of \fullref{fig:4}E visualize the identifiability of the action noise parameter under variation of the sensory variance parameter for a constant sensory bias parameter of $\eta = 0$ (upper subpanel) and under variation of the sensory bias parameter for a constant sensory variance parameter of $\sigma = 0.2$ (lower subpanel). In both cases, the identification of the action noise parameter is fairly accurate. However, there is some systematic over- and underestimation across the range of $\tau$. At constant sensory bias, low sensory variance parameters yield a minor underestimation of the action noise parameter, indicating that a minor proportion of the action noise-induced action variance is actually allocated to the sensory variance parameter. This is also in line with the reverse effect observed in the upper subpanel of \fullref{fig:4}C. At the highest level of sensory noise, however, the effect of action noise is overestimated, reflecting the increasing similarity between high sensory variance and high action noise datasets, as well as the relatively small proportion of contrast difference space on which these two parameters act discriminatorily on the data. Finally, varying the sensory bias parameter within its range does not affect the identifiability of the action noise parameter. However, for the selected sensory variance parameter of $\sigma = 0.2$, a slight and systematic underestimation of the action noise parameter can be observed. Taken together, for the parameter scenarios discussed, the ABM variant A2 parameters are well identifiable, with some minor interactions in the context of low action noise and a breakdown of identifiability for high action noise.

\subsection{Model evaluation}\label{sec-model-evaluation-1}

Upon model validation, we evaluated the ABM variants based on the experimental data. \fullref{fig:5} visualizes the group-level results. The group cumulative BIC was highest for the ABM variant A2 (\fullref{fig:5}A) with an associated PEP of 0.84 (A1: 0.16, A0: 0.00). At the participant-level, the BIC values of model variants A2, A1, and A0 were maximal for 30, 28, and 1 of the participants, respectively (\fullref{fig:S1}). The parameter estimates $\hat{\sigma}$ for the sensory noise parameter for datasets best explained by ABM variants A1 and A2 varied between 0.1 and 0.6, with one outlier at 0.8 (\fullref{fig:5}C). For datasets best explained by ABM variant A2, the parameter estimates $\hat{\eta}$ for the perceptual bias parameter formed two clusters of opposite signs between $\pm$ 0.1 and $\pm 0.2$, with few outliers and a slight surplus of participant-level estimates for negative over positive perceptual biases. The parameter estimates $\hat{\tau}$ for the post-decision noise parameter varied between 0.00 and 0.02, with a higher degree of variability for datasets better explained by ABM variant A1 than A2. Taken together, the results indicate that, in the evaluated model space, the ABM variant A2 has a slight overall advantage in explaining the observed data. However, at the level of individual participants, the additional inclusion of the perceptual bias parameter was supported by the data for only approximately half of the participants.   

\begin{figure}[!htbp]
\center
\includegraphics[width=\linewidth]{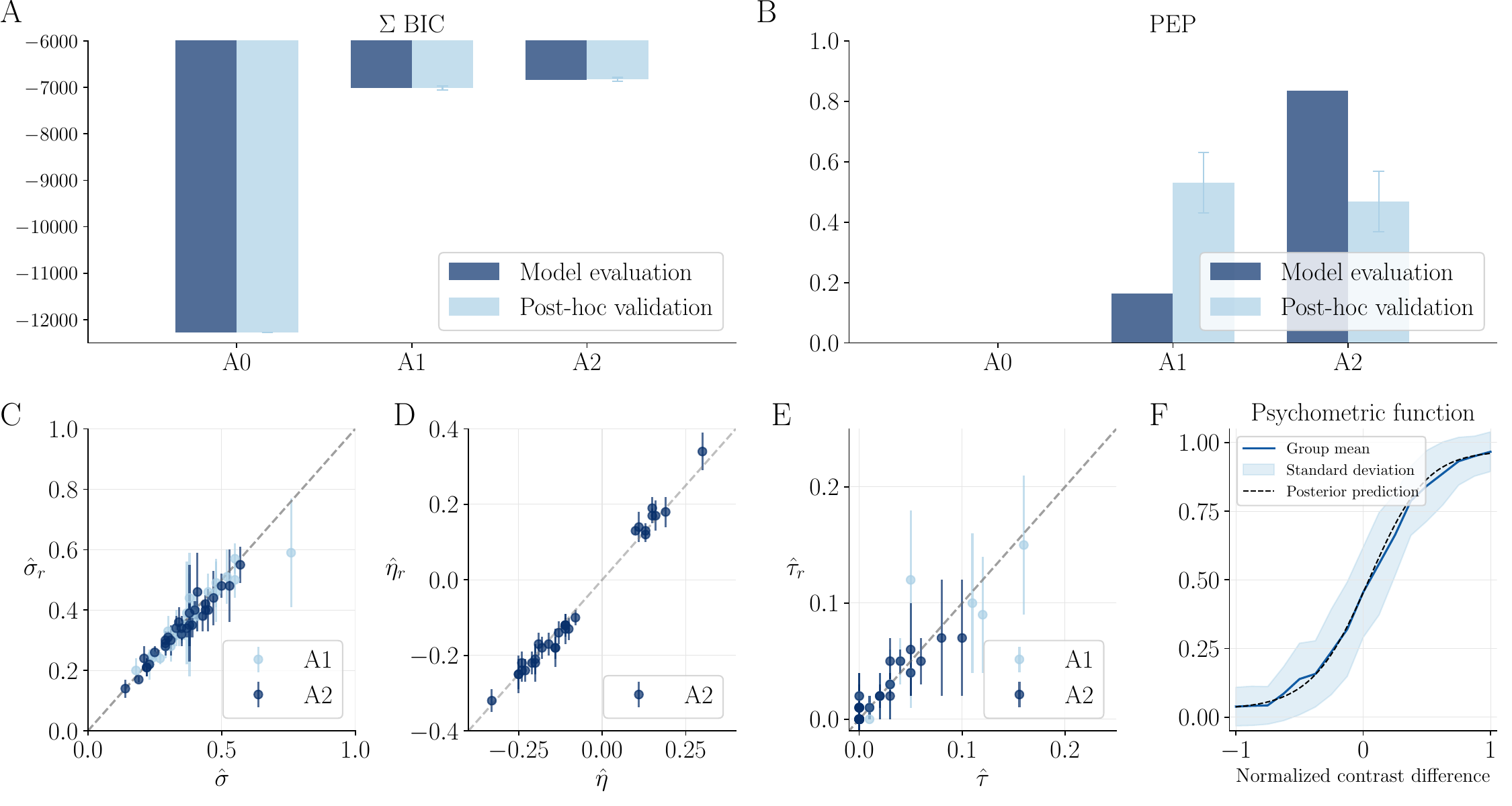}
\caption{Model evaluation and post-hoc model validation of the Gabor contrast discrimination case study. (A) Group-cumulative BIC values, corresponding to the sums of the participant-specific BIC values for each ABM variant. Bar colors indicate the results from the experimental data and the simulated post-hoc validation data. Error bars indicate the standard deviation across post-hoc validation simulations. (B) PEPs for each ABM variant. As in (A), bar colors indicate the results from the experimental data and the simulated post hoc validation data. Error bars indicate the standard deviation across post-hoc validation simulations. (C), (D), (E) Parameter estimation and post-hoc recovery results for ABM variants indicated in the subpanel's legend. Parameter estimates $\hat{\sigma}, \hat{\eta}$, and $\hat{\tau}$ based on the experimental data are displayed on the $x$-axes. The recovered average parameter estimates $\hat{\sigma}_r$, $\hat{\eta}_r$, and $\hat{\tau}_r$ and their standard deviations across post-hoc simulations are shown on the $y$-axes. (F) Group mean experimental data as in \autoref{fig:2}C and group mean posterior predictive psychometric function.}\label{fig:5}   
\end{figure}

\subsection{Post-hoc model validation}\label{sec-posthoc-model-validation-1}

After identifying, for each participant, the most plausible ABM variant accounting for the experimentally observed data, we performed post hoc model validation by simulating 100 group-level datasets. These datasets were generated on the basis of the participant-specific most plausible models and their corresponding parameter estimates. We then subjected the generated datasets to parameter estimation and model comparison to evaluate the average post-hoc model and parameter recoverability performance. The results of these analyses are integrated in \fullref{fig:5}. Like for the experimental data, the post-hoc simulated average group cumulative BIC was highest for ABM variant A2 (\fullref{fig:5}A). In line with the fact that, at the individual participant-level, the generative model conformed to models A1 and A2 for approximately half of the participants, the post-hoc PEP for models A1 and A2 is roughly identical (\fullref{fig:5}B). The post-hoc parameter recovery was successful, as indicated by the largely diagonally located experimental and post-hoc parameter estimate averages shown in \fullref{fig:5}C to E. Finally, on a descriptive level, we used the participant-specific model comparison and parameter estimation results to evaluate participant-specific posterior predictive psychometric functions and, by averaging those, the group mean posterior predictive psychometric function. At the participant-level, the posterior predictive psychometric functions exhibit good fits to their observed counterparts (\fullref{fig:S1}), while at the group-level, the posterior predictive psychometric function and its observed counterpart are virtually indistinguishable (\fullref{fig:5}F).

\subsection{Summary}\label{sec-summary-1}

In summary, this case study instantiates the ABM framework for a minimal perceptual choice task under an agent architecture that performs optimally in the presence of additive internal sensory noise. We show that, from an experimentalist’s perspective, the model can be reliably recovered from the agent’s actions under low to moderate post-decision noise. Using empirical data, we further demonstrate that incorporating a long-lasting perceptual bias term improves the model’s ability to account for observed behavior. While these results are fully consistent with standard psychometric approaches at the descriptive and posterior predictive levels \citep[e.g.,][]{wichmann2001, hanks2017, carandini2024}, the model additionally specifies the algorithmic, trial-by-trial perceptual and decision-making processes of an agent performing a Gabor contrast discrimination task. 

\FloatBarrier
\section{Symmetric bandit learning}\label{sec:symmetric-bandit-learning}

\subsection{Experimental paradigm and descriptive statistics}
As a second case study, we consider a symmetric bandit task (\fullref{fig:6}). In each trial of a block of this task, participants were presented with two choice options, represented by the left and right cursor buttons (\fullref{fig:6}A). In each block, one of the buttons was associated with a probability state $s$ to obtain a reward of $+1$ and a probability state of $1-s$ to obtain a reward of $+0$, while the other button was associated with a probability state of $1-s$ to obtain a reward of $+1$ and a probability state $s$ to obtain a reward of $+0$. The probability state $s$ was sampled from a uniform distribution for each block. Each block consisted of 30 trials. The participants' task was to press either button on each trial to maximize their cumulative reward in a given block. Trial-wise rewards were presented for 1 second, and no time limit was applied to select an action. Each of $n = 60$ participants completed 10 blocks (300 trials in total). On average, the probability of obtaining a reward of $+1$ was $0.75$ (SD $0.05$). On average, the participants selected the action with the highest associated probability to obtain a reward of $+1$ in $81.0$ (SD $8.2\%$) of all the trials and obtained an average reward of  $0.69$ (SD $0.06$) (\fullref{fig:6}C). More importantly, on average, the participants exhibited clear learning behavior, starting at chance level for the maximizing action on the first trial and reaching an asymptotic level of approximately $85\%$ maximizing actions on the tenth trial across blocks (\fullref{fig:6}D). For a more in-depth coverage of the experimental procedures and the descriptive statistics reported, please refer to \fullnameref{sec:participants-and-procedure}, \fullnameref{sec:symmetric-bandit-learning-supplement}, and \textit{abm\_describe.py}.

\subsection{Agentic behavioral models}

In the following, we formulate an ABM for the symmetric bandit learning task data. After casting the task into an appropriate task model, we first consider a Bayesian learner in \autoref{sec:agent-model-a1-bayesian-learner} and its equivalent prediction-error correcting Rescorla-Wagner form in \autoref{sec:equivalent-prediction-error-correcting agent}. We then consider a generic Rescorla-Wagner learner in \autoref{sec:agent-model-a2-rescorla-wagner-learner}. \fullref{tab:abm-variants-2} provides an overview of the agent models considered for the symmetric bandit learning task and their associated log-likelihood functions. 

\begin{figure}[!htbp]
\centering
\includegraphics[width=\linewidth]{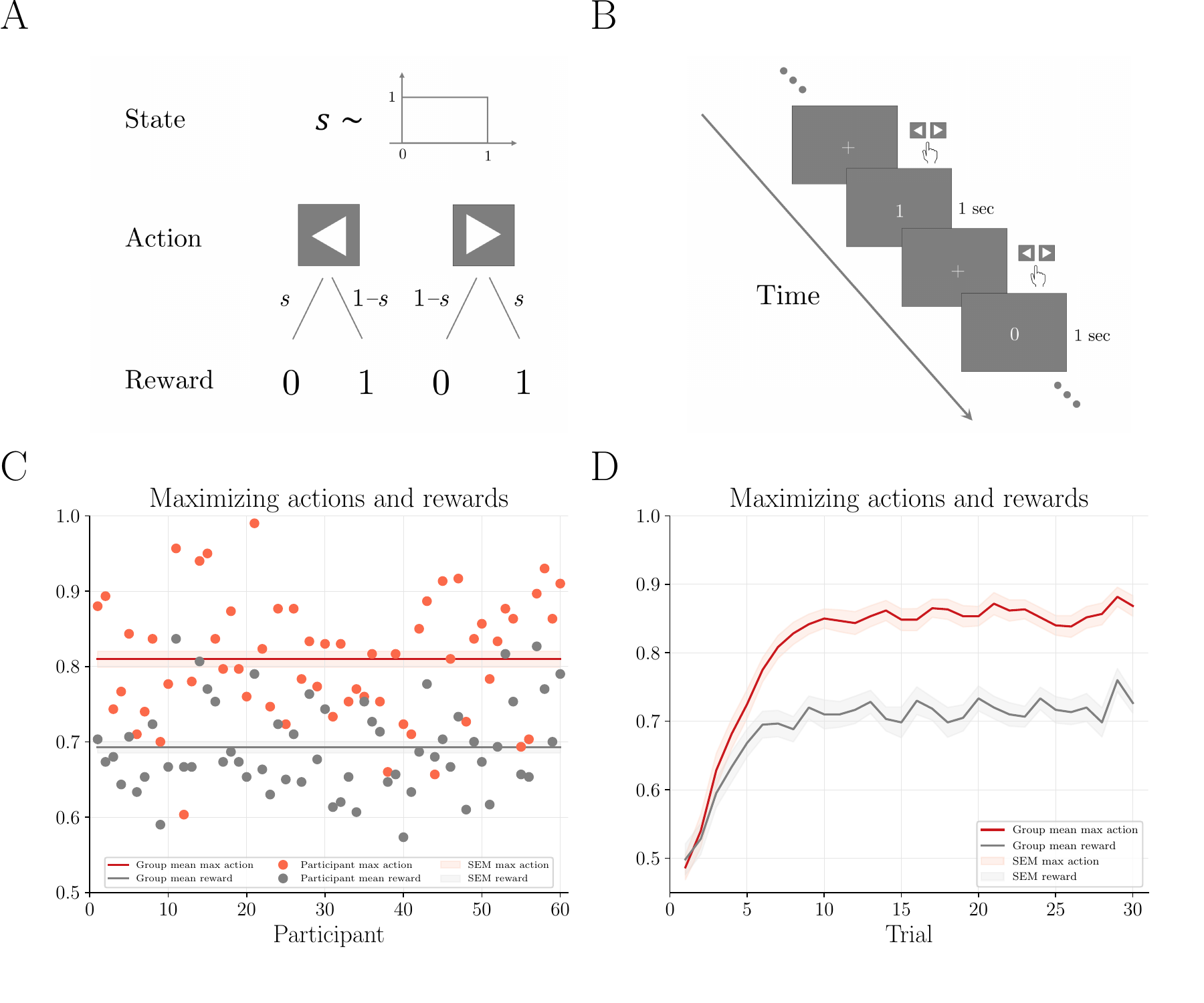}
\caption{Symmetric bandit learning task and descriptive statistics. (A) Symmetric bandit task block mechanics. For each block, a probability state $s$ was realized from a uniform distribution on $[0,1]$ and the leftwards cursor button endowed with a probability of $s$ to yield a reward of $+0$ and a probability of $1-s$ to yield a reward of $+1$. Vice versa, the rightwards cursor button was endowed with a probability of $1-s$ to yield a reward of $+0$ and a probability of $s$ to yield a reward of $+1$. Depending on whether $s$ was realized in $[0,0.5[$ or $[0.5,1]$ either cursor button could thus be associated with the higher probability for a reward of $+1$. (B) In each trial of each block, participants were prompted by a fixation cross to press one of the two cursor buttons, the respective reward distribution was sampled, and the sampled reward was presented for 1 sec. After completing 30 trials of a given block, participants were informed by an information screen that they could continue to the next block by pressing the space bar. (C) Mean fraction of maximizing actions and obtained reward across blocks for all individual participants. The shaded areas indicate the standard error of mean. (D) Mean fraction of maximizing actions and obtained reward across blocks and participants as a function of trial index. The shaded areas indicate the SEM.}\label{fig:6}   
\end{figure}

\paragraph{Task model} We describe a block of the experimental task by the tuple
\begin{equation}\label{eq:T-2}
\mathcal{T} := (S,A,R, p(s_1),p(r_t|s_1,a_t)),
\end{equation}
where
\begin{itemize}
\item $S := [0,1]$ is the set of probability states, 
\item $A := \{0,1\}$ is the action set, modeling button presses ($0$ = left, $1$ = right),
\item $R := \{0,1\}$ is the set of rewards,
\item the probability state distribution is given by
\begin{equation}
p(s_1) := U(s_1;[0,1]),
\end{equation}
\item the probability state- and action-conditional reward distribution is given by
\begin{equation}
p(r_t|s_1,a_t) := \mbox{B}(r_t;s_1)^{a_t}\mbox{B}(r_t;1-s_1)^{1-a_t}.
\end{equation}
\end{itemize}
Here, the probability state of the task model is sampled once per block from the uniform distribution $U(s_1;[0,1])$ and governs the agent action-dependent reward probability throughout the block according to $p(r_t|s_1,a_t)$ (cf. \autoref{fig:7}). Notably, from the perspective of common POMDP frameworks \citep{krishnamurthy2016, smith2022, chandramouli2024}, this corresponds to a deterministic state-state transition probability distribution of the form $p(s_t|s_{t-1}) := \ib{s_t = s_{t-1}}$ for $t = 2,...,T$. While such a (pseudo)dynamic representation of the task model may make the connection of the current task to tasks with actual state evolutions more explicit, it would come at the cost of additional notational complexities, from which we abstain here \citep[cf.][Chapter 5.5]{bauerle2011}. Finally, the characteristic task symmetry is encoded in the probability state- and action-conditional reward distribution, which renders the probability state $s_1$ the probability of obtaining a reward of $r_t = 1$ for action $a_t = 1$, and the complementary probability of obtaining a reward of $r_t = 1$ for action $a_t = 0$.

\subsubsection{Agent model A1 (Bayesian learner)}\label{sec:agent-model-a1-bayesian-learner} 
\paragraph{Agent model} We consider the agent model
\begin{equation}\label{eq:A-2}
\mathcal{A} := \left(\mathcal{T}, B, V, D,\beta,\phi,\delta\right),
\end{equation}
where  
\begin{itemize}
\item $\mathcal{T}$ is the task model and represents the agent's knowledge of the task properties,
\item $B$ is the set of belief states,
\begin{equation}
B:=\left\lbrace b | b: S \to \mathbb{R}_{\ge 0} \mbox{ with } \int_0^1 b(s) ds = 1\right\rbrace,
\end{equation}
i.e., the set of all probability density functions on $S$,
\item $V := [0,1]^2$ is the decision value set,
\item $D := \{0,1\}$ is the decision set, modeling the choice to execute a specific button press (0 = left, 1 = right),  
\item $\beta$ denotes the agent’s belief state update function
\begin{multline}\label{eq:beta-2}
\beta : B \times A \times R \to B, \\ 
(b_{t-1}, a_{t-1},r_{t-1}) \mapsto \beta(b_{t-1}, a_{t-1},r_{t-1}) := p(s_1|a_{1:t-1},r_{1:t-1}) =: b_t,
\end{multline}
where
\begin{equation}
\renewcommand{\arraystretch}{1.4}
p(s_1|a_{1:t-1},r_{1:t-1}) = \mbox{Beta}(s_1; \alpha_t) 
\end{equation}
with (cf. \autoref{thm:belief-state-2})
\begin{equation}\label{eq:beta-update}
\alpha_{t} 
:= \left(\alpha^{(1)}_{t}, \alpha^{(2)}_{t}  \right)
:= \left(1 + \sum_{k=1}^{t-1} \tilde{r}_k, t - \sum_{k=1}^{t-1} \tilde{r}_k \right) 
\end{equation}
and
\begin{equation}\label{eq:r-tilde}
\tilde{r}_t := \frac{1}{2}\left(1 + (-1)^{r_t + a_t}\right) 
\end{equation}
for $t = 2,...,T$, and 
\begin{equation}\label{eq:b_1}
b_1 := p(s_1) = U(s_1;[0,1]) = \mbox{Beta}(s_1; \alpha_1) \mbox{ with } \alpha_1 := (1,1),
\end{equation}
\item $\phi$ denotes the agent's decision value function,
\begin{equation}\label{eq:phi-2}
\phi : B \to V, b_t \mapsto \phi(b_t) := \left(v_t^d\right)_{d \in D}
=: v_t,
\end{equation}
where
\begin{equation}
v_t^0 :=  1-\mathbb{E}_{p(s_1|a_{1:t-1}, r_{1:t-1})}(s_1) \mbox{ and }
v_t^1 :=   \mathbb{E}_{p(s_1|a_{1:t-1}, r_{1:t-1})}(s_1), 
\end{equation}
\item $\delta$ denotes the agent's decision function
\begin{equation}\label{eq:delta-2}
\delta : V \to D, v_t \mapsto \delta(v_t) := 
\begin{cases}
0                      & \mbox{ for } t = 1          \\  
\argmax_{d \in D}  v_t & \mbox{ for } t = 2,...,T
\end{cases}
=: d_t.
\end{equation}
\end{itemize}

The agent’s conceptual framework for learning and decision making is grounded in a rational analysis approach, which assumes optimal adaptation to its decision environment. \citep{anderson1990, oaksford2007, friston2005a, chater2006, dayan2008, griffiths2024, ma2023}. In essence, the agent uses a recursive conjugate Beta-Bernoulli scheme for estimating the task probability state and employs a cumulative reward maximizing policy for making decisions, as we detail in \autoref{thm:belief-state-2} and \autoref{thm:decision-optimality-2}, respectively. The formal implementation of the agent's learning and decision architecture is thus motivated by the fundamental goal of estimating the latent probability state $s_1$, which encodes the probability of obtaining a reward $r_t = 1$ for action $a_t  = 1$, as well as the complement of the probability of obtaining a reward of $r_1 = 1$ for action $a_t = 0$. To achieve this, the agent aims to assign a value $v_t^1$ to the decision $d_t$ that estimates $s_1$ based on all reward and action observations up to trial $t-1$, while accounting for the uncertainty in its estimate when starting from a uniform prior distribution over $s_1$. The value of $d = 1$ should thus be larger than that of $d = 0$ if $s_1 > 0.5$ and vice versa. As its decision value, the agent thus uses the expectation of its (block-specific) trial-by-trial Beta posterior distribution $p(s_1|a_{1:t-1}, r_{1:t-1})$, which is given by
\begin{equation}
\mathbb{E}_{p(s_1|a_{1:t-1}, r_{1:t-1})}(s_1)
= \frac{\alpha^{(1)}_{t}}{\alpha^{(1)}_{t} + \alpha^{(2)}_{t}}.
\end{equation}

\begin{figure}[!t]
\includegraphics[width=\linewidth]{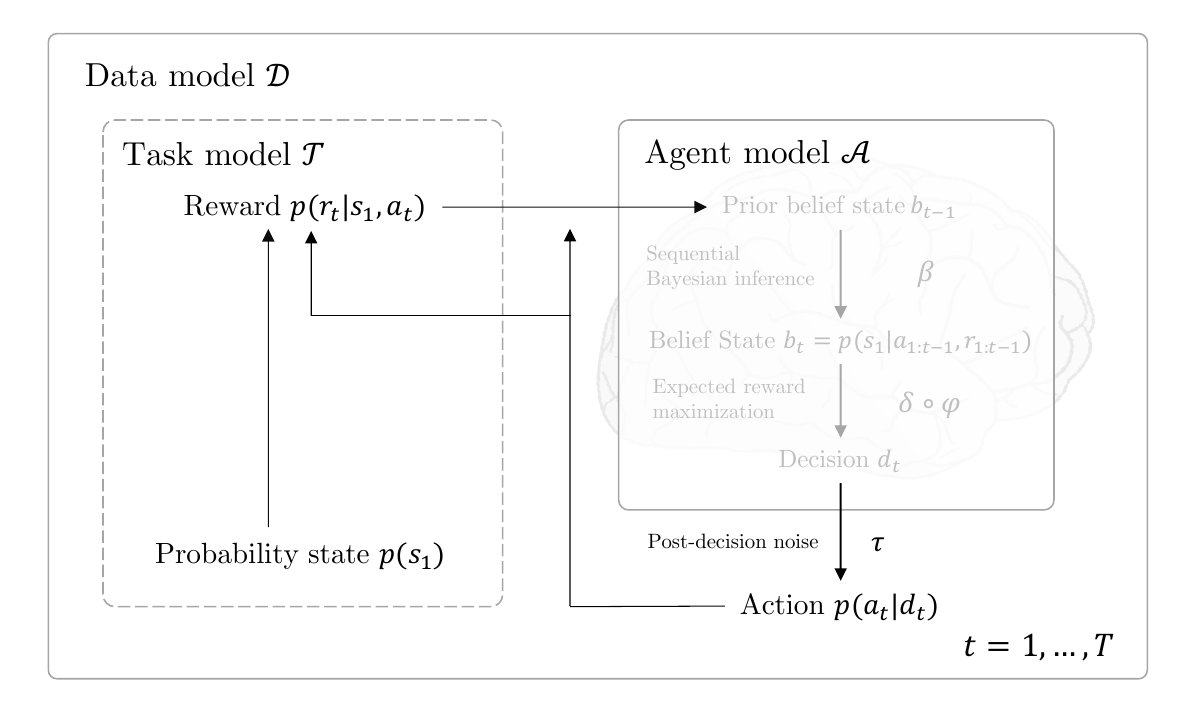}
\caption{Neurocognitive perspective of the symmetric Bandit Bayesian ABM.}\label{fig:7}   
\end{figure}

This entails that $\alpha^{(1)}_{t}$ counts events with probability $s_1$, while $\alpha^{(2)}_{t}$ counts events with probability $1 - s_1$. The former two events are the joint observations $(a_t = 1,r_t = 1)$ and $(a_t = 0,r_t = 0)$; the latter two events are the joint observations $(a_t = 0,r_t = 1)$ and $(a_t = 1,r_t = 0)$. This counting process is implemented in the update formula of \eqnref{eq:beta-update} for the Beta distribution parameter $\alpha_t$, where the action-adapted reward $\tilde{r}_t$ maps the joint observations of actions and rewards to parameter-specific observations by means of the biconditional operation of \eqnref{eq:r-tilde}. Note that in line with the formulation of the Gabor contrast discrimination task ABM, we consider a ``start-of-trial'' belief state update: before forming the decision value and committing to a decision, the agent uses information from trial $t-1$, including its previous belief state $b_{t-1}$ and its observations of its action $a_{t-1}$ and reward $r_{t-1}$, to evaluate its belief state  $b_t$ on the current trial, which is uniquely represented by $\alpha_t$. 

As shown in \autoref{thm:conditional-decision-distribution-2}, the agent construction allows one to express the agent's observation-conditional decision distribution, i.e., the distribution of the random variable
\begin{equation}\label{eq:dt-2}
d_t = (\delta \circ \phi \circ \beta)(b_{t-1}, a_{t-1},r_{t-1})
\end{equation}
given the agent's previous trial belief state $b_{t-1}$, action $a_{t-1}$, and reward $r_{t-1}$, in the form 
\begin{equation}\label{eq:p-dt-2}
p(d_t|a_{1:t-1},r_{1:t-1}) =  
\left\llbracket \alpha_t^{(1)} \le \alpha_t^{(2)} \right\rrbracket^{1-d_t}
\left\llbracket \alpha_t^{(1)}  >  \alpha_t^{(2)} \right\rrbracket^{d_t},
\end{equation}
where $\alpha_t^{(1)}$ and $\alpha_t^{(2)}$ are as in \eqnref{eq:beta-update} for $t = 2,...,T$, $\alpha_1 =  (1,1)$, $p(d_1 = 0) = 1$, and $p(d_1 = 1) = 0$ based on \eqnref{eq:delta-2}. A sampling-based validation of \eqnref{eq:p-dt-2} is shown in \fullref{fig:S8}A. Using \eqnref{eq:p-dt-2}, we may thus equivalently express the agent model \eqnref{eq:A-2} in full probabilistic form as 
\begin{equation}\label{eq:A-p-2} 
\mathcal{A} := \left(\mathcal{T}, B, V, D, p(d_t|a_{1:t-1},r_{1:t-1})\right), 
\end{equation}
which greatly simplifies the derivation of the ABM's conditional log-likelihood function in the following.

\paragraph{Data model} As in \autoref{sec:gabor-contrast-discrimination}, we consider the data model 
\begin{equation}\label{eq:D-2}
\mathcal{D} := \left(\mathcal{T}, \mathcal{A}, p(a_t|d_t) \right),
\end{equation}
where
\begin{itemize}
\item $\mathcal{T}$ is the task model,
\item $\mathcal{A}$ is the agent model,
\item $p(a_t|d_t)$ is the decision-conditional action probability distribution 
\begin{equation}\label{eq:p-at-giv-dt-2}
p(a_t|d_t) := \mbox{B}(a_t;1-\tau)^{d_t}\mbox{B}(a_t;\tau)^{1-d_t} \mbox{ for } \tau \in [0,0.5].
\end{equation}
\end{itemize}
Analogously to \autoref{sec:gabor-contrast-discrimination}, we also consider \eqnref{eq:p-at-giv-dt-2} with the action noise parameter fixed at $\tau := 0.5$ as a random null model A0 (cf. \autoref{tab:abm-variants-2}). 

\subsubsection{Equivalent prediction error-correcting agent}\label{sec:equivalent-prediction-error-correcting agent}

In \fullref{thm:bb-rw-equivalence}, we show that the Beta-Bernoulli symmetric bandit agent \eqnref{eq:A-2} is equivalent to the prediction error-correcting adaptive learning rate Rescorla-Wagner agent \citep{rescorla1972}
\begin{equation}\label{eq:A-2-equivalent}
\mathcal{A} := \left(\mathcal{T}, S, V, D, \psi, \varphi, \delta\right),
\end{equation}
where
\begin{itemize}
\item $\mathcal{T}, V, D, \delta$ are as in \eqnref{eq:A-2},
\item $S$ is as in \eqnref{eq:T-2},
\item $\psi$ denotes the agent's probability state estimate-update function,
\begin{multline}\label{eq:psi}
\psi : S \times A \times R \to S, 
(\hat{s}_{t-1}, a_{t-1},r_{t-1}) \mapsto 
\\
\psi(\hat{s}_{t-1}, a_{t-1},r_{t-1}) := \hat{s}_{t-1} + \frac{1}{t+1}\left(\tilde{r}_{t-1} - \hat{s}_{t-1} \right) =: \hat{s}_t
\end{multline}
with $\tilde{r}_t$ as in \eqnref{eq:r-tilde} and $\hat{s}_1 := 0.5$,  
\item $\varphi$ denotes the agent's decision value function
\begin{equation}\label{eq:var-phi-2}
\varphi : S \to V, \hat{s}_t \to \varphi(\hat{s}_t) := \left(1 - \hat{s}_t, \hat{s}_t\right) =: \left(v_t^d\right)_{d\in D}.
\end{equation}
\end{itemize}
Here, the equivalence of the Beta-Bernoulli model and the Rescorla-Wagner model is crucially dependent on the learning rate parameter $\frac{1}{t+1}$. In contrast to the Rescorla-Wagner updates commonly reported in the cognitive neuroscience literature \citep[e.g.,][]{wilson2019, ma2023}, here we consider a ``start-of-trial'' Rescorla-Wagner update that uses information from the previous trial. This includes the previous probability state estimate $\hat{s}_{t-1}$, its observed action $a_{t-1}$, and its observed reward $r_{t-1}$, which are used to evaluate its probability state estimate $\hat{s}_t$ before committing to a decision on trial $t$ \citep[cf.][Chapter 14]{sutton2018}. Functionally, this is, of course, equivalent to updating the estimate at the end of the previous trial, but it changes the indexing of the prediction error components from $\tilde{r}_{t} - \hat{s}_{t-1}$ for an ``end-of-trial'' update to $\tilde{r}_{t-1} - \hat{s}_{t-1}$ for a ``start-of-trial'' update. Finally, because the agent model defined by \eqnref{eq:psi} and \eqnref{eq:var-phi-2} induces exactly the same policy as the Bayesian learning agent discussed in \autoref{sec:agent-model-a1-bayesian-learner}, we do not consider an implementation of this agent.

\subsubsection{Agent model A2 (Rescorla-Wagner learner)}\label{sec:agent-model-a2-rescorla-wagner-learner} 

Finally, we consider the Rescorla-Wagner-like agent model 
\begin{equation}\label{eq:A-3}
\mathcal{A} := \left(\mathcal{T}, S, V, D, \psi, \varphi, \delta\right),
\end{equation}
where
\begin{itemize}
\item $\mathcal{T}, V, D, \delta$ are as in \eqnref{eq:A-2}, 
\item $S$ is as in \eqnref{eq:T-2},
\item $\varphi$ is as in \eqnref{eq:var-phi-2}, 
\item $\psi$ denotes the agent's probability state estimate update function
\begin{multline}\label{eq:rw-update}
\psi: S \times A \times R \to S, 
\\ 
(\hat{s}_{t-1}, a_{t-1},r_{t-1}) \mapsto 
\psi(\hat{s}_{t-1}, a_{t-1},r_{t-1}) := \hat{s}_{t-1} + \lambda(\tilde{r}_{t-1} - \hat{s}_{t-1}) =: \hat{s}_t.
\end{multline}
\end{itemize}
Rescorla-Wagner models have been popular and deemed important for understanding associative learning for almost half a century \citep[e.g.,][]{walkenbach1980, miller1995, siegel1996, soto2023}. From a neurocognitive perspective, these models have repeatedly been shown to be realizable by artificial neural networks, which may partly account for their prominence in contemporary psychology and cognitive neuroscience \citep[e.g.,][]{sutton1981, gluck1988, shanks1990, shanks1991, eckstein2026}. From the Bayesian equivalence viewpoint provided in \autoref{sec:equivalent-prediction-error-correcting agent}, this success may also be partially explained by their capacity to act as crude expectation parameter estimators when endowed with a fixed learning rate. Moreover, in environments where ordinal estimation suffices for good decision making \citep[e.g.,][]{wirth2017} or in non-stationary environments where Bayesian relearning requires excessive evidence \citep[e.g.,][]{nassar2010, wilson2013}, Rescorla-Wagner models may indeed act as reasonably good approximative estimators. We note that, vice versa, common criticisms with respect to Bayesian theories of learning and decision making that lament their computational complexity \citep[e.g.,][]{jones2011, bowers2012, kwisthout2011} may be alleviated by noting that Bayes-optimal learning is, in fact, possible based on the arguably low computational complexity of Rescorla-Wagner learning models augmented with a trial counting mechanism.

In \fullref{thm:conditional-decision-distribution-3}, we show that the agent's conditional decision distribution, i.e., the distribution of the random variable 
\begin{equation}\label{eq:dt-3}
d_t := (\delta \circ \varphi \circ \psi)(\hat{s}_{t-1},a_{t-1},r_{t-1})
\end{equation}
given the agent's previous trial probability state estimate $\hat{s}_{t-1}$, action $a_{t-1}$, and reward $r_{t-1}$, it can be expressed in the form 
\begin{equation}\label{eq:p-dt-3}
p(d_t|a_{1:t-1},r_{1:t-1}) :=  
\left\llbracket \hat{s}_t(\lambda) \le 1 - \hat{s}_t(\lambda) \right\rrbracket^{1-d_t}
\left\llbracket \hat{s}_t(\lambda)  >  1 - \hat{s}_t(\lambda) \right\rrbracket^{d_t},
\end{equation}
where the closed-form solution of the Rescorla-Wagner difference equation \eqnref{eq:rw-update} has the form
\begin{equation}\label{eq:s_hat_t_lambda}
\hat{s}_t(\lambda) = (1 -\lambda)^{t-1}\hat{s}_1 + \lambda \sum_{j=1}^{t-1}(1-\lambda)^{t-1-j}\tilde{r}_j.
\end{equation}
Although a systematic study of the (statistically non-sensical) nature of using $\hat{s}_t(\lambda)$ as an estimator for $s_1$ as a function of $\lambda$ is beyond the scope of the current work, we nevertheless remark on its parametric boundary cases. Specifically, if $\lambda = 0$, it follows that $\hat{s}_t(\lambda) = \hat{s}_1$ for all $t = 2,...,T$. Equivalently, if $\lambda = 1$, it follows that $\hat{s}_t(\lambda) = \tilde{r}_{t-1}$ for all $t = 2,...,T$, as the only nonzero term in the cumulative sum in \eqnref{eq:s_hat_t_lambda} corresponds to its last term with index $t-1$. 

Using \eqnref{eq:p-dt-3}, we may thus equivalently express the agent model \eqnref{eq:A-3} in full probabilistic form as 
\begin{equation}\label{eq:A-p-3} 
\mathcal{A} := \left(\mathcal{T}, S, V, D, p(d_t|a_{1:t-1},r_{1:t-1})\right), 
\end{equation}
which greatly simplifies the derivation of the ABM's conditional log-likelihood function.

\paragraph{Data model and agentic behavioral model} For the Rescorla-Wagner learner model, we consider the same data model as for the Bayesian learner, as specified in \eqnref{eq:D-2},  which, in turn, induces the identical general form of the ABM as for the Bayesian learner, as specified in \eqnref{eq:abm-2}. 

\paragraph{Agentic behavioral model} Taken together, the definitions of the task model's state and state-action-conditional reward distribution in \eqnref{eq:T-2}, the agent model's conditional decision distribution in \eqnref{eq:p-dt-2}, and the data model's conditional action distribution in \eqnref{eq:p-at-giv-dt-2} imply a joint distribution of $s_t$, $d_t$, $a_t$,  and $r_t$ across all trials $t = 1,...,T$. As in \autoref{sec:gabor-contrast-discrimination}, the specific forms of these distributions encode conditional independence assumptions for these random variables that render their joint distribution expressible in the form

\begin{multline}\label{eq:abm-2}
p(s_1,d_{1:T},a_{1:T},r_{1:T}) 
\\
= p(s_1)p(d_1)p(a_1|d_1)p(r_1|s_1,a_1)\prod_{t=2}^Tp(d_t|r_{1:t-1},a_{1:t-1})p(a_t|d_t)p(r_t|s_1,a_t).
\end{multline}
Again, it is this joint distribution that we refer to as an ABM for the symmetric Bandit task.

\subsubsection{Conditional log-likelihood functions} As shown in \fullref{thm:conditional-log-likelihood-function-2}, the conditional log-likelihood function of the Bayesian learner variant of \eqnref{eq:abm-2} takes the form
\begin{multline}\label{eq:llh-2}
\ell : \mathfrak{T} \to \mathbb{R}, \tau \mapsto \ell(\tau) 
\\
:= \sum_{t=1}^T \ln \left(\left\llbracket \alpha_t^{(1)} \le \alpha_t^{(2)} \right\rrbracket \mbox{B}(a_t;\tau) + \left\llbracket \alpha_t^{(1)} > \alpha_t^{(2)} \right\rrbracket \mbox{B}(a_t;1-\tau) \right)
\end{multline} 
with $\alpha_t^{(1)}$ and $\alpha_t^{(2)}$ as in \eqnref{eq:beta-update}. A sampling-based validation of the sum terms of \eqnref{eq:llh-2} is provided in \fullref{fig:S8}B; examples of \eqnref{eq:llh-2} for different values of the generative, true but unknown post-decision noise parameter are visualized in \fullref{fig:S8}C. Moreover, as shown in \fullref{thm:conditional-log-likelihood-function-3}, the conditional log-likelihood function of the Rescorla-Wagner learner variant of \eqnref{eq:abm-2} takes the form
\begin{multline}\label{eq:llh-3}
\ell : \mathfrak{L} \times \mathfrak{T} \to \mathbb{R}, (\lambda,\tau) \mapsto \ell(\lambda,\tau) 
\\ := 
\sum_{t=1}^T \ln \left(\left\llbracket \hat{s}_t(\lambda) \le 1-\hat{s}_t(\lambda) \right\rrbracket \mbox{B}(a_t;\tau) + \left\llbracket  \hat{s}_t(\lambda) > 1- \hat{s}_t(\lambda) \right\rrbracket \mbox{B}(a_t;1-\tau) \right)
\end{multline}
with $\hat{s}_t(\lambda)$ as in \eqnref{eq:s_hat_t_lambda}. Sampling-based validations of the sum terms of \eqnref{eq:llh-3} are provided in \fullref{fig:S10}, and exemplary profile functions of \eqnref{eq:llh-3} are visualized in \fullref{fig:S11}. 

As for the Gabor bandit ABM, the use of the analytic log-likelihood functions of \eqnref{eq:llh-2} and \eqnref{eq:llh-3} entails a significant increase in evaluation speed of up to 600-fold over the use of their agentic forms. After validating our analytical results extensively using simulations (cf. \autoref{sec:bayesian-agent-model-analysis}, \autoref{sec:rescorla-wagner-agent-model-analysis}, and \textit{abm\_likelihoods.py}), we used the analytical conditional log-likelihood functions  of \eqnref{eq:llh-2} and \eqnref{eq:llh-3}  implemented in \textit{abm\_llh.py} for all validation and evaluation analyses reported below.
\vspace{2mm}

\paragraph{Agent model variants, parameter estimation and model comparison}  \autoref{tab:abm-variants-2} provides an overview of the ABM variants used in the current case study and their associated conditional log-likelihood functions. For model validation and model evaluation, maximum-likelihood parameter estimates, BIC approximations to the models' marginal likelihoods, and PEPs were obtained as described in \autoref{sec:parameter-estimation-and-model-comparison}.

\begin{table}[t]
\centering
\renewcommand{\arraystretch}{1.5}
\begin{tabularx}{\textwidth}{|l|X|}
\hline
A0
& Random null model ($\tau := 0.5$)
\\
&
$\ell(\cdot) 
:= \sum_{t = 1}^T \ln \mbox{B}(a_t;0.5)$
\\\hline
A1
& Bayesian learner 
\\
&
$
\ell(\tau) 
:= \sum_{t=1}^T \ln \left(\left\llbracket \alpha_t^{(1)} \le \alpha_t^{(2)} \right\rrbracket \mbox{B}(a_t;\tau) + \left\llbracket \alpha_t^{(1)} > \alpha_t^{(2)} \right\rrbracket \mbox{B}(a_t;1-\tau) \right)
$
\\\hline
A2
& Rescorla-Wagner learner 
\\
&
$
\ell(\lambda,\tau) 
:= 
\sum_{t=1}^T \ln \left(\left\llbracket \hat{s}_t(\lambda) \le 1-\hat{s}_t(\lambda) \right\rrbracket \mbox{B}(a_t;\tau) + \left\llbracket  \hat{s}_t(\lambda) > 1- \hat{s}_t(\lambda) \right\rrbracket \mbox{B}(a_t;1-\tau) \right)
$
\\
\hline
\end{tabularx}
\vspace{6pt}
\caption{Intuitions, parameter constraints, and conditional log-likelihood functions of the ABM variants applied in the symmetric bandit learning case study.}
\label{tab:abm-variants-2}
\vspace{-8mm}
\end{table}

\subsection{Model validation}
To validate the ABM framework for the symmetric bandit task, we again performed simulations that assessed its face validity, recoverability, and parameter identifiability. Specifically, we simulated datasets using the A0, A1, and A2 variants of the ABM \eqnref{eq:abm-2} in a generative setting that mimicked the experimental scenario regarding the number of participants, experimental blocks per participant, and experimental trials per block. Again, for the parameter free agent A0, the simulated data were generated without further specifications. For models A1 and A2, we selected a parameter scope that allows the models to express the experimentally observed effects and a parameter coverage that balances computational efficiency with theoretical insight. As for the Gabor contrast discrimination scenario, we thus decided on a resolution of nine evenly spaced parameter values for each component of the model-specific parameter vector, which we set to  $\tau \in \{0$, $0.0625$, $0.125$,  $0.1875$,  $0.25$,  $0.3125$,  $0.375$,  $0.4375$,  $0.5\}$, and $\lambda \in \{0.1,0.2,0.3,0.4,0.5,0.6,0.7,0.8,0.9\}$. This resulted in nine true but unknown parameter values for the one-dimensional parameter space of model A1 and 81 true but unknown parameter values for the two-dimensional parameter space of model A2. For each parameter setting, we generated $n = 60$ datasets comprising 10 blocks and 30 trials per block. For each simulated dataset, we evaluated the group maximizing action learning curve as a descriptive statistic and assessed the A0, A1, and A2 variants of the ABM (\eqnref{eq:abm-2}) using the procedures described in \fullnameref{sec:parameter-estimation-and-model-comparison}. For implementation details, please refer to \textit{abm\_validation.py} and  \textit{abm\_validate.py}. \fullref{fig:8} visualizes the results.

\begin{figure}[!htbp]
\center
\includegraphics[width=\linewidth]{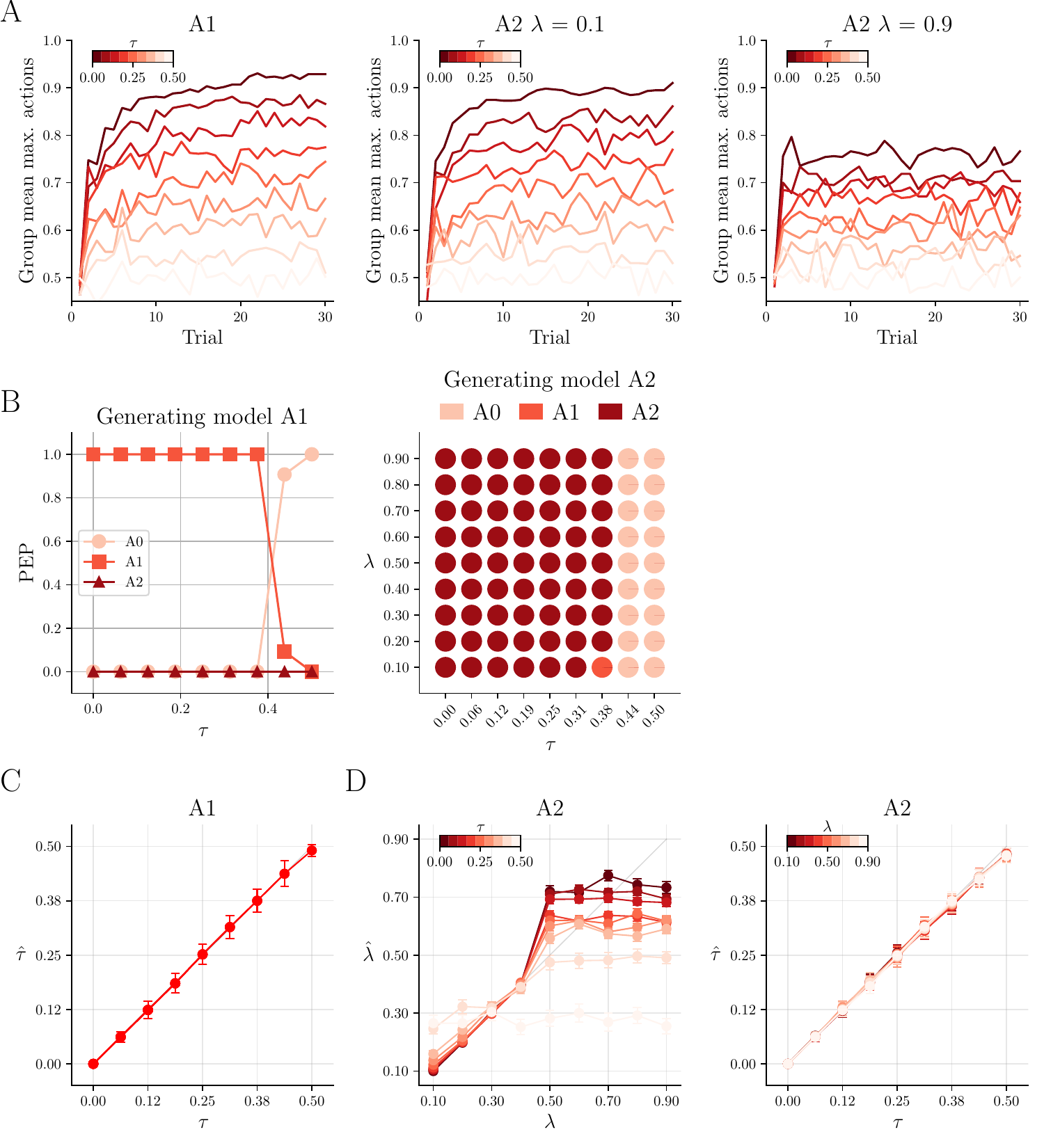}
\caption{Model validation for the symmetric bandit ABM. (A) Model face validity. The left subpanel visualizes the effect of varying the action noise parameter $\tau$ for ABM variant A1. The center and right subpanels visualize the effect of varying $\tau$ for ABM variant A2 for a low ($\lambda = 0.1$) and a high ($\lambda = 0.9$) learning rate parameter. (B) Model recovery. The left subpanel visualizes the PEP fractions attained by the data-analytical ABM variants A0, A1, and A2 for data generated by ABM variant A1 as a function of the action noise parameter $\tau$. The right subpanel visualizes the PEP fractions attained by A0, A1, and A2 for data generated by A2 as a function of the generating model's parameter configuration in pie chart form. (C) Recoverability of the action noise parameter $\tau$ for data generated by ABM variant A1. (D) Recoverability of the learning rate parameter $\lambda$ as a function of $\tau$ for data generated by ABM variant A2 (left subpanel) and recoverability of $\tau$ as a function of $\lambda$ for data generated by ABM variant A2 (right subpanel).}\label{fig:8}   
\end{figure}

\paragraph{Model face validity} To validate the formulation and implementation of the ABM variants and to explore their behavioral repertoires, we evaluated the effects of the post-decision noise parameter $\tau$ on the group maximizing action learning function of model variants A1 and A2 (\fullref{fig:8}A). The first subpanel shows the effect of varying the post-decision noise parameter on the group mean maximizing action learning curve of model A1. As expected, increasing the post-decision noise parameter diminishes A1's maximizing behavior. The second subpanel depicts the same effect for model A2 with a learning rate parameter of $\lambda = 0.1$. Notably, the group mean maximizing action curves of model A1 exhibit a continuing learning trend even after trial $t = 20$, which is less pronounced for model A2.  Increasing the learning rate parameter of model A2 beyond 0.1 has detrimental effects on its maximizing action behavior. In this regime, its state estimates become too strongly influenced by the variability in reward observations and fail to maintain the trial-averaging property observed for low learning rates, as shown in the third panel of \fullref{fig:8}A for $\lambda = 0.9$. 

\paragraph{Model recovery} To assess the degree to which the two ABM variants can be reliably differentiated by our model comparison framework, we again performed a number of model recovery analyses. For each of the three ABM variants, we analyzed the data generated by the given ABM variant using all three ABM variants as analytical models. For the generative model A0, we obtained the PEPs $\varphi_{A0} = 1.00$, $\varphi_{A1} = 0.00$, and $\varphi_{A2} = 0.00$, indicating that the null model variant A0 can be reliably recovered as the most parsimonious explanation of a fully random action dataset. Again, the model recovery properties of ABM variants A1 and A2 depend on the true but unknown generative parameter values of these models, as visualized in \fullref{fig:8}B. The left subpanel of \fullref{fig:8}B depicts the model recovery properties of model variant A1. Here, for a post-decision noise parameter value of $\tau < 0.4$, the model is reliably recoverable. This indicates that the model complexity penalty assigned to the additional parameter of model variant A2 outweighs the arguably similar data generative potentials of model variants A1 and A2 for low learning rate parameters (cf. \fullref{fig:8}A). For the high noise scenario, parameter values $\tau =  0.4375$ and  $\tau = 0.5$, model A0 offers a more parsimonious explanation. As in \fullref{fig:4}, the right subpanel of \fullref{fig:8}B depicts the relative PEP shares of ABM variants A0, A1, and A2 as pie charts in the generative parameter space of variant A2. Across most of the parameter space, model A2 can be reliably recovered. However, in high post-decision noise regimes with $\tau > 0.4$, data produced by model variant A2 is best accounted for by model variant A0, regardless of the learning parameter values.

\paragraph{Parameter recovery}
\fullref{fig:8}C visualizes the recoverability of the post-decision noise parameter $\tau$  for model variant A1. Here, the group average estimates $\hat{\tau}$ ($y$-axis) are virtually identical to the true but unknown values $\tau$ ($x$-axis), while the variability of the estimates increases with $\tau$, except for its highest value, which exhibits less variability in the estimates. The left subpanel of \fullref{fig:8}D depicts the results of the parameter recovery simulation for the learning rate parameter $\lambda$ of model variant A2 as a function of the true but unknown post-decision noise parameter $\tau$. For most values of $\lambda > 0.5$, the learning rate parameter estimates $\hat{\lambda}$ group averages deviate considerably from their corresponding true but unknown values. Appropriate recoverability is exhibited for learning rates between $\lambda = 0.1$ and $\lambda = 0.3$ and low to medium levels of post-decision noise of $\tau < 0.4$. For higher values of the true but unknown post-decision noise, the learning rate parameter estimates exhibit biases. Similarly, for true but unknown learning rate parameters $\lambda > 0.5$, the parameter estimates converge to identical values that are independent of the actual value of $\lambda$, and instead depend predominantly on the magnitude of the decision noise parameter. The right subpanel of \fullref{fig:8}D depicts the results of the parameter recovery simulation for the post-decision noise parameter $\tau$ of model variant A2 as a function of the true but unknown learning rate parameter $\lambda$. Compared to parameter recovery for the learning rate parameter, the biases exhibited by parameter estimates $\hat{\tau}$ for the post-decision noise parameter are negligible. 

\subsection{Model evaluation}\label{sec-model-evaluation-2}

Upon validating the ABM set, we evaluated it in light of the experimental data. \fullref{fig:9} visualizes the results at the group-level. The group cumulative BIC was highest for ABM variant A1 (\fullref{fig:9}A), and its associated PEP clearly dominated the random-effects model comparison (\fullref{fig:9}B). At the participant-level, the BIC value of model variant A1 was maximal for 45 of the participants, while the BIC value of model variant A2 was maximal for the remaining 15 participants (\fullref{fig:S7}). The parameter estimates  $\hat{\tau}$ for the post-decision noise parameter for datasets best explained by ABM variant A1 varied uniformly between 0.0 and 0.22 (\fullref{fig:9}C). The parameter estimates  $\hat{\tau}$  for the post-decision noise parameter for datasets best explained by ABM variant A2 varied uniformly between 0.1 and 0.25 (\fullref{fig:9}D), i.e., with a slight shift to higher values compared to those obtained under ABM variant A1. The learning rate parameter estimates $\hat{\lambda}$ varied between 0.1 and 0.6, with a cluster at low values and a smaller cluster at high values (\fullref{fig:9}E). Taken together, the results indicate that, within the evaluated model space, ABM variant A1 provides the most plausible explanation of the observed participant actions. For a subgroup of participants, the additional model flexibility afforded by the Rescorla-Wagner learning parameter provides a more plausible explanation of the data than the Bayes optimal learning dynamics of ABM variant A1. 

\begin{figure}[!htbp]
\center
\includegraphics[width=\linewidth]{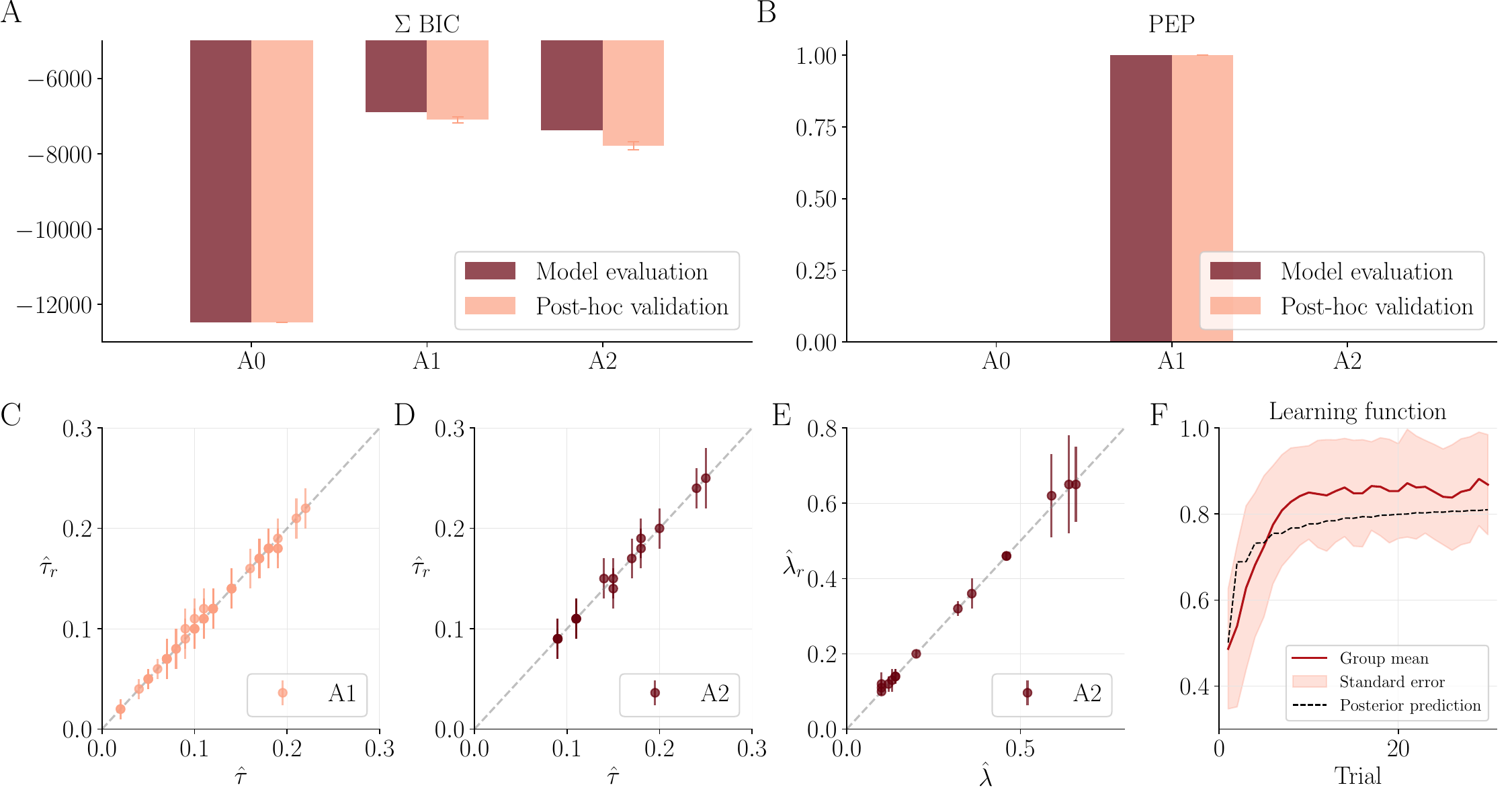}
\caption{Model evaluation and post-hoc model validation of the symmetric bandit learning case study. (A) Group-cumulative BIC values, corresponding to the sums of the participant-specific BIC values for each ABM variant. Bar colors indicate the results from the experimental data and the simulated post-hoc validation data. Error bars indicate the standard deviation across post-hoc validation simulations. (B) PEPs for each ABM variant. As in (A), bar colors indicate the results from the experimental data and the simulated post hoc validation data. Error bars indicate the standard deviation across post-hoc validation simulations. (C), (D), (E) Parameter estimation and post-hoc recovery results for ABM variants indicated in the subpanel's legend. Parameter estimates $\hat{\tau}$ and $\hat{\lambda}$ based on the experimental data are displayed on the $x$-axes. The recovered average parameter estimates $\hat{\tau}_r$ and $\hat{\lambda}_r$ and their standard deviations across post-hoc simulations are shown on the $y$-axes. (F) Group mean experimental data as in \autoref{fig:6}D and posterior predictive average learning function.}\label{fig:9}   
\end{figure}

\subsection{Post-hoc model validation}
Upon identifying the most plausible ABM variant for the experimentally observed data of each participant, we performed post-hoc model validation analyses by generating datasets based on the identified participant-specific model comparison and parameter estimation results. For this, we simulated 100 participant-level datasets using the most plausible model for each participant and the corresponding parameter estimates. We then subjected the generated datasets to the parameter estimation and model comparison procedures to evaluate the average post-hoc model and parameter recoverability performance. The results of these post-hoc model validations are integrated into \fullref{fig:9}. As for the experimental data, the post-hoc simulated average group cumulative BIC was highest for ABM variant A1 (\fullref{fig:9}A), and its associated average PEP clearly dominated the random-effects model comparison (\fullref{fig:9}B). The post-hoc parameter recovery was successful, as indicated by the largely diagonally located experimental and post-hoc parameter estimates across the models shown in \fullref{fig:9}C to E. Notably, for higher learning rate estimates with $\hat{\lambda} > 0.5$, the post-hoc recovered parameter estimates $\hat{\lambda}_r$ exhibited increased variability, consistent with the parameter recovery results shown in \fullref{fig:8}D. 

Finally, at the descriptive level, we evaluated the mean fraction of maximizing actions across post-hoc simulations. As shown in \fullref{fig:9}F, the posterior predictive average learning function lies well within the error bounds of the experimental learning function and achieves a similar level of average maximizing actions. However, on a more fine-grained level, the posterior predictive learning function, which is dominated by the 45 datasets best explained by ABM variant A1, exhibits a slightly different dynamic compared to the experimental learning curve. Specifically, the conditional probability of choosing the maximizing action is higher than that observed in the experimental data from the first trials. To match this in overall performance and account for additional participant-specific variability, it is lower than that observed in the experimental data from trials following the tenth trial. 

Close inspection of the single-participant posterior predictive learning curves (\fullref{fig:S7}) indicates that a subset of about three participants (most evidently P2, P6, P50) appears to have used a strategy that favored a stronger degree of action variation in the first five to ten trials than predicted by the Bayesian learning ABM variant A1. Another subset of about five participants (most evidently P5, P8, P13, P18, P55) appears to have forgone maximizing their cumulative reward in the initial trials by disproportionately choosing the less rewarded option. This suggests that these participants aimed to first minimize their uncertainty with respect to the more rewarded option before switching to and remaining with the identified higher reward action. Because the dynamics of the Bayesian learning ABM variant A1 rest on a steep rise towards the higher reward action (cf. \fullref{fig:8}A), the parameter estimation procedure must accordingly identify a post-decision noise parameter that accounts for these early anti-maximizing actions, resulting in a lower average maximizing action selection in later trials.          

\subsection{Summary}\label{sec:summary-2}

In summary, this case study instantiates the ABM framework for a minimal bandit task using both Bayes-optimal and non-optimal Rescorla–Wagner learning and decision-making architectures. We show that a specific instance of Rescorla–Wagner learning is formally equivalent to Bayesian learning. From an experimentalist’s perspective, we demonstrate that, in the present task, the models can be reliably distinguished under low to moderate post-decision noise. The learning rate parameter of the Rescorla–Wagner model is recoverable only for relatively small values, whereas the post-decision noise parameter can be robustly identified. In the current task setting, Bayesian agent A1 provides a more plausible account of the empirical data while acknowledging both the restricted model space considered and the fact that the Bayesian agent does not fully capture certain aspects of the learning dynamics. 

\FloatBarrier
\section{Discussion}

Both AI research and cognitive behavioral science develop agent-centric models of cognition and behavior. In this work, we present agentic behavioral modeling (ABM) as a formal framework for specifying and analyzing how artificial agents can be employed as tools for behavioral modeling within cognitive behavioral science. As demonstrated in two case studies, ABM benefits from an explicit differentiation of three key components: the \textit{task model}, which formalizes the outcome-generating processes of the experimental paradigm; the \textit{agent model}, which encodes hypotheses about the cognitive processes underlying human task performance; and the \textit{data model}, which defines the statistical framework for evaluating and comparing alternative agent models on the basis of experimental data. Although virtually all contemporary cognitive–behavioral models incorporate these components, their boundaries are often blurred. We firmly believe that clarifying these distinctions supports accurate attribution of observed behavioral effects to task constraints (task model), posited cognitive mechanisms (agent model), or residual variance (data model) and promotes transparent documentation, understanding, and interpretability of cognitive behavioral models. The principal contribution of this work lies in the systematic differentiation and integration of these model components. Within the ABM framework, all three model components are formulated as marginal and conditional probability distributions that together define a parameterized joint distribution. Analysis of this joint distribution, in turn, enables detailed characterization of latent and observed variable structures, facilitates the derivation of core agent properties such as belief states and conditional decision distributions, and provides a foundation for probabilistic model inference. In the remainder of this discussion, we briefly contrast ABM with related frameworks in the literature.

\subsubsection*{Related frameworks}

\citet{daunizeau2010a} proposed a  \textit{meta-Bayesian} account of behavioral modeling that closely relates to the present agentic perspective. Their approach is based on a principled partitioning of the behavioral model space into a \textit{perceptual model} and a \textit{response model}, where the former governs the agent's inference over hidden environmental states, and the latter maps posterior beliefs to observable behavior. This joint model rests on the assumption that agents make Bayes-optimal, utility-maximizing decisions and is inverted using variational methods, typically under Laplace approximations. Its empirical relevance is demonstrated in \citet{daunizeau2010} with respect to an audio-visual associative learning paradigm. In this paradigm, reaction times are interpreted as reflecting the dynamics of belief updating under a speed–accuracy trade-off. From the viewpoint of ABM, the framework by \citet{daunizeau2010a} is largely complementary: the ABM task–agent interface corresponds to the perceptual model of \citet{daunizeau2010a}, while the agent–data interface aligns with their response model. At the same time, \citet{daunizeau2010a} adopt stronger agent assumptions (fully Bayesian and utility-maximizing) than ABM and rely on a general variational inversion framework with strong roots in continuous data analysis traditions \citep[cf.][]{friston2007,daunizeau2009, ostwald2014}. In contrast, ABM emphasizes explicit joint distributions and conditional likelihoods that can remain analytically tractable in the simple and discrete choice settings considered here. Finally, their use of the unprotected exceedance probability framework by \citet{stephan2009} reflects their contemporary developments in group-level model comparison. Looking forward, a promising direction is to treat the Bayesian agents studied herein as special cases of the variational inverse Bayesian decision theoretic framework by \citet{daunizeau2010a}, while simultaneously maintaining ABM's more explicit model partitioning framework, its focus on full joint distributions and conditional likelihoods, and its generally less constrained agent approach. In the present empirical settings, no approximations are required; however, it is easy to see that more complex task, agent, and data models will naturally induce intractable conditional distributions that necessitate approximate inference in ABM.

\citet{smith2022} share with \citet{daunizeau2010a} a common emphasis on variational inference as the central mechanism for modeling both perception and action. They present a comprehensive tutorial on the application of  \textit{active inference} agents in  \textit{partially observable Markov decision process (POMDP)} settings, providing a unified framework for modeling perception, learning, and decision-making. In essence, active inference agents implement the free energy principle for both perceptual inference (i.e., belief state formation via variational free energy minimization) and policy selection (via expected free energy minimization of a generative model with prior preferences over outcomes; cf. \citet{friston2012f, friston2017a}). While \citet{smith2022} strongly emphasize the POMDP formulation, their treatment largely abstracts from the conventional literature on solving POMDPs \citep[cf.][]{bertsekas1995, krishnamurthy2016}, instead focusing on implementations provided within the active inference SPM toolbox \citep{tierney2025, parr2022}. From the perspective of ABM, several conceptual differences emerge. In contrast to ABM’s explicit separation into task, agent, and data models, the approach by \citet{smith2022} employs a single, agent-centric generative model that simultaneously encodes environmental states, agent beliefs, and action policies. Conceptually, this weakens the distinction between the world and agent models and omits a dedicated data model layer. As a result, comparatively less emphasis is placed on empirical data analysis; while model fitting is discussed, conditional log-likelihood formulations or related analytical expressions are not made explicit. At the same time, both ABM and the framework by \citet{smith2022} share similarities in their focus on discrete state and action spaces and in their application to explore–exploit paradigms, such as bandit tasks. This also raises the question of whether the full machinery of variational approximations in active inference is necessary for behavioral modeling in such scenarios. Stylistically, \citet{smith2022} adopt a tutorial approach, revisiting foundational probabilistic concepts such as Bayes’ theorem and providing detailed guidance on implementation (e.g., function calls within MATLAB), but comparatively less emphasis is placed on analytical derivations. Their use of the protected exceedance probability framework follows standard practice but is treated largely as a black-box routine. Finally, the reliance on MATLAB-based implementations contrasts with the growing use of Python-based modeling frameworks. Looking forward, given the broad interest in active inference as an agent-centric formulation of the free energy principle \citep{parr2025}, explicitly incorporating active inference agents into model comparison frameworks represents a promising direction. This, however, requires mathematically explicit formulations of such agents in specific task settings, rather than relying solely on procedural specifications of toolbox-based implementations, thereby enabling a more direct integration with ABM conditional likelihood-based inference and model comparison.

\citet{lieder2020} extend the tradition of \textit{rational analysis} by proposing  \textit{resource-rational analysis} as an integration of rational principles with realistic cognitive constraints, thereby building on the ACT framework for behavioral modeling \citep{anderson1990}. Their central idea is that cognitive mechanisms should be understood as optimizing performance under limited computational resources, yielding resource-rational heuristics that trade off accuracy against time, memory, and computational cost. Conceptually, the presentation of the framework in \citet{lieder2020} is largely informal. It is structured as a sequence of five conceptual steps: starting from a computational-level problem formulation, positing a class of feasible algorithms and their costs, deriving the algorithm that optimally balances resource expenditure and approximation accuracy, evaluating its empirical predictions, and iteratively refining assumptions. From the perspective of ABM, this approach provides a normative account of agent models but does not explicitly address task models or the statistical estimation and adjudication of agent models based on empirical human data. In particular, resource-rational analysis is not formulated in terms of a joint probability distribution over latent and observed variables, and questions of scientific inference, likelihood-based estimation, or group-level model inference are not alluded to in \citet{lieder2020}. While illustrative examples such as discrete Mouselab-type decision scenarios are discussed, these remain conceptual rather than fully specified modeling frameworks. In contrast, ABM emphasizes explicit generative formulations, conditional likelihoods, and concrete implementations that enable empirical model comparison. Nevertheless, resource-rational heuristics constitute a promising extension of the agent modeling toolbox, and incorporating such principles into ABM represents a fruitful direction for future work. In particular, making resource-rational analysis mathematically and implementationally explicit would facilitate its integration with likelihood-based inference and may provide a principled link between bounded rationality and approximate Bayesian inference, e.g., by interpreting sampling-based approximations such as MCMC as resource-rational strategies \citep[cf.][]{sanborn2020}.

Finally, \citet{chandramouli2024} propose a workflow for building computationally rational models of human behavior that is similar in spirit to the resource-rational framework by \citet{lieder2020}, but with a stronger emphasis on technical and practical aspects of model construction. Their approach centers on POMDPs as a unifying formalism for modeling adaptive behavior under cognitive and environmental constraints, and, in contrast to \citet{smith2022}, explicitly connects to the broader literature on POMDPs and reinforcement learning. A key contribution of \citet{chandramouli2024} is its extensive discussion of good scientific practice in computational modeling, including model specification, parameter estimation, and evaluation, with particular emphasis on procedures such as prior and posterior predictive checks, model validation, and iterative refinement based on group model inference. From the perspective of ABM, however, the framework remains largely conceptual in nature and does not provide a fully worked-out applied example in the sense of a complete likelihood-based behavioral model. In addition, it does not introduce a clear distinction between task, agent, and data models, but instead adopts a unified POMDP-based formulation with a specific focus on (boundedly) optimal decision-making. While this aligns with ABM in its principal use of generative simulations and its connections to optimality principles, it differs in its limited emphasis on explicit probabilistic formulations and conditional likelihood-based inference. Looking forward, the framework by \citet{chandramouli2024} offers a promising foundation for integrating workflow-based model construction with more explicit statistical modeling approaches. In particular, while it places less emphasis on specific inference architectures such as variational inference, it maintains an optimality perspective that is also central to the approaches by \citet{daunizeau2010a} and \citet{lieder2020}. At the same time, its treatment of computational constraints remains comparatively light relative to resource-rational analysis, suggesting that future work could benefit from a more explicit integration of bounded rationality assumptions within a unified, likelihood-based modeling framework.

In summary, while there is no shortage of conceptual frameworks with aspirations similar to ABM, important work in this area remains to be done. This includes, for example, the development of inference approaches that more readily distinguish between deterministic and stochastic agents \citep[e.g.,][]{wiehler2021, wu2022, sheffield2023}, the exploration of task-general agents that require minimal adaptation to specific task scenarios \citep[e.g.,][]{ji-an2025, dezfouli2019, wang2018}, and extensions of data models that can capture richer forms of behavioral variability and noise \citep[e.g.,][]{li2024, findling2021, schumacher2024}. Another topic that has recently gained attention is the question of whether current group-level model inference approaches in computational behavioral modeling are always appropriate \citep{piray2025}. Notably, all frameworks discussed here rely, either explicitly or implicitly, on Bayesian model comparison. In particular, both the present work and the approaches by \citet{daunizeau2010a} and \citet{smith2022} employ PEP-based methods. The PEP framework is attractive because it builds on approximations to the log model evidence, rather than requiring the specification of hierarchical model structures that impose similar model architectures across agents. At the same time, it is a comparatively recent approach and remains less extensively studied and validated. While the work by \citet{piray2025}, at least from our reading, does not explicitly consider the \textit{protected} exceedance probability formulation proposed by \citet{rigoux2014}, but instead focuses on simulation studies using the earlier (unprotected) exceedance probability formulation by \citet{stephan2009}, it highlights the general importance of experimental design considerations in computational behavioral modeling. In this context, in addition to the original contributions by \citet{stephan2009} and \citet{rigoux2014}, the detailed documentation of PEP provided in \fullnameref{sec:protected-exceedance-probabilities} may serve as a useful starting point for future work addressing questions of power and sample size. Beyond sample size, however, experimental design in computational modeling also concerns questions of model-adjudicating conditions and trial design, as well as experimental feasibility, fidelity, and efficiency. This points to a broader implication: computational modeling is best not treated as an afterthought following experimental design, and studies may benefit from avoiding a priori commitments to specific model classes that implicitly enforce a particular scientific viewpoint. Instead, a more integrated approach may be advantageous, in which experimental design, model specification, and inference procedures are developed jointly from the outset. From our perspective, ABM provides a useful foundation for such an integrated approach.

\subsubsection*{Conclusion} 

To conclude, we have reviewed computational behavioral modeling from an agent-centric perspective. To this end, we partitioned the modeling process into a task model that captures the choice environment of the experimental design, an agent model as an algorithmic hypothesis of human cognitive processes, and a data model as a statistical embedding that enables inference and model adjudication. Using concrete application scenarios, we have shown how these components jointly specify a probability model that serves as the central pivot of the modeling process, enabling, where appropriate, the derivation of optimal policies and providing the foundation for parameter estimation and principled model comparison. Throughout this work, we have focused on overarching conceptual principles, their rigorous mathematical formalization, and their concrete realization in computational implementations. We hope that this work provides a useful and extensible foundation for researchers engaging in computational behavioral modeling and that it supports future methodological advances at this exciting interface between human and artificial intelligence research.

\small 
\subsubsection*{Declarations}

\paragraph{\textbf{Funding}}
This research did not receive any specific grant from funding agencies in the public, commercial, or not-for-profit sectors.

\paragraph{\textbf{Declaration of Competing Interest}} None.

\paragraph{\textbf{Declaration of Generative AI and AI-assisted Technologies Use}} The authors used generative AI tools, including ChatGPT, Gemini, Claude, and GitHub Copilot to assist with formal derivations and code development, primarily for quality assurance and refinement. During manuscript preparation, the authors used OpenAI’s GPT (via Writefull for Overleaf) to support language editing, including grammar, readability, and phrasing. All outputs generated by these tools were carefully reviewed, verified, and, where necessary, revised by the authors. The authors retain full responsibility for the content of this work.

\paragraph{\textbf{Data and Code Availability}} The data and code supporting the findings of this study are publicly available via PsychArchives at \url{https://doi.org/10.23668/psycharchives.2189}. This repository includes all datasets and analysis scripts necessary to reproduce the reported results. Access and use are subject to a scientific use license in accordance with the specifications of the Leibniz Institute for Psychology Information (ZPID).

\paragraph{\textbf{Author Contributions}} 
D.O.: 
Conceptualization, 
Data curation, 
Formal analysis,
Investigation,
Methodology,
Project administration,
Software,
Validation,
Visualization,
Writing - original draft,
Writing - review and editing;
R.B.: 
Conceptualization, 
Formal analysis,
Methodology,
Writing - review and editing;
F.U.: 
Conceptualization,
Writing - review and editing;
B.F.: 
Conceptualization,
Writing - review and editing;
J.S.: 
Writing - review and editing;
S.M.: 
Investigation,  
Writing - review and editing

\clearpage    
\bibliography{Ostwald-et-al}
\clearpage

\setcounter{page}{1}
\pagenumbering{arabic}

\begin{appendices}
\renewcommand{\thesection}{S\arabic{section}}
\titleformat{\section}
{\large\bfseries}
{\thesection}
{1em}
{}
\renewcommand{\theequation}{S.\arabic{equation}}
\renewcommand{\thetable}{S.\arabic{table}}
\renewcommand{\thefigure}{S.\arabic{figure}}
\renewcommand{\thetheorem}{S.\arabic{theorem}}
\noindent \Large{\textbf{Supplementary Material}}
\small

\section{Notational conventions}\label{sec:notational-conventions}
In general, we use indexed lowercase roman letters such as $x_t$ to denote random variables. As a shorthand notation for sets of random variables, we use the colon notation such that, e.g., $x_{1:n} := \{x_1,...,x_n\}$. We use $p$ to denote the probability mass and/or density functions of joint, conditional, and marginal random variable distributions, such as $p(x_t,y_t), p(x_t|y_t)$, and $p(x_t)$. For the functional forms of the distributions of interest herein, we use the shorthand notations listed in \fullref{tab:probability-distributions}. 
\vspace{-3mm}

\begin{table}[h!]
\footnotesize
\centering
\renewcommand{\arraystretch}{3}
\begin{tabularx}{\linewidth}{l|X}
Bernoulli PMF
& $\mbox{B}(x; \mu) := \mu^{x}(1-\mu)^{1-x}$ for $x\in \{0,1\}$, $\mu \in [0,1]$ 
\\\hline
Normal PDF 
& $\mbox{N}(x; \mu, \sigma^2) :=
\frac{1}{\sqrt{2\pi\sigma^2}} 
\exp\left( -\frac{1}{2\sigma^2}(x-\mu)^2 \right)$ for $x,\mu \in \mathbb{R}, \sigma^2 > 0$
\\\hline
Normal CDF & 
$\Phi(x; \mu, \sigma^2) :=
\frac{1}{\sqrt{2\pi\sigma^2}} 
\int_{-\infty}^{x} 
\exp\left( -\frac{1}{2\sigma^2}(s-\mu)^2 \right) ds$ for $x,\mu \in \mathbb{R}, \sigma^2 > 0$
\\\hline
Beta PDF
& $\mbox{Beta}(x; \alpha_1, \alpha_2) :=
\frac{\Gamma(\alpha_1 + \alpha_2)}
{\Gamma(\alpha_1)\Gamma(\alpha_2)}
x^{\alpha_1 - 1}(1-x)^{\alpha_2 - 1}$ for $x \in ]0,1[, \alpha_1,\alpha_2 > 0$
\\\hline
Multinoulli PMF
& $\mbox{Mult}(x;\pi) := \prod_{i=1}^m \pi_i^{x_i}$  for  $x \in \{e_1,...,e_m\}$, $\pi \in [0,1]^m, \sum_{i=1}^m \pi_i = 1$ 
\\\hline
Dirichlet PDF 
& $\mbox{Dir}(x;\alpha) := \frac{\Gamma\left(\sum_{i=1}^m \alpha_i\right)}{\prod_{i=1}^m\Gamma\left(\alpha_i\right)}\prod_{i=1}^m x_i^{\alpha_i - 1}$
for 
$x_i > 0, \sum_{i=1}^m x_i = 1, \alpha \in \mathbb{R}^{m}_{>0}$
\end{tabularx}
\vspace{2mm}
\caption{Parametric distributions of interest.}
\label{tab:probability-distributions}
\end{table}

\vspace{-10mm}
\noindent We use the Iverson bracket notation \citep{iverson1962}
\begin{equation}
\ib{P} := 
\begin{cases} 
1 & \mbox{ if $P$ is true } \\
0 & \mbox{ if $P$ is false }  
\end{cases}
\end{equation}
to denote the Dirac measure probabilities associated with a proposition $P$ such as $x_t = y_t, x_t < 0$ or $x \ge 0$.
In derivations, we make frequent use of the following shorthand notations for multiple sums and multiple integrals of multivariate real-valued functions $f : \mathbb{R}^n \to \mathbb{R}$, 
\begin{equation}
\sum_{x_{1:n}} f(x_1,...,x_n)  := \sum_{x_1} \cdots \sum_{x_n} f(x_1,...,x_n) 
\end{equation}
and
\begin{equation}
\int f(x_1,...,x_n)\, d{x_{1:n}}  := \int \cdots \int f(x_1,...,x_n) \,dx_1\cdots dx_n,
\end{equation}
respectively. In derivations, we also make frequent use of the following theorem.

\begin{theorem}[Exchangeability of integration/summation and multiplication]\label{thm:exchangeability}
For $i = 1,...,n$, let $f_i: \mathbb{R} \to \mathbb{R}$ denote absolutely integrable univariate real-valued functions. Then, 
\begin{equation}\label{eq:ex-1}
\int \left( \prod_{i=1}^n f_i(x_i)\right) \,dx_{1:n} = \prod_{i=1}^n \left(\int f_i(x_i)\,dx_i\right).
\end{equation}
Similarly, for $i = 1,...,n$, let $f_i: D_i \to R_i$ denote univariate functions on finite domains $D_i$. Then
\begin{equation}\label{eq:ex-2}
\sum_{x_{1:n}} \left( \prod_{i=1}^n f_i(x_i)\right) = \prod_{i=1}^n \left( \sum_{x_i} f_i(x_i) \right), 
\end{equation}
and the theorem also holds for mixed multiple integrals and sums.
\end{theorem}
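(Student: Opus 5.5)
The plan is to prove \eqnref{eq:ex-1} by induction on $n$, using Fubini's theorem, and then to obtain \eqnref{eq:ex-2} as the special case of counting measures on the finite sets $D_i$.

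For $n = 1$ the statement is trivial. For the induction step, assume the claim holds for $n-1$ and write the integrand as
\begin{equation}
F(x_1,\ldots,x_n) := g(x_1,\ldots,x_{n-1})\, f_n(x_n), \quad \mbox{ where } g(x_1,\ldots,x_{n-1}) := \prod_{i=1}^{n-1} f_i(x_i).
\end{equation}
The first step is to check the hypothesis of Fubini's theorem, namely that $F$ is absolutely integrable on $\mathbb{R}^n$.
\begin{itemize}
\item Apply Tonelli's theorem to the nonnegative function $|F| = \prod_{i=1}^n |f_i(x_i)|$.
\item Use the induction hypothesis applied to the $|f_i|$, which is itself a Tonelli statement for nonnegative functions.
\item Together these give $\int |F|\,dx_{1:n} = \prod_{i=1}^n \int |f_i|\,dx_i < \infty$, since each $f_i$ is absolutely integrable.
\end{itemize}
Fubini's theorem then permits integrating first over $x_n$ for fixed $x_{1:n-1}$. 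This gives $\int F\,dx_n = g(x_{1:n-1})\int f_n(x_n)\,dx_n$, because $g$ is constant in $x_n$ and integration is linear. Pulling the constant $\int f_n$ out of the remaining $(n-1)$-fold integral and applying the induction hypothesis to $\int g\,dx_{1:n-1}$ completes the step.

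For the finite sum \eqnref{eq:ex-2}, the same induction applies with counting measure on the finite sets $D_i$. Here no integrability issue arises, since all sums are finite. The step reduces to the distributive law: $\sum_{x_n} g\, f_n(x_n) = g \sum_{x_n} f_n(x_n)$, followed by reordering finitely many sums. Alternatively, one can expand $\prod_i \bigl(\sum_{x_i} f_i(x_i)\bigr)$ directly by distributivity into a sum over all tuples in $D_1\times\cdots\times D_n$. For mixed integrals and sums, one regards each coordinate as carrying either Lebesgue or counting measure and runs the identical Fubini induction on the product measure space.

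The only genuine obstacle is justifying the interchange of integration order in the continuous case, i.e.\ verifying product integrability. I would handle it first through Tonelli applied to $|F|$ as described above, and the rest is bookkeeping.
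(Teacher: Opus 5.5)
Your proposal is correct and follows the same route as the paper's proof. Both argue by induction on $n$, pull out by linearity the factor that does not depend on the current integration variable, and treat the sum and mixed cases by the same argument. The only difference is that you justify the interchange with Tonelli and Fubini for the Lebesgue integral on $\mathbb{R}^n$, whereas the paper's notational convention defines $\int f\,dx_{1:n}$ as an iterated integral, so its proof needs only linearity (and it starts the induction at $n=2$ rather than $n=1$).
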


\begin{proof}
The theorem follows from the linearity properties of integrals and sums. We show \eqnref{eq:ex-1} by induction. Eq. \eqref{eq:ex-2} as well as special cases for mixed multiple integrals, series, and sums then follow by analogy.

\vspace{2mm}
\noindent\textit{Base case.} Let $n = 2$. Then,
\begin{align}
\begin{split}
\int \left( \prod_{i=1}^2 f_i(x_i)\right) \,dx_{1:2} 
& = \int \left( \int  f_1(x_1)f_2(x_2)  \,dx_1 \right) \,dx_2                   \\
& = \int f_2(x_2)\left(\int f_1(x_1)  \,dx_1 \right) \,dx_2                     \\    
& = \left(\int f_1(x_1)  \,dx_1 \right) \left(\int f_2(x_2) \,dx_2\right)       \\   
& = \prod_{i=1}^2 \left(\int f_i(x_i)  \,dx_i \right).                           \\       
\end{split}
\end{align}

\vspace{2mm}
\noindent \textit{Induction step.} Assume that \eqnref{eq:ex-1} holds for $n-1$, for example, for $n-1$ = 2. Then,
\begin{align}
\begin{split}
\int \left( \prod_{i=1}^n f_i(x_i)\right) \,dx_{1:n} 
& = \int\left(\int\cdots\left(\int f_1(x_1)\cdots f_{n-1}(x_{n-1}) f_n(x_n)\,dx_1\right)\cdots\,dx_{n-1}\right)\,dx_{n} \\    
& = \int f_n(x_n) \left(\int\cdots\left(\int f_1(x_1)\cdots f_{n-1}(x_{n-1}) \,dx_1\right)\cdots\,dx_{n-1}\right)\,dx_{n} \\
& = \int f_n(x_n) \left(\int\left(\prod_{i=1}^{n-1} f_i(x_i)\right) \,dx_{1:n-1} \right)\,dx_{n} \\
& = \left(\int f_n(x_n) \,dx_{n}\right) \left(\int\left(\prod_{i=1}^{n-1} f_i(x_i)\right) \,dx_{1:n-1} \right)\\
& = \left(\int f_n(x_n) \,dx_{n}\right) \left(\prod_{i=1}^{n-1} \left(\int f_i(x_i)\,dx_i\right)\right)\\
& = \prod_{i=1}^n \left(\int f_i(x_i)\,dx_i\right),
\end{split}
\end{align}
where we used the assumption that \eqnref{eq:ex-1} holds for $n-1$ in the penultimate equality.

\end{proof}

\clearpage
\section{Participants and procedure}\label{sec:participants-and-procedure}
\subsection{Participants}
Participants were recruited through the Magdeburg SONA Research Participation System and social media platforms. In total, $n = 60$ participants gave their written informed consent for participation in the study and dissemination of their anonymized data for research purposes. Due to a technical issue, the data of one participant were not recorded for the Gabor contrast discrimination task. With the exception of two individuals, all participants were undergraduate students enrolled in the Bachelor of Science program in Psychology, holding a general qualification for university entrance as their highest attained educational level. The remaining two participants were enrolled in a Master of Science program and had previously completed a bachelor's degree. Participants were aged between 18 and 28 years (average $21.55$ years, SD  $2.57$). 12 participants identified as male, 48 participants identified as female. Participants received course credit for their participation or participated without compensation. The study was approved by the ethics committee of the Otto von Guericke University Magdeburg Medical Faculty (Ethik-Kommission der Otto-von-Guericke Universität an der Medizinischen Fakultät und am Universitätsklinikum Magdeburg A.ö.R.; 2025/05/12, reference number 68/25). Participants were informed that participation in the study is voluntary and may be terminated at any time without giving any reasons.

\subsection{Procedure}
The study consisted of a one-time visit of approximately one hour to the Institute of Psychology at Otto von Guericke University Magdeburg. After providing their written informed consent, participants completed three tasks: the Gabor contrast discrimination task, the symmetric bandit learning task, and a target shooting video game. The tasks were conducted in counterbalanced order among the participants. The behavioral data from the target shooting game are not part of the current communication. Participants were seated in front of a computer monitor (approximately 60 cm distance, 23-inch LED monitor with a resolution of 1920 x 1080 pixels and a refresh rate of 60 Hz). For the Gabor contrast discrimination task, participants rested their heads on a chin rest to maintain a constant distance from the monitor. For each task, the participants received written task instructions in German (cf. \fullnameref{sec:gabor-contrast-discrimination-task-instructions} and \fullnameref{sec:symmetric-bandit-learning-task-instructions}). Both the Gabor contrast discrimination task and the symmetric bandit task were presented in 10 blocks of 30 trials each. After each block, participants could take a break and initiate the next block themselves. Each task took approximately 10 minutes to complete. To familiarize the participants with the tasks before recording the data, they completed a few training trials and had the opportunity to ask any questions. 

The experimental tasks were implemented in Python 3.8.10 using PsychoPy 2025.1.0 in the form of a specific agent of the task's ABM. For details, please refer to the task specific versions of \textit{abm\_experiment.py}, \textit{abm\_abm.py} , and \textit{abm\_agent\_H.py}. The Gabor patch stimuli were custom-created .png files (cf. \fullnameref{sec:gabor-patch-stimuli}) with an 815 x 308 pixel resolution and were presented on a gray background. For inspecting the stimuli, please refer to the repository folder Data/Stimuli. The symmetric bandit stimuli were created directly using PsychoPy 2025.1.0.
\clearpage 

\section{Gabor contrast discrimination}\label{sec:gabor-contrast-discrimination-supplement}
\subsection{Gabor contrast discrimination task instructions}\label{sec:gabor-contrast-discrimination-task-instructions}

\noindent\textbf{German original} In jedem Durchgang (Trial) dieser Aufgabe sehen Sie zwei Gabor-Patches – einen links und einen rechts von einem Fixationskreuz. Einer der beiden Patches hat einen höheren Kontrast, d.h. einen größeren Unterschied zwischen schwarzen und weißen Bildanteilen. Ihre Aufgabe ist es, durch Drücken der linken oder rechten Pfeiltaste anzugeben, welcher Patch den höheren Kontrast hat. Bei einer korrekten Antwort erhalten Sie +1 Punkt, bei einer falschen Antwort +0 Punkte. Ihr Ziel ist es, möglichst viele Punkte zu sammeln, indem Sie möglichst oft korrekt antworten. Manchmal ist der Kontrastunterschied deutlich sichtbar, manchmal ist er subtiler. Nachdem Sie sich mit der Aufgabe vertraut gemacht haben, absolvieren Sie 10 Blöcke mit jeweils 30 Trials.

\vspace{2mm}
\noindent\textbf{English translation} In each trial of the task, you will see two Gabor patches to the left and right of a fixation cross. One of the patches has a higher contrast, i.e., a larger difference between the black and white parts of the image than the other. Your task is to indicate by pressing the left or right cursor which of the Gabor patches has the higher
contrast. If your answer is correct, you receive a reward of +1, if your answer is incorrect, you receive a reward of +0. Your aim is to maximize your sum of rewards,  i.e., to be correct as often as possible. Sometimes, the difference in contrast between the Gabor patches is obvious, and sometimes it is more subtle. After familiarizing yourself with the task, you will perform 10 blocks of 30 trials each. 

\subsection{Gabor stimuli}\label{sec:gabor-patch-stimuli}
Gabor patches were created according to the two-dimensional Gabor function 
\begin{multline}
g : \mathbb{R}^2 \to \mathbb{R},
g\mapsto g(x,y) :=  
\\
\gamma\exp\left(-\frac{(x \cos \phi + y \sin \phi)^2 + (- \sin \phi + y \cos \phi)^2 }{2\sigma^2}\right)  
\cos\left(2\pi\frac{(x \cos \phi + y \sin \phi)^2}{\lambda} + \theta\right),    
\end{multline}
\noindent where $\gamma$ encodes the amplitude scaling, $\phi$ encodes the stimulus orientation, $\lambda$ encodes the wavelength of the sinusoidal factor, $\theta$ encodes the sinusoidal phase offset, and $\sigma$ is the standard deviation parameter of the Gaussian envelope \citep[cf.][]{daugman1985}. For each Gabor patch, the orientation and sinusoidal phase offset were set to $\phi := 0$ and $\theta := 0$, respectively. The wavelength of the sinusoidal factor was set to $\lambda := 15$, and the standard deviation parameter of the Gaussian envelope was set to $\sigma := 15$. For each stimulus comprising two Gabor patches to the left and right of the center of the screen, let $\gamma_l, \gamma_r \in [0,1]$ denote the amplitude scalings of the left and right Gabor patches, respectively. The contrast difference of a stimulus was defined as
\begin{equation}
\gamma_r  - \gamma_l =: c
\end{equation}
and the mean amplitude scaling of each stimulus was constrained to
\begin{equation}
\frac{1}{2}(\gamma_l + \gamma_r) := \mu
\end{equation}
with $\mu := \frac{1}{2}$. Stimuli with $c \in [-\kappa,\kappa]$ for $\kappa = \frac{1}{2}$ in 
steps of $0.01$ were thus created by setting
\begin{equation}
\gamma_l := \mu - \frac{1}{2}c \mbox{ and } \gamma_r := c + \gamma_l 
\end{equation}
as implemented in \textit{abm\_stimuli.py}.

\clearpage
\vspace*{\fill}
\begin{figure}[!htbp]
\centering
\begin{subfigure}{\linewidth}
\centering
\includegraphics[width=\linewidth]{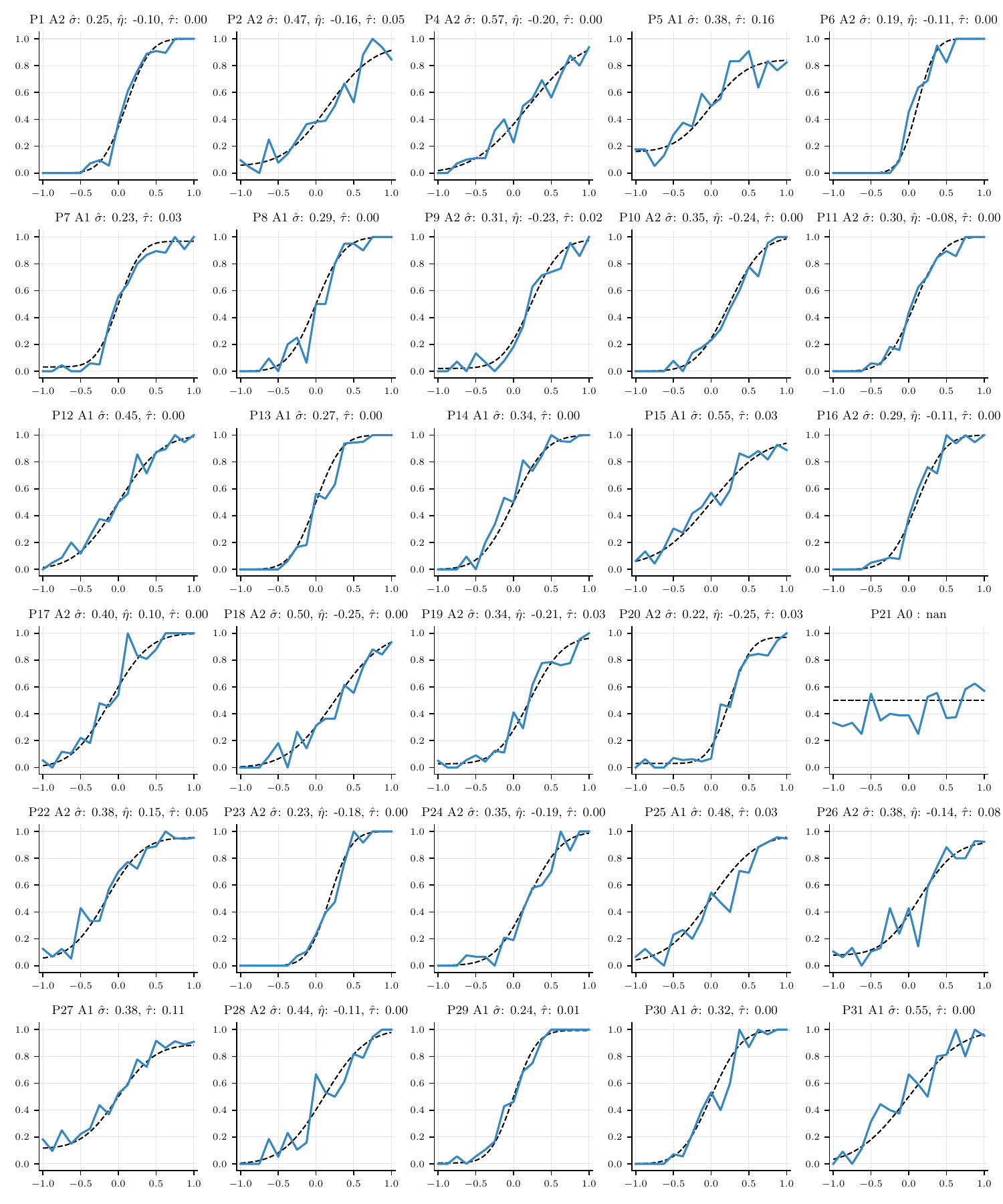}
\caption{Descriptive statistics of the Gabor contrast discrimination task and posterior predictive psychometric functions for participants 1 to 31 ($x$-axis: normalized contrast differences, $y$-axis: observed proportion of ``right'' actions).}
\label{fig:S1A}
\end{subfigure}
\end{figure}
\clearpage
\begin{figure}[!htbp]\ContinuedFloat
\centering
\begin{subfigure}{\linewidth}
\centering
\includegraphics[width=\linewidth]{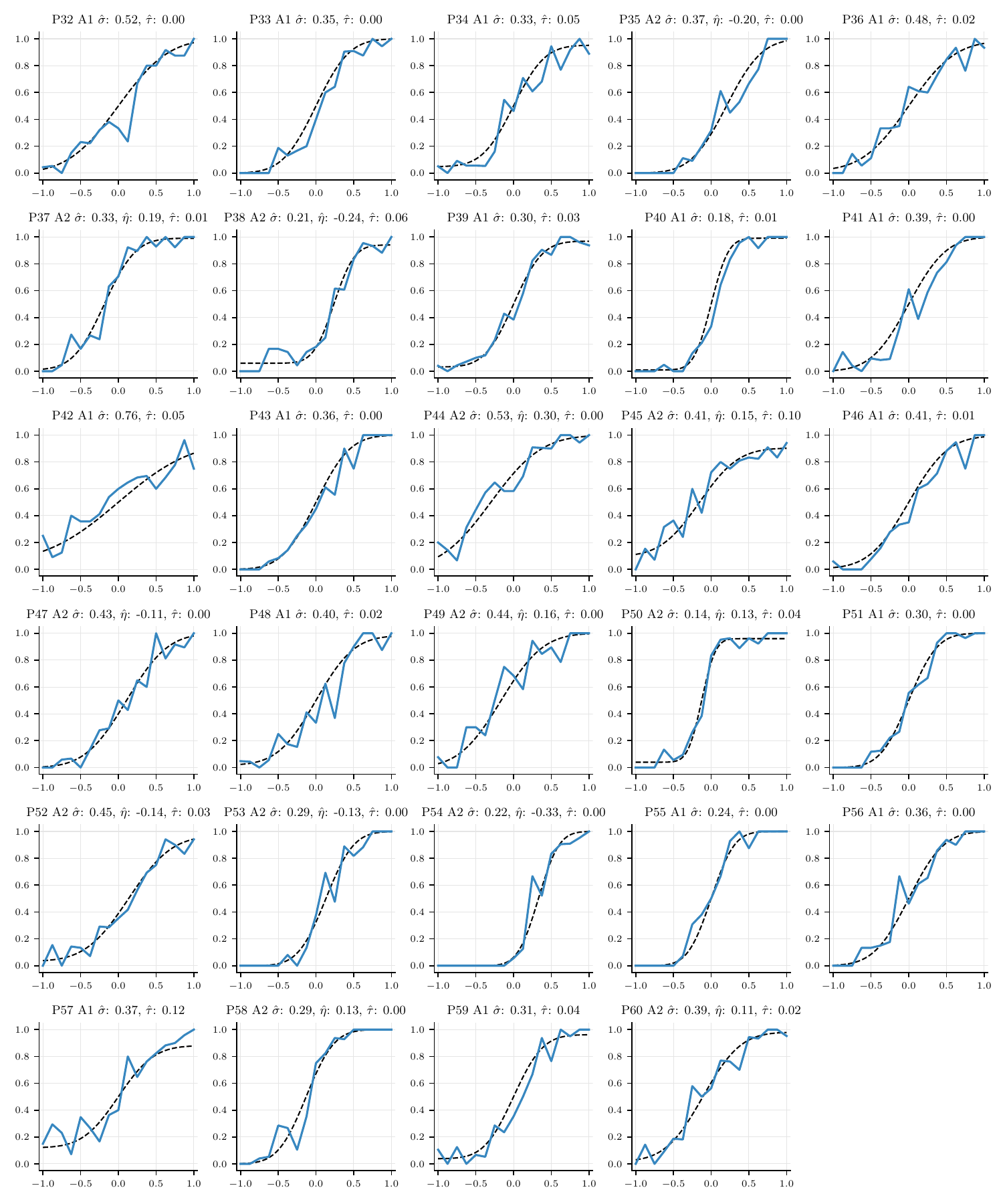}
\caption{Descriptive statistics of the Gabor contrast discrimination task and posterior predictive psychometric functions for participants 32 to 60 ($x$-axis: normalized contrast differences, $y$-axis: observed proportion of ``right'' actions).}
\label{fig:S1B}
\end{subfigure}
\caption{Descriptive statistics of the Gabor contrast discrimination task and posterior predictive psychometric functions. On each panel, the blue line depicts the proportion of ``right'' actions in 17 equally spaced contrast difference bins with an approximate width of 0.12. The dashed black line depicts the participant-specific posterior predictive psychometric function, i.e., the probability of $a_t = 1$ based on the participant-specific most plausible model and its maximum-likelihood parameter estimates as a function of $c_t$.}\label{fig:S1}
\end{figure}
\clearpage

\subsection{Gabor contrast discrimination agent model analysis}\label{sec:gabor-contrast-discrimination-agent-analysis}
\begin{theorem}[Belief state]\label{thm:belief-state-1}
Let 
\begin{equation}
p(s_{1:T},c_{1:T},o_{1:T},d_{1:T},a_{1:T},r_{1:T})
= \prod_{t=1}^T p(s_t)p(c_t|s_t)p(o_t|c_t)p(d_t|o_t)p(a_t|d_t)p(r_t|s_t,a_t)
\end{equation} 
denote the ABM induced by \eqnref{eq:T-1}, \eqnref{eq:A-p-1}, and \eqnref{eq:D-1}. Then, 
\begin{equation}\label{eq:bs-1-1}
p(s_{1:T}|o_{1:T}) = \prod_{t=1}^T p(s_t|o_t)
\end{equation}
and, for $\eta = 0$,
\begin{equation}\label{eq:bs-1-2}
p(s_t|o_t)
= \frac{ \left(\Phi(0;o_t,\sigma^2) - \Phi(-\kappa;o_t,\sigma^2)\right)^{1-s_t}
         \left(\Phi(\kappa;o_t,\sigma^2)  - \Phi(0;o_t,\sigma^2)\right)^{s_t}}
        {\left(\Phi(\kappa;o_t,\sigma^2) - \Phi(-\kappa;o_t,\sigma^2)\right)}.
\end{equation}\label{eq:bs-2}
\end{theorem}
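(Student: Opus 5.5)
The proof splits into two parts: a factorization argument for \eqnref{eq:bs-1-1} and a direct Bayes computation for \eqnref{eq:bs-1-2}.

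\textbf{Factorization.} First obtain the joint marginal of states and observations by summing and integrating the ABM over $c_{1:T}$, $d_{1:T}$, $a_{1:T}$ and $r_{1:T}$. The ABM factors over trials, and within each trial the factors $p(r_t|s_t,a_t)$, $p(a_t|d_t)$ and $p(d_t|o_t)$ are normalized conditional distributions. Summing them out in the order $r_t$, then $a_t$, then $d_t$ therefore leaves each factor equal to one. By \autoref{thm:exchangeability}, the remaining integral over $c_{1:T}$ of a product of trial-wise functions becomes a product of trial-wise integrals. This gives
\begin{equation*}
p(s_{1:T},o_{1:T}) = \prod_{t=1}^T p(s_t)\int p(c_t|s_t)p(o_t|c_t)\,dc_t = \prod_{t=1}^T p(s_t,o_t).
\end{equation*}
Summing further over $s_{1:T}$ and applying \autoref{thm:exchangeability} once more yields $p(o_{1:T}) = \prod_t p(o_t)$. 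Dividing the two expressions gives $p(s_{1:T}|o_{1:T}) = \prod_t p(s_t,o_t)/p(o_t) = \prod_t p(s_t|o_t)$, which is \eqnref{eq:bs-1-1}.

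\textbf{Explicit belief state.} For \eqnref{eq:bs-1-2}, compute $p(s_t,o_t)$ for each value of $s_t$ with $\eta=0$. For $s_t=0$,
\begin{equation*}
p(s_t=0,o_t) = \frac{1}{2}\int_{-\kappa}^{0}\frac{1}{\kappa}\,\mbox{N}(o_t;c_t,\sigma^2)\,dc_t .
\end{equation*}
The key observation is that the normal density is symmetric in its argument and mean, $\mbox{N}(o_t;c_t,\sigma^2)=\mbox{N}(c_t;o_t,\sigma^2)$. The integral therefore equals $\Phi(0;o_t,\sigma^2)-\Phi(-\kappa;o_t,\sigma^2)$, so
\begin{equation*}
p(s_t=0,o_t) = \frac{1}{2\kappa}\bigl(\Phi(0;o_t,\sigma^2)-\Phi(-\kappa;o_t,\sigma^2)\bigr).
\end{equation*}
By the same argument, $p(s_t=1,o_t)=\frac{1}{2\kappa}\bigl(\Phi(\kappa;o_t,\sigma^2)-\Phi(0;o_t,\sigma^2)\bigr)$. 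The half-open interval $]0,\kappa]$ does not matter, since a single point has Lebesgue measure zero.

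Adding the two expressions telescopes to $p(o_t)=\frac{1}{2\kappa}\bigl(\Phi(\kappa;o_t,\sigma^2)-\Phi(-\kappa;o_t,\sigma^2)\bigr)$. Dividing $p(s_t,o_t)$ by $p(o_t)$ cancels the common constant $1/(2\kappa)$. Writing the two cases jointly with exponents $1-s_t$ and $s_t$ then gives \eqnref{eq:bs-1-2}.

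The only step needing care is the change of viewpoint from integrating over the mean to evaluating a CDF in $c_t$. This is where the assumption $\eta=0$ is used: with $\eta\neq 0$ the same computation goes through, but the result involves the shifted CDF $\Phi(\cdot;o_t-\eta,\sigma^2)$, so the displayed form of \eqnref{eq:bs-1-2} requires $\eta=0$. The marginalization in the first part is routine given \autoref{thm:exchangeability}.
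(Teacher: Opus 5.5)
Your proposal is correct and follows essentially the same route as the paper: you sum out $r_t$, $a_t$, $d_t$ as normalized conditionals and use \autoref{thm:exchangeability} to factorize both $p(s_{1:T},o_{1:T})$ and $p(o_{1:T})$ over trials, and then, for $\eta=0$, you use $\mbox{N}(o_t;c_t,\sigma^2)=\mbox{N}(c_t;o_t,\sigma^2)$ to turn the $c_t$-integrals into CDF differences whose common factor $1/(2\kappa)$ cancels upon normalization. Your side remark is also correct: for $\eta\neq 0$ the same computation yields differences of $\Phi(\cdot;o_t-\eta,\sigma^2)$.
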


\begin{proof}
We first note that, with respect to \eqnref{eq:bs-1-1}, we have
\begin{align*}
& p(s_{1:T}|o_{1:T}) \\
& = \frac{p(s_{1:T},o_{1:T})}{p(o_{1_T})}                                                                                                                  \\
& = \frac{\int_{c_{1:t}}\sum_{d_{1:T}}\sum_{a_{1:T}}\sum_{r_{1:T}}p(s_{1:T},c_{1:T},o_{1:T},d_{1:T},a_{1:T},r_{1:T})}{p(o_{1:T})}                          \\
& = \frac{\int_{c_{1:t}}\sum_{d_{1:T}}\sum_{a_{1:T}}\sum_{r_{1:T}}\prod_{t=1}^T p(s_t)p(c_t|s_t)p(o_t|c_t)p(d_t|o_t)p(a_t|d_t)p(r_t|s_t,a_t)}{p(o_{1:T})}  \\
& = \frac{\prod_{t=1}^T\int_{c_{t}}\sum_{d_{t}}\sum_{a_{t}}\sum_{r_{t}}p(s_t)p(c_t|s_t)p(o_t|c_t)p(d_t|o_t)p(a_t|d_t)p(r_t|s_t,a_t)}{p(o_{1:T})}           \\
& = \frac{\prod_{t=1}^Tp(s_t)\int_{c_{t}}p(c_t|s_t)p(o_t|c_t)\sum_{d_{t}}p(d_t|o_t)\sum_{a_{t}}p(a_t|d_t)\sum_{r_{t}}p(r_t|s_t,a_t)}{p(o_{1:T})}           \\
& = \frac{\prod_{t=1}^Tp(s_t)\int_{c_{t}}p(c_t|s_t)p(o_t|c_t)}{p(o_{1:T})}                                                                                 \\
& = \frac{\prod_{t=1}^Tp(s_t)\int_{c_{t}}p(c_t|s_t)p(o_t|c_t)}{\sum_{s_{1:T}}p(s_{1:T},o_{1:T})}                                                           \\
& = \frac{\prod_{t=1}^Tp(s_t)\int_{c_{t}}p(c_t|s_t)p(o_t|c_t)}{\sum_{s_{1:T}}\prod_{t=1}^Tp(s_t)\int_{c_{t}}p(c_t|s_t)p(o_t|c_t)}                          \\  
& = \frac{\prod_{t=1}^Tp(s_t)\int_{c_{t}}p(c_t|s_t)p(o_t|c_t)}{\prod_{t=1}^T \sum_{s_t} p(s_t)\int_{c_{t}}p(c_t|s_t)p(o_t|c_t)}                            \\  
& = \prod_{t=1}^T \frac{p(s_t)\int_{c_{t}}p(c_t|s_t)p(o_t|c_t)}{\sum_{s_t} p(s_t)\int_{c_{t}}p(c_t|s_t)p(o_t|c_t)}                                         \\ 
& = \prod_{t=1}^T \frac{\int_{c_{t}}p(s_t)p(c_t|s_t)p(o_t|c_t)}{\sum_{s_t} \int_{c_{t}}p(s_t)p(c_t|s_t)p(o_t|c_t)}                                         \\  
& = \prod_{t=1}^T \frac{\int_{c_{t}}p(s_t,c_t)p(o_t|c_t,s_t)}{\sum_{s_t} \int_{c_{t}}p(s_t, c_t)p(o_t|c_t,s_t)}                                            \\  
& = \prod_{t=1}^T \frac{\int_{c_{t}}p(s_t,c_t,o_t)}{\sum_{s_t} \int_{c_{t}}p(s_t,c_t, o_t)}                                                                \\
& = \prod_{t=1}^T \frac{p(s_t,o_t)}{p(o_t)}                                                                                                                \\  
& = \prod_{t=1}^T p(s_t|o_t),                                                                                                                              \\  
\end{align*}
where we used \eqnref{thm:exchangeability} in the fourth and ninth equalities. 
With respect to \eqnref{eq:bs-1-2}, we have for each factor from the last three equalities above that
\begin{equation}
p(s_t|o_t) = \frac{p(s_t,o_t)}{p(o_t)} = \frac{\int_{c_{t}}p(s_t,c_t)p(o_t|c_t,s_t)}{\sum_{s_t} \int_{c_{t}}p(s_t, c_t)p(o_t|c_t,s_t)}.
\end{equation}
For the numerator, we thus obtain with $\eta = 0$
\begin{align*} 
& p(s_t,o_t)                                                            \\
& = \int_{c_{t}}p(s_t,c_t)p(o_t|c_t,s_t)                                \\
& = \int_{-\kappa}^\kappa p(s_t)p(c_t|s_t)p(o_t|c_t)\,dc_t              \\
& = p(s_t) \int_{-\kappa}^\kappa p(c_t|s_t)p(o_t|c_t)\,dc_t             \\
& = 
\mbox{B}\left(s_t;0.5\right) 
\int_{-\kappa}^\kappa 
\mbox{U}\left(c_t;[-\kappa, 0]\right)^{1-s_t}
\mbox{U}\left(c_t;]0,\kappa]\right)^{s_t}
\mbox{N}\left(o_t;c_t,\sigma^2\right)
\,dc_t           
\\
& = 
\left(\frac{1}{2}\right)^{s_t}\left(1-\frac{1}{2}\right)^{1-s_t} 
\\
& \quad \times
\int_{-\kappa}^\kappa 
\left(\frac{1}{\kappa}\ib{c_t \in [-\kappa,0]}\right)^{1-s_t}
\left(\frac{1}{\kappa}\ib{c_t \in ]0,\kappa]}\right)^{s_t}
\left(\frac{1}{\sqrt{2\pi\sigma^2}}e^{-\frac{1}{2\sigma^2}(o_t-c_t)^2}\right)
\,dc_t  
\\
& = 
\left(\frac{1}{2}\right)^{s_t}\left(\frac{1}{2}\right)^{1-s_t} 
\\
& \quad \times
\int_{-\kappa}^\kappa 
\left(\frac{1}{\kappa}\right)^{1-s_t}
\left(\ib{c_t \in [-\kappa,0]}\right)^{1-s_t}
\left(\frac{1}{\kappa}\right)^{s_t}
\left(\ib{c_t \in ]0,\kappa]}\right)^{s_t}
\left(\frac{1}{\sqrt{2\pi\sigma^2}}e^{-\frac{1}{2\sigma^2}(c_t - o_t)^2}\right)
\,dc_t           
\\
& = 
\left(\frac{1}{2}\right)^{s_t}\left(\frac{1}{2}\right)^{1-s_t} 
\left(\frac{1}{\kappa}\right)^{1-s_t}
\left(\frac{1}{\kappa}\right)^{s_t}
\\
& \quad \times
\int_{-\kappa}^\kappa 
\left(\ib{c_t \in [-\kappa,0]}\right)^{1-s_t}
\left(\ib{c_t \in ]0,\kappa]}\right)^{s_t}
\left(\frac{1}{\sqrt{2\pi\sigma^2}}e^{-\frac{1}{2\sigma^2}(c_t - o_t)^2}\right)
\,dc_t           
\\
& = 
\frac{1}{2\kappa} 
\int_{-\kappa}^\kappa 
\left(\ib{c_t \in [-\kappa,0]}\right)^{1-s_t}
\left(\ib{c_t \in ]0,\kappa]}\right)^{s_t}
\left(\frac{1}{\sqrt{2\pi\sigma^2}}e^{-\frac{1}{2\sigma^2}(c_t-o_t)^2}\right)
\,dc_t           
\\
& = 
\frac{1}{2\kappa} 
\left(\int_{-\kappa}^0 \frac{1}{\sqrt{2\pi\sigma^2}}e^{-\frac{1}{2\sigma^2}(c_t-o_t)^2}\,dc_t\right)^{1-s_t}
\left(\int_0^\kappa    \frac{1}{\sqrt{2\pi\sigma^2}}e^{-\frac{1}{2\sigma^2}(c_t-o_t)^2}\,dc_t\right)^{s_t}
\\
& = 
\frac{1}{2\kappa} 
\left(
\int_{-\infty}^0 \frac{1}{\sqrt{2\pi\sigma^2}}e^{-\frac{1}{2\sigma^2}(c_t-o_t)^2} \,dc_t -
\int_{-\infty}^{-\kappa} \frac{1}{\sqrt{2\pi\sigma^2}}e^{-\frac{1}{2\sigma^2}(c_t-o_t)^2} \,dc_t
\right)^{1-s_t}
\\
& \quad \times
\left(\int_0^\kappa \frac{1}{\sqrt{2\pi\sigma^2}}e^{-\frac{1}{2\sigma^2}(c_t-o_t)^2}\,dc_t\right)^{s_t}
\\
& = 
\frac{1}{2\kappa} 
\left(\Phi\left(0;o_t,\sigma^2\right) - \Phi\left(-\kappa;o_t,\sigma^2\right)\right)^{1-s_t}
\\
& \quad \times
\left(
\int_{-\infty}^\kappa \frac{1}{\sqrt{2\pi\sigma^2}}e^{-\frac{1}{2\sigma^2}(c_t-o_t)^2}\,dc_t - 
\int_{-\infty}^0 \frac{1}{\sqrt{2\pi\sigma^2}}e^{-\frac{1}{2\sigma^2}(c_t-o_t)^2}\,dc_t
\right)^{s_t}
\\
& 
= 
\frac{1}{2\kappa}
\left(\Phi\left(0;o_t,\sigma^2\right)      - \Phi\left(-\kappa;o_t,\sigma^2\right)\right)^{1-s_t}
\left(\Phi\left(\kappa;o_t,\sigma^2\right) - \Phi\left(0;o_t,\sigma^2\right)\right)^{s_t},
\end{align*}
\noindent where in the eighth equality we used the fact that for both $s_t = 0$ and $s_t = 1$, the factor in 
front of the integral term takes on the value $\frac{1}{2\kappa}$. For the denominator, we obtain
\begin{align*}
p(o_t)
& = \sum_{s_t} p(s_t,o_t)                                       
\\
& = 
\frac{1}{2\kappa}
\sum_{s_t} 
\left(\Phi\left(0;o_t, \sigma^2\right)      - \Phi\left(-\kappa;o_t, \sigma^2\right)\right)^{1-s_t}
\left(\Phi\left(\kappa;o_t, \sigma^2\right) - \Phi\left(0;o_t, \sigma^2\right)\right)^{s_t}
\\
& = 
\frac{1}{2\kappa}
\left(
\Phi\left(0;o_t, \sigma^2\right)      - \Phi\left(-\kappa;o_t, \sigma^2\right) + 
\Phi\left(\kappa;o_t, \sigma^2\right) - \Phi\left(0;o_t, \sigma^2\right)
\right)
\\
& = \frac{1}{2\kappa}\left(\Phi\left(\kappa;o_t, \sigma^2\right) - \Phi\left(-\kappa;o_t, \sigma^2\right)\right).
\end{align*}
Combining the results of the numerator and the denominator, we thus obtain
\begin{equation}
p(s_t|o_t) = 
\frac{\left(\Phi\left(0;o_t, \sigma^2\right)- \Phi\left(-\kappa;o_t, \sigma^2\right)\right)^{1-s_t}
      \left(\Phi\left(\kappa;o_t, \sigma^2\right)-\Phi\left(0;o_t, \sigma^2\right)\right)^{s_t}}  
     {\Phi\left(\kappa;o_t, \sigma^2\right) - \Phi\left(-\kappa;o_t, \sigma^2\right)}. 
\end{equation}
\end{proof}

\begin{figure}[!htbp]
\centering
\includegraphics[width=\linewidth]{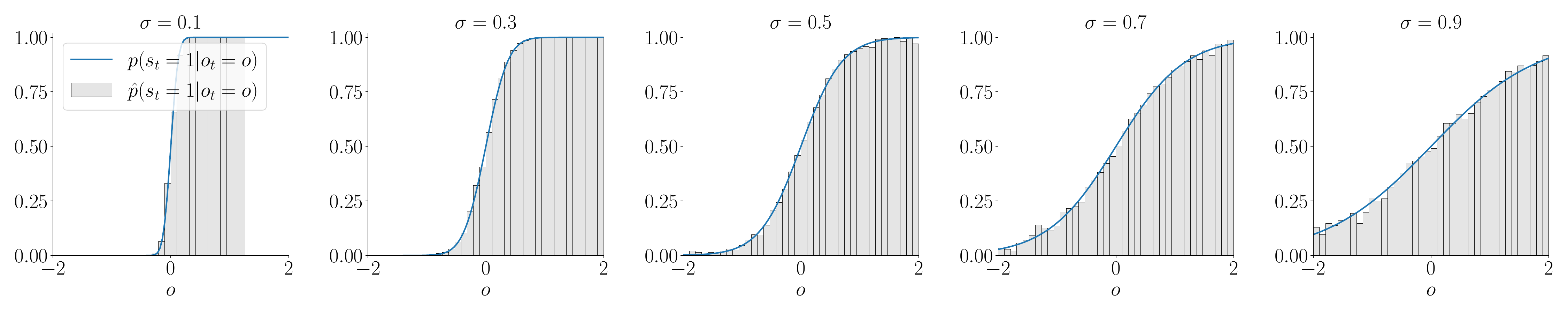}
\caption{Sampling-based validation of \fullref{thm:belief-state-1}. The figure shows the histogram-based estimation of $p(s_t = 1|o_t = o)$ based on simulated joint observations of $s_t$ and $o_t$ generated during model validation ($\hat{p}(s_t = 1|o_t = o)$,  gray bars) and the analytical result of \fullref{thm:belief-state-1} (blue lines). For varying levels of $\sigma$ and $\eta := 0$, a good match is observed. For full analytical details, please refer to \textit{abm\_figure\_S2.py}.}\label{fig:S2}
\end{figure}

\clearpage

\begin{theorem}[Decision optimality]\label{thm:decision-optimality-1}
Let 
\begin{equation}
p(s_{1:T}, c_{1:T},o_{1:T},d_{1:T},a_{1:T}, r_{1:T})
= \prod_{t=1}^T p(s_t)p(c_t|s_t) p(o_t|c_t)p(d_t|o_t)p(a_t|d_t)p(r_t|s_1,d_t)
\end{equation} 
denote the agent model induced by \eqnref{eq:T-1}, \eqnref{eq:A-1}, and \eqnref{eq:D-1}
under the assumption of the absence of post-decision noise, i.e., with 
\begin{equation}
p(r_t|s_t, d_t) :=\mbox{B}(r_t; \ib{d_t = s_t}).
\end{equation}
Let $\varphi$ denote a  \textit{Markovian  policy}, i.e., a function of the form \citep[cf.][]{krishnamurthy2016}
\begin{equation}
\varphi : O \to D, o_t \mapsto \varphi(o_t) =: d_t
\end{equation}
and let $\Pi$ denote the set of all such policies in the current scenario. Then, the policy
\begin{equation}
\varphi^* := \delta \circ \phi \circ \beta
\end{equation}
maximizes the agent's expected cumulative reward, i.e.,
\begin{equation}
\varphi* = \argmax_{\varphi \in \Pi} \left(\mathbb{E}_{p(s_{1:T}|o_{1:T})}\left(\sum_{t = 1}^T \mathbb{E}_{p(r_t|s_t,d_t)}(r_t)\right)\right).
\end{equation}
\end{theorem}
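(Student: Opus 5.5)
The plan is to reduce the cumulative objective to a sum of per-trial objectives, each of which can be maximized pointwise in the observation $o_t$. First, for a fixed Markovian policy $\varphi \in \Pi$, we have $d_t = \varphi(o_t)$, so
\[
\mathbb{E}_{p(r_t|s_t,d_t)}(r_t) = \ib{\varphi(o_t) = s_t}.
\]
By linearity of expectation, the objective equals
\[
\sum_{t=1}^T \mathbb{E}_{p(s_{1:T}|o_{1:T})}\left(\ib{\varphi(o_t) = s_t}\right).
\]
Invoking the factorization $p(s_{1:T}|o_{1:T}) = \prod_{t=1}^T p(s_t|o_t)$ from \autoref{thm:belief-state-1}, together with the exchangeability result \autoref{thm:exchangeability}, the sums over $s_{k}$ with $k \neq t$ each contribute a factor of one. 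Each summand therefore reduces to
\[
\sum_{s_t \in S} p(s_t|o_t)\,\ib{\varphi(o_t) = s_t} = p(s_t = \varphi(o_t)\,|\,o_t) = b_t^{\varphi(o_t)}.
\]
The objective is thus $\sum_{t=1}^T b_t^{\varphi(o_t)}$, with $b_t = \beta(o_t)$.

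Second, we maximize termwise. Since the $t$-th summand depends on $\varphi$ only through the value $\varphi(o_t) \in D$, and a Markovian policy can choose this value freely for each $o_t$, it suffices to show that $\varphi^*(o_t)$ maximizes $d \mapsto b_t^d$ over $D = \{0,1\}$ for every $o_t$. By construction, $\phi$ sets $v_t^d = \ib{b_t^d \ge 0.5}$. Because $b_t^0 + b_t^1 = 1$, the decision with $b_t^d \ge 0.5$ is exactly a maximizer of $b_t^d$. Hence $\delta(\phi(\beta(o_t))) = \argmax_d v_t^d$ selects a maximizer of $b_t^d$. One detail needs care here: the tie case $b_t^0 = b_t^1 = 0.5$ makes both values equal to one. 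Any tie-breaking rule in $\argmax$ then still attains the maximum, so optimality holds for both choices. Summing the per-trial inequalities $b_t^{\varphi^*(o_t)} \ge b_t^{\varphi(o_t)}$ over $t$ gives the claim.

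The main obstacle is interpretive rather than computational. The objective is written as an expectation under $p(s_{1:T}|o_{1:T})$ for fixed $o_{1:T}$, so the result is a pointwise (Bayes-posterior) optimality statement. I would add the remark that taking a further expectation over $o_{1:T}$ preserves optimality by monotonicity of the integral. A secondary subtlety is that \autoref{thm:belief-state-1} gives the closed form of $p(s_t|o_t)$ only for $\eta = 0$. The argument above, however, uses only the factorization and the abstract posterior $b_t$, so it holds for general $\eta$ as long as $\beta$ returns the true posterior. Finally, the statement's reward distribution is written with $s_1$ in the joint, but I read it as $p(r_t|s_t,d_t)$, consistent with the displayed definition.
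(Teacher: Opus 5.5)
Your proposal is correct and follows essentially the same route as the paper's proof: reduce the reward expectation to $\ib{d_t = s_t}$, use the factorization $p(s_{1:T}|o_{1:T}) = \prod_{t=1}^T p(s_t|o_t)$ together with \autoref{thm:exchangeability} to collapse the objective to $\sum_{t=1}^T p(s_t = \varphi(o_t)|o_t)$, and then maximize trial by trial over $\{0,1\}$. Your explicit summation of the per-trial inequalities $b_t^{\varphi^*(o_t)} \ge b_t^{\varphi(o_t)}$ and your handling of the tie $b_t^0 = b_t^1 = 0.5$ are slightly more careful than the paper, which appeals to additive separability and does not address ties.
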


\begin{proof}
We first note that
\begin{align}
\begin{split}
\mathbb{E}_{p(r_t|s_t,d_t)}(r_t)
& = 0 \cdot p(r_t = 0|s_t,d_t) + 1 \cdot p(r_t = 1|s_t,d_t)     \\
& = p(r_t = 1|s_t,d_t)                                          \\
& = \ib{d_t = s_t}^{1} \left(1 - \ib{d_t = s_t}\right)^{1-1}    \\
& = \ib{d_t = s_t}.   
\end{split}
\end{align}
We next note that
\begin{equation}
\mathbb{E}_{p(s_{1:T}|o_{1:T})}\left(\sum_{t=1}^T \ib{d_t = s_t} \right) 
= \sum_{t=1}^T \mathbb{E}_{p(s_t|o_t)} \left(\ib{d_t = s_t}\right),  \\
\end{equation}
because 
\begin{align*}
\mathbb{E}_{p(s_{1:T}|o_{1:T})}\left(\sum_{t=1}^T \ib{d_t=s_t}\right)                                
& = \sum_{s_{1:T}} p(s_{1:T}|o_{1:T})\left(\sum_{t=1}^T \ib{d_t=s_t} \right)                                            \\
& = \sum_{s_{1:T}}\sum_{t=1}^T p(s_{1:T}|o_{1:T})\ib{d_t=s_t}                                                           \\
& = \sum_{t=1}^T  \sum_{s_{1:T}}  p(s_{1:T}|o_{1:T})\ib{d_t=s_t}                                                        \\
& = \sum_{t=1}^T  \sum_{s_t}\sum_{s_{1:T}\setminus s_t} p(s_{1:T}|o_{1:T})\ib{d_t=s_t}                                  \\
& = \sum_{t=1}^T  \sum_{s_t}\ib{d_t=s_t} \sum_{s_{1:T}\setminus s_t} p(s_{1:T}|o_{1:T})                                 \\
& = \sum_{t=1}^T  \sum_{s_t}\ib{d_t=s_t} \sum_{s_{1:T}\setminus s_t} \prod_{k=1}^T p(s_k|o_k)                           \\
& = \sum_{t=1}^T  \sum_{s_t}p(s_t|o_t) \ib{d_t=s_t} \sum_{s_{1:T}\setminus s_t} \prod_{k=1, k \neq t}^T p(s_k|o_k)      \\
& = \sum_{t=1}^T  \sum_{s_t}p(s_t|o_t) \ib{d_t=s_t} \prod_{k=1, k \neq t}^T \sum_{s_k}  p(s_k|o_k)                      \\
& = \sum_{t=1}^T  \sum_{s_t}p(s_t|o_t) \ib{d_t=s_t}                                                                     \\
& = \sum_{t=1}^T  \mathbb{E}_{p(s_t|o_t)} \left(\ib{d_t=s_t}\right),                                                    \\
\end{align*}
where we used \autoref{thm:exchangeability} for the eighth equality. Substituting $\varphi(o_t)$ for $d_t$, we thus consider the maximization problem
\begin{equation}
\max_{\varphi \in \Pi} \left(\sum_{t=1}^T \mathbb{E}_{p(s_t|o_t)} \left(\ib{\varphi(o_t)=s_t}\right) \right).
\end{equation}
Given the additive separability of the objective function, this maximization problem 
is equivalent to the $T$ maximization problems \cite[cf.][Section 7.4]{nocedal2006}
\begin{equation}
\max_{\varphi \in \Pi}  \left(\mathbb{E}_{p(s_t|o_t)} \left(\ib{\varphi(o_t)=s_t}\right)\right) \mbox{ for } t = 1,...,T.
\end{equation}
We are thus concerned with identifying a function $\varphi$ for which $\varphi(o_t) \in \{0,1\}$ and
which maximizes the posterior expected value of $\ib{\varphi(o_t)=s_t}$. To this end, 
we next note that
\begin{equation}
\mathbb{E}_{p(s_t|o_t)} \left(\ib{\varphi(o_t)=s_t}\right) = p(s_t = \varphi(o_t)|o_t).
\end{equation}\label{eq:st-varphi}
This holds because
\begin{equation}
\mathbb{E}_{p(s_t|o_t)} \left(\ib{\varphi(o_t)=s_t}\right)
= p(s_t = 0|o_t)\ib{\varphi(o_t) = 0} + p(s_t = 1|o_t)\ib{\varphi(o_t) = 1}
\end{equation}
and thus for $\varphi(o_t) = 0$, we have
\begin{align}
\begin{split}
\mathbb{E}_{p(s_t|o_t)} \left(\ib{\varphi(o_t)=s_t}\right)   
& =  p(s_t = 0|o_t)\ib{0 = 0} + p(s_t = 1|o_t)\ib{0 = 1}    \\
& =  p(s_t = 0|o_t)\cdot 1 + p(s_t = 1|o_t)\cdot 0          \\
& =  p(s_t = 0|o_t)                                         \\
& =  p(s_t = \varphi(o_t)|o_t) 
\end{split}
\end{align}
and  for $\varphi(o_t) = 1$, we have
\begin{align}
\begin{split}
\mathbb{E}_{p(s_t|o_t)} \left(\ib{\varphi(o_t)=s_t}\right)   
& =  p(s_t = 0|o_t)\ib{1 = 0} + p(s_t = 1|o_t)\ib{1 = 1}    \\
& =  p(s_t = 0|o_t)\cdot 0 + p(s_t = 1|o_t)\cdot 1          \\
& =  p(s_t = 1|o_t)                                         \\
& =  p(s_t = \varphi(o_t)|o_t). 
\end{split}
\end{align}
We are thus led to consider
\begin{equation}
\max_{\varphi(o_t) \in \{0,1\}}  p(s_t = \varphi(o_t)|o_t)  
= \max \left\lbrace p(s_t = 0|o_t), p(s_t = 1|o_t)\right\rbrace  
\mbox{ for } t = 1,...,T.
\end{equation}
The maximizing decision $\varphi(o_t)$ is thus $\varphi(o_t) = 0$, if $p(s_t = 0|o_t) >p(s_t = 1|o_t)$ and 
$\varphi(o_t) = 1$, if $p(s_t = 0|o_t) \le  p(s_t = 1|o_t)$. We have thus found that the policy
\begin{equation}
\varphi : O \to D, o_t \mapsto \varphi(o_t) :=
\begin{cases}
0 & p(s_t = 0|o_t) > p(s_t = 1|o_t)  \\
1 & p(s_t = 0|o_t)\le p(s_t = 1|o_t) \\
\end{cases}
\end{equation}
maximizes the cumulative expected reward. But this is exactly the functional form of $\beta \circ \phi \circ \delta$,
because on each trial $t = 1,...,T$ selects that decision $d \in \{0,1\}$ which maximizes the belief state 
$\left(b_t^{d}\right)_{d \in \{0,1\}}$. Thus,
\begin{equation}
\delta \circ \phi \circ \beta = \varphi^* =    
 \argmax_{\varphi \in \Pi} \left(\mathbb{E}_{p(s_{1:T}|o_{1:T})}\left(\sum_{t = 1}^T \mathbb{E}_{p(r_t|s_t,d_t)}(r_t)\right)\right).
\end{equation}
\end{proof}

\begin{theorem}[Conditional decision distribution]\label{thm:conditional-decision-distribution-1}
Given the agent model $\mathcal{A}$ defined in \eqnref{eq:A-1}, the  
distribution of an agent's decision 
\begin{equation}
d_t := (\delta \circ \phi \circ \beta)(o_t)
\end{equation} given $o_t$ can be written as 
\begin{equation}
p(d_t|o_t) = \ib{o_t < 0}^{1 - d_t}\ib{o_t \ge 0}^{d_t}.
\end{equation}
\end{theorem}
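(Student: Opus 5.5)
Since $d_t$ is a deterministic function of $o_t$, its conditional distribution is the Dirac measure at $(\delta\circ\phi\circ\beta)(o_t)$. Concretely, $p(d_t|o_t) = \ib{d_t = (\delta\circ\phi\circ\beta)(o_t)}$. It therefore suffices to show that $(\delta\circ\phi\circ\beta)(o_t) = \ib{o_t \ge 0}$. Substituting then gives $p(d_t = 1|o_t) = \ib{o_t\ge 0}$ and $p(d_t=0|o_t) = \ib{o_t<0}$, which is exactly the stated Bernoulli-type expression $\ib{o_t < 0}^{1-d_t}\ib{o_t \ge 0}^{d_t}$.

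To identify the composed map, I work from the inside out.

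First, by \autoref{thm:belief-state-1} (taking $\eta = 0$ in the belief state, as the agent's internal model does), $b_t^1 \ge 0.5$ holds if and only if $b_t^1 \ge b_t^0$. Using \eqnref{eq:bs-1-2}, this is equivalent to
\[
\Phi(\kappa;o_t,\sigma^2)-\Phi(0;o_t,\sigma^2)\ \ge\ \Phi(0;o_t,\sigma^2)-\Phi(-\kappa;o_t,\sigma^2).
\]

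Second, under $v_t=\phi(b_t)$ with $v_t^d=\ib{b_t^d\ge 0.5}$, and with the argmax convention in $\delta$, this yields $d_t=1$ exactly when $b^1_t\ge 0.5$. One point needs care here: when $b_t^0=b_t^1=0.5$ both values equal one, and the argmax tie must resolve to $d=1$. This matches the convention used in the proof of \autoref{thm:decision-optimality-1}, where $\varphi(o_t)=1$ whenever $p(s_t=0|o_t)\le p(s_t=1|o_t)$. I will state this tie-breaking explicitly.

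The main step is showing that the inequality above holds if and only if $o_t\ge 0$. Write $N(\cdot;o_t,\sigma^2)$ for the density. The left side is the Gaussian mass on $[0,\kappa]$ and the right side is the mass on $[-\kappa,0]$. By reflection $c\mapsto -c$, the right side equals $\int_0^\kappa N(-c;o_t,\sigma^2)\,dc$. The difference is therefore
\[
\int_0^\kappa \bigl(N(c;o_t,\sigma^2)-N(c;-o_t,\sigma^2)\bigr)\,dc .
\]
For $c>0$, the integrand has the sign of $(c+o_t)^2-(c-o_t)^2 = 4co_t$, i.e.\ the sign of $o_t$. Hence the difference is positive for $o_t>0$, zero for $o_t=0$, and negative for $o_t<0$. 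This gives $b_t^1\ge b_t^0 \iff o_t\ge 0$, with equality precisely at $o_t=0$, where the tie-break assigns $d_t=1$ as required.

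Combining these steps proves the claim. I expect the only real obstacle to be the symmetry/sign argument together with the tie at $o_t=0$; the rest is bookkeeping.
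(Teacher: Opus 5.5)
Your argument is correct. It follows the paper's skeleton: the conditional distribution is degenerate, $d_t$ reduces to comparing $b_t^1$ with $b_t^0$, and hence to comparing the Gaussian masses $\Phi(\kappa;o_t,\sigma^2)-\Phi(0;o_t,\sigma^2)$ and $\Phi(0;o_t,\sigma^2)-\Phi(-\kappa;o_t,\sigma^2)$. The decisive step, however, is argued differently.

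The paper collapses that comparison to $\Phi(0;o_t,\sigma^2)<0.5$ by substituting $\Phi(-\kappa;o_t,\sigma^2)=1-\Phi(\kappa;o_t,\sigma^2)$. Since $\mbox{N}(o_t,\sigma^2)$ is symmetric about $o_t$ rather than about $0$, this identity holds only at $o_t=0$. The step is therefore unjustified as written, even though the equivalence it asserts happens to be true. The chain also contains two direction slips that cancel: the belief-state numerators are swapped when rewriting $p(s_t=1|o_t)<p(s_t=0|o_t)$, and $\Phi(0;o_t,\sigma^2)<0.5$ is equivalent to $o_t>0$, not $o_t<0$.

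Your reflection $c\mapsto -c$, followed by the pointwise comparison of $\mbox{N}(c;o_t,\sigma^2)$ with $\mbox{N}(c;-o_t,\sigma^2)$ on $]0,\kappa]$ (whose sign is that of $4co_t$), holds for every $o_t$ and every $\kappa>0$. It is exactly what makes the reduction rigorous. Had its identity been valid, the paper's route would have bought only a shorter reduction to a single CDF value.

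You are also explicit on two points where the paper is implicit or loose. First, the agent's belief state must be the $\eta=0$ formula irrespective of the actual sensory bias; the exact $\eta$-dependent posterior would put the threshold at $o_t=\eta$. Second, the argmax tie at $b_t^0=b_t^1=0.5$, which occurs exactly at $o_t=0$, must be broken toward $d_t=1$ to match $\ib{o_t\ge 0}$. The paper's chain instead opens with $d_t=0\Leftrightarrow v_t^0\ge v_t^1$, which would send the tie to $d_t=0$. Since $o_t=0$ is a null event under the Gaussian observation model, this does not affect the likelihood.
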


\begin{proof}
We first note that because $d_t$ is a function of $o_t$, $p(d_t|o_t)$ is degenerate, 
taking on the value $1$ if $d_t = (\delta \circ \phi \circ \beta)(o_t)$ and $0$ otherwise. 
Next, because $d_t$ takes values in $D = \{0,1\}$, it suffices to consider $p(d_t = 0|o_t)$, 
as $p(d_t = 1|o_t) = 1 - p(d_t = 0|o_t)$ then follows immediately. We hence 
assume that $d_t = 0$ and show that this is equivalent to $o_t < 0$:
\begin{align}
\begin{split}
d_t & = 0                                                                        
\\\Leftrightarrow
v_t^0 & \ge v_t^1                                                               
\\\Leftrightarrow
v_t^0 & = 1                                                                     
\\\Leftrightarrow
b_t^0 & \in [0.5,1]                                                             
\\\Leftrightarrow
p(s_t = 1|o_t) & < p(s_t = 0|o_t)
\\\Leftrightarrow
\frac{\Phi(0;o_t,\sigma^2)      - \Phi(-\kappa;o_t,\sigma^2)}
     {\Phi(\kappa;o_t,\sigma^2) - \Phi(-\kappa;o_t,\sigma^2)}
& <
\frac{\Phi(\kappa;o_t,\sigma^2) - \Phi(0;o_t,\sigma^2)}
     {\Phi(\kappa;o_t,\sigma^2) - \Phi(-\kappa;o_t,\sigma^2)}
\\\Leftrightarrow
\Phi(0;o_t,\sigma^2)      - \Phi(-\kappa;o_t,\sigma^2)
& <
\Phi(\kappa;o_t,\sigma^2) - \Phi(0;o_t,\sigma^2)
\\\Leftrightarrow
\Phi(0;o_t,\sigma^2)      - (1 - \Phi(\kappa;o_t,\sigma^2))
& <
\Phi(\kappa;o_t,\sigma^2) - \Phi(0;o_t,\sigma^2)
\\\Leftrightarrow
\Phi(0;o_t,\sigma^2)      - 1 + \Phi(\kappa;o_t,\sigma^2) 
& <
\Phi(\kappa;o_t,\sigma^2) - \Phi(0;o_t,\sigma^2)
\\\Leftrightarrow
\Phi(0;o_t,\sigma^2)- 1 & < - \Phi(0;o_t,\sigma^2)
\\\Leftrightarrow
1 - \Phi(0;o_t,\sigma^2) & >  \Phi(0;o_t,\sigma^2)
\\\Leftrightarrow
1 & >  2\Phi(0;o_t,\sigma^2)
\\\Leftrightarrow
\Phi(0;o_t,\sigma^2) & < 0.5
\\\Leftrightarrow
o_t & < 0   
\end{split}
\end{align}
Casting  $p(d_t = 0|o_t) = 1$, if $o_t < 0$ and $p(d_t = 0|o_t) = 0$, 
if $o_t \ge 0$ implies that $p(d_t = 1|o_t) = 0$, if $o_t < 0$
and $p(d_t = 1|o_t) = 1$, if $o_t \ge 0$, and thus allows for writing
the conditional PMF of $d_t$ given $o_t$ as 
\begin{equation}
p(d_t|o_t) = \ib{o_t < 0}^{1 - d_t}\ib{o_t \ge 0}^{d_t}.
\end{equation}
\end{proof}

\begin{figure}[!htbp]
\center
\includegraphics[width=\linewidth]{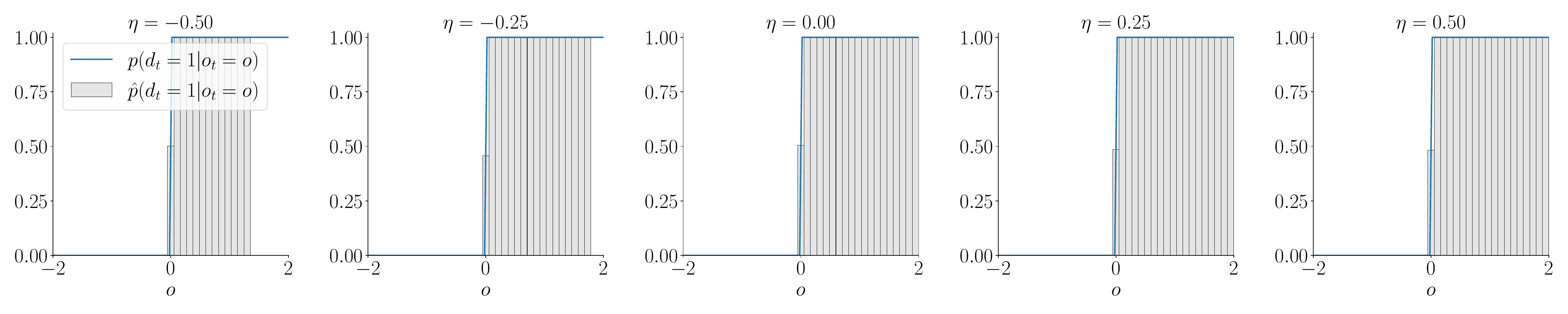}
\caption{Sampling-based validation of \fullref{thm:conditional-decision-distribution-1}. The figure shows the histogram-based estimation of $p(d_t = 1|o_t = o)$ based on simulated joint observations of $d_t$ and $o_t$ as generated during model validation ($\hat{p}(d_t = 1|o_t = o)$,  gray bars) and the analytical result of \fullref{thm:conditional-decision-distribution-1} (blue lines). For varying levels of $\eta$, a good match is observed. For full analytical details, please refer to \textit{abm\_figure\_S3.py}.}\label{fig:S3}
\end{figure}

\begin{theorem}[Conditional log-likelihood function]\label{thm:conditional-log-likelihood-function-1}
Let 
\begin{multline}
p(s_{1:T}, c_{1:T}, o_{1:T},d_{1:T},a_{1:T}, r_{1:T})   
\\ 
= \prod_{t=1}^T p(s_t)p(c_t|s_t)p(o_t|c_t)p(d_t|o_t)p(a_t|d_t)p(r_t|a_t,s_t) 
\end{multline} 
denote the ABM induced by \eqnref{eq:T-1}, \eqnref{eq:A-p-1}, and \eqnref{eq:D-1}.
Then,
\begin{equation}\label{eq:cll-1-1}
p(a_{1:T}|s_{1:T},c_{1:T}, r_{1:T}) = \prod_{t=1}^T p(a_t|c_t), 
\end{equation}
if $p(r_t|a_t,s_t) = 1$ for all $s_t,a_t,r_t$ of interest and 
\begin{equation}\label{eq:cll-1-2}
p(a_t|c_t) =\mbox{B}(a_t;1-\tau)\Phi\left(0;c_t+\eta,\sigma^2\right) +\mbox{B}(a_t;\tau)\left(1-\Phi\left(0;c_t+\eta,\sigma^2\right)\right).
\end{equation}
\end{theorem}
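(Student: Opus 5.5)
The plan is to marginalize the ABM joint distribution in two stages, mirroring the proof of \autoref{thm:belief-state-1}: first factorize across trials, then evaluate a single-trial integral in closed form.

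\textbf{Factorization across trials.} I write
\begin{equation*}
p(a_{1:T}|s_{1:T},c_{1:T},r_{1:T}) = \frac{p(s_{1:T},c_{1:T},a_{1:T},r_{1:T})}{\sum_{a_{1:T}} p(s_{1:T},c_{1:T},a_{1:T},r_{1:T})}.
\end{equation*}
Both numerator and denominator are obtained from the joint by integrating over $o_{1:T}$ and summing over $d_{1:T}$. Since the joint is a product over $t$, \autoref{thm:exchangeability} moves the integrals and sums inside the product.

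Under the stated assumption $p(r_t|a_t,s_t)=1$, the reward factor drops out. The factors $p(s_t)p(c_t|s_t)$ do not depend on $a_t$ or $o_t$, so they cancel between numerator and denominator. The remaining denominator factor is
\begin{equation*}
\sum_{a_t}\sum_{d_t}\int p(o_t|c_t)p(d_t|o_t)p(a_t|d_t)\,do_t = 1,
\end{equation*}
by normalization. The ratio therefore reduces to $\prod_t p(a_t|c_t)$, with
\begin{equation*}
p(a_t|c_t)=\sum_{d_t} p(a_t|d_t)\int p(o_t|c_t)p(d_t|o_t)\,do_t .
\end{equation*}
This establishes \eqnref{eq:cll-1-1}.

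\textbf{Single-trial evaluation.} For \eqnref{eq:cll-1-2}, I insert \autoref{thm:conditional-decision-distribution-1}, $p(d_t|o_t)=\ib{o_t<0}^{1-d_t}\ib{o_t\ge 0}^{d_t}$, together with \eqnref{eq:p_o_c} and \eqnref{eq:p-at-giv-dt}. The sum over $d_t\in\{0,1\}$ then splits into two terms, each a Bernoulli factor times a Gaussian integral over a half-line.

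The two integrals are evaluated exactly as in \autoref{thm:belief-state-1}:
\begin{equation*}
\int \mathrm{N}(o_t;c_t+\eta,\sigma^2)\ib{o_t<0}\,do_t=\Phi(0;c_t+\eta,\sigma^2),
\end{equation*}
and the complementary integral over $o_t\ge 0$ equals $1-\Phi(0;c_t+\eta,\sigma^2)$. Attaching the Bernoulli factors to the half-line integrals in the arrangement displayed in \eqnref{eq:agentic-cllh-1}, namely $\mathrm{B}(a_t;1-\tau)$ with the $o_t<0$ integral and $\mathrm{B}(a_t;\tau)$ with the $o_t\ge 0$ integral, yields \eqnref{eq:cll-1-2}. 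Taking logarithms of the product then gives \eqnref{eq:llh-1}.

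\textbf{Main obstacle.} The integrals themselves are routine; the delicate step is the bookkeeping that matches each Bernoulli factor to its half-line integral. I would make this step fully explicit by writing out $p(a_t|d_t)$ for $d_t=0$ and $d_t=1$ separately, and then checking the result against the Monte Carlo representation \eqnref{eq:monte-carlo-1}--\eqnref{eq:monte-carlo-2}. The reason for caution is that a direct substitution of $d_t=0$ (the $o_t<0$ branch) into \eqnref{eq:p-at-giv-dt} gives $p(a_t|d_t=0)=\mathrm{B}(a_t;\tau)$, not $\mathrm{B}(a_t;1-\tau)$. Reproducing the pairing in \eqnref{eq:cll-1-2} therefore depends on the correspondence between $d_t$ and the sign of $o_t$. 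If that check shows the pairing is reversed, the correct conclusion is the swapped formula, in which $\mathrm{B}(a_t;\tau)$ multiplies $\Phi(0;c_t+\eta,\sigma^2)$. In that case the stated \eqnref{eq:cll-1-2} (and likewise \eqnref{eq:agentic-cllh-1}) would contain a pairing error, and I would report it as such rather than claim the stated form.

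A smaller point also needs checking: the sum over $a_t$ in the normalization must be taken with $r_t$ fixed. This is exactly why the assumption $p(r_t|a_t,s_t)=1$ is needed for the cancellation in the first step.
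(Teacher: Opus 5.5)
Your first step matches the paper's own argument. You factor the joint over trials via \autoref{thm:exchangeability}, cancel $p(s_t)p(c_t|s_t)$, drop the reward factor by the stated hypothesis, and use $\sum_{a_t}p(a_t|c_t)=1$.

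On the ``smaller point'' you raise: under the task model, $p(r_t|a_t,s_t)=\mbox{B}(r_t;\ib{a_t=s_t})$ equals $1$ for exactly one of the two actions at fixed $(s_t,r_t)$. So the hypothesis can never hold across the sum over $a_t$ that you flagged, and the literal conditional given $r_{1:T}$ is degenerate. Your argument yields $\prod_t p(a_t|c_t)$ cleanly, with no hypothesis at all, for $p(a_{1:T}|s_{1:T},c_{1:T})$, where $r_t$ is summed out and $\sum_{r_t}p(r_t|a_t,s_t)=1$.

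For \eqref{eq:cll-1-2}, your suspicion is correct, and nothing is left to check, so do not hedge. \autoref{thm:conditional-decision-distribution-1}, which you already substitute, fixes $d_t=0\Leftrightarrow o_t<0$. Also, \eqref{eq:p-at-giv-dt} gives $p(a_t|d_t=0)=\mbox{B}(a_t;\tau)$. Hence
\begin{equation*}
p(a_t|c_t)=\mbox{B}(a_t;\tau)\,\Phi(0;c_t+\eta,\sigma^2)+\mbox{B}(a_t;1-\tau)\bigl(1-\Phi(0;c_t+\eta,\sigma^2)\bigr).
\end{equation*}
The paper's proof reaches the printed pairing only because its substitution step writes $p(a_t|d_t)$ as $\mbox{B}(a_t;1-\tau)^{1-d_t}\mbox{B}(a_t;\tau)^{d_t}$. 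That interchanges the exponents of \eqref{eq:p-at-giv-dt}.

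A sanity check confirms which version is right. The printed form gives $p(a_t=1|c_t)=\tau+(1-2\tau)\Phi(0;c_t+\eta,\sigma^2)$, which decreases in $c_t$. So at $\tau=0$ the agent would almost never press ``right'' when the right patch has clearly higher contrast. Yet at $\tau=0$ actions equal decisions, and \eqref{eq:p-dt} makes $d_t=1$ overwhelmingly likely there; the rising psychometric functions described for \autoref{fig:4}A agree. The corrected form gives $\tau+(1-2\tau)\bigl(1-\Phi(0;c_t+\eta,\sigma^2)\bigr)$, which increases as it should.

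This has two consequences for your write-up.
\begin{itemize}
\item The sentence attaching the Bernoulli factors ``in the arrangement displayed in \eqref{eq:agentic-cllh-1}'' adopts the target rather than deriving it, and should go.
\item Your proposed cross-check against \eqref{eq:monte-carlo-1}--\eqref{eq:monte-carlo-2} would mislead you. Those displays equate $\frac{1}{n_s}\sum_i\ib{o_t^{(i)}<0}$ with the average of $(\delta\circ\phi\circ\beta)(o_t^{(i)})$. But \autoref{thm:conditional-decision-distribution-1} gives $(\delta\circ\phi\circ\beta)(o)=\ib{o\ge 0}$, so they carry the same swap and would spuriously ``confirm'' the printed pairing.
\end{itemize}

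What is provable from the paper's definitions is \eqref{eq:cll-1-1} together with the corrected single-trial formula above. The printed \eqref{eq:cll-1-2} does not follow, and neither do \eqref{eq:agentic-cllh-1} and \eqref{eq:llh-1} as printed.
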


\begin{proof}
With respect to \eqnref{eq:cll-1-1}, we have
\begin{align*}
p(a_{1:T}|s_{1:T}, & c_{1:T}, r_{1:T}) \\
& = \frac{p(s_{1:T},c_{1:T},a_{1:T},r_{1:T})}{p(s_{1:T},c_{1:T},r_{1:T})}                                                               \\
& = \frac{\int_{o_{1:T}} \sum_{d_{1:T}} p(s_{1:T},c_{1:T},o_{1:T},d_{1:T},a_{1:T},r_{1:T})}
         {\sum_{a_{1:T}} \int_{o_{1:T}} \sum_{d_{1:T}} p(s_{1:T},c_{1:T},o_{1:T},d_{1:T},a_{1:T},r_{1:T})}                              \\
& = \frac{\int_{o_{1:T}} \sum_{d_{1:T}}                \prod_{t=1}^T p(s_t)p(c_t|s_t)p(o_t|c_t)p(d_t|o_t)p(a_t|d_t)p(r_t|a_t,s_t)}
         {\sum_{a_{1:T}} \int_{o_{1:T}} \sum_{d_{1:T}} \prod_{t=1}^T p(s_t)p(c_t|s_t)p(o_t|c_t)p(d_t|o_t)p(a_t|d_t)p(r_t|a_t,s_t)}      \\
& = \frac{\prod_{t=1}^T \int_{o_t} \sum_{d_t}            p(s_t)p(c_t|s_t)p(o_t|c_t)p(d_t|o_t)p(a_t|d_t)p(r_t|a_t,s_t)}
         {\prod_{t=1}^T \sum_{a_t} \int_{o_t} \sum_{d_t} p(s_t)p(c_t|s_t)p(o_t|c_t)p(d_t|o_t)p(a_t|d_t)p(r_t|a_t,s_t)}                  \\
& = \prod_{t=1}^T
    \frac{\int_{o_t} \sum_{d_t}            p(s_t)p(c_t|s_t)p(o_t|c_t)p(d_t|o_t)p(a_t|d_t)p(r_t|a_t,s_t)}
         {\sum_{a_t} \int_{o_t} \sum_{d_t} p(s_t)p(c_t|s_t)p(o_t|c_t)p(d_t|o_t)p(a_t|d_t)p(r_t|a_t,s_t)}                                \\
& = \prod_{t=1}^T
    \frac{p(s_t)p(c_t|s_t)\int_{o_t}\sum_{d_t}            p(o_t|c_t)p(d_t|o_t)p(a_t|d_t) p(r_t|a_t,s_t)}         
         {p(s_t)p(c_t|s_t)\sum_{a_t}\int_{o_t}\sum_{d_t}  p(o_t|c_t)p(d_t|o_t)p(a_t|d_t) p(r_t|a_t,s_t)}                                \\
& = \prod_{t=1}^T
    \frac{\int_{o_t}\sum_{d_t}            p(o_t|c_t)p(d_t|o_t)p(a_t|d_t)p(r_t|a_t,s_t)}         
         {\sum_{a_t}\int_{o_t}\sum_{d_t}  p(o_t|c_t)p(d_t|o_t)p(a_t|d_t)p(r_t|a_t,s_t)}                                                 \\
& = \prod_{t=1}^T
    \frac{\int_{o_t}\sum_{d_t}            p(o_t|c_t)p(a_t,d_t|o_t)p(r_t|a_t,s_t)}         
         {\sum_{a_t}\int_{o_t}\sum_{d_t}  p(o_t|c_t)p(a_t,d_t|o_t)p(r_t|a_t,s_t)}                                                       \\
& = \prod_{t=1}^T
    \frac{\int_{o_t}\sum_{d_t}            p(a_t,d_t,o_t|c_t)p(r_t|a_t,s_t)}         
         {\sum_{a_t}\int_{o_t}\sum_{d_t}  p(a_t,d_t,o_t|c_t)p(r_t|a_t,s_t)}                                                             \\
& = \prod_{t=1}^T\frac{p(a_t|c_t)p(r_t|a_t,s_t)}{\sum_{a_t} p(a_t|c_t)p(r_t|a_t,s_t)}                                                   \\
& = \prod_{t=1}^T\frac{p(a_t|c_t)}{\sum_{a_t} p(a_t|c_t)}                                                                               \\
& = \prod_{t=1}^T p(a_t|c_t).                                                                                        
\end{align*}
Here, the fourth equality follows from multiple applications of \autoref{thm:exchangeability}; the ninth and tenth equalities 
follow from the chain rule of probability and the conditional independence properties of the model. The twelfth equality follows from the assumption that $p(r_t|s_t,a_t) = 1$ for all $s_t,a_t,r_t$ of interest. 
\vspace{2mm}

\noindent With respect to \eqnref{eq:cll-1-2}, we have
\begin{align*}
& p(a_t| c_t)                                                                                                                                                                \\
& = \frac{p(c_t,a_t)}{p(c_t)}                                                                                                                                               \\
& = \frac{\sum_{s_t}\int_{o_t}\sum_{d_t}\sum_{r_t}           p(s_t,c_t,o_t,d_t,a_t,r_t)}
         {\sum_{s_t}\int_{o_t}\sum_{d_t}\sum_{r_t}\sum_{a_t} p(s_t,c_t,o_t,d_t,a_t,r_t)}                                                                                    \\
& = \frac{\sum_{s_t}\int_{o_t}\sum_{d_t}\sum_{r_t}           p(s_t)p(c_t|s_t)p(o_t|c_t)p(d_t|o_t)p(a_t|d_t)p(r_t|a_t,s_t)}
         {\sum_{s_t}\int_{o_t}\sum_{d_t}\sum_{r_t}\sum_{a_t} p(s_t)p(c_t|s_t)p(o_t|c_t)p(d_t|o_t)p(a_t|d_t)p(r_t|a_t,s_t)}                                                  \\
& = \frac{\sum_{s_t}\int_{o_t}\sum_{d_t}           p(s_t)p(c_t|s_t)p(o_t|c_t)p(d_t|o_t)p(a_t|d_t)\sum_{r_t} p(r_t|a_t,s_t)}           
         {\sum_{s_t}\int_{o_t}\sum_{d_t}\sum_{a_t} p(s_t)p(c_t|s_t)p(o_t|c_t)p(d_t|o_t)p(a_t|d_t)\sum_{r_t} p(r_t|a_t,s_t)}                                                 \\
& = \frac{\sum_{s_t}\int_{o_t}\sum_{d_t}           p(s_t)p(c_t|s_t)p(o_t|c_t)p(d_t|o_t)p(a_t|d_t) }             
         {\sum_{s_t}\int_{o_t}\sum_{d_t}\sum_{a_t} p(s_t)p(c_t|s_t)p(o_t|c_t)p(d_t|o_t)p(a_t|d_t) }                                                                         \\
& = \frac{\sum_{s_t}p(s_t)p(c_t|s_t)\int_{o_t}\sum_{d_t}           p(o_t|c_t)p(d_t|o_t)p(a_t|d_t) }             
         {\sum_{s_t}p(s_t)p(c_t|s_t)\int_{o_t}\sum_{d_t}           p(o_t|c_t)p(d_t|o_t)\sum_{a_t} p(a_t|d_t) }                                                              \\
& = \frac{\sum_{s_t}p(s_t,c_t) \int_{o_t}\sum_{d_t}           p(o_t|c_t)p(d_t|o_t)p(a_t|d_t) }             
         {\sum_{s_t}p(s_t,c_t) \int_{o_t}\sum_{d_t}p(o_t,d_t|c_t)}                                                                                                          \\
& = \frac{p(c_t) \int_{o_t}\sum_{d_t} p(o_t|c_t)p(d_t|o_t)p(a_t|d_t) }             
         {p(c_t)}                                                                                                                                                           \\
& = \int_{o_t}\sum_{d_t} p(o_t|c_t)p(d_t|o_t)p(a_t|d_t)   \\
& = \int_{o_t}\sum_{d_t} \mbox{N}(o_t;c_t+\eta,\sigma^2)\ib{o_t < 0}^{1 - d_t}\ib{o_t \ge 0}^{d_t}\mbox{B}(a_t;1-\tau)^{1-d_t}\mbox{B}(a_t;\tau)^{d_t}                      \\
    & = \int_{o_t}\mbox{N}(o_t;c_t+\eta,\sigma^2)\ib{o_t < 0}\mbox{B}(a_t;1-\tau) + \int_{o_t}\mbox{N}(o_t;c_t+\eta,\sigma^2)\ib{o_t \ge 0}\mbox{B}(a_t;\tau)               \\
& = \mbox{B}(a_t;1-\tau)\int_{o_t}\mbox{N}(o_t;c_t+\eta,\sigma^2)\ib{o_t < 0}\ + \mbox{B}(a_t;\tau) \int_{o_t}\mbox{N}(o_t;c_t+\eta,\sigma^2)\ib{o_t \ge 0}                 \\
& = \mbox{B}(a_t;1-\tau)\int_{-\infty}^0 \mbox{N}(o_t;c_t+\eta,\sigma^2)\,do_t +\mbox{B}(a_t;\tau) \int_{0}^\infty \mbox{N}(o_t;c_t +\eta,\sigma^2)\,do_t                   \\
& = \mbox{B}(a_t;1-\tau)\int_{-\infty}^0 \mbox{N}(o_t;c_t+\eta,\sigma^2)\,do_t +\mbox{B}(a_t;\tau)\left(1 - \int_{-\infty}^0 \mbox{N}(o_t;c_t+\eta,\sigma^2)\,do_t\right)    \\
& = \mbox{B}(a_t;1-\tau)\Phi(0;c_t+\eta,\sigma^2) +\mbox{B}(a_t;\tau)(1-\Phi(0;c_t+\eta,\sigma^2)).
\end{align*}
\end{proof}

\noindent

\begin{figure}[!htbp]
\centering
\includegraphics[width=1\linewidth]{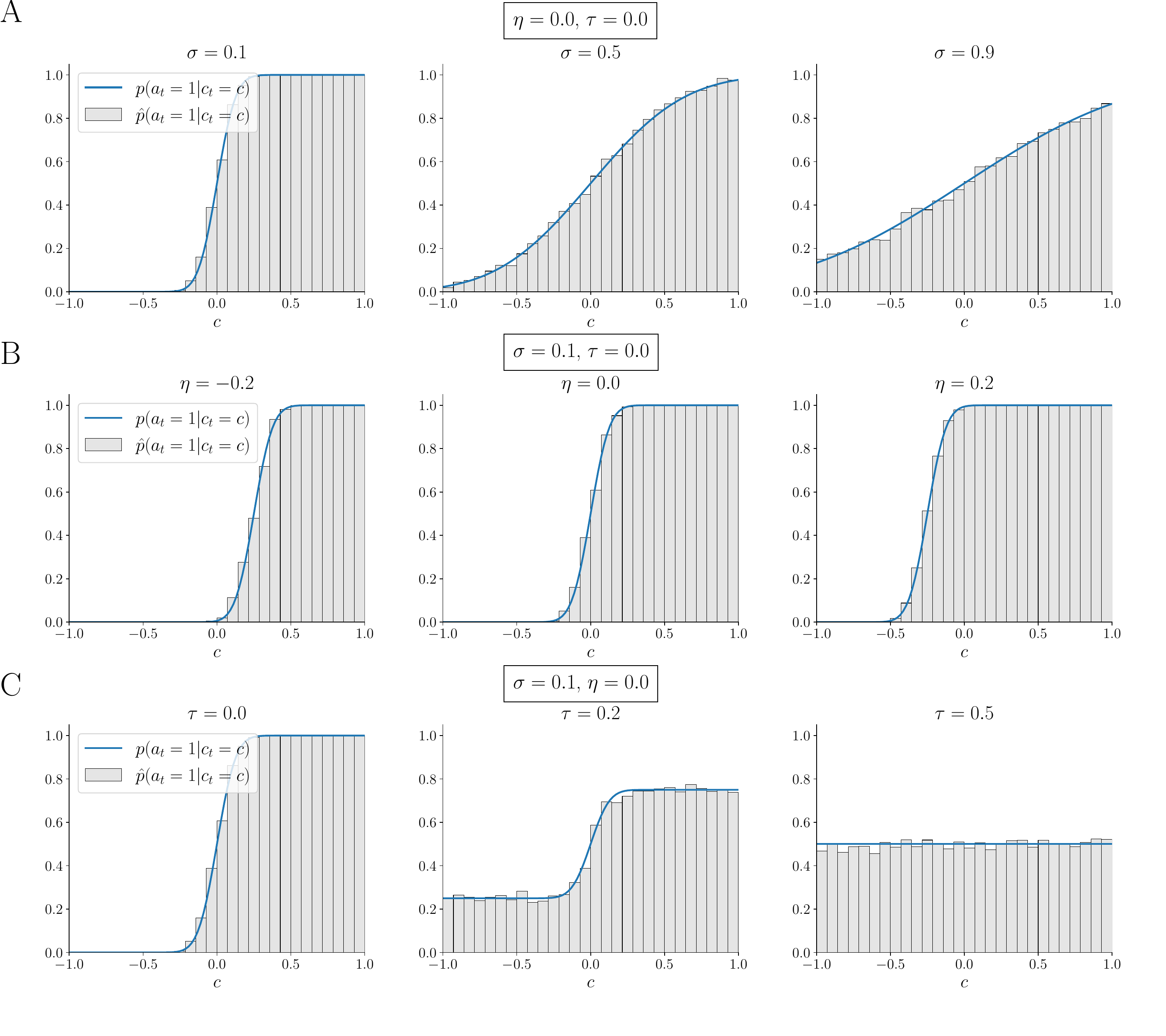}
\caption{Sampling-based validation of \eqnref{eq:cll-1-2} in \fullref{thm:conditional-log-likelihood-function-1}. The figure depicts the histogram-based estimation of $p(a_t = 1|c_t = o)$ based on simulated joint observations of $a_t$ and $c_t$ as generated during model validation ($\hat{p}(a_t = 1|c_t = c)$,  gray bars) and the analytical result of  \eqnref{eq:cll-1-2} of \fullref{thm:conditional-log-likelihood-function-1} (blue lines). Each row of the figure shows the effect of varying one of the three parameters $\sigma, \eta, \tau$ on $p(a_t = 1|c_t = o)$ while keeping the other two constant as indicated in the heading of each row (A: $\eta = 0.0$, $\tau = 0.0$, B: $\sigma = 0.1$, $\tau = 0.0$, C: $\sigma = 0.1$, $\eta = 0.0$). For full analytical details, please refer to \textit{abm\_figure\_S4.py}.}\label{fig:S4}  
\end{figure}

\clearpage
\vspace*{\fill}
\begin{figure}[!htbp]
\begin{center}
\includegraphics[width=\linewidth]{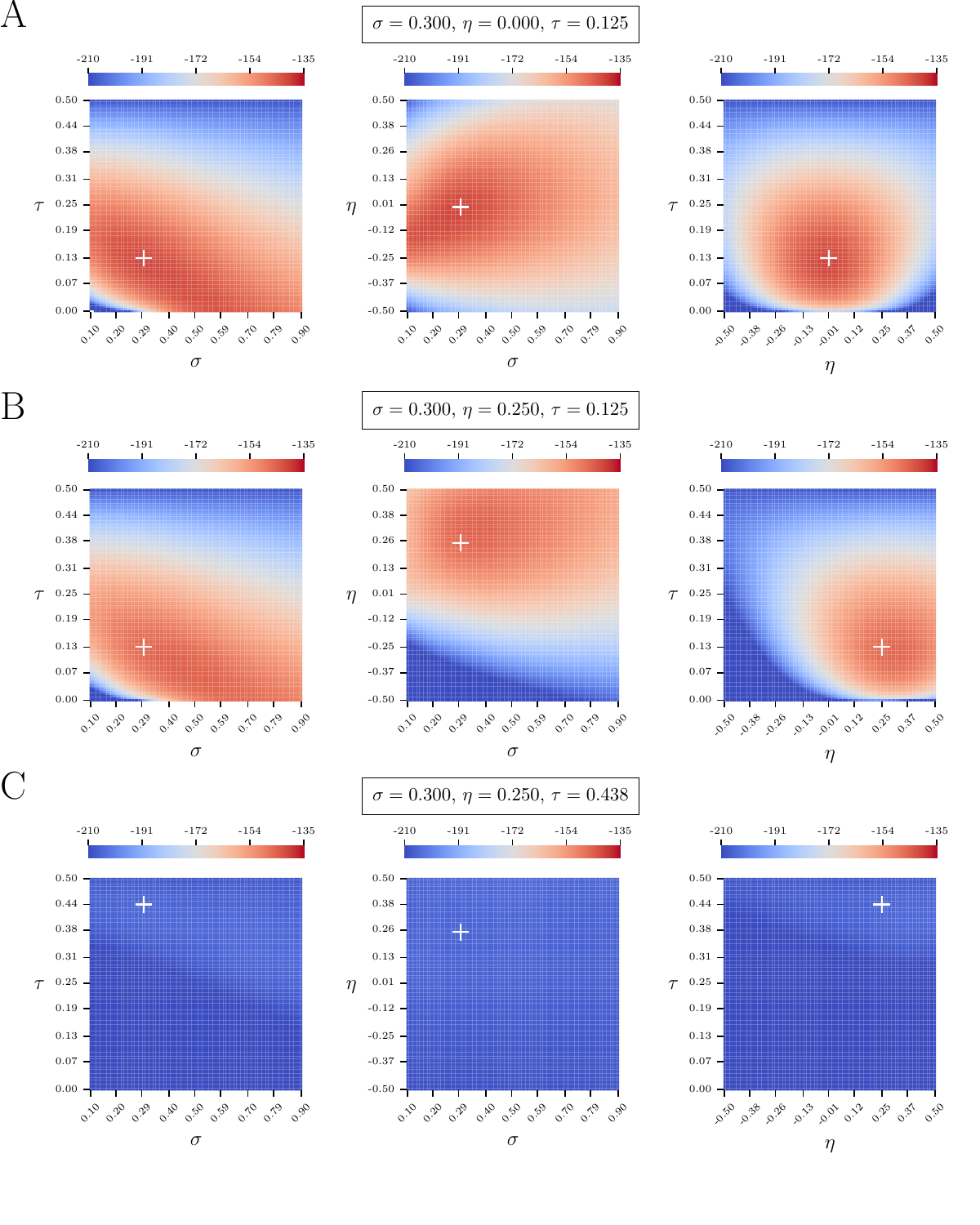}
\end{center}
\end{figure}

\clearpage
\noindent\rule{\linewidth}{0.4pt}  
\captionof{figure}{\textit{(Figure on previous page.)} Exemplary profile conditional log-likelihood functions of the form of \eqnref{eq:llh-1} and based on \fullref{thm:conditional-log-likelihood-function-1}. These profile functions are derived from the joint observations of normalized contrast differences and agent actions of a single simulated participant dataset (300 trials), as these functions form the basis for parameter estimation and marginal likelihood approximation (cf. \fullnameref{sec:parameter-estimation-and-model-comparison}).  The true, but unknown, parameter values for each simulated dataset are given above each subpanel row and depicted in each parameter subspace as white crosses (A: $\sigma = 0.300$, $\eta = 0.000$, $\tau = 0.125$, B: $\sigma = 0.300$, $\eta = 0.250$, $\tau = 0.125$, C: $\sigma = 0.300$, $\eta = 0.250$, $\tau = 0.438$). The color coding is identical across all subpanels. In the low post-decision noise scenarios of the first two subpanel-rows ($\tau = 0.1250$), the profile functions exhibit adequate curvature for the numerical maximization of the conditional log-likelihood in each profile dimension and the conditional log-likelihood maxima are located in the vicinity of the true, but unknown values, indicating that the simulated dataset is indicative of its generative parameters. In the high post-decision noise scenario of the lower subpanel row ($\tau = 0.4375$), the curvature virtually disappears, essentially removing indicative information about the generative parameters from the profile function and hampering the successful numerical identification of the true, but unknown, generative model parameters.  For full analytical details, please refer to \textit{abm\_figure\_S5.py}.}\label{fig:S5}

\clearpage
\vspace*{\fill}
\begin{figure}[!htbp]
\begin{center}
\includegraphics[width=\linewidth]{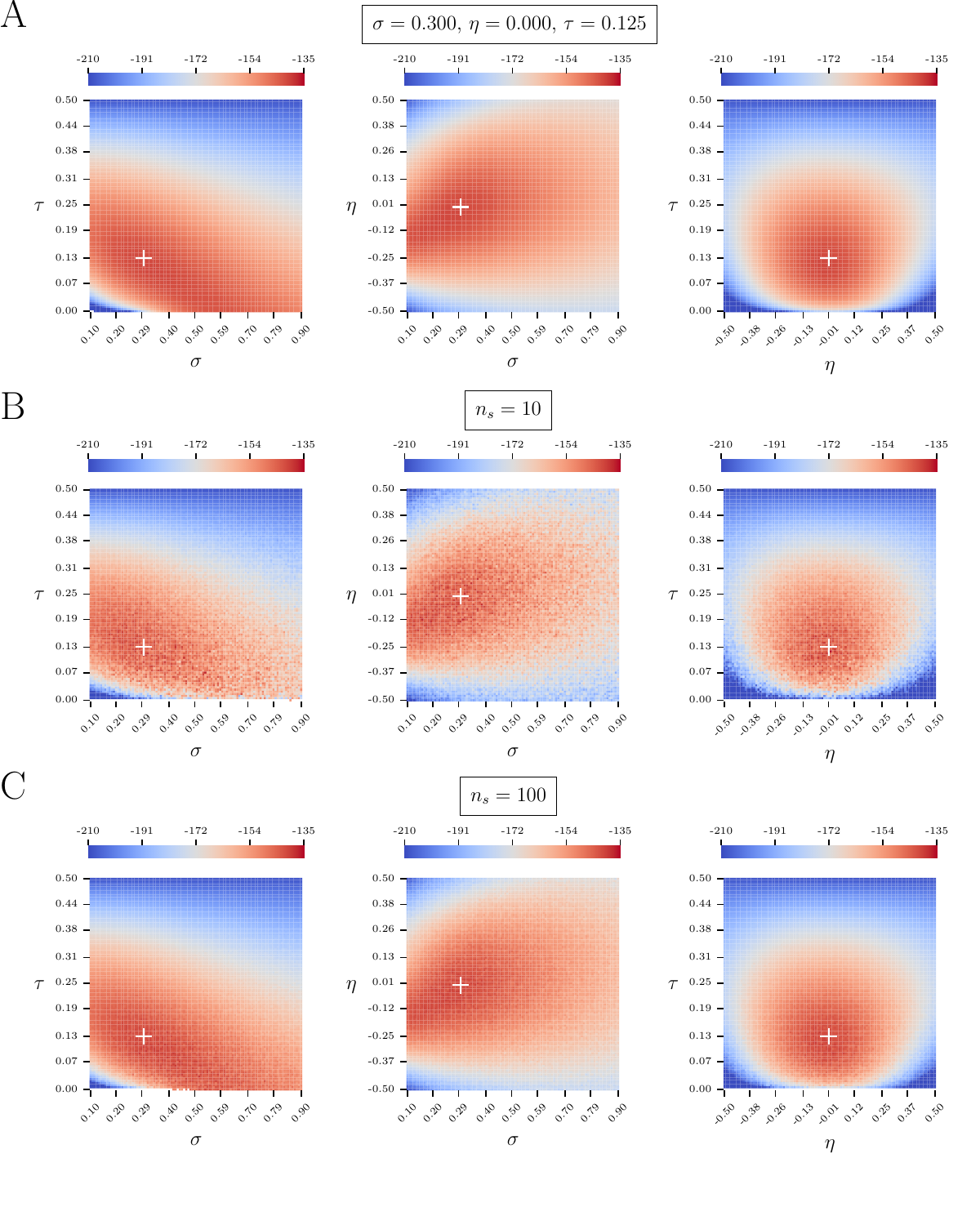}
\end{center}
\end{figure}

\clearpage
\noindent\rule{\linewidth}{0.4pt}  
\captionof{figure}{\textit{(Figure on previous page.)} Analytic and agentic profile log-likelihood examples. As in \fullref{fig:S5} the profile functions are derived from the joint observations of normalized contrast differences and agent actions of a single simulated participant dataset (300 trials) with true, but unknown, parameter values of $\sigma = 0.300$, $\eta = 0.000$, $\tau = 0.125$ as indicated by the white crosses in each parameter subspace. subpanel A visualizes analytic conditional log-likelihood profiles identical to those depicted in \fullref{fig:S5}A. subpanels B and C visualize agentic conditional log-likelihood profiles for $n_s = 10$ and $n_s = 100$ samples of the internal agent observation $o_t$, respectively, for evaluating the Monte Carlo estimators \eqref{eq:monte-carlo-1} and  \eqref{eq:monte-carlo-2}. Notably, use of only a few samples induces non-functional variability in the conditional log-likelihood profile, which can hamper the identification of maxima. It should be noted, however, that the computational efficiency gain afforded by the analytic rather than agentic evaluation of the conditional log-likelihood function is only partially attributable to the Monte Carlo integration approach and primarily caused by the multitude of Python function calls entailed by the agentic approach. For full analytical details, please refer to \textit{abm\_likelihoods.py} and \textit{abm\_figure\_S6.py}.}\label{fig:S6}

\clearpage
\section{Symmetric bandit learning}\label{sec:symmetric-bandit-learning-supplement}
\subsection{Symmetric bandit learning task instructions}\label{sec:symmetric-bandit-learning-task-instructions}

\noindent\textbf{German original} In jedem Durchgang (Trial) dieser Aufgabe können Sie wählen, ob Sie die linke oder die rechte Pfeiltaste drücken. Innerhalb eines Blocks ist jeder Taste eine feste Wahrscheinlichkeit zugeordnet, mit der das Drücken dieser Taste zu einer Belohnung von +1 oder +0 Punkten führt.
Beispiel: In einem Block führt das Drücken der linken Pfeiltaste mit einer Wahrscheinlichkeit von 0,8 zu +1 Punkt und mit 0,2 zu +0 Punkten. Das bedeutet: Im Durchschnitt erhalten Sie bei 10 Durchgängen mit der linken Taste etwa 8-mal +1 Punkt und 2-mal +0 Punkte. Die rechte Taste ist in diesem Fall mit den gegenläufigen Wahrscheinlichkeiten verknüpft (also +1 mit 0,2 und +0 mit 0,8). In jedem Block ist eine der beiden Tasten vorteilhafter, d.h. mit einer höheren Wahrscheinlichkeit für +1 verknüpft – manchmal ist dieser Unterschied offensichtlich, manchmal subtiler. Ihr Ziel ist es, die Taste mit der höheren +1-Wahrscheinlichkeit so oft wie möglich zu drücken. Welche Taste das ist, müssen Sie in jedem Block neu herausfinden, da die Wahrscheinlichkeiten mit Beginn eines neuen Blocks neu vergeben werden. Nachdem Sie sich mit der Aufgabe vertraut gemacht haben, absolvieren Sie 10 Blöcke mit jeweils 30 Trials.

\vspace{2mm}
\noindent\textbf{English translation} In each task trial, you can choose between pressing the left or right cursor button.  For a given block of the task, each button is associated with fixed
probabilities that pressing it will yield a reward of +1  or +0. For example, on
a given block, pressing the left cursor button will yield a reward of +1 with 
a probability of 0.8 and a reward of +0 with a probability of 0.2. When
pressing the left button, you will thus observe a +1 reward on average in 8 out of
10 trials and a reward of +0 in 2 out of 10 trials. Pressing the other button
(in this example, the right button) will yield the rewards with opposite probabilities
assigned to them. In the example, if pressing the left button yields a reward of +1 with
a probability of 0.8, then pressing the right button yields a reward with a probability of 0.2, 
and likewise for obtaining rewards of +0. Thus, one of the buttons is always
associated with a higher probability of obtaining a reward of +1 than the
other button. Sometimes this difference is pronounced, sometimes it is more subtle. 
Your aim is therefore to press the button with the highest associated probability for +1 as
often as possible. In each block, you need to determine which button it is, as the probabilities
will be reset with the start of a new block. After familiarizing yourself with the task, you will
perform 10 blocks of 30 trials each.

\clearpage
\vspace*{\fill}
\begin{figure}[!htbp]
\centering
\begin{subfigure}{\linewidth}
\centering
\includegraphics[width=\linewidth]{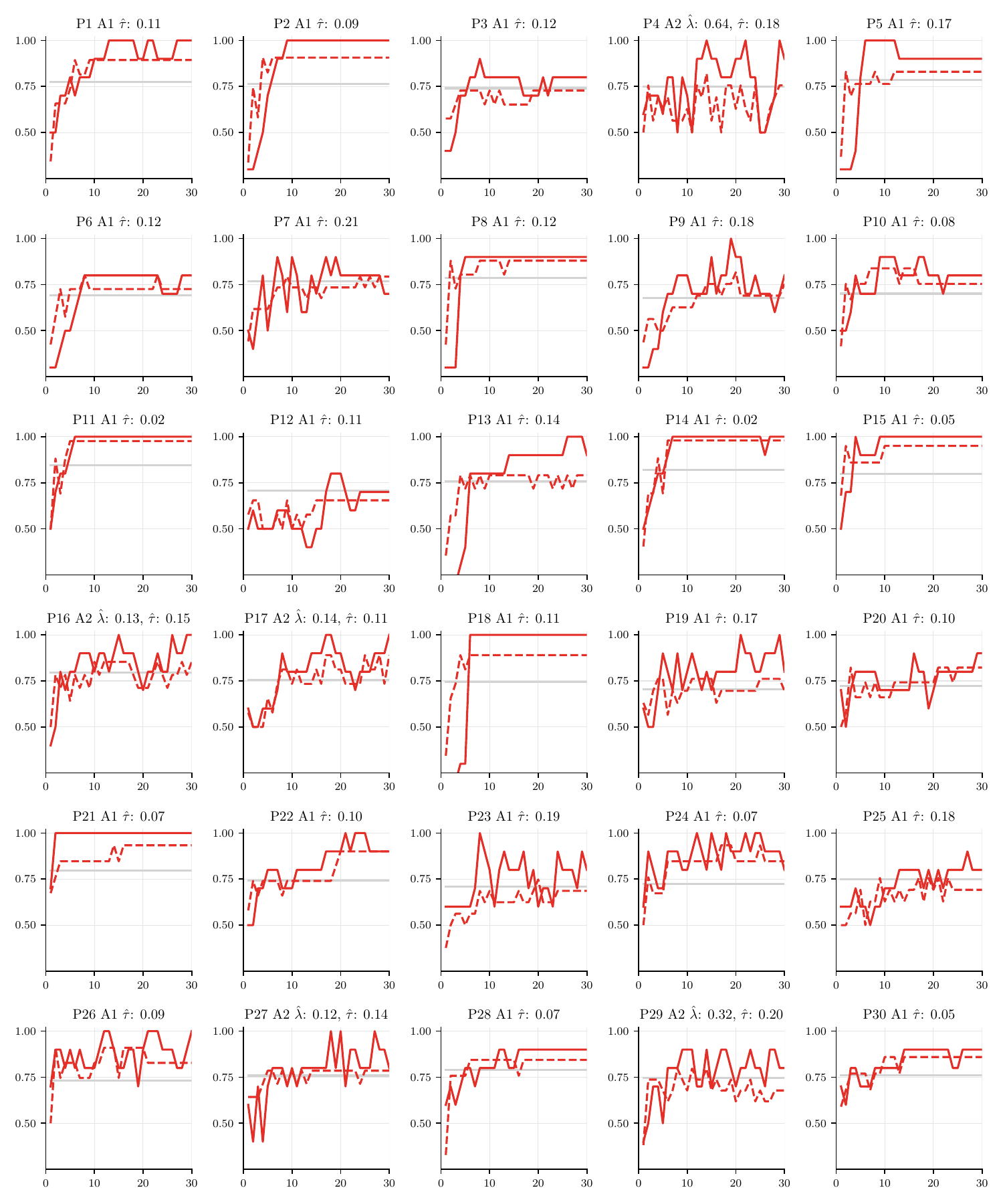}
\caption{Symmetric bandit task descriptive statistics and posterior predictive learning functions for participants 1 to 30 ($x$-axis: trial, $y$-axis: observed proportion of maximizing actions).}
\label{fig:S7A}
\end{subfigure}
\end{figure}
\clearpage
\begin{figure}[!htbp]\ContinuedFloat
\centering
\begin{subfigure}{\linewidth}
\centering
\includegraphics[width=\linewidth]{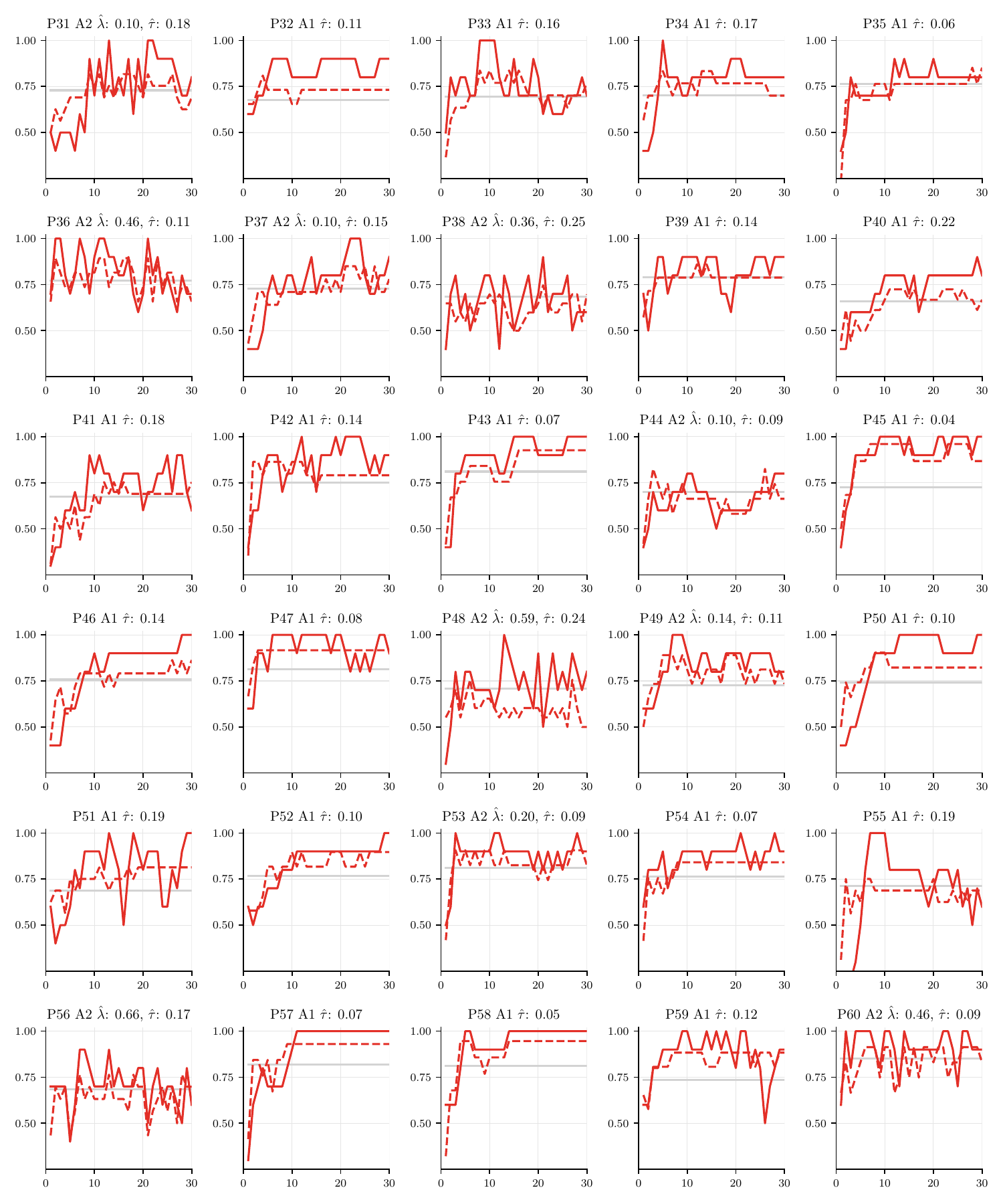}
\caption{Symmetric bandit task descriptive statistics and posterior predictive learning functions for participants 31 to 60 ($x$-axis: trial, $y$-axis: observed proportion of maximizing actions).}
\label{fig:S7B}
\end{subfigure}
\caption{Symmetric bandit task descriptive statistics for individual participants. On each subpanel, the red line depicts the proportion of maximizing actions across blocks, the dark gray line depicts the average reward per trial obtained by the participant, and the straight light gray line corresponds to the average adjusted probability state across blocks. 
The dashed black line depicts the participant-specific posterior predictive learning function, i.e., the probability of the maximizing action based on the participant-specific most plausible model and its maximum likelihood parameter estimates as a function of $t$.}\label{fig:S7}
\end{figure}
\clearpage

\subsection{Bayesian agent model analysis}\label{sec:bayesian-agent-model-analysis}
\begin{theorem}[Belief state]\label{thm:belief-state-2}
Let
\begin{multline}
p(s_1,d_{1:T},a_{1:T},r_{1:T})
\\
= p(s_1)p(d_1)p(a_1|d_1)p(r_1|s_1,a_1)
\prod_{t=2}^T p(d_t|a_{1:t-1},r_{1:t-1})p(a_t|d_t)p(r_t|s_1,a_t)
\end{multline}
denote the ABM induced by \eqnref{eq:T-2}, \eqnref{eq:A-p-2} and \eqnref{eq:D-2} and let
\begin{equation}
p(s_1|a_{1:0},r_{1:0}) := p(s_1) = \mbox{U}(s_1;[0,1]) = \mbox{Beta}(s_1; \alpha_1) \mbox{ with } \alpha_1 := (1,1)
\end{equation}
denote the agent's block-specific prior probability state distribution. Then for $t = 2,...,T$, it holds that 
\begin{equation}\label{eq:bs-theorem-2}
p(s_1|a_{1:t-1},r_{1:t-1}) = \mbox{Beta}\left(s_1; \alpha_t \right)
\mbox{ with } \alpha_t = \left(1 + \sum_{k=1}^{t-1} \tilde{r}_k,  t -  \sum_{k=1}^{t-1} \tilde{r}_k \right).
\end{equation}
\end{theorem}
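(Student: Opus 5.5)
I will establish \eqnref{eq:bs-theorem-2} by computing the posterior directly from the joint distribution. The computation shows that all agent-specific factors cancel, leaving a Beta-Bernoulli conjugate update in the action-adapted rewards $\tilde{r}_k$.

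\emph{Step 1: reduce to the likelihood of observed rewards.} I marginalize the joint distribution of the theorem over $d_{1:t-1}$, over the future variables $d_{t:T},a_{t:T},r_{t:T}$, and, where needed, over $s_1$. The future variables are summed out from the last trial backwards. At each step $\sum_{r_k} p(r_k|s_1,a_k)=1$, $\sum_{a_k}p(a_k|d_k)=1$ and $\sum_{d_k}p(d_k|a_{1:k-1},r_{1:k-1})=1$, so all factors with index $\ge t$ disappear. This uses \autoref{thm:exchangeability} as in the proof of \autoref{thm:belief-state-1}. For indices $k<t$, the factors $p(d_k|\cdot)p(a_k|d_k)$ do not depend on $s_1$. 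Summing over $d_k$ gives a function $g_k(a_{1:k},r_{1:k-1})$ that is free of $s_1$. We therefore obtain
\begin{equation*}
p(s_1,a_{1:t-1},r_{1:t-1}) = p(s_1)\prod_{k=1}^{t-1} g_k(a_{1:k},r_{1:k-1})\, p(r_k|s_1,a_k).
\end{equation*}
Dividing by its integral over $s_1\in[0,1]$ cancels $\prod_k g_k$. Hence $p(s_1|a_{1:t-1},r_{1:t-1}) \propto p(s_1)\prod_{k=1}^{t-1}p(r_k|s_1,a_k)$. This cancellation, which says that actions are ancillary for $s_1$ because the policy does not depend on $s_1$, is the conceptual crux of the proof. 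Everything after it is routine.

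\emph{Step 2: rewrite each reward factor via $\tilde r_k$.} I check the four cases of $(a_k,r_k)$ against \eqnref{eq:r-tilde}. The cases $(1,1)$ and $(0,0)$ give $\tilde r_k=1$ and $p(r_k|s_1,a_k)=s_1$. The cases $(1,0)$ and $(0,1)$ give $\tilde r_k=0$ and $p(r_k|s_1,a_k)=1-s_1$. Hence $p(r_k|s_1,a_k)=s_1^{\tilde r_k}(1-s_1)^{1-\tilde r_k}$.

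\emph{Step 3: identify the Beta posterior.} With $p(s_1)=1$ on $[0,1]$, the unnormalized posterior is $s_1^{\sum_k \tilde r_k}(1-s_1)^{(t-1)-\sum_k\tilde r_k}$. This is the kernel of $\mathrm{Beta}(s_1;1+\sum_k\tilde r_k,\, t-\sum_k\tilde r_k)$, whose normalizing constant is the Beta function. This yields $\alpha_t$ as stated.

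Alternatively, one can prove the result by induction on $t$, using the recursion $\alpha_{t}=\alpha_{t-1}+(\tilde r_{t-1},1-\tilde r_{t-1})$ so that it matches the update function $\beta$ of \eqnref{eq:beta-2}. The main obstacle is only the bookkeeping in Step 1: one must verify that the decision factors integrate out cleanly even though $p(d_k|\cdot)$ depends on past actions and rewards. This works because these factors are summed over $d_k$ before $s_1$ is normalized, and they never involve $s_1$.
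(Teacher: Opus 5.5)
Your proof is correct. Step 1 is the reduction the paper also performs first: summing out $d_{1:t-1}$ and cancelling the $s_1$-free factors $p(a_k|a_{1:k-1},r_{1:k-1})$ (your $g_k$) between numerator and denominator. You are actually more explicit than the paper about why the trials $t,\dots,T$ drop out.

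The routes diverge in how the Beta form is identified:
\begin{itemize}
\item The paper argues by induction on $t$. It treats $a_{t-1}=0$ and $a_{t-1}=1$ separately, reads off the update rules for $\alpha^{(1)}_t$ and $\alpha^{(2)}_t$ in the four $(a_{t-1},r_{t-1})$ cases, and only then matches these to $\tilde r_{t-1}$.
\item You encode all four cases in the single identity $p(r_k|s_1,a_k)=s_1^{\tilde r_k}(1-s_1)^{1-\tilde r_k}$ and read off the Beta kernel of the whole product at once. This is the standard batch conjugacy argument, and it is shorter and less error-prone than the repeated case bookkeeping.
\end{itemize}

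What the paper's inductive organization buys is that its central object is the one-step recursion $\alpha_t=\alpha_{t-1}+(\tilde r_{t-1},1-\tilde r_{t-1})$. The paper uses this recursion to justify the belief update as the Markovian map $\beta(b_{t-1},a_{t-1},r_{t-1})$ of \eqnref{eq:beta-2}. It also uses it to record the posterior-mean recursions \eqref{eq:post-exp-r-tilde-0} and \eqref{eq:post-exp-r-tilde-1} needed later in \autoref{thm:bb-rw-equivalence}. Your closed form yields the same recursion immediately by comparing $\alpha_t$ with $\alpha_{t-1}$, so your route loses nothing.
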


\begin{proof}
We first note that for $t = 2,...,T$
\begin{equation}\label{eq:bs-2-proof-1}
p(s_1|a_{1:t-1},r_{1:t-1})  = \frac{p(s_1)\prod_{k=1}^{t-1} p(r_k|s_1,a_k) }{\int_{s_1} p(s_1)\prod_{k=1}^{t-1} p(r_k|s_1,a_k)}
\end{equation}
because
\begin{align*}
& p(s_1|a_{1:t-1},r_{1:t-1})                                                                                                        \\                                                                                          
& = \frac{p(s_1,a_{1:t-1},r_{1:t-1})}{p(a_{1:t-1},r_{1:t-1})}                                                                                                                                                                   \\
& = \frac{\sum_{d_{1:t-1}}p(s_1,d_{1:t-1},a_{1:t-1},r_{1:t-1})}{\int_{s_1} \sum_{d_{1:t-1}} p(s_1,d_{1:t-1},a_{1:t-1},r_{1:t-1})}                                                                                                       \\
& = \frac{\sum_{d_{1:t-1}}p(s_1)p(d_1)p(a_1|d_1)p(r_1|s_1,a_1)\prod_{k=2}^{t-1} p(d_k|a_{1:k-1},r_{1:k-1})p(a_k|d_k)p(r_k|s_1,a_k)}{\int_{s_1} \sum_{d_{1:t-1}} p(s_1,d_{1:t-1},a_{1:t-1},r_{1:t-1})}                                 \\
& = \frac{\sum_{d_1}p(s_1)p(d_1)p(a_1|d_1)p(r_1|s_1,a_1)\prod_{k=2}^{t-1} \sum_{d_k} p(d_k|a_{1:k-1},r_{1:k-1})p(a_k|d_k)p(r_k|s_1,a_k)}{\int_{s_1} \sum_{d_{1:t-1}} p(s_1,d_{1:t-1},a_{1:t-1},r_{1:t-1})}                          \\
& = \frac{\sum_{d_1}p(s_1)p(a_1,d_1)p(r_1|s_1,a_1)\prod_{k=2}^{t-1} \sum_{d_k} p(a_k, d_k|a_{1:k-1},r_{1:k-1})p(r_k|s_1,a_k)}{\int_{s_1} \sum_{d_{1:t-1}} p(s_1,d_{1:t-1},a_{1:t-1},r_{1:t-1})}                                     \\
& = \frac{p(s_1)p(r_1|s_1,a_1)\sum_{d_1}p(a_1,d_1)\prod_{k=2}^{t-1} p(r_k|s_1,a_k) \sum_{d_k} p(a_k, d_k|a_{1:k-1},r_{1:k-1})}{\int_{s_1} \sum_{d_{1:t-1}} p(s_1,d_{1:t-1},a_{1:t-1},r_{1:t-1})}                                    \\
& = \frac{p(s_1)p(r_1|s_1,a_1)p(a_1)\prod_{k=2}^{t-1} p(r_k|s_1,a_k) p(a_k|a_{1:k-1},r_{1:k-1})}{\int_{s_1} \sum_{d_{1:t-1}} p(s_1,d_{1:t-1},a_{1:t-1},r_{1:t-1})}                                                                  \\
& = \frac{p(s_1)p(r_1|s_1,a_1)p(a_1)\prod_{k=2}^{t-1} p(r_k|s_1,a_k) \prod_{k=2}^{t-1} p(a_k|a_{1:k-1},r_{1:k-1})}{\int_{s_1} p(s_1)p(r_1|s_1,a_1)p(a_1)\prod_{k=2}^{t-1} p(r_k|s_1,a_k)\prod_{k=2}^{t-1} p(a_k|a_{1:k-1},r_{1:k-1})}   \\
& = \frac{p(a_1)\prod_{k=2}^{t-1} p(a_k|a_{1:k-1},r_{1:k-1})p(s_1)p(r_1|s_1,a_1)\prod_{k=2}^{t-1} p(r_k|s_1,a_k) }{p(a_1)\prod_{k=2}^{t-1} p(a_k|a_{1:k-1},r_{1:k-1})\int_{s_1} p(s_1)p(r_1|s_1,a_1)\prod_{k=2}^{t-1} p(r_k|s_1,a_k)}   \\
& = \frac{p(s_1)p(r_1|s_1,a_1)\prod_{k=2}^{t-1} p(r_k|s_1,a_k) }{\int_{s_1} p(s_1)p(r_1|s_1,a_1)\prod_{k=2}^{t-1} p(r_k|s_1,a_k)}                                                                                                \\
& = \frac{p(s_1)\prod_{k=1}^{t-1} p(r_k|s_1,a_k) }{\int_{s_1} p(s_1)\prod_{k=1}^{t-1} p(r_k|s_1,a_k)}.                               
\end{align*}
\vspace{2mm}

\noindent We next show that \eqnref{eq:bs-theorem-2} follows from \eqnref{eq:bs-2-proof-1} by induction with respect to $t$.

\vspace{3mm}
\noindent \textit{Base case.} For $t=2$, we have from \eqnref{eq:bs-2-proof-1}.
\begin{equation} 
p(s_1|a_{1},r_{1}) 
\propto p(s_1)p(r_1|s_1,a_1) 
= \mbox{Beta}\left(s_1;(1,1)\right)\mbox{B}(r_1;s_1)^{a_1}\mbox{B}(r_1;1-s_1)^{1-a_1} 
\end{equation}
Assume first that $a_1 = 0$. Then
\begin{align}
\begin{split}
p(s_1|a_1 = 0, r_1)
& \propto \mbox{Beta}\left(s_1;(1,1)\right)\mbox{B}(r_1;s_1)^{0}\mbox{B}(r_1;1-s_1)^{1-0}       \\
& = \mbox{Beta}\left(s_1; (1,1)\right)\mbox{B}(r_1;1-s_1)                                        \\    
& = (s_1)^{1-1}(1-s_1)^{1-1}(1-s_1)^{r_1}(s_1)^{1-r_1}            \\
& = (s_1)^{1 + 1 - r_1 - 1}(1-s_1)^{1 + r_1 - 1}.                 \\
\end{split}
\end{align}
Assume now that $a_1 = 1$. Then, analogously
\begin{align}
\begin{split}
p(s_1|a_1 = 1, r_1)
& \propto \mbox{Beta}\left(s_1; (1,1)\right)\mbox{B}(r_1;s_1)^{1}\mbox{B}(r_1;1-s_1)^{1-1}      \\
& = \mbox{Beta}\left(s_1; (1,1)\right)\mbox{B}(r_1;s_1)                                       \\    
& = (s_1)^{1-1}(1-s_1)^{1-1}(s_1)^{r_1}(1-s_1)^{1-r_1}    \\
& = (s_1)^{1 + r_1 - 1}(1-s_1)^{1 + 1 - r_1 - 1}.          \\
\end{split}
\end{align}
We thus observe the following update rules for the parameters of the beta distribution:
\begin{itemize}
\begin{small}
\item If $a_1 = 0$, then $\alpha_2^{(1)} =1 + 1 - r_1$           and $\alpha_2^{(2)} = 1 + r_1$.
\item If $a_1 = 1$, then $\alpha_2^{(1)} = 1 + r_1\quad\,\,\,\,$ and $\alpha_2^{(2)} = 1 + 1 - r_1$.
\end{small}
\end{itemize}
For $\alpha^{(1)}_2$, we thus have the update rule
\begin{equation}\label{eq:bs-2-proof-2}
\alpha_2^{(1)} = 1 + 
\begin{cases}
1 &  \mbox{ for } a_1 = 0 \mbox{ and } r_1 = 0 \\
0 &  \mbox{ for } a_1 = 0 \mbox{ and } r_1 = 1 \\
0 &  \mbox{ for } a_1 = 1 \mbox{ and } r_1 = 0 \\
1 &  \mbox{ for } a_1 = 1 \mbox{ and } r_1 = 1 \\
\end{cases}
\end{equation}
and for $\alpha^{(2)}_2$, we have the update rule
\begin{equation}\label{eq:bs-2-proof-3}
\alpha_2^{(2)} = 1  + 
\begin{cases}
0 & \mbox{ for } a_1 = 0 \mbox{ and } r_1 = 0 \\
1 & \mbox{ for } a_1 = 0 \mbox{ and } r_1 = 1 \\
1 & \mbox{ for } a_1 = 1 \mbox{ and } r_1 = 0 \\
0 & \mbox{ for } a_1 = 1 \mbox{ and } r_1 = 1 \\
\end{cases}
\end{equation}
With
\begin{equation}\label{eq:bs-2-proof-4}
\tilde{r}_t = 
\begin{cases}
\frac{1}{2}\left(1 + (-1)^{0+0} \right) = 1 & \mbox{ for } a_1 = 0 \mbox{ and } r_1 = 0 \\
\frac{1}{2}\left(1 + (-1)^{0+1} \right) = 0 & \mbox{ for } a_1 = 0 \mbox{ and } r_1 = 1 \\
\frac{1}{2}\left(1 + (-1)^{1+0} \right) = 0 & \mbox{ for } a_1 = 1 \mbox{ and } r_1 = 0 \\
\frac{1}{2}\left(1 + (-1)^{1+1} \right) = 1 & \mbox{ for } a_1 = 1 \mbox{ and } r_1 = 1 \\
\end{cases}
\end{equation}
and the parametric form of the Beta distribution, it thus follows that
\begin{equation}
p(s_1|a_1,r_1) = \mbox{Beta}(s_1; \alpha_2) 
\end{equation}
with 
\begin{align}
\begin{split}
\alpha_2 
& = \left(1+\tilde{r}_1, 1 + (1 - \tilde{r}_1)\right)                           \\
& = \left(1+\tilde{r}_1, 2 - \tilde{r}_1\right)                                 \\
& = \left(1+\sum_{k=1}^{1}  \tilde{r}_k, t-\sum_{k=1}^{1}  \tilde{r}_k \right)  \\
& = \left(1+\sum_{k=1}^{t-1}\tilde{r}_k, t-\sum_{k=1}^{t-1}\tilde{r}_k \right).
\end{split}
\end{align}
Thus, \eqnref{eq:bs-theorem-2} holds for $t = 2$.

\vspace{1mm}
\noindent \textit{Induction step.}
\vspace{2mm}

\noindent Assume now that 
\begin{equation}
p(s_1|a_{1:t-2},r_{1:t-2}) = \mbox{Beta}(s_1;\alpha_{t-1})
\mbox{ for some } t = 3,...,T.
\end{equation}
Then from \eqnref{eq:bs-2-proof-1}
\begin{align}
\begin{split}
p(s_1|a_{1:t-1},r_{1:t-1})  
& \propto p(s_1)\prod_{k=1}^{t-1} p(r_k|s_1,a_k)                                                                \\  
& = \left(p(s_1)\prod_{k=1}^{t-2} p(r_k|s_1,a_k)\right) p(r_{t-1}|s_1,a_{t-1})                                  \\
& \propto p(s_1|a_{1:t-2},r_{1:t-2})p(r_{t-1}|s_1,a_{t-1})                                                     \\
& = \mbox{Beta}(s_1;\alpha_{t-1})\mbox{B}(r_{t-1};s_1)^{a_{t-1}}\mbox{B}(r_{t-1};1-s_1)^{1-a_{t-1}} 
\end{split}
\end{align}
Assume first that $a_{t-1} = 0$. Then
\begin{align}
\begin{split}
p(s_1|a_{t-1} = 0, r_{t-1})
& \propto \mbox{Beta}\left(s_1;\left(\alpha^{(1)}_{t-1},\alpha^{(2)}_{t-1}\right)\right)\mbox{B}(r_{t-1};s_1)^{0}\mbox{B}(r_{t-1};1-s_1)^{1-0}       \\
& = \mbox{Beta}\left(s_1; \left(\alpha^{(1)}_{t-1},\alpha^{(2)}_{t-1}\right)\right)\mbox{B}(r_{t-1};1-s_1)                                        \\    
& = (s_1)^{\alpha^{(1)}_{t-1}-1}(1-s_1)^{\alpha^{(2)}_{t-1}-1}(1-s_1)^{r_{t-1}}(s_1)^{1-r_{t-1}}            \\
& = (s_1)^{\alpha^{(1)}_{t-1} + 1 - r_{t-1} - 1}(1-s_1)^{\alpha^{(2)}_{t-1} + r_{t-1} - 1}.                 \\
\end{split}
\end{align}
Assume now that $a_1 = 1$. Then, analogously
\begin{align}
\begin{split}
p(s_1|a_{t-1} = 1, r_{t-1})
& \propto \mbox{Beta}\left(s_1; \left(\alpha^{(1)}_{t-1},\alpha^{(2)}_{t-1}\right)\right)\mbox{B}(r_{t-1};s_1)^{1}\mbox{B}(r_{t-1};1-s_1)^{1-1}      \\
& = \mbox{Beta}\left(s_1; \left(\alpha^{(1)}_{t-1},\alpha^{(2)}_{t-1}\right)\right)\mbox{B}(r_{t-1};s_1)                                       \\    
& = (s_1)^{\alpha^{(1)}_{t-1}-1}(1-s_1)^{\alpha^{(2)}_{t-1}-1}(s_1)^{r_{t-1}}(1-s_1)^{1-r_{t-1}}    \\
& = (s_1)^{\alpha^{(1)}_{t-1} + r_{t-1} - 1}(1-s_1)^{\alpha^{(2)}_{t-1} + 1 - r_{t-1} - 1}.          \\
\end{split}
\end{align}
We thus observe the following update rules for the parameters of the Beta distribution:
\begin{itemize}
\begin{small}
\item If $a_{t-1} = 0$, then $\alpha^{(1)}_{t} = \alpha^{(1)}_{t-1} + 1 - r_{t-1}$  and $\alpha_{t}^{(2)} = \alpha^{(2)}_{t-1} + r_{t-1}$.
\item If $a_{t-1} = 1$, then $\alpha^{(1)}_{t} = \alpha^{(1)}_{t-1} + r_{t-1}\quad\,\,\,\,$ and $\alpha_{t}^{(2)} = \alpha^{(2)}_{t-1} + 1 - r_{t-1}$.
\end{small}
\end{itemize}
For $\alpha^{(1)}_{t}$, we thus have the update rule
\begin{equation}\label{eq:bs-2-proof-5}
\alpha^{(1)}_{t} = \alpha^{(1)}_{t-1} + 
\begin{cases}
1 &  \mbox{ for } a_{t-1} = 0 \mbox{ and } r_{t-1} = 0 \\
0 &  \mbox{ for } a_{t-1} = 0 \mbox{ and } r_{t-1} = 1 \\
0 &  \mbox{ for } a_{t-1} = 1 \mbox{ and } r_{t-1} = 0 \\
1 &  \mbox{ for } a_{t-1} = 1 \mbox{ and } r_{t-1} = 1 \\
\end{cases}
\end{equation}
and for $\alpha^{(2)}_t$, we have the update rule
\begin{equation}\label{eq:bs-2-proof-6}
\alpha^{(2)}_{t} = \alpha^{(2)}_{t-1} +
\begin{cases}
0 & \mbox{ for } a_{t-1} = 0 \mbox{ and } r_{t-1} = 0 \\
1 & \mbox{ for } a_{t-1} = 0 \mbox{ and } r_{t-1} = 1 \\
1 & \mbox{ for } a_{t-1} = 1 \mbox{ and } r_{t-1} = 0 \\
0 & \mbox{ for } a_{t-1} = 1 \mbox{ and } r_{t-1} = 1 \\
\end{cases}.
\end{equation}
With
\begin{equation}\label{eq:bs-2-proof-7}
\tilde{r}_{t-1} = 
\begin{cases}
\frac{1}{2}\left(1 + (-1)^{0+0} \right) = 1 & \mbox{ for } a_{t-1} = 0 \mbox{ and } r_{t-1} = 0 \\
\frac{1}{2}\left(1 + (-1)^{0+1} \right) = 0 & \mbox{ for } a_{t-1} = 0 \mbox{ and } r_{t-1} = 1 \\
\frac{1}{2}\left(1 + (-1)^{1+0} \right) = 0 & \mbox{ for } a_{t-1} = 1 \mbox{ and } r_{t-1} = 0 \\
\frac{1}{2}\left(1 + (-1)^{1+1} \right) = 1 & \mbox{ for } a_{t-1} = 1 \mbox{ and } r_{t-1} = 1 \\
\end{cases}
\end{equation}
and the parametric form of the Beta distribution, it thus follows that 
\begin{equation}
p(s_1|a_{1:t-1},r_{1:t-1}) = \mbox{Beta}(s_1;\alpha_t) 
\end{equation}
with
\begin{align}
\begin{split}
\alpha_t 
& = \left(\alpha^{(1)}_{t-1} + \tilde{r}_{t-1}, \alpha^{(2)}_{t-1} + 1 - \tilde{r}_{t-1} \right) \\
& = \left(1 + \sum_{k=1}^{t-2}\tilde{r}_k  + \tilde{r}_{t-1}, t - 1  - \sum_{k=1}^{t-2}\tilde{r}_k +  1 - \tilde{r}_{t-1} \right) \\
& = \left(1 + \sum_{k=1}^{t-1}\tilde{r}_k, t - \sum_{k=1}^{t-1}\tilde{r}_k  \right)
\end{split}
\end{align}
Notably, 
\begin{align}\label{eq:alpha-update}
\begin{split}
\alpha_t 
& = \left(\alpha^{(1)}_{t-1} + \tilde{r}_{t-1}, \alpha^{(2)}_{t-1} + 1 - \tilde{r}_{t-1} \right)  \\                                               
& = \left(\alpha_{t-1}^{(1)} + \frac{1}{2}\left(1 + (-1)^{r_{t-1} + a_{t-1}}\right), 
          \alpha_{t-1}^{(2)} +  1 - \frac{1}{2}\left(1 + (-1)^{r_{t-1} + a_{t-1}}\right)\right)
\end{split}
\end{align}
implies the Markovian belief state update as formulated in the belief state update function \eqnref{eq:beta-2}, because $\alpha_{t-1} = \left(\alpha_{t-1}^{(1)},\alpha_{t-1}^{(2)} \right)$ corresponds to the parametric representation of the previous belief state $b_{t-1}$. For later reference, we also note that \eqnref{eq:alpha-update} implies that
\begin{itemize}
\item For $\tilde{r}_t = 0$
\begin{align}\label{eq:post-exp-r-tilde-0}
\begin{split}
\mathbb{E}_{p(s_1|a_{1:t-1},r_{1:t-1})}(s_1) 
& = \frac{\alpha_t^{(1)}}{\alpha_t^{(1)}+\alpha_t^{(2)}}                            \\
& = \frac{\alpha_{t-1}^{(1)}+0}{\alpha_{t-1}^{(1)} + 0 +\alpha_{t-1}^{(2)} + 1 - 0} \\
& = \frac{\alpha_{t-1}^{(1)}}{\alpha_{t-1}^{(1)} + \alpha_{t-1}^{(2)} + 1}
\end{split}
\end{align}
\item For $\tilde{r}_t = 1$
\begin{align}\label{eq:post-exp-r-tilde-1}
\begin{split}
\mathbb{E}_{p(s_1|a_{1:t-1},r_{1:t-1})}(s_1) 
& = \frac{\alpha_t^{(1)}}{\alpha_t^{(1)}+\alpha_t^{(2)}}                            \\
& = \frac{\alpha_{t-1}^{(1)}+1}{\alpha_{t-1}^{(1)} + 1 +\alpha_{t-1}^{(2)} + 1 - 1} \\
& = \frac{\alpha_{t-1}^{(1)}+1}{\alpha_{t-1}^{(1)} + 1 + \alpha_{t-1}^{(2)}}
\end{split}
\end{align}
\end{itemize}
\end{proof}

\begin{theorem}[Decision optimality]\label{thm:decision-optimality-2}
Let
\begin{multline}
p(s_1,d_{1:T}, a_{1:T},r_{1:T})
\\
= p(s_1)p(d_1)p(a_1|d_1)p(r_1|s_1,d_1)\prod_{t=2}^Tp(d_t|a_{1:t-1},r_{1:t-1})p(a_t|d_t)p(r_t|s_1,d_t)
\end{multline}
denote the agent model induced by \eqnref{eq:T-2}, \eqnref{eq:A-p-2} and \eqnref{eq:D-2} under 
the assumption of the absence of post-decision noise, i.e., with
\begin{equation}
p(r_t|s_1,d_t) := \mbox{B}(r_t;s_1)^{d_t}\mbox{B}(r_t;1-s_1)^{1-d_t}.
\end{equation}
Let $\varphi$ denote a history-dependent policy, i.e., a function of the form \citep[cf.][]{krishnamurthy2016, puterman1994}
\begin{equation}
\varphi : A^{t-1} \times R^{t-1} \to D, (a_{1:t-1},r_{1:t-1}) \mapsto \varphi(a_{1:t-1},r_{1:t-1}) =: d_t
\end{equation}
and let $\Pi$ denote the set of all such policies in the current scenario. Then the policy
\begin{equation}
\varphi^* := \delta \circ \phi \circ \beta
\end{equation}
maximizes the agent's cumulative expected reward, i.e.,
\begin{equation}
\varphi^* = \argmax_{\varphi \in \Pi}\left(\sum_{t=1}^T \mathbb{E}_{p(s_1|a_{1:t-1},r_{1:t-1})}\left(\mathbb{E}_{p(r_t|s_1,d_t)}(r_t)\right) \right)
\mbox{ for } t = 1,...,T.
\end{equation}
\end{theorem}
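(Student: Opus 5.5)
The plan follows the structure of the proof of \autoref{thm:decision-optimality-1}, with the belief state of \autoref{thm:belief-state-2} in place of that of \autoref{thm:belief-state-1}.

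\textbf{Step 1: the per-trial expected reward.} For fixed $s_1$ and $d_t$, compute $\mathbb{E}_{p(r_t|s_1,d_t)}(r_t)$ directly from the Bernoulli form of $p(r_t|s_1,d_t)$. This gives
\begin{equation*}
\mathbb{E}_{p(r_t|s_1,d_t)}(r_t) = s_1^{d_t}(1-s_1)^{1-d_t} = d_t s_1 + (1-d_t)(1-s_1).
\end{equation*}

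\textbf{Step 2: the outer expectation.} Because this expression is affine in $s_1$, the outer expectation under $p(s_1|a_{1:t-1},r_{1:t-1})$ simply replaces $s_1$ by its posterior mean $m_t := \mathbb{E}_{p(s_1|a_{1:t-1},r_{1:t-1})}(s_1)$. By \autoref{thm:belief-state-2}, $m_t = \alpha_t^{(1)}/(\alpha_t^{(1)}+\alpha_t^{(2)})$, with $m_1 = 1/2$ under the uniform prior. Substituting $d_t = \varphi(a_{1:t-1},r_{1:t-1})$, the objective becomes
\begin{equation*}
\sum_{t=1}^T \Big(\ib{\varphi(a_{1:t-1},r_{1:t-1}) = 1}\, m_t + \ib{\varphi(a_{1:t-1},r_{1:t-1}) = 0}\,(1-m_t)\Big).
\end{equation*}

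\textbf{Step 3: separate the trials.} Each summand depends on $\varphi$ only through its value on the trial-$t$ history. As in the proof of \autoref{thm:decision-optimality-1}, I would invoke additive separability and reduce the problem to $T$ pointwise problems, $\max_{d\in\{0,1\}} \{1-m_t,\, m_t\}$, one for each $t$ and each history.

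\textbf{Step 4: identify the maximizer with $\delta\circ\phi\circ\beta$.} The pointwise maximizer picks $d=1$ if $m_t > 1/2$ and $d=0$ if $m_t < 1/2$. Since $\phi$ assigns exactly $v_t = (1-m_t, m_t)$, this maximizer is $\delta\circ\phi\circ\beta$, which decides by $\alpha_t^{(1)}$ versus $\alpha_t^{(2)}$, as made explicit in \eqnref{eq:p-dt-2}.

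\textbf{Main obstacle: separability and ties.} The separability in Step 3 is the step that needs care. In the stated objective, the conditioning history $(a_{1:t-1},r_{1:t-1})$ of each term is itself generated by earlier decisions, so it is not obvious that greedy choices are optimal.

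I would resolve this by reading the objective as stated: the expression is evaluated for a fixed realized history, and the policy is a function of that history. Since $\varphi$ on trial-$t$ histories is a separate degree of freedom from $\varphi$ on other trials' histories, pointwise maximization is valid. It is also worth remarking that, because the per-trial gain is affine in $s_1$, there is no informational value that would justify deviating from the greedy choice under this objective.

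Ties, $m_t = 1/2$ (including $t=1$), give equal value to both decisions. Any choice is then optimal, which covers the convention $d_1 = 0$ in \eqnref{eq:delta-2}, so the argmax is understood up to tie-breaking.
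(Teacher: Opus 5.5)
Your proposal is correct and follows the paper's proof essentially step for step: the inner Bernoulli expectation, linearity of the posterior expectation, additive separability under the fixed-history reading, and identification of the pointwise maximizer with the threshold rule $\alpha^{(1)}_t \le \alpha^{(2)}_t$ of \autoref{thm:conditional-decision-distribution-2}; your explicit handling of ties, including $t=1$, makes precise what the paper leaves implicit. One caveat on your side remark: affinity of the per-trial gain in $s_1$ is not what removes the value of information, since it holds in every Bernoulli bandit, where greedy play is generally suboptimal; what makes the greedy rule optimal even for the genuinely sequential expected reward is the task symmetry, under which $\tilde{r}_t$ follows $\mbox{B}(\tilde{r}_t;s_1)$ irrespective of $a_t$, so the law of the posterior means $m_t$ is policy-independent and $\sum_{t}\mathbb{E}\big(d_t m_t + (1-d_t)(1-m_t)\big) \le \sum_{t}\mathbb{E}\max\{m_t,1-m_t\}$, with equality for the greedy rule.
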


\begin{proof}
We first note that
\begin{align}
\begin{split}
\mathbb{E}_{p(r_t|s_1,d_1)}(r_t)
& = 0\cdot \mbox{B}(0;s_1)^{d_t}\mbox{B}(0;1-s_1)^{1-d_t} + 1 \cdot \mbox{B}(1;s_1)^{d_t}\mbox{B}(1;1-s_1)^{1-d_t}  \\
& = \mbox{B}(1;s_1)^{d_t}\mbox{B}(1;1-s_1)^{1-d_t}                                                                  \\
& = (s_1)^{d_t}(1-s_1)^{1-d_t}
\end{split}
\end{align}

Substituting $\varphi(a_{1:t-1},r_{1:t-1})$ for $d_t$, we thus consider the maximization problem

\begin{equation}
\max_{\varphi \in \Pi}\left(\sum_{t=1}^T \mathbb{E}_{p(s_1|a_{1:t-1},r_{1:t-1})}\left((s_1)^{\varphi(a_{1:t-1},r_{1:t-1})}(1-s_1)^{1-\varphi(a_{1:t-1},r_{1:t-1})} \right)\right)
\end{equation}

Given the additive separability of the objective function, this maximization is equivalent to the $T$ maximization problems \cite[cf.][Section 7.4]{nocedal2006} for $t = 1,...,T$, 

\begin{equation}
\max_{\varphi \in \Pi} \mathbb{E}_{p(s_1|a_{1:t-1},r_{1:t-1})}\left((s_1)^{\varphi(a_{1:t-1},r_{1:t-1})}(1-s_1)^{1-\varphi(a_{1:t-1},r_{1:t-1})} \right) 
\end{equation}

\vspace{2mm}
\noindent
We are thus concerned with identifying a function $\varphi$ for which $\varphi(a_{1:t-1},r_{1:t-1}) \in \{0,1\}$ and which maximizes the posterior expected value of 
$(s_1)^{\varphi(a_{1:t-1},r_{1:t-1})}(1-s_1)^{1-\varphi(a_{1:t-1},r_{1:t-1})}$. We are thus led to consider

\begin{align}
\begin{split}
& \max_{\varphi(a_{1:t-1}, r_{1:t-1}) \in \{0,1\}}
\mathbb{E}_{p(s_1|a_{1:t-1},r_{1:t-1})}\left((s_1)^{\varphi(a_{1:t-1},r_{1:t-1})}(1-s_1)^{1-\varphi(a_{1:t-1},r_{1:t-1})} \right)
\\
& = 
\max
\left\lbrace  
\mathbb{E}_{p(s_1|a_{1:t-1},r_{1:t-1})}\left((s_1)^{0}(1-s_1)^{1-0} \right),
\mathbb{E}_{p(s_1|a_{1:t-1},r_{1:t-1})}\left((s_1)^{1}(1-s_1)^{1-1} \right)
\right\rbrace
\\
& = 
\max
\left\lbrace  
\mathbb{E}_{p(s_1|a_{1:t-1},r_{1:t-1})}\left(1-s_1\right),
\mathbb{E}_{p(s_1|a_{1:t-1},r_{1:t-1})}\left(s_1  \right)
\right\rbrace
\\
& = 
\max
\left\lbrace  
1 - \mathbb{E}_{p(s_1|a_{1:t-1},r_{1:t-1})}\left(s_1\right),
\mathbb{E}_{p(s_1|a_{1:t-1},r_{1:t-1})}\left(s_1  \right)
\right\rbrace
\\
& = 
\max
\left\lbrace  
1 - \frac{\alpha^{(1)}_t}{\alpha^{(1)}_t + {\alpha^{(2)}_t}}, 
\frac{\alpha^{(1)}_t}{\alpha^{(1)}_t + {\alpha^{(2)}_t}}
\right\rbrace
\end{split}
\end{align}
with 
\begin{align}
\begin{split}
\argmax_{\varphi(a_{1:t-1},r_{1:t-1}) \in \{0,1\}}
\mathbb{E}_{p(s_1|a_{1:t-1},r_{1:t-1})}  &  \left((s_1)^{\varphi(a_{1:t-1}, r_{1:t-1})} (1-s_1)^{1-\varphi(a_{1:t-1},r_{1:t-1})} \right)  = 0 
\\ \Leftrightarrow
1 - \frac{\alpha^{(1)}_t}{\alpha^{(1)}_t + {\alpha^{(2)}_t}} & \ge  \frac{\alpha^{(1)}_t}{\alpha^{(1)}_t + {\alpha^{(2)}_t}}
\\ \Leftrightarrow
 \frac{\alpha^{(1)}_t + {\alpha^{(2)}_t}}{\alpha^{(1)}_t + {\alpha^{(2)}_t}} - \frac{\alpha^{(1)}_t}{\alpha^{(1)}_t + {\alpha^{(2)}_t}} & \ge  \frac{\alpha^{(1)}_t}{\alpha^{(1)}_t + {\alpha^{(2)}_t}}
 \\ \Leftrightarrow
 \frac{\alpha^{(2)}_t}{\alpha^{(1)}_t + {\alpha^{(2)}_t}} & \ge  \frac{\alpha^{(1)}_t}{\alpha^{(1)}_t + {\alpha^{(2)}_t}}
\\ \Leftrightarrow
\alpha^{(1)}_t & \le \alpha^{(2)}_t
\end{split}
\end{align}
and
\begin{align}
\begin{split}
\argmax_{\varphi(a_{1:t-1},r_{1:t-1}) \in \{0,1\}}
\mathbb{E}_{p(s_1|a_{1:t-1},r_{1:t-1})} &  \left((s_1)^{\varphi(a_{1:t-1},r_{1:t-1})}(1-s_1)^{1-\varphi(a_{1:t-1},r_{1:t-1})} \right) = 1 
\\ \Leftrightarrow
1 - \frac{\alpha^{(1)}_t}{\alpha^{(1)}_t + {\alpha^{(2)}_t}} & <  \frac{\alpha^{(1)}_t}{\alpha^{(1)}_t + {\alpha^{(2)}_t}}
\\ \Leftrightarrow
 \frac{\alpha^{(1)}_t + {\alpha^{(2)}_t}}{\alpha^{(1)}_t + {\alpha^{(2)}_t}} - \frac{\alpha^{(1)}_t}{\alpha^{(1)}_t + {\alpha^{(2)}_t}} & <  \frac{\alpha^{(1)}_t}{\alpha^{(1)}_t + {\alpha^{(2)}_t}}
 \\ \Leftrightarrow
 \frac{\alpha^{(2)}_t}{\alpha^{(1)}_t + {\alpha^{(2)}_t}} & <  \frac{\alpha^{(1)}_t}{\alpha^{(1)}_t + {\alpha^{(2)}_t}}
\\ \Leftrightarrow
\alpha^{(1)}_t & > \alpha^{(2)}_t
\end{split}
\end{align}
Such a $\varphi$ is thus of the form
\begin{equation}
\varphi : A^{t-1} \times R^{t-1} \to D, 
(a_{1:t-1}, r_{1:t-1}) \mapsto \varphi(a_{1:t-1}, r_{1:t-1}) :=
\begin{cases}
0  & \mbox{ if } \alpha^{(1)}_t  \le \alpha^{(2)}_t \\
1  & \mbox{ if } \alpha^{(1)}_t   > \alpha^{(2)}_t  \\
\end{cases}
\end{equation}
But as shown in \autoref{thm:conditional-decision-distribution-2} this is exactly the functional form of $\delta \circ \phi \circ \beta$,
because
\begin{align}
\begin{split}
(\delta \circ \phi \circ \beta)(a_{1:t-1}, r_{1:t-1}) = 0
\Leftrightarrow
p(d_t = 0|a_{1:t-1}, r_{1:t-1}) = 1 
\Leftrightarrow 
\alpha^{(1)}_t \le \alpha^{(2)}_t \\
(\delta \circ \phi \circ \beta)(a_{1:t-1}, r_{1:t-1}) = 1
\Leftrightarrow
p(d_t = 1|a_{1:t-1}, r_{1:t-1}) = 1 
\Leftrightarrow 
\alpha^{(1)}_t > \alpha^{(2)}_t
\end{split}
\end{align}
and thus
\begin{equation}
\delta \circ \phi \circ \beta 
= \varphi^* 
= \argmax_{\varphi \in \Pi}\left(\mathbb{E}_{p(s_1|a_{1:T-1},r_{1:T-1})}\left(\sum_{t=1}^T \mathbb{E}_{p(r_t|s_1,d_1)}(r_t)\right) \right).
\end{equation}
\end{proof}

\begin{theorem}[Conditional decision distribution]\label{thm:conditional-decision-distribution-2}
Given the agent model $\mathcal{A}$ defined in \eqnref{eq:A-2}, the distribution of an agent's decision
\begin{equation}
d_t = (\delta \circ \phi \circ \beta)(b_{t-1}, a_{t-1},r_{t-1})
\end{equation}
given $b_{t-1}, a_{t-1}$ and $r_{t-1}$ can be written as
\begin{equation}
p(d_t|a_{1:t-1},r_{1:t-1}) = 
\left\llbracket \alpha^{(1)}_{t} \le \alpha^{(2)}_{t} \right\rrbracket^{1-d_t}
\left\llbracket \alpha^{(1)}_{t}  >  \alpha^{(2)}_{t}\right\rrbracket^{d_t}, 
\end{equation}
where
\begin{equation}
\alpha^{(1)}_t := 1 + \sum_{k=1}^{t-1} \tilde{r}_k \mbox{ and } \alpha^{(2)}_t := t - \sum_{k=1}^{t-1} \tilde{r}_k. 
\end{equation}
\end{theorem}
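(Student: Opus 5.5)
The plan mirrors the proof of \autoref{thm:conditional-decision-distribution-1}. First, observe that $d_t = (\delta\circ\phi\circ\beta)(b_{t-1},a_{t-1},r_{t-1})$ is a deterministic function of its arguments. By \autoref{thm:belief-state-2} and the Markovian update \eqnref{eq:alpha-update}, the belief state $b_t$ is uniquely parameterized by $\alpha_t$, and $\alpha_t$ is itself a deterministic function of $(a_{1:t-1},r_{1:t-1})$ via \eqnref{eq:beta-update} and \eqnref{eq:r-tilde}. Conditioning on $(b_{t-1},a_{t-1},r_{t-1})$ is therefore equivalent to conditioning on the history $(a_{1:t-1},r_{1:t-1})$, and $p(d_t|a_{1:t-1},r_{1:t-1})$ is degenerate. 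It takes the value $1$ at the decision the composition returns and $0$ otherwise. Since $D=\{0,1\}$, it suffices to characterize the event $d_t=0$; the case $d_t=1$ follows by complementation.

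Next, I would establish the chain of equivalences for $t=2,\dots,T$:
\begin{equation*}
d_t = 0 \Leftrightarrow v_t^0 \ge v_t^1 \Leftrightarrow 1-\mathbb{E}_{p(s_1|a_{1:t-1},r_{1:t-1})}(s_1) \ge \mathbb{E}_{p(s_1|a_{1:t-1},r_{1:t-1})}(s_1) \Leftrightarrow \frac{\alpha_t^{(2)}}{\alpha_t^{(1)}+\alpha_t^{(2)}} \ge \frac{\alpha_t^{(1)}}{\alpha_t^{(1)}+\alpha_t^{(2)}} \Leftrightarrow \alpha_t^{(1)} \le \alpha_t^{(2)}.
\end{equation*}
The third step uses the Beta mean $\alpha_t^{(1)}/(\alpha_t^{(1)}+\alpha_t^{(2)})$. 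The last step multiplies by the positive quantity $\alpha_t^{(1)}+\alpha_t^{(2)} = t+1$.

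The one point needing care is the tie convention. The equivalence requires $\argmax_{d\in D}$ in \eqnref{eq:delta-2} to resolve ties $v_t^0=v_t^1$ in favor of $d=0$. This is the convention implicitly adopted in \autoref{thm:decision-optimality-2}, and I would state it explicitly as the interpretation of $\argmax$ (smallest maximizing index). Under this convention, $d_t=1$ holds exactly when $\alpha_t^{(1)}>\alpha_t^{(2)}$.

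For $t=1$, $\delta$ returns $0$ by definition and $\alpha_1=(1,1)$, so $\alpha_1^{(1)}\le\alpha_1^{(2)}$ holds. The formula is therefore also consistent with $p(d_1=0)=1$. Finally, I would encode the two cases as
\begin{equation*}
p(d_t|a_{1:t-1},r_{1:t-1}) = \ib{\alpha_t^{(1)}\le\alpha_t^{(2)}}^{1-d_t}\ib{\alpha_t^{(1)}>\alpha_t^{(2)}}^{d_t}.
\end{equation*}
I expect no substantive obstacle; the only delicate issue is the tie-breaking convention.
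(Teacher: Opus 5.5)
Your proposal is correct and takes essentially the same route as the paper: you note that $p(d_t|a_{1:t-1},r_{1:t-1})$ is degenerate, reduce to the event $d_t=0$, and use the Beta mean to get the chain $v_t^0 \ge v_t^1 \Leftrightarrow \alpha_t^{(1)} \le \alpha_t^{(2)}$. The tie-breaking convention for $\argmax$ (ties resolved toward $d=0$; ties do occur, namely for odd $t$), the positivity $\alpha_t^{(1)}+\alpha_t^{(2)} = t+1$, and the $t=1$ consistency check, all of which you state explicitly, are left implicit in the paper's proof.
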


\begin{proof} We first note that because $d_t$ is a function of $a_{1:t-1}$ and $r_{1:t-1}$, $p(d_t|a_{1:t-1},r_{1:t-1})$ is degenerate, taking on the value $1$,
if $d_t = (\delta \circ \phi \circ \beta)(a_{1:t-1},r_{1:t-1})$ and $0$ else. Next, because $d_t$ takes on values in $D = \{0,1\}$ it suffices to consider $p(d_t = 0|a_{1:t-1},r_{1:t-1})$, as $p(d_t = 1|a_{1:t-1},r_{1:t-1}) = 1 - p(d_t = 0|a_{1:t-1},r_{1:t-1})$ follows immediately. We hence assume that $d_t = 0$ and show that this is equivalent to $\alpha^{(1)}_t \le \alpha^{(2)}_t$:   
\begin{align}
\begin{split}
d_t 
& = 
0
\\\Leftrightarrow
v_t^{0} 
& \ge 
v_t^{1}
\\\Leftrightarrow
1 - \mathbb{E}_{p(s_1|a_{1:t-1}, r_{1:t-1})}(s_1)
& \ge 
\mathbb{E}_{p(s_1|a_{1:t-1}, r_{1:t-1})}(s_1)
\\\Leftrightarrow
1 - \frac{\alpha^{(1)}_t}{\alpha^{(1)}_t + \alpha^{(2)}_t}
& \ge 
\frac{\alpha^{(1)}_t}{\alpha^{(1)}_t + \alpha^{(2)}_t}
\\\Leftrightarrow
\frac{\alpha^{(1)}_t + \alpha^{(2)}_t}{\alpha^{(1)}_t + \alpha^{(2)}_t} - \frac{\alpha^{(1)}_t}{\alpha^{(1)}_t + \alpha^{(2)}_t}
& \ge 
\frac{\alpha^{(1)}_t}{\alpha^{(1)}_t + \alpha^{(2)}_t}
\\\Leftrightarrow
\frac{\alpha^{(2)}_t}{\alpha^{(1)}_t + \alpha^{(2)}_t}
& \ge 
\frac{\alpha^{(1)}_t}{\alpha^{(1)}_t + \alpha^{(2)}_t}
\\\Leftrightarrow
\alpha^{(2)}_t
& \ge 
\alpha^{(1)}_t
\\\Leftrightarrow
\alpha^{(1)}_t
& \le 
\alpha^{(2)}_t
\end{split}
\end{align}
Casting 
$p(d_t = 0|a_{1:t-1}, r_{1:t-1}) = 1$, if $\alpha^{(2)}_t \le \alpha^{(1)}_t$, and 
$p(d_t = 0|a_{1:t-1}, r_{1:t-1}) = 0$, if $\alpha^{(2)}_t > \alpha^{(1)}_t$, implies that
$p(d_t = 1|a_{1:t-1}, r_{1:t-1}) = 0$, if $\alpha^{(1)}_t \le \alpha^{(2)}_t$, and 
$p(d_t = 1|a_{1:t-1}, r_{1:t-1}) = 1$, if $\alpha^{(1)}_t > \alpha^{(2)}_t$, and thus allows
for writing the conditional distribution of $d_t$ given $a_{1:t-1}$ and $r_{1:t-1}$ as
\begin{equation}
p(d_t|a_{1:t-1}, r_{1:t-1}) = 
\left\llbracket\alpha^{(1)}_t \le \alpha^{(2)}_t\right\rrbracket^{1-d_t}
\left\llbracket\alpha^{(1)}_t > \alpha^{(2)}_t\right\rrbracket^{d_t}
\end{equation}
\end{proof}

\begin{theorem}[Conditional log-likelihood function]\label{thm:conditional-log-likelihood-function-2}
Let
\begin{multline}
p(s_{1},d_{1:T},a_{1:T},r_{1:T})
\\ = p(s_1)p(d_1)p(a_1|d_1)p(r_1|s_1,a_1)\prod_{t=2}^Tp(d_t|a_{1:t-1},r_{1:t-1})p(a_t|d_t)p(r_t|s_1,a_t)
\end{multline}
denote the ABM induced by eqs. \eqnref{eq:T-2}, \eqnref{eq:A-p-2}, and $\eqnref{eq:D-2}$. Then
\begin{equation}\label{eq:cll-2-1}
p(a_{1:T}|r_{1:T})
= \prod_{t=1}^T \sum_{d_t}p(d_t|a_{1:t-1},r_{1:t-1})p(a_t|d_t)
\end{equation}
and
\begin{equation}\label{eq:cll-2-2}
\sum_{d_t}p(d_t|a_{1:t-1},r_{1:t-1})p(a_t|d_t)
= 
\left\llbracket \alpha^{(1)}_{t} \le \alpha^{(2)}_{t} \right\rrbracket\mbox{B}(a_t;\tau) + 
\left\llbracket \alpha^{(1)}_{t}  >  \alpha^{(2)}_{t} \right\rrbracket\mbox{B}(a_t;1-\tau) .
\end{equation}
\end{theorem}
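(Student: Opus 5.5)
The plan follows the proof of \autoref{thm:conditional-log-likelihood-function-1}: marginalize the latent quantities $s_1$ and $d_{1:T}$ out of the ABM joint distribution. Then separate the action-generating factors from the reward-generating factors and read off the conditional action probability. Concretely, I would start from $p(a_{1:T},r_{1:T}) = \int_{s_1}\sum_{d_{1:T}} p(s_1,d_{1:T},a_{1:T},r_{1:T})$. I would substitute the factorization of the ABM and use \autoref{thm:exchangeability} to move each $\sum_{d_t}$ inside the product. This is legitimate because, given the observed $a_{1:t-1},r_{1:t-1}$, each factor $p(d_t|a_{1:t-1},r_{1:t-1})p(a_t|d_t)$ involves only the single latent $d_t$. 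The result should be
\begin{equation*}
p(a_{1:T},r_{1:T}) = \left(\prod_{t=1}^T \sum_{d_t} p(d_t|a_{1:t-1},r_{1:t-1})p(a_t|d_t)\right)\int_{s_1} p(s_1)\prod_{t=1}^T p(r_t|s_1,a_t)\,ds_1 ,
\end{equation*}
with the convention $p(d_1|a_{1:0},r_{1:0}) := p(d_1)$. The key structural point is that $s_1$ enters only through the reward factors, and the decisions enter only through the action factors.

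Next I would form the ratio defining the conditional distribution of the actions given the rewards. As in the Gabor case, the $s_1$-integral $\int_{s_1} p(s_1)\prod_t p(r_t|s_1,a_t)\,ds_1$ is the reward factor. It contains no model parameter ($\tau$, and later $\lambda$). I would adopt the same convention used in \autoref{thm:conditional-log-likelihood-function-1}: the reward terms are treated as fixed, observed, parameter-independent quantities, i.e.\ $p(r_t|s_1,a_t)$ is taken to equal $1$ for the observed values. The integral over $s_1$ then reduces to $\int p(s_1)\,ds_1 = 1$. Next I would show that the normalizer $\sum_{a_{1:T}}\prod_t\sum_{d_t}p(d_t|\cdot)p(a_t|d_t)$ equals $1$, by summing out $a_T$ first, then $a_{T-1}$, and so on, since each $\sum_{a_t}p(a_t|d_t)=1$ and $\sum_{d_t}p(d_t|\cdot)=1$. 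This yields \eqnref{eq:cll-2-1}.

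\textbf{Main obstacle.} The step I expect to be delicate is exactly this treatment of the reward factor. Because $p(r_t|s_1,a_t)$ depends on $a_t$, the reward factor does not cancel literally between numerator and denominator. I would therefore state the result explicitly as the action-part of the likelihood, up to a $\tau$-independent factor, and note that it is this factor which enters the ML estimation and the BIC computation. I would also check that the history dependence of $p(d_t|a_{1:t-1},r_{1:t-1})$ on past actions does not break the backward summation over $a_{1:T}$. It should not, since each such conditional is a proper distribution for every history.

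For \eqnref{eq:cll-2-2}, I would sum over $d_t\in\{0,1\}$ directly, substituting \autoref{thm:conditional-decision-distribution-2} and the data model \eqnref{eq:p-at-giv-dt-2}. The $d_t=0$ term gives $\ib{\alpha_t^{(1)}\le\alpha_t^{(2)}}\mbox{B}(a_t;\tau)$, and the $d_t=1$ term gives $\ib{\alpha_t^{(1)}>\alpha_t^{(2)}}\mbox{B}(a_t;1-\tau)$. For $t=1$, I would separately verify consistency with $p(d_1=0)=1$. This holds because $\alpha_1=(1,1)$ satisfies $\alpha_1^{(1)}\le\alpha_1^{(2)}$.
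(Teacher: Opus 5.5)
Your decomposition
\[
p(a_{1:T},r_{1:T})=\Bigl(\prod_{t=1}^T\sum_{d_t}p(d_t|a_{1:t-1},r_{1:t-1})\,p(a_t|d_t)\Bigr)\,R,
\qquad
R:=\int_0^1 p(s_1)\prod_{t=1}^T p(r_t|s_1,a_t)\,ds_1,
\]
is correct, and so is your backward summation showing that the first factor sums to one over $a_{1:T}$. The gap is the step where $p(r_t|s_1,a_t)$ is ``taken to equal $1$''. Unlike \autoref{thm:conditional-log-likelihood-function-1}, this statement contains no such hypothesis. Under \eqref{eq:T-2} one has $p(r_t|s_1,a_t)=s_1^{\tilde r_t}(1-s_1)^{1-\tilde r_t}$, so the convention changes the model whose conditional distribution you are computing.

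The obstacle you flag is therefore fatal, not technical. We have $R=k!\,(T-k)!/(T+1)!$ with $k=\sum_t\tilde r_t$, which depends on $a_{1:T}$. Since $p(a_{1:T}|r_{1:T})$ is proportional to the first factor times $R$, and for $0<\tau\le 0.5$ the first factor is positive for every $a_{1:T}$, \eqref{eq:cll-2-1} would force $R$ to be constant in $a_{1:T}$. That fails for every $T\ge 2$.

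Concretely, take $T=2$ and $r_{1:2}=(1,1)$. Then $d_1=0$. After $a_1=0$ one has $\alpha_2=(1,2)$ and $d_2=0$; after $a_1=1$ one has $\alpha_2=(2,1)$ and $d_2=1$. The right-hand side of \eqref{eq:cll-2-1} assigns $(1-\tau)^2,\tau(1-\tau),\tau^2,\tau(1-\tau)$ to $a_{1:2}=(0,0),(0,1),(1,0),(1,1)$, while $R$ equals $\frac13,\frac16,\frac16,\frac13$. Hence $p(a_{1:2}=(0,0)|r_{1:2}=(1,1))=2(1-\tau)^2/(2-\tau)$, which differs from $(1-\tau)^2$ unless $\tau=0$.

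The paper's proof hides the same step. After the chain rule $p(a_{1:T},r_{1:T})=\prod_t p(a_t,r_t|a_{1:t-1},r_{1:t-1})$, its seventh equality replaces the posterior $p(s_1|a_{1:t-1},r_{1:t-1})=\mbox{Beta}(s_1;\alpha_t)$ of \autoref{thm:belief-state-2} by the prior $p(s_1)$. That replacement is what makes each $s_1$-integral equal to $\frac12$ and gives $p(r_{1:T})=2^{-T}$. With the correct posterior, the integral is $\alpha^{(1)}_t/(t+1)$ if $\tilde r_t=1$ and $\alpha^{(2)}_t/(t+1)$ if $\tilde r_t=0$. It therefore depends on $a_t$ unless $\alpha^{(1)}_t=\alpha^{(2)}_t$, and its product over $t$ is exactly your $R$. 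So neither route can prove \eqref{eq:cll-2-1} as stated.

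What does hold, and what your fallback formulation expresses, is the following. The product in \eqref{eq:cll-2-1} equals $\prod_t p(a_t|a_{1:t-1},r_{1:t-1})$, the action probability causally conditioned on the rewards. The joint $p(a_{1:T},r_{1:T})$ is this product times $R$, where $R$ involves neither $\tau$ nor $\lambda$ nor the agent model. It thus adds the same constant $\ln R$ to every model's log-likelihood for a given dataset, leaving ML estimates, BIC differences and PEPs unchanged. State the result in that form rather than invoking the convention. Your treatment of \eqref{eq:cll-2-2}, including the check at $t=1$, is correct and coincides with the paper's.
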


\begin{proof}
With respect to \eqnref{eq:cll-2-1}, we have
\begin{align*}
& p(a_{1:T}|r_{1:T})                                                                                                                                                                            \\
& = \frac{p(a_{1:T},r_{1:T})}{p(r_{1:T})}                                                                                                                                                       \\
& = \frac{\prod_{t=1}^T p(a_t,r_t|a_{1:t-1},r_{1:t-1})}{p(r_{1:T})}                                                                                                                             \\
& = \frac{\prod_{t=1}^T \sum_{d_t} \int_{s_1} p(s_1,d_t, a_t,r_t|a_{1:t-1},r_{1:t-1})}{p(r_{1:T})}                                                                                              \\
& = \frac{\prod_{t=1}^T \sum_{d_t} \int_{s_1} p(r_t|s_1,d_t,a_t,a_{1:t-1},r_{1:t-1})p(s_1,d_t, a_t|a_{1:t-1},r_{1:t-1})}{p(r_{1:T})}                                                            \\
& = \frac{\prod_{t=1}^T \sum_{d_t} \int_{s_1} p(r_t|s_1,d_t,a_t,a_{1:t-1},r_{1:t-1})p(a_t|s_1, d_t, a_{1:t-1},r_{1:t-1})p(s_1,d_t|a_{1:t-1},r_{1:t-1})}{p(r_{1:T})}                             \\
& = \frac{\prod_{t=1}^T \sum_{d_t} \int_{s_1} p(r_t|s_1,d_t,a_t,a_{1:t-1},r_{1:t-1})p(a_t|s_1, d_t, a_{1:t-1},r_{1:t-1})p(d_t|s_1, a_{1:t-1},r_{1:t-1})p(s_1|a_{1:t-1},r_{1:t-1})}{p(r_{1:T})}  \\
& = \frac{\prod_{t=1}^T \sum_{d_t} \int_{s_1} p(r_t|s_1,a_t)p(a_t|d_t)p(d_t|a_{1:t-1},r_{1:t-1})p(s_1)}{p(r_{1:T})}                                                                             \\
& = \frac{\prod_{t=1}^T \sum_{d_t} p(d_t|a_{1:t-1},r_{1:t-1}) p(a_t|d_t)\int_{s_1} p(s_1) p(r_t|s_1,a_t)}{p(r_{1:T})}                                                                           \\
& = \frac{\prod_{t=1}^T \frac{1}{2}\sum_{d_t} p(d_t|a_{1:t-1},r_{1:t-1}) p(a_t|d_t)}{p(r_{1:T})}                                                                                                \\
& = \frac{\frac{1}{2^T}\prod_{t=1}^T\sum_{d_t} p(d_t|a_{1:t-1},r_{1:t-1}) p(a_t|d_t)}{p(r_{1:T})}                                                                                               \\
& = \frac{\frac{1}{2^T}\prod_{t=1}^T\sum_{d_t} p(d_t|a_{1:t-1},r_{1:t-1}) p(a_t|d_t)}{\sum_{a_{1:T}p(a_{1:T},r_{1:T})}}                                                                         \\   
& = \frac{\frac{1}{2^T}\prod_{t=1}^T\sum_{d_t} p(d_t|a_{1:t-1},r_{1:t-1}) p(a_t|d_t)}{\sum_{a_{1:T}}\frac{1}{2^T}\prod_{t=1}^T\sum_{d_t} p(d_t|a_{1:t-1},r_{1:t-1})p(a_t|d_t)}                  \\ 
& = \frac{\frac{1}{2^T}\prod_{t=1}^T\sum_{d_t} p(d_t|a_{1:t-1},r_{1:t-1}) p(a_t|d_t)}{\frac{1}{2^T}\prod_{t=1}^T\sum_{a_t}\sum_{d_t} p(a_t, d_t|a_{1:t-1},r_{1:t-1})}                           \\ 
& = \frac{\prod_{t=1}^T\sum_{d_t} p(d_t|a_{1:t-1},r_{1:t-1}) p(a_t|d_t)}{\prod_{t=1}^T  1}                                                         \\ 
& = \prod_{t=1}^T\sum_{d_t} p(d_t|a_{1:t-1},r_{1:t-1}) p(a_t|d_t)                 \\
\end{align*}

\noindent For the 9th equality, we have used that the integral with respect to $s_1$ evaluates to $\frac{1}{2}$ irrespective of the values of $a_t$ and $r_t$, as
\begin{itemize}
\begin{small}
\item for $a_t = 0$, we have 
\begin{equation}
\int_{s_1} p(s_1)p(r_t|s_1,a_t = 0) = \int_0^1 1 \cdot \mbox{B}(r_t;1 - s_1) \,ds_1 = \int_0^1 (1-s_1)^{r_t}s_1^{1-r_t} \,ds_1, 
\end{equation}
which
\begin{itemize}
\begin{small}
\item[$\circ$] for $r_t = 0$ evaluates to $\int_0^1 s_1 \,ds_1 = \left(\frac{1}{2}s_1^2\right)\big\vert_{0}^1 = \frac{1}{2} - 0 = \frac{1}{2}$,
\item[$\circ$] for $r_t = 1$ evaluates to $\int_0^1 (1 - s_1) \,ds_1 = \left(s_1 - \frac{1}{2}s_1^2\right)|_{0}^1 = 1 - \frac{1}{2} - 0 + 0 = \frac{1}{2}$,
\end{small}
\end{itemize}
\item for $a_t = 1$, we have 
\begin{equation}
\int_{s_1} p(s_1)p(r_t|s_1,a_t = 1) = \int_0^1 1 \cdot \mbox{B}(r_ts_1) \,ds_1 = \int_0^1 s_1^{r_t}(1-s_1)^{1-r_t} \,ds_1,
\end{equation}
which
\begin{itemize}
\begin{small}
\item[$\circ$] for $r_t = 0$ evaluates to $\int_0^1 (1 - s_1) \,ds_1 = \left(s_1 - \frac{1}{2}s_1^2\right)|_{0}^1 = 1 - \frac{1}{2} - 0 + 0 = \frac{1}{2}$,
\item[$\circ$] for $r_t = 1$ evaluates to $\int_0^1 s_1 \,ds_1 = \left(\frac{1}{2}s_1^2\right)\big\vert_{0}^1 = \frac{1}{2} - 0 = \frac{1}{2}$.
\end{small}
\end{itemize}
\end{small}
\end{itemize}

\noindent Finally, with respect to \eqnref{eq:cll-2-2}, we have with \autoref{thm:conditional-decision-distribution-2} and \eqnref{eq:D-2}
\begin{align}
\begin{split}
\sum_{d_t}p(d_t| & a_{1:t-1}, r_{1:t-1}) p(a_t|d_t) \\
& = \sum_{d_t} 
\left\llbracket \alpha^{(1)}_{t} \le \alpha^{(2)}_{t} \right\rrbracket^{1-d_t}
\left\llbracket \alpha^{(1)}_{t}  >  \alpha^{(2)}_{t} \right\rrbracket^{d_t} \mbox{B}(a_t;1-\tau)^{d_t}\mbox{B}(a_t;\tau)^{1-d_t}  \\
& = 
\left\llbracket \alpha^{(1)}_{t} \le \alpha^{(2)}_{t} \right\rrbracket\mbox{B}(a_t;\tau) + 
\left\llbracket \alpha^{(1)}_{t} >   \alpha^{(2)}_{t} \right\rrbracket\mbox{B}(a_t;1-\tau) 
\end{split}
\end{align}
\end{proof}

\begin{figure}[!htbp]
\centering
\includegraphics[width=\linewidth]{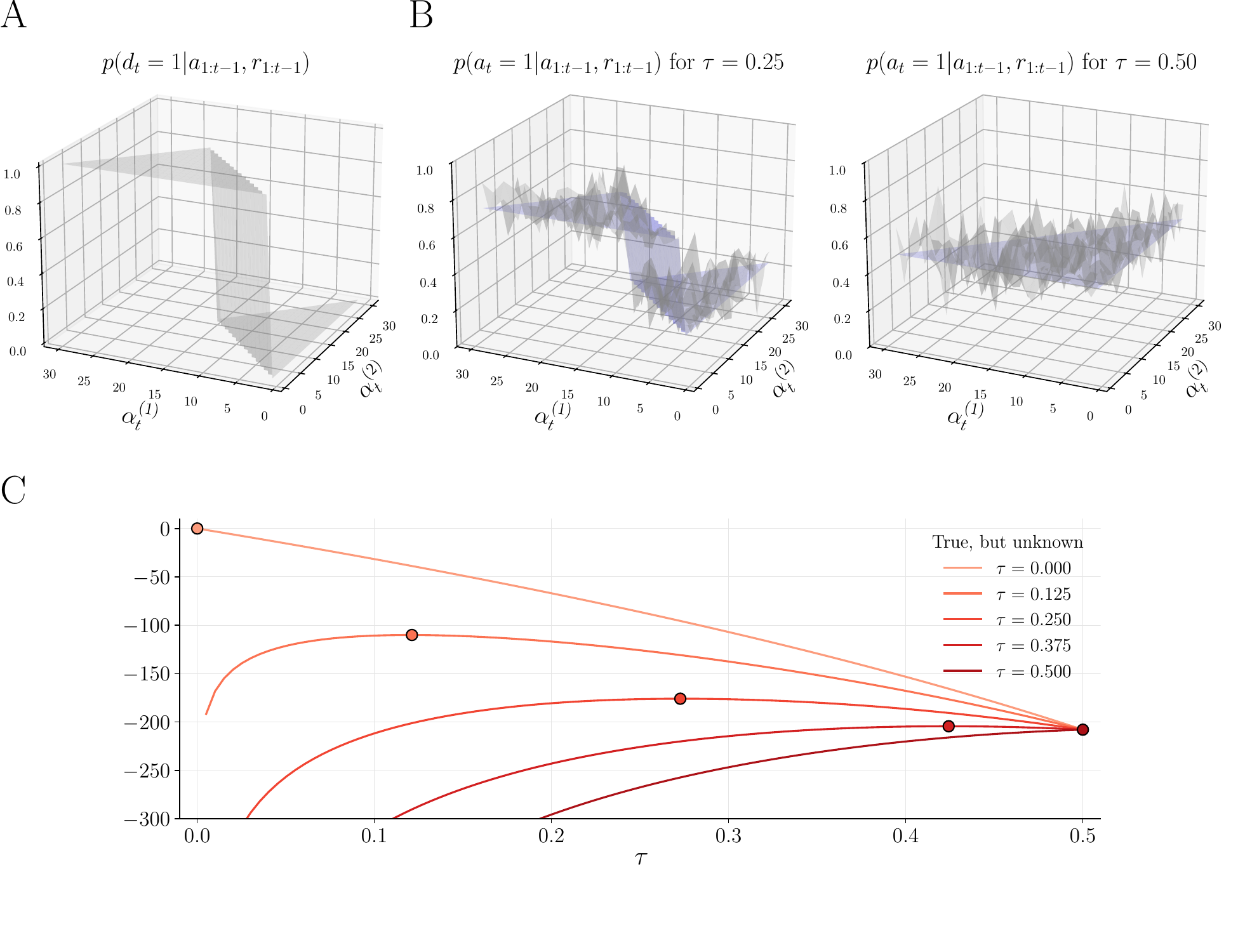}
\caption{Sampling-based validations of \fullref{thm:conditional-decision-distribution-2} and \fullref{thm:conditional-log-likelihood-function-2}. (A) Histogram-based estimation of $p(d_t = 1|a_{1:t-1}, r_{1:t-1})$ parameterized by $\alpha_t = (\smash{\alpha^{(1)}_t}, \smash{\alpha^{(2)}_t})$ as generated during model validation as a gray transparent surface. Because $p(d_t|a_{1:t-1}, r_{1:t-1})$ is a degenerate distribution, its estimate is identical to the analytical result of \fullref{thm:conditional-decision-distribution-2}. Note that $(\smash{\alpha^{(1)}_t}, \smash{\alpha^{(2)}_t})$ combinations beyond the main diagonal are inadmissible given the experimental constraints of 30 trials per block and are thus not associated with conditional decision probabilities. (B) Histogram-based estimation of $p(a_t|a_{1:t-1}, r_{1:t-1})$ based on simulated joint observations of $a_t$ and $\alpha_t$ as generated during model validation for two settings of the post-decision noise parameter $\tau$. Each subpanel shows the histogram-based estimate of $p(a_t|a_{1:t-1}, r_{1:t-1})$ as a gray wrinkled surface, reflecting the noise in its estimation, and the analytical result of $p(a_t|a_{1:t-1}, r_{1:t-1})$ according to \fullref{thm:conditional-decision-distribution-2} as a blue transparent surface. As for (A), the $(\smash{\alpha^{(1)}_t}, \smash{\alpha^{(2)}_t})$ combinations behind the main diagonal are inadmissible in the current experimental context and thus not associated with a conditional action probability. (C) Exemplary conditional log-likelihood functions based on \fullref{thm:conditional-log-likelihood-function-2}. These conditional log-likelihood functions are based on the joint observations of actions and rewards for a single simulated participant dataset (10 blocks with 30 trials each) with the true, but unknown, parameter values indicated by the lines' shade of red and documented in the legend. For this one-dimensional parameter space, the conditional log-likelihood functions exhibit clear maxima, which are marked by dots. The continuous curved nature of these one-dimensional conditional log-likelihood functions suggests that the data generated under the current experimental conditions is informative about the post-decision noise parameter under the assumptions of model A1. For full analytical details, please refer to \textit{abm\_figure\_S8.py.}}\label{fig:S8} 
\end{figure}

\clearpage
\begin{theorem}[Beta-Bernoulli and Rescorla-Wagner learning equivalence]\label{thm:bb-rw-equivalence}
Let $\phi$ and $\varphi$ denote the decision value functions of the Beta-Bernoulli agent model \eqnref{eq:A-2} and the 
adaptive learning rate Rescorla-Wagner agent model \eqnref{eq:A-2-equivalent} for the symmetric bandit task. Then 
\begin{equation}
\phi(b_t) = \varphi(\hat{s}_t) \mbox{ for all } a_{1:t-1} \mbox{ and } r_{1:t-1}
\end{equation}
and thus, given identical action and reward observation histories, both make identical decisions. 
\end{theorem}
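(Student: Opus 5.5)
Since $\phi(b_t) = \bigl(1-\mathbb{E}_{p(s_1|a_{1:t-1},r_{1:t-1})}(s_1),\,\mathbb{E}_{p(s_1|a_{1:t-1},r_{1:t-1})}(s_1)\bigr)$ and $\varphi(\hat{s}_t) = (1-\hat{s}_t,\hat{s}_t)$, it suffices to show that
\begin{equation*}
\hat{s}_t = \mathbb{E}_{p(s_1|a_{1:t-1},r_{1:t-1})}(s_1) = \frac{\alpha^{(1)}_t}{\alpha^{(1)}_t+\alpha^{(2)}_t}
\end{equation*}
for all $t=1,\ldots,T$ and all histories. The identity of decisions then follows at once, because both agents apply the same decision function $\delta$ to identical value vectors.

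By \autoref{thm:belief-state-2}, $\alpha^{(1)}_t+\alpha^{(2)}_t = 1+t$ and $\alpha^{(1)}_t = 1+\sum_{k=1}^{t-1}\tilde{r}_k$. Hence the Bayesian posterior mean equals $m_t := \bigl(1+\sum_{k=1}^{t-1}\tilde{r}_k\bigr)/(t+1)$, a Laplace-smoothed running average of the action-adapted rewards.

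I would proceed by induction on $t$.
\begin{itemize}
\item Base case: $m_1 = 1/2 = \hat{s}_1$.
\item Induction step: assume $\hat{s}_{t-1} = m_{t-1} = \bigl(1+\sum_{k=1}^{t-2}\tilde{r}_k\bigr)/t$. Then
\begin{equation*}
m_t = \frac{t\,m_{t-1} + \tilde{r}_{t-1}}{t+1} = m_{t-1} + \frac{1}{t+1}\bigl(\tilde{r}_{t-1} - m_{t-1}\bigr),
\end{equation*}
which is exactly the update \eqnref{eq:psi} applied to $\hat{s}_{t-1}=m_{t-1}$. Thus $\hat{s}_t = m_t$.
\end{itemize}
Alternatively, one could use \eqnref{eq:post-exp-r-tilde-0} and \eqnref{eq:post-exp-r-tilde-1}, splitting into the cases $\tilde{r}_{t-1}\in\{0,1\}$ and checking each against the RW update. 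The algebraic route above covers both cases at once.

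The main point requiring care is the index bookkeeping, not any real obstacle. The learning rate must be $1/(t+1)$ rather than $1/t$, because $\alpha_t$ has total mass $t+1$, consistent with the start-of-trial convention. I would also verify that $\tilde{r}$ enters both models identically through \eqnref{eq:r-tilde}.
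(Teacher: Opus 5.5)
Your proof is correct and follows the paper's route. It reduces the claim to $\hat{s}_t = \mathbb{E}_{p(s_1|a_{1:t-1},r_{1:t-1})}(s_1)$ and inducts on $t$, using $\alpha^{(1)}_{t-1}+\alpha^{(2)}_{t-1}=t$ so that the learning rate $\frac{1}{t+1}$ matches the Beta normalizer. The only difference is cosmetic: the paper splits into the cases $\tilde{r}_{t-1}\in\{0,1\}$ and checks each against \eqref{eq:post-exp-r-tilde-0} and \eqref{eq:post-exp-r-tilde-1}, which is exactly the alternative you mention, whereas your identity $m_t = m_{t-1} + \frac{1}{t+1}\left(\tilde{r}_{t-1}-m_{t-1}\right)$ handles both cases in one line.
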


\begin{proof}
We first note that with \eqnref{eq:phi-2} and \eqnref{eq:var-phi-2}
\begin{equation}
\phi(b_t) = \varphi(\hat{s}_t) \mbox{ for all } a_{1:t-1} \mbox{ and } r_{1:t-1}
\end{equation}
is equivalent to
\begin{equation}\label{eq:posterior-expectation-equals-s-hat}
\mathbb{E}_{p(s_1|a_{1:t-1},r_{1:t-1})}(s_1) = \hat{s}_t \mbox{ for all } a_{1:t-1} \mbox{ and } r_{1:t-1}
\end{equation}
and thus showing that the adaptive learning rate Rescorla-Wagner update function \eqnref{eq:psi} is equivalent to the posterior expected value update implied by the belief state update function of \eqnref{eq:beta-2}. We show \eqnref{eq:posterior-expectation-equals-s-hat} by induction with respect to $t$

\vspace{1mm}
\noindent \textit{Base case}

\vspace{1mm}

\noindent Let $t = 1$. Then with $\hat{s}_1 = 0.5$ by definition and from \eqnref{eq:b_1},

\begin{equation}
\mathbb{E}_{p(s_1|a_{1:0},r_{1:0})}(s_1) = \frac{\alpha_1^{(1)}}{\alpha_1^{(1)} + \alpha_1^{(2)}} = \frac{1}{1+1} = 0.5
\end{equation}

\vspace{1mm}

\noindent \textit{Induction step}

\noindent Assume that 
\begin{equation}
\hat{s}_{t-1} = \mathbb{E}_{p(s_1|a_{1:t-2},r_{1:t-2})}(s_1) = \frac{\alpha_{t-1}^{(1)}}{\alpha_{t-1}^{(1)} + \alpha_{t-1}^{(2)}}
\end{equation}
for some $t = 2,3,...T$ (the base case shows that this assumption is justified for $t=2$). For ease of notation, define
\begin{equation}
a := \alpha_{t-1}^{(1)} \mbox{ and } b := \alpha_{t-1}^{(2)}
\end{equation}
and thus
\begin{equation}
\hat{s}_{t-1} = \frac{a}{a+b} \mbox{ and } t = a + b
\end{equation}
We consider the cases
\begin{equation}
\tilde{r}_{t-1} = 0 \Leftrightarrow a_{t-1} = 0, r_{t-1} = 1 \lor a_{t-1} = 1, r_{t-1} = 0
\end{equation}
and
\begin{equation}
\tilde{r}_{t-1} = 1 \Leftrightarrow a_{t-1} = 0, r_{t-1} = 0 \lor a_{t-1} = 1, r_{t-1} = 1
\end{equation}
in turn.
\vspace{1mm}

\noindent (1) Let $\tilde{r}_{t-1} = 0$ .Then 
\begin{align*}
\hat{s}_t 
& = \hat{s}_{t-1} + \frac{1}{t+1}\left(\tilde{r}_{t-1} - \hat{s}_{t-1}\right)               \\
& = \frac{a}{a+b} + \frac{1}{a+b+1}\left(0 - \frac{a}{a+b}\right)                           \\
& = \frac{a}{a+b} - \frac{a}{(a+b+1)(a+b)}                                                  \\
& = \frac{a(a+b+1)}{(a+b+1)(a+b)} - \frac{a}{(a+b+1)(a+b)}                                  \\
& = \frac{a(a+b+1) - a}{(a+b+1)(a+b)}                                                       \\
& = \frac{a^2+ab+a-a}{(a+b+1)(a+b)}                                                         \\
& = \frac{a^2+ab}{(a+b+1)(a+b)}                                                             \\
& = \frac{a(a+b)}{(a+b+1)(a+b)}                                                             \\
& = \frac{a}{a+b+1}                                                                         \\
& = \frac{a}{a+b+1}                                                                         \\
& = \frac{\alpha_{t-1}^{(1)}}{\alpha_{t-1}^{(1)} + \alpha_{t-1}^{(2)} + 1 }                 \\
& = \mathbb{E}_{p(s_1|a_{1:t-1},r_{1:t-1})}(s_1),                                           \\
\end{align*}
where the last equality corresponds to \eqnref{eq:post-exp-r-tilde-0}.

\vspace{2mm}

\noindent (1) Let $\tilde{r}_{t-1} = 1$ .Then 
\begin{align*}
\hat{s}_t 
& = \hat{s}_{t-1} + \frac{1}{t+1}\left(\tilde{r}_{t-1} - \hat{s}_{t-1}\right)               \\
& = \frac{a}{a+b} + \frac{1}{a+b+1}\left(1 - \frac{a}{a+b}\right)                           \\
& = \frac{a}{a+b} + \frac{1}{a+b+1} - \frac{a}{(a+b+1)(a+b)}                                \\
& = \frac{a(a+b+1)}{(a+b+1)(a+b)} + \frac{a+b}{(a+b+1)(a+b)} - \frac{a}{(a+b+1)(a+b)}       \\
& = \frac{a(a+b+1)+b}{(a+b+1)(a+b)}                                                         \\
& = \frac{a^2 + ab + a + b}{(a+b+1)(a+b)}                                                   \\
& = \frac{(a + b)(a + 1)}{(a+b+1)(a+b)}                                                     \\
& = \frac{a + 1}{a+1+b}                                                                     \\
& = \frac{\alpha_{t-1}^{(1)} + 1}{\alpha_{t-1}^{(1)} + 1  + \alpha_{t-1}^{(2)}}             \\
& = \mathbb{E}_{p(s_1|a_{1:t-1},r_{1:t-1})}(s_1),                                            \\
\end{align*}
where the last equality corresponds to \eqnref{eq:post-exp-r-tilde-1}.
\end{proof}

We validated the equivalence of the agents' decision value functions numerically in \textit{abm\_equivalence.py}, and, as expected, found no numerical differences.

\subsection{Rescorla-Wagner agent model analysis}\label{sec:rescorla-wagner-agent-model-analysis}

\begin{theorem}[Conditional decision distribution]\label{thm:conditional-decision-distribution-3}
Given the agent model $\mathcal{A}$ defined in \eqnref{eq:A-3}, the distribution of an agent's decision
\begin{equation}
d_t = (\delta \circ \phi \circ \beta)(\hat{s}_{t-1}, a_{t-1},r_{t-1})
\end{equation}
given $\hat{s}_{t-1}, a_{t-1}$ and $r_{t-1}$ can be written as
\begin{equation}
p(d_t|a_{1:t-1},r_{1:t-1}) = 
\left\llbracket \hat{s}_{t} \le 1 - \hat{s}_{t} \right\rrbracket^{1-d_t}
\left\llbracket \hat{s}_{t}  >  1 - \hat{s}_{t} \right\rrbracket^{d_t}, 
\end{equation}
where
\begin{equation}\label{eq:thm-conditional-decision-distribution-3-1}
\hat{s}_t = (1-\lambda)^{t-1}\hat{s}_1  + \lambda\sum_{j=1}^{t-1} (1-\lambda)^{t-1-j}\tilde{r}_j.  
\end{equation}
\end{theorem}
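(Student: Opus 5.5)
The plan mirrors the proof of \autoref{thm:conditional-decision-distribution-2}, with one extra ingredient: the closed-form solution of the Rescorla-Wagner recursion \eqnref{eq:rw-update}.

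First, I will note that $d_t$ is a deterministic function of the history. By \eqnref{eq:rw-update}, $\hat{s}_t$ is obtained recursively from $\hat{s}_1$ and $(a_{1:t-1},r_{1:t-1})$, via $\tilde{r}_{1:t-1}$ as in \eqnref{eq:r-tilde}. The decision then follows by composing with $\varphi$ and $\delta$; I read the statement's $\delta\circ\phi\circ\beta$ as $\delta\circ\varphi\circ\psi$ per \eqnref{eq:dt-3}. Hence $p(d_t|a_{1:t-1},r_{1:t-1})$ is degenerate, and since $D=\{0,1\}$ it suffices to characterize the event $d_t=0$.

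Second, I will chain the definitions \eqnref{eq:var-phi-2} and \eqnref{eq:delta-2}. Taking the argmax convention that ties go to $d=0$, as was done in \autoref{thm:conditional-decision-distribution-2}, we have
\begin{equation*}
d_t = 0 \Leftrightarrow v_t^0 \ge v_t^1 \Leftrightarrow 1-\hat{s}_t \ge \hat{s}_t \Leftrightarrow \hat{s}_t \le 1-\hat{s}_t.
\end{equation*}
This gives $p(d_t=0|\cdot)=\ib{\hat{s}_t\le 1-\hat{s}_t}$, and the complement gives $p(d_t=1|\cdot)=\ib{\hat{s}_t> 1-\hat{s}_t}$. Writing these with the exponent trick yields the claimed PMF. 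For $t=1$, the definition $d_1=0$ is consistent with $\hat{s}_1=0.5$, since $0.5\le 0.5$.

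Third, I will establish the closed form \eqnref{eq:thm-conditional-decision-distribution-3-1} by induction on $t$.
\begin{itemize}
\item \emph{Base case.} For $t=1$ the sum is empty and $(1-\lambda)^0\hat{s}_1=\hat{s}_1$.
\item \emph{Induction step.} Rewrite the update as $\hat{s}_t=(1-\lambda)\hat{s}_{t-1}+\lambda\tilde{r}_{t-1}$ and substitute the hypothesis for $\hat{s}_{t-1}$. This gives $(1-\lambda)^{t-1}\hat{s}_1+\lambda\sum_{j=1}^{t-2}(1-\lambda)^{t-1-j}\tilde{r}_j+\lambda\tilde{r}_{t-1}$. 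The last term is exactly the $j=t-1$ summand, since $(1-\lambda)^0=1$.
\end{itemize}

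There is no genuine obstacle here. The only care points are bookkeeping. The exponent shift in the induction step must be tracked carefully. The tie-breaking convention of $\argmax$ must be fixed explicitly so that the boundary case $\hat{s}_t=0.5$ goes to $d_t=0$, matching the non-strict inequality in the first Iverson bracket.
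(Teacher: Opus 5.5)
Your proposal is correct and matches the paper's proof step for step: it argues that $p(d_t|a_{1:t-1},r_{1:t-1})$ is degenerate, chains $d_t=0 \Leftrightarrow v_t^0\ge v_t^1 \Leftrightarrow \hat{s}_t\le 1-\hat{s}_t$, and proves the closed form by induction on the Rescorla--Wagner recursion. The differences are cosmetic: you start the induction at $t=1$ with an empty sum, whereas the paper starts at $t=2$, and you state explicitly the tie-breaking convention of the $\argmax$ (ties go to $d=0$), which the paper leaves implicit.
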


\begin{proof} We first note that because $d_t$ is a function of $a_{1:t-1}$ and $r_{1:t-1}$, $p(d_t|a_{1:t-1},r_{1:t-1})$ is degenerate, taking on the value $1$,
if $d_t = (\delta \circ \varphi \circ \psi)(\hat{s}_{t-1}, a_{t-1},r_{t-1})$ and $0$ else. Next, because $d_t$ takes on values in $D = \{0,1\}$ it suffices to consider $p(d_t = 0|a_{1:t-1},r_{1:t-1})$, as $p(d_t = 1|a_{1:t-1},r_{1:t-1}) = 1 - p(d_t = 0|a_{1:t-1},r_{1:t-1})$ follows immediately. We hence assume that $d_t = 0$ and show that this is equivalent to $\hat{s}_{t} \le 1 - \hat{s}_{t}$:   
\begin{align}
\begin{split}
d_t 
& = 
0
\\\Leftrightarrow
v_t^{0} 
& \ge 
v_t^{1}
\\\Leftrightarrow
1 - \hat{s}_{t}
& \ge 
\hat{s}_{t}
\\\Leftrightarrow
\hat{s}_{t}
& \le 
1 - \hat{s}_{t}
\end{split}
\end{align}
Casting 
$p(d_t = 0|a_{1:t-1}, r_{1:t-1}) = 1$, if $\hat{s}_{t} \le 1 - \hat{s}_{t}$, and 
$p(d_t = 0|a_{1:t-1}, r_{1:t-1}) = 0$, if $\hat{s}_{t}  >  1 - \hat{s}_{t}$, implies that
$p(d_t = 1|a_{1:t-1}, r_{1:t-1}) = 0$, if $\hat{s}_{t}  >  1 - \hat{s}_{t}$, and 
$p(d_t = 1|a_{1:t-1}, r_{1:t-1}) = 1$, if $\hat{s}_{t} \le 1 - \hat{s}_{t}$, and thus allows 
for writing the conditional distribution of $d_t$ given $a_{1:t-1}$ and $r_{1:t-1}$ as
\begin{equation}
p(d_t|a_{1:t-1},r_{1:t-1}) = 
\left\llbracket \hat{s}_{t} \le 1 - \hat{s}_{t} \right\rrbracket^{1-d_t}
\left\llbracket \hat{s}_{t}  >  1 - \hat{s}_{t} \right\rrbracket^{d_t}. 
\end{equation}
Finally, we validate the closed-form solution \eqnref{eq:thm-conditional-decision-distribution-3-1} 
of the probability state estimate $\hat{s}_{t}$ of the Rescorla-Wagner update equation by means of induction 
with respect to $t$.

\vspace{2mm}
\noindent \textit{Base case}
\vspace{2mm}

\noindent Let $t = 2$. Then, according to \eqnref{eq:rw-update}
\begin{align}
\begin{split}
\hat{s}_{2} 
& = \hat{s}_1 + \lambda(\tilde{r}_1 - \hat{s}_1)                                            \\
& = \hat{s}_1 - \lambda\hat{s}_1 + \lambda\tilde{r}_1                                       \\
& = (1-\lambda)\hat{s}_1 + \lambda(1-\lambda)^0\tilde{r}_1                                  \\
& = (1-\lambda)\hat{s}_1 + \lambda(1-\lambda)^{2-1-1}\tilde{r}_1                            \\
& = (1-\lambda)\hat{s}_1 + \lambda\sum_{j=1}^1 (1-\lambda)^{2-1-j}\tilde{r}_j               \\
& = (1-\lambda)^{2-1}\hat{s}_1 + \lambda\sum_{j=1}^{2-1} (1-\lambda)^{2-1-j}\tilde{r}_j     \\
& = (1-\lambda)^{t-1}\hat{s}_1 + \lambda\sum_{j=1}^{t-1} (1-\lambda)^{t-1-j}\tilde{r}_j     \\
\end{split}
\end{align}

\vspace{2mm}
\noindent \textit{Induction step}
\vspace{2mm}

\noindent Assume that 
\begin{equation}
\hat{s}_{t-1} = (1-\lambda)^{t-2}\hat{s}_1  + \lambda\sum_{j=1}^{t-2} (1-\lambda)^{t-2-j}\tilde{r}_j
\end{equation}
holds for some $t = 3,...,T$. In the Base case, we have shown that this is a valid assumption for $t = 3$.
Then, according to \eqnref{eq:rw-update}
\begin{align}
\begin{split}
\hat{s}_{t} 
& = \hat{s}_{t-1} + \lambda(\tilde{r}_{t-1} - \hat{s}_{t-1})                                                                                    \\
& = (1-\lambda)\hat{s}_{t-1} + \lambda\tilde{r}_{t-1}                                                                                           \\
& = (1-\lambda)\left((1-\lambda)^{t-2}\hat{s}_1  + \lambda\sum_{j=1}^{t-2} (1-\lambda)^{t-2-j}\tilde{r}_j \right) + \lambda\tilde{r}_{t-1}      \\
& = (1-\lambda)(1-\lambda)^{t-2}\hat{s}_1  +  (1-\lambda)\lambda\sum_{j=1}^{t-2} (1-\lambda)^{t-2-j}\tilde{r}_j + \lambda\tilde{r}_{t-1}        \\
& = (1-\lambda)^{t-1}\hat{s}_1  +  \lambda\sum_{j=1}^{t-2} (1-\lambda)(1-\lambda)^{t-2-j}\tilde{r}_j + (1-\lambda)^{0}\lambda\tilde{r}_{t-1}    \\
& = (1-\lambda)^{t-1}\hat{s}_1  +  \lambda\sum_{j=1}^{t-2} (1-\lambda)^{t-1-j}\tilde{r}_j + (1-\lambda)^{t-1-(t-1)}\lambda\tilde{r}_{t-1}       \\
& = (1-\lambda)^{t-1}\hat{s}_1  +  \lambda\sum_{j=1}^{t-1} (1-\lambda)^{t-1-j}\tilde{r}_j.                                                       \\
\end{split}
\end{align}
\end{proof}

\begin{figure}[!htbp]
\center
\includegraphics[width=\linewidth]{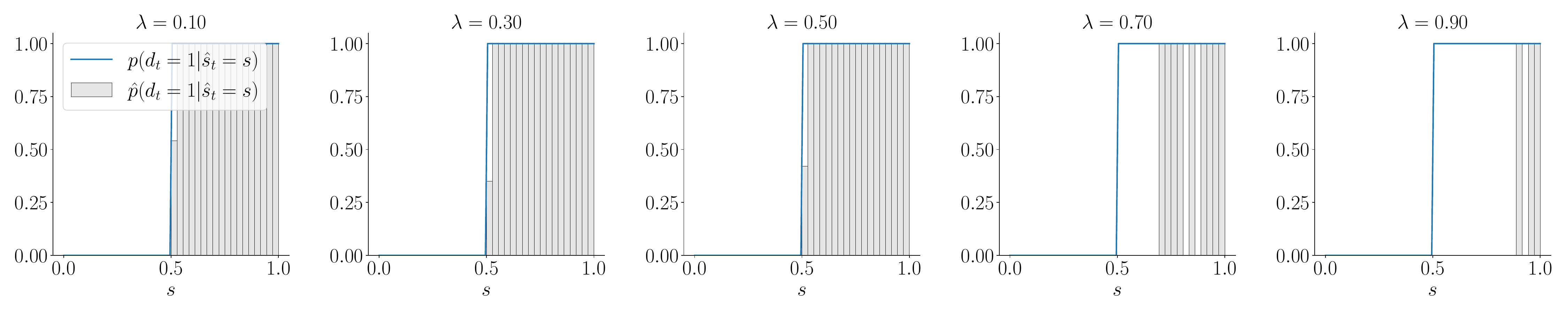}
\caption{Sampling-based validation of \fullref{thm:conditional-decision-distribution-3}. The figure shows the histogram-based estimation of $p(d_t = 1|\hat{s}_t = s)$ based on simulated joint observations of $d_t$ and $\hat{s}_t$ as generated during model validation ($\hat{p}(d_t = 1|\hat{s}_t = s)$,  gray bars) and the analytical result of \fullref{thm:conditional-decision-distribution-3} (blue lines). For varying levels of $\lambda$, a good match is observed. Note that for $\lambda$ values approaching $1$, the observed probability state estimates oscillate between values close to the $\tilde{r}_{t-1}$ values of $0$ or $1$, generating no observations of $d_1 = 1$ and $\hat{s}_t \lesssim 0.9$. For full analytical details, please refer to \textit{abm\_figure\_S9.py}}\label{fig:S9}. 
\end{figure}

\newpage
\begin{theorem}[Conditional log-likelihood function]\label{thm:conditional-log-likelihood-function-3}
Let
\begin{multline}
p(s_{1},d_{1:T},a_{1:T},r_{1:T})
\\ = p(s_1)p(d_1)p(a_1|d_1)p(r_1|s_1,a_1)\prod_{t=2}^Tp(d_t|a_{1:t-1},r_{1:t-1})p(a_t|d_t)p(r_t|s_1,a_t)
\end{multline}
denote the ABM induced by eqs. \eqnref{eq:T-2}, \eqnref{eq:A-p-3}, and $\eqnref{eq:D-2}$. Then
\begin{equation}\label{eq:cll-3-1}
p(a_{1:T}|r_{1:T})
= \prod_{t=1}^T \sum_{d_t}p(d_t|a_{1:t-1},r_{1:t-1})p(a_t|d_t)
\end{equation}
and
\begin{multline}\label{eq:cll-3-2}
\sum_{d_t}p(d_t|a_{1:t-1},r_{1:t-1})p(a_t|d_t)
= 
\left\llbracket \hat{s}_t(\lambda) \le 1-\hat{s}_t(\lambda) \right\rrbracket\mbox{B}(a_t;\tau) + 
\left\llbracket \hat{s}_t(\lambda)  >  1-\hat{s}_t(\lambda) \right\rrbracket\mbox{B}(a_t;1-\tau) .
\end{multline}
\end{theorem}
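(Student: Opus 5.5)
The proof will mirror that of \autoref{thm:conditional-log-likelihood-function-2} almost verbatim. The reason is that the argument there never used the specific form of the conditional decision distribution $p(d_t|a_{1:t-1},r_{1:t-1})$. It used only three facts: the factorization of the joint distribution, the fact that $d_t$ depends on the past only through $(a_{1:t-1},r_{1:t-1})$, and that $\int_{s_1} p(s_1)p(r_t|s_1,a_t)\,ds_1 = \tfrac12$ for all $a_t,r_t$. The ABM induced by \eqnref{eq:T-2}, \eqnref{eq:A-p-3} and \eqnref{eq:D-2} has exactly the same factorization as the Bayesian learner ABM. The only change is that the degenerate decision distribution is now given by \autoref{thm:conditional-decision-distribution-3} instead of \autoref{thm:conditional-decision-distribution-2}.

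For \eqnref{eq:cll-3-1}, I will first write $p(a_{1:T}|r_{1:T}) = p(a_{1:T},r_{1:T})/p(r_{1:T})$. I will expand the numerator by the chain rule into $\prod_t p(a_t,r_t|a_{1:t-1},r_{1:t-1})$ and introduce $d_t$ and $s_1$ by marginalization. Next I will reduce each factor using the conditional independencies encoded in the joint distribution: $r_t$ depends only on $(s_1,a_t)$, $a_t$ only on $d_t$, and $d_t$ only on $(a_{1:t-1},r_{1:t-1})$. Each factor then becomes $\sum_{d_t}p(d_t|a_{1:t-1},r_{1:t-1})p(a_t|d_t)\int_{s_1}p(s_1)p(r_t|s_1,a_t)$. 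The case check from the previous proof shows that the integral equals $\tfrac12$. The denominator follows by summing over $a_{1:T}$. I will carry out that sum with \autoref{thm:exchangeability}, summing from $a_T$ backwards so that each inner sum $\sum_{a_t}\sum_{d_t}p(a_t,d_t|a_{1:t-1},r_{1:t-1})$ equals one. The factors $2^{-T}$ then cancel.

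The one step needing care is the same as in the Bayesian case. Replacing $p(s_1|a_{1:t-1},r_{1:t-1})$ by the prior $p(s_1)$ inside the per-trial factor is not literally a valid conditional-probability identity. I will therefore follow the paper's accepted line of reasoning there. Equivalently, I could integrate $s_1$ out of the full joint product directly. This argument would need to note that $p(r_t|s_1,a_t)$ is affine in $s_1$ and that the per-trial sums over $r_t$ telescope. I expect this to be the main obstacle, and I would simply invoke the derivation of \autoref{thm:conditional-log-likelihood-function-2} as applying unchanged, since no property of $\alpha_t$ was used.

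For \eqnref{eq:cll-3-2}, I will insert $p(d_t|a_{1:t-1},r_{1:t-1}) = \ib{\hat{s}_t(\lambda)\le 1-\hat{s}_t(\lambda)}^{1-d_t}\ib{\hat{s}_t(\lambda)>1-\hat{s}_t(\lambda)}^{d_t}$ from \autoref{thm:conditional-decision-distribution-3} and $p(a_t|d_t)$ from \eqnref{eq:p-at-giv-dt-2}, then evaluate the sum over $d_t\in\{0,1\}$. The $d_t=0$ term gives the first Iverson bracket times $\mbox{B}(a_t;\tau)$, and the $d_t=1$ term gives the second times $\mbox{B}(a_t;1-\tau)$. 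Here $\hat{s}_t(\lambda)$ takes the closed form \eqnref{eq:s_hat_t_lambda} already established in \autoref{thm:conditional-decision-distribution-3}. For $t=1$, the brackets should reproduce $p(d_1=0)=1$; since $\hat{s}_1 = 0.5$, the condition $\hat{s}_1\le 1-\hat{s}_1$ holds, consistent with \eqnref{eq:delta-2}.
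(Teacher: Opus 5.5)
Your handling of \eqnref{eq:cll-3-2} is fine and matches the paper. For \eqnref{eq:cll-3-1} you follow the paper exactly, by deferring to the proof of \autoref{thm:conditional-log-likelihood-function-2}. However, the step you flag there is a genuine gap, and it cannot be closed: \eqnref{eq:cll-3-1} is false as a statement about the conditional probability $p(a_{1:T}|r_{1:T})$.

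Given $a_{1:t-1},r_{1:t-1}$, the distribution of $s_1$ is $\mbox{Beta}(s_1;\alpha_t)$ (\autoref{thm:belief-state-2}), not the uniform prior. The per-trial factor is therefore a posterior predictive, not $\tfrac12$. Doing the marginalization correctly first sums out each $d_t$, which occurs only in $p(d_t|a_{1:t-1},r_{1:t-1})p(a_t|d_t)$, and then integrates out $s_1$. This gives
\[
p(a_{1:T},r_{1:T}) = \left(\prod_{t=1}^T \pi_t\right)\int_0^1 s_1^{k}(1-s_1)^{T-k}\,ds_1 ,
\qquad
\pi_t := \sum_{d_t}p(d_t|a_{1:t-1},r_{1:t-1})p(a_t|d_t),
\quad k := \sum_{t=1}^T \tilde{r}_t .
\]
For fixed $r_t$, $\tilde{r}_t$ flips whenever $a_t$ flips. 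So the Beta-function factor depends on $a_{1:T}$, and it does not cancel against the normalizer $p(r_{1:T})=\sum_{a_{1:T}}p(a_{1:T},r_{1:T})$. That normalizer is itself policy-dependent rather than $2^{-T}$.

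Your telescoping fallback does not apply either. Conditioning on $r_{1:T}$ leaves no sum over $r_t$, and affinity of $p(r_t|s_1,a_t)$ in $s_1$ only yields $\sum_{r_t}p(r_t|s_1,a_t)=1$.

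A concrete counterexample: take $T=2$, $\tau=\tfrac14$, any $\lambda\in\,]0,1]$, $r_{1:2}=(1,1)$ and $a_{1:2}=(0,0)$. Then $\pi_1\pi_2=\tfrac{9}{16}$, whereas $p(a_{1:2},r_{1:2})=\tfrac{3}{16}$ and $p(r_{1:2})=\tfrac{7}{24}$. Hence $p(a_{1:2}|r_{1:2})=\tfrac{9}{14}$.

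What is true, and all the likelihood analysis actually needs, is the factorization displayed above. First, $\prod_t\pi_t=\prod_t p(a_t|a_{1:t-1},r_{1:t-1})$, the sequentially conditioned action probability. Second, $\ln p(a_{1:T},r_{1:T})$ differs from $\ell(\lambda,\tau)$ of \eqnref{eq:llh-3} only by the additive term $\ln\frac{k!\,(T-k)!}{(T+1)!}$. This term is free of $(\lambda,\tau)$ and identical across all agent variants, so maximum-likelihood estimates and BIC differences are unaffected.

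A correct proof should therefore restate the left-hand side of \eqnref{eq:cll-3-1} in one of these forms and establish it by the direct marginalization above. Appealing to \autoref{thm:conditional-log-likelihood-function-2} does not help, because its proof makes the same invalid substitution of $p(s_1)$ for $p(s_1|a_{1:t-1},r_{1:t-1})$.
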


\begin{proof}
Given the identical forms of the ABMs in \fullref{thm:conditional-log-likelihood-function-2} and the current theorem, \eqnref{eq:cll-3-1} follows as in the proof of 
\fullref{thm:conditional-log-likelihood-function-2}. With respect to \eqnref{eq:cll-3-2}, we have with \autoref{thm:conditional-decision-distribution-3} and \eqnref{eq:D-2}
\begin{align}
\begin{split}
\sum_{d_t} & p(d_t|a_{1:t-1}, r_{1:t-1})p(a_t|d_t) \\
& = \sum_{d_t} 
\left\llbracket \hat{s}_t(\lambda) \le 1-\hat{s}_t(\lambda)  \right\rrbracket^{1-d_t}
\left\llbracket \hat{s}_t(\lambda) >   1-\hat{s}_t(\lambda) \right\rrbracket^{d_t} \mbox{B}(a_t;1-\tau)^{d_t}\mbox{B}(a_t;\tau)^{1-d_t}  \\
& = 
\left\llbracket \hat{s}_t(\lambda) \le 1-\hat{s}_t(\lambda) \right\rrbracket\mbox{B}(a_t;\tau) + 
\left\llbracket \hat{s}_t(\lambda) >   1-\hat{s}_t(\lambda) \right\rrbracket\mbox{B}(a_t;1-\tau) 
\end{split}
\end{align}
\end{proof}

\begin{figure}[!htbp]
\centering
\includegraphics[width=\linewidth]{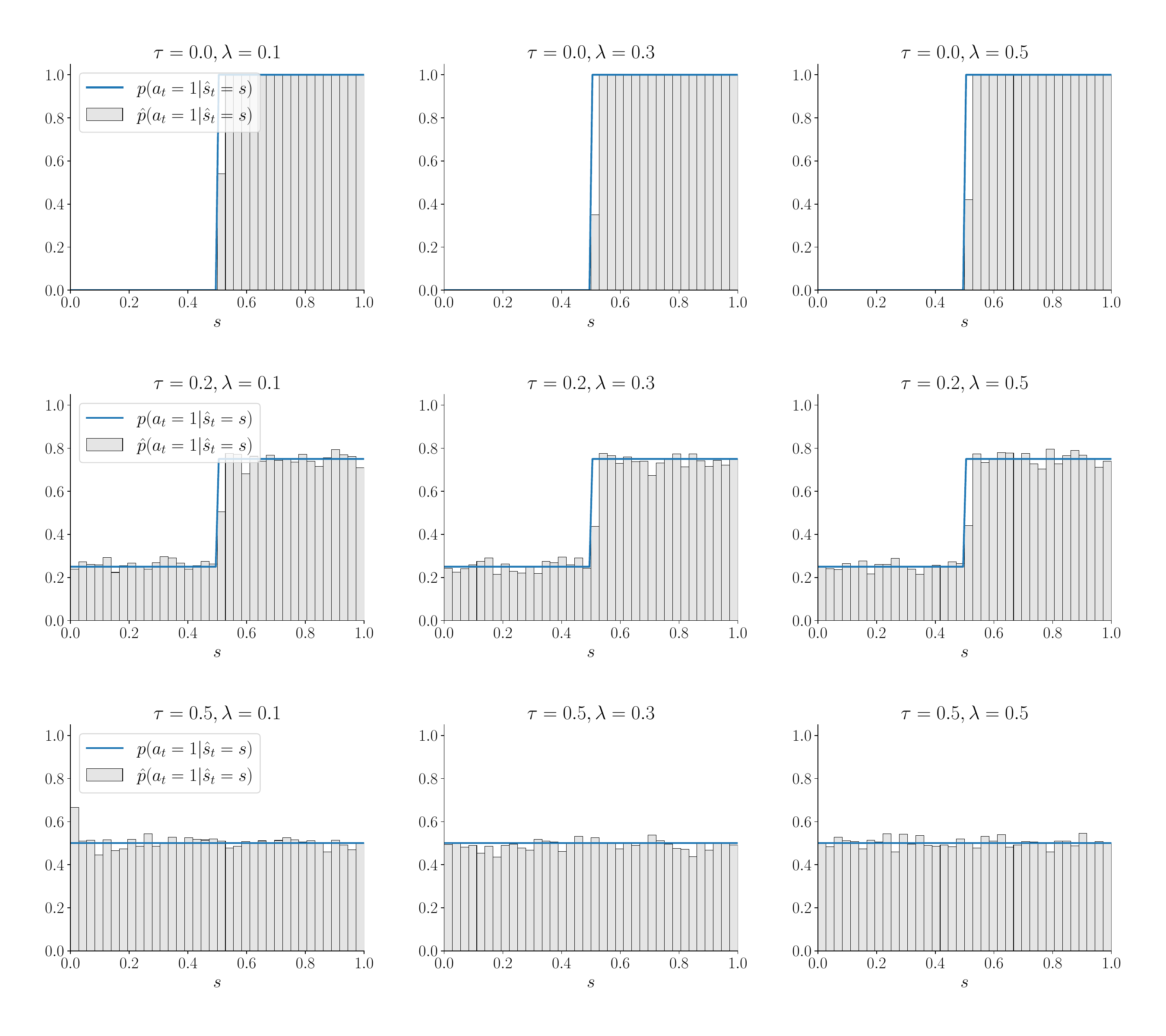}
\caption{Sampling-based validation of \fullref{thm:conditional-log-likelihood-function-3}. The figure shows the histogram-based estimation of $p(a_t = 1|\hat{s}_t = s)$ based on simulated joint observations of $a_t$ and $\hat{s}_t$ generated during model validation ($\hat{p}(a_t = 1|\hat{s}_t = s)$,  gray bars) and the analytical result of \fullref{thm:conditional-log-likelihood-function-3} (blue lines). The learning rate parameter here varies between columns and is chosen on the low side of the spectrum to allow full coverage of the probability state space $S$. The post-decision noise parameter $\tau$ varies between rows. For full analytical details, please refer to \textit{abm\_figure\_S10.py}.}\label{fig:S10} 
\end{figure}

\clearpage
\vspace*{\fill}
\begin{center}
\includegraphics[width=\linewidth]{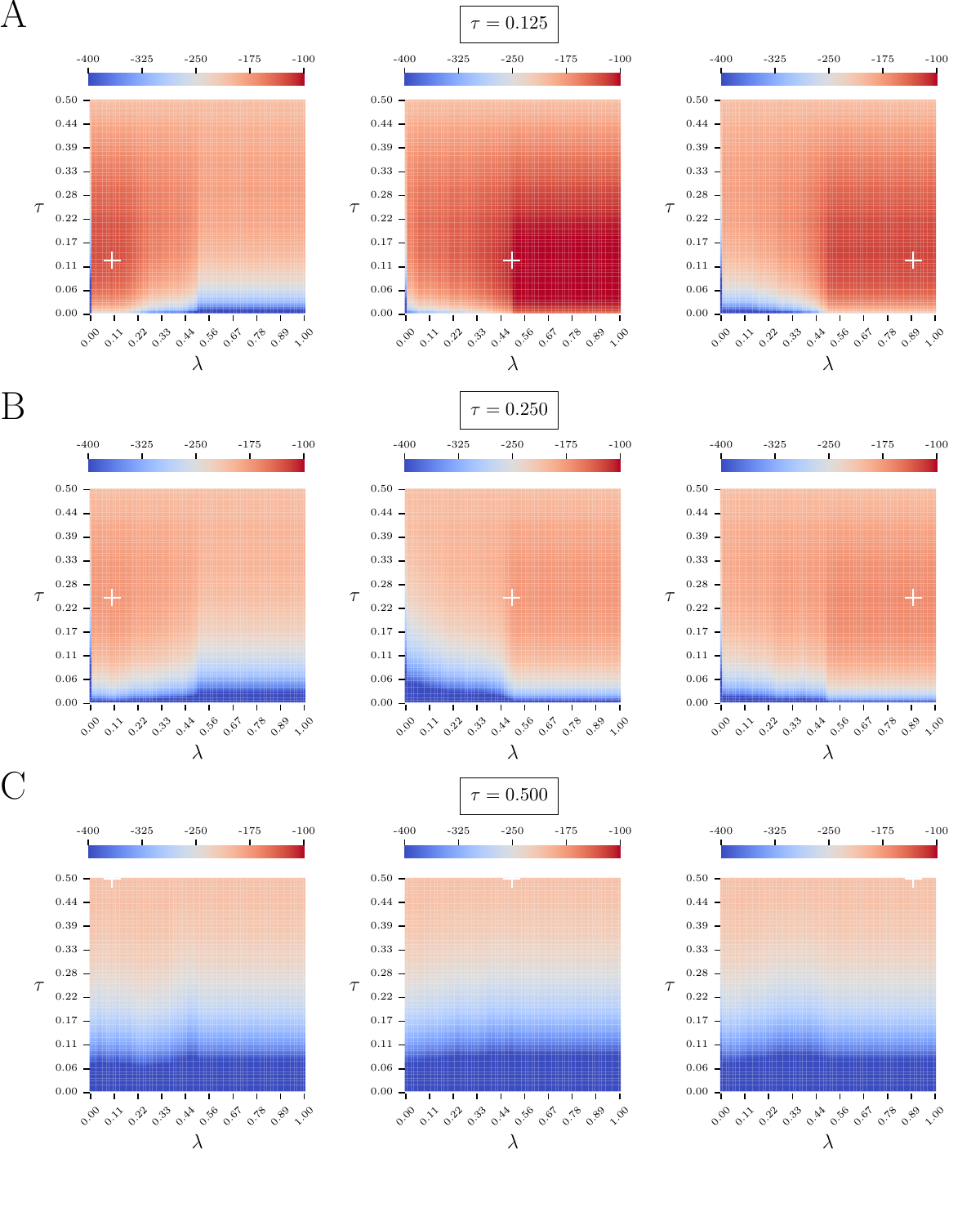}
\end{center}
\clearpage
    
\clearpage
\noindent\rule{\linewidth}{0.4pt}  
\captionof{figure}{\textit{(Figure on previous page.)} Exemplary conditional log-likelihood functions of the form of  \eqnref{eq:llh-3} and based on \fullref{thm:conditional-log-likelihood-function-3}. These conditional log-likelihood functions are based on the joint observations of actions and rewards for a single simulated participant dataset (10 blocks with 30 trials each) as these functions form the basis for parameter estimation and marginal likelihood approximation \autoref{sec:parameter-estimation-and-model-comparison}. The true, but unknown, parameter  post-decision noise parameter value for each simulated dataset are given above each subpanel, the true, but unknown, parameter combination is indicated by white crosses on the respective log-likelihood surfaces. The color coding of the log-likelihood values is identical across all plots. Panel A depicts a low post-decision noise scenario with $\tau = 0.1250$. Here, the conditional log-likelihood functions exhibit adequate curvature for the numerical maximization for a low true, but unknown, learning rate parameter of $\lambda = 0.1$ and a high true, but unknown, learning rate parameter of $\lambda = 0.9$.  However, varying the true, but unknown, learning rate parameter from $\lambda = 0.5$ to $\lambda = 0.9$ has hardly any effect on the resulting conditional log-likelihood function. This means that the observable data under this model are non-informative with respect to learning rates between 0.5 and 1.0 as these do not alter the selected actions in a systematic manner. This is mirrored in the observed non-identifiability of the learning rate parameter in this domain subset (cf. \fullref{fig:8}D, left subpanel). For the higher post-decision noise scenario in Panel B ($\tau = 0.35$), the conditional log-likelihood function curvature is reduced. Finally, for the high post-decision noise scenario in Panel C ($\tau = 0.5$), curvature with respect to $\lambda$ disappears, essentially removing indicative information about the learning rate parameter from the conditional profile log-likelihood function and hampering the successful numerical identification of the true, but unknown, learning rate parameter as observed in the constant parameter recovery results for this post-decision noise level (cf. \fullref{fig:8}D, left subpanel). For full analytical details, please refer to \textit{abm\_figure\_S11.py}.}\label{fig:S11}

\clearpage
\section{Protected exceedance probabilities}\label{sec:protected-exceedance-probabilities}
\subsection{Group-level model inference}

The Bayesian model selection approach for group studies from which protected exceedance probabilities (PEPs) are derived was originally proposed by \citet{stephan2009} and refined by \citet{rigoux2014}. We refer to this framework as \textit{group-level model inference} (GMI), and in what follows, we present a unified documentation of the framework along with its derivation. To this end, we proceed as follows: first, we formulate the GMI probabilistic model; then, we discuss how PEPs are defined under the GMI framework; and finally, we address how PEPs are estimated based on a group of behavioral datasets using a free-form variational inference approach. We briefly review the theoretical background of this free-form variational inference approach in \fullref{sec:free-form-mean-field-variational-inference}. Crucially, for group-level model inference, the GMI framework only requires log model evidences from individual participant datasets. We discuss our chosen approach to estimating log model evidences in terms  of the Bayesian information criterion (BIC) in \fullref{sec:bic}. Throughout, we defer proofs to \fullref{sec:proofs}.  
\vspace{2mm}

In terms of notation, we denote probabilities by $\mathbb{P}$ and probability mass/density functions (PMFs/PDFs) by standard lowercase letters such as $p$ or $q$. We denote the PDF of the Dirichlet distribution of a  $m$-dimensional random vector $\pi$ taking on values in the $m-1$-dimensional probability simplex and with parameter vector $\alpha \in \mathbb{R}^{m}_{>0}$  by $\mbox{Dir}(\pi;\alpha)$, and we denote the PMF of a random $m$-dimensional vector $\mu$ taking on values in the set of $m$-dimensional canonical unit vectors $E_m := \{e_1,...,e_m\}$ and with parameter vector  $\pi$ by $\mbox{Mult}(\mu;\pi)$ (cf. \autoref{tab:probability-distributions}). 

\subsubsection*{Model formulation}\label{sec:model-formulation}

To formalize the GMI model, let $i = 1,...,n$ index participants and let
$j = 1,...,m$ index behavioral models of the model space, such that 
$n$ denotes the total number of participants and $m$ denotes the total 
number of models in the model space. The GMI model is composed of the following
random entities:
\begin{itemize}

\item A \textit{hypothesis} $\eta$ taking on values in $\{0,1\}$, for which $\eta = 0$ 
encodes the \textit{null hypothesis} of all equal model probabilities, and
$\eta = 1$ encodes the \textit{alternative hypothesis} of not all equal model probabilities. 
From an epistemic viewpoint, it is assumed that there exists a true but unknown hypothesis 
value in the world, and the probabilities assigned to $\eta$ encode 
our uncertainty about this value.

\item A vector of \textit{model probabilities} $\pi$ taking on values in $[0,1]^{m}$ with $\sum_{j=1}^m \pi_j = 1$. 
$\pi = (\pi_1,...,\pi_m)^T$ is thus an $m$-dimensional vector with components 
$\pi_j \in [0,1]$ for $j = 1,...,n$ summing to one, such as $\pi = (0.2,0.1,0.3,0.4)^T$ 
for $m = 4$. The $j$th entry $\pi_j$ encodes the probability of model $j$. Again, from an epistemic 
viewpoint, it is assumed that there exist true but unknown model probability 
values in the world, and the probabilities assigned to $\pi$ encode our 
uncertainty about these values. As above, the probabilities assigned to the $\pi_i$s
encode our uncertainty about their values. 

\item A set of \textit{unobserved participant model indicators} $\mu_i$ taking on values in $\{0,1\}^{m}$ with $\sum_{j=1}^m \mu_{i_j} = 1$ for $i = 1,...,n$. $\mu_i$ is thus an $m$-dimensional canonical unit vector (sometimes 
referred to as a \textit{one-hot vector} or a \textit{1-of-K vector}) comprising all $0$s except
for a single $1$, the index of which indicates the behavioral model index that 
generates the behavioral model of participant $i$. We 
denote the set of $m$-dimensional canonical unit vectors, i.e., the outcome set 
of the $\mu_i$'s by $E_m := \{e_1,...,e_m\}$. For example, for $m = 4$,
$\mu_{16} = e_3 = (0,0,1,0)^T$ encodes that the behavioral data of participant 
$i = 16$ is generated by model $j = 3$. We denote the set of all unobserved participant 
model indicators by $\mu_{1:n} := \{\mu_1,...,\mu_n\}$.

\item The \textit{observed participant behavioral datasets} $y_i$ for $i = 1,...,n$. No particular
assumptions are made about the form of the $y_i$ and their outcome sets. As an example, $y_i$ could
be the vector of all actions taken by a participant in a sequential learning task or a neuroimaging data time-series. The probabilities  assigned to the $y_i$ encode our uncertainty about the model-based 
data generative process, which is sometimes referred to as \textit{observation error}.
We denote the set of all observed participant behavioral datasets by $y_{1:n} := \{y_1,...,y_n\}$.
\end{itemize}
\noindent The GMI model is given by the joint distribution of $\eta,\pi,\mu_{1:n}$ and $y_{1:n}$, 
which is assumed to factorize as follows:
\begin{equation}\label{eq:gmi}
p(\eta,\pi,\mu_{1:n},y_{1:n}) = p(\eta)p(\pi|\eta)\prod_{i=1}^{n}p(\mu_i|\pi)p(y_i|\mu_i).
\end{equation}
We unpack the components of \eqnref{eq:gmi} below

\begin{itemize}
\item The marginal PMF of the hypothesis is given by 
\begin{equation}\label{eq:p-eta}
p(\eta) := \frac{1}{2} \mbox{ for } \eta = 0 \mbox{ and } \eta = 1.
\end{equation}
From a Bayesian perspective, \eqnref{eq:p-eta} encodes our prior probability that
the null hypothesis of all equal model probabilities holds. This probability
(and its complementary probability, the probability that the alternative hypothesis 
of not all equal model probabilities holds) is assumed to take on the value 
$1/2$, encoding maximal prior uncertainty regarding the hypothesis state.

\item The conditional PMF/PDF of the model probabilities is given by 
\begin{equation}\label{eq:p-eta-0}
p(\pi|\eta = 0)
:= 
\begin{cases}
1 & \mbox{ if } \pi_j =    \frac{1}{m} \mbox{ for all } j \in \mathbb{N}_m \\
0 & \mbox{ if } \pi_j \neq \frac{1}{m} \mbox{ for all } j \in \mathbb{N}_m
\end{cases},
\end{equation}
and
\begin{equation}\label{eq:p-eta-1}
p(\pi|\eta = 1) 
:= \mbox{Dir}(\pi; \alpha)
:= \frac{\Gamma\left(\sum_{j = 1}^{m} \alpha_j \right)}{\prod_{j = 1}^{m} \Gamma(\alpha_j)}\prod_{j=1}^{m} \pi_j^{\alpha_j - 1}
\mbox{ for }
\alpha \in \mathbb{R}^m_{>0}.
\end{equation}
For $\eta = 0$, the conditional PMF is of Dirac delta form and assigns
a probability mass of 1 to all equal model probabilities $\pi_j := 1/m$ for $j = 1,...,m$. Note that in this scenario $p(\pi|\eta = 0)$ is infinitely precise and encodes 
the absence of uncertainty about the model probabilities $\pi$. For $\eta = 1$, 
the conditional PDF is of Dirichlet form and encodes our uncertainty about the 
behavioral model probabilities $\pi$ as a function of the prior parameter vector $\alpha$. For the common choice of $\alpha := 1_m$, this uncertainty is maximal in the 
sense of a continuous uniform distribution on the $m-1$ dimensional probability simplex.

\item The conditional PMF of the unobserved participant model indicators is given by the Multinoulli distribution
\begin{equation}\label{eq:p-mu-i-pi}
p(\mu_i|\pi) := \mbox{Mult}(\mu_i;\pi) := \prod_{j=1}^{m} \pi_j^{\mu_{i_j}}.
\end{equation}
$p(\mu_i|\pi)$ thus assigns the probability $\pi_j$ to the $j$th canonical unit vector.
For example, for $m = 4$ and $\mu_i = (0,1,0,0)^T$, we have
\begin{equation}\label{eq:p-mu-i-pi-example}
p\left((0,1,0,0)^T|\pi\right) = \pi_1^0 \cdot \pi_2^1 \cdot \pi_3^0 \cdot \pi_4^0 = 1 \cdot \pi_2 \cdot 1 \cdot 1 = \pi_2. 
\end{equation}

\item Like the precise form of the $y_i, i = 1,...,n$, the conditional PMF/PDF of 
the observed participant data $y_i$ is left unspecified but is assumed to exist.
\end{itemize}

\noindent The factorization properties of the joint distribution of $\eta,\pi.\mu_{1:n}, y_{1:n}$
given by \eqnref{eq:gmi} encode two important conditional independence properties. First, 
given the model probabilities $\pi$, the unobserved model indicators $\mu_1,...,\mu_n$
are identically and independently distributed. The model indicator value of the 
$i$th participant is thus assumed to have no influence on the model indicator of
the $j$th participant. Likewise, given the model indicator $\mu_i$, the observed
behavioral datasets $y_1,...,y_n$ are identically and independently distributed,
again encoding the absence of influences between participants.

\subsubsection*{Protected exceedance probabilities}

The core quantity of interest in the GMI framework is the conditional probability
of the $j$th model probability $\pi_j$ being the largest among all $m$ model probabilities given the observed behavioral data $y_1,...,y_n$. This conditional 
probability has been dubbed the \textit{protected exceedance probability of model $j$} and is written here as
\begin{equation}\label{eq:varphi-j}
\varphi_j := \mathbb{P}\left(\pi_j = \max\{\pi_1,...,\pi_m\}|y_{1:n}\right).
\end{equation}
Note that, naturally,
\begin{equation}
\sum_{j=1}^m \varphi_j = \sum_{j=1}^m \mathbb{P}\left(\pi_j = \max\{\pi_1,...,\pi_m\}|y_{1:n}\right) = 1,
\end{equation}
i.e., protected exceedance probabilities sum to 1 over all models in the model space.
In \fullref{sec:proofs}, we show that the $j$th protected exceedance probability can be written as
\begin{align}\label{eq:varphi-j-ext}
\begin{split}
\varphi_j
& = \mathbb{P}(\pi_j = \max\{\pi_j\}_{j=1}^m|\eta = 0, y_{1:n})\mathbb{P}(\eta = 0|y_{1:n}) \\
& + \mathbb{P}(\pi_j = \max\{\pi_j\}_{j=1}^m|\eta = 1, y_{1:n})(1-\mathbb{P}(\eta = 0|y_{1:n})).
\end{split}
\end{align}
The protected exceedance probability of model $j$ can thus be expressed in terms of first, the exceedance probability of model $j$ conditional on the null hypothesis and the observed data, second, the exceedance probability of model $j$ conditional on the alternative hypothesis and the observed data, and third, the probability of the null hypothesis given the observed  data. As shown in \fullref{sec:proofs}, the latter quantity can be rewritten as
\begin{equation}\label{eq:p-eta-giv-y}
\mathbb{P}(\eta = 0|y_{1:n})
= \frac{1}{1 + \exp\left(\ln \mathbb{P}(y_{1:n}|\eta = 1)-\ln \mathbb{P}(y_{1:n}|\eta = 0)\right)}.
\end{equation}
Together, \eqnref{eq:varphi-j-ext} and \eqnref{eq:p-eta-giv-y} set the stage for inference in the GMI model. First, to evaluate the hypothesis- and observed data-conditional  probabilities that $\pi_j$ is the largest among all model probabilities (\eqnref{eq:varphi-j-ext}), it is necessary to evaluate the conditional PMFs/PDFs $p(\pi|\eta = 0,y_{1:n})$ and $p(\pi|\eta = 1,y_{1:n})$ of $\pi$ given the hypothesis $\eta$ and the observed data $y_{1:n}$. Second, to evaluate the probability of the null hypothesis given the observed data, it is further necessary to evaluate the log marginal likelihoods of the observed data given the hypothesis $\ln p(y_{1:n}|\eta = 0)$ 
and $\ln p(y_{1:n}|\eta = 1)$ (\eqnref{eq:p-eta-giv-y}). Together, these two 
tasks correspond to the fundamental aims of probabilistic inference as reviewed in \fullref{sec:free-form-mean-field-variational-inference}. 

\subsubsection*{Inference}\label{sec:inference}

To infer the necessary quantities for the evaluation of the $j$th model's
protected exceedance probability given by the right-hand side of \eqnref{eq:varphi-j-ext},
\citet{rigoux2014}, we use a two-step inference approach. In the first step, the required
conditional distribution of $\pi$, the resulting exceedance probability of model
$j$, as well as the log marginal likelihood of the observed data $y$, are 
evaluated \textit{conditional on the null hypothesis $\eta = 0$}. In the second step, 
the equivalent quantities are then evaluated \textit{conditional on the alternative 
hypothesis $\eta = 1$}. Each step employs a free-form mean-field variational 
inference approach to the respective hypothesis-conditional probabilistic model. 
To simplify the documentation of these steps, we define the \textit{null hypothesis-conditional GMI model}
\begin{equation}\label{eq:GMI-0}
p_0(\pi,\mu_{1:n},y_{1:n}) 
:= p(\pi,\mu_{1:n},y_{1:n}|\eta = 0)
= p_0(\pi)\prod_{i=1}^n p(\mu_i|\pi)p(y_i|\mu_i), 
\end{equation}
where
\begin{equation}
p_0(\pi) := p(\pi|\eta = 0),
\end{equation}
and the 
\textit{alternative hypothesis-conditional GMI model}
\begin{equation}\label{eq:GMI-1}
p_1(\pi,\mu_{1:n},y_{1:n}) := p(\pi,\mu_{1:n},y_{1:n}|\eta = 1)
= p_1(\pi)\prod_{i=1}^n p(\mu_i|\pi)p(y_i|\mu_i),
\end{equation}
where
\begin{equation}
p_1(\pi) := p(\pi|\eta = 1).
\end{equation}
The introduction of the null hypothesis-conditional GMI model is the core contribution of \citet{rigoux2014}, while the inference approach for the alternative hypothesis-conditional GMI was already established by \citet{stephan2009}. In the following two subsections, we discuss the inference approaches for \eqnref{eq:GMI-0} and \eqnref{eq:GMI-1} in turn.

\subsubsection*{Inference conditional on $\eta = 0$}

\citet{rigoux2014} proposed a free-form variational inference algorithm for the null
hypothesis-conditional GMI model
\begin{equation}\label{eq:p0-inf-1}
p_0(\pi,\mu_{1:n},y_{1:n}) = p_0(\pi)\prod_{i=1}^n p(\mu_i|\pi)p(y_i|\mu_i)  
\end{equation}
based on the mean-field approximation
\begin{equation}\label{eq:p0-inf-2}
q_0(\pi,\mu_{1:n}) := q_0(\pi)\prod_{i=1}^n q_0(\mu_i)
\end{equation}
such that, upon the convergence of the algorithm,
\begin{equation}
p_0(\pi|y_{1:n}) \approx  q_0(\pi), 
\quad
p_0(\mu_{1:n}|y_{1:n}) \approx \prod_{i=1}^n q_0(\mu_i) 
\end{equation}
and
\begin{equation}\label{eq:p0-inf-32}
\ln p_0(y_{1:n}) \approx \mbox{ELBO}\left(q_0(\pi,\mu_{1:n})\right).
\end{equation}
Two things are noteworthy here.

\begin{itemize}
\item \citet{rigoux2014} state that
\begin{equation}
p_0(\pi|y_{1:n}) = p(\pi|\eta = 0)
\end{equation}
and evaluate the entailing model exceedance probability as
\begin{equation}
\mathbb{P}(\pi_j = \max\{\pi_j\}_{j=1}^m|\eta = 0, y_{1:n}) = \frac{1}{m}.
\end{equation}
While this is very intuitive, we prove it in \autoref{sec:proofs}.
\item As no variational distribution on $\pi$ is required, an iterative variational
inference algorithm, which iterates between updates of the variational distributions on 
$\mu_{1:n}$ and $\pi$, is not necessary. In fact, \citet{rigoux2014} determines the variational
distributions $q(\mu_i)$ in a single step. Similarly, the ELBO is a function of the $q(\mu_i)$ only. 
\end{itemize}

\paragraph{Variational distribution updates for $q(\mu_i)$}

As shown in \fullref{sec:proofs}, for algorithm iteration $k = 0$, the variational 
distribution updates for $\mu_i$ with $i = 1,...,n$ take the form

\begin{equation}
q^{(k+1)}_0(\mu_i) = \mbox{Mult}\left(\mu_i; \pi^{(k+1)}_i\right),
\end{equation}
where
\begin{equation}\label{eq:q0-mu-i}
\pi_i^{(k+1)} := \left(\pi_{ij}^{(k+1)}\right)_{j = 1,...,m}
\mbox{ with }
\pi_{ij}^{(k+1)} := 
\frac{\exp\left(\ln p(y_i|\mu_i = e_j) \right)}
        {\sum_{\mu_i\in E_m}\exp\left(\ln p(y_i|\mu_i) \right)}.
\end{equation}
The $(k+1)$th variational distribution of $\mu_i$ is thus given by a Multinoulli
distribution with a parameter vector depending on the participant-specific log
model evidence conditional on the $j$th model, $\ln p(y_i|\mu_i = e_j)$. Notably, it is this
latter quantity that provides the link between participant- and model-specific maximum-likelihood 
estimation and the evaluation of PEPs at the group-level because participant-specific
log model evidences can be approximated using BIC scores (cf. \autoref{sec:bic}). 

\paragraph{Log marginal likelihood}
Further, as shown in \fullref{sec:proofs}, the ELBO approximation of the log marginal likelihood
in the current scenario evaluates to 
\begin{equation}\label{eq:elbo-p0}
\mbox{ELBO}\left(q_0^{(c)}(\mu_{1:n})\right)
= \sum_{i=1}^n \sum_{j=1}^m \pi_{ij}^{(c)} \left(\ln p(y_i|\mu_i = e_j)- \ln m- \ln \pi_{ij}^{(c)}\right) 
\end{equation}
where $c := k = 1$. We thus have
\begin{equation}\label{eq:p-0-y}
\ln p_0(y_{1:n}) \approx 
\mbox{ELBO}\left(q_0^{(c)}(\mu_{1:n})\right).
\end{equation}

\subsubsection*{Inference conditional on $\eta = 1$}

\citet{stephan2009} proposed a free-form variational inference algorithm for the alternative
hypothesis-conditional GMI
\begin{equation}\label{eq:p1-inf-1}
p_1(\pi,\mu_{1:n},y_{1:n}) = p_1(\pi)\prod_{i=1}^n p(\mu_i|\pi)p(y_i|\mu_i)  
\end{equation}
based on the mean-field approximation
\begin{equation}\label{eq:p1-inf-2}
q_1(\pi,\mu_{1:n}) := q_1(\pi)\prod_{i=1}^n q_1(\mu_i)
\end{equation}
such that, upon the convergence of the algorithm,
\begin{equation}\label{eq:p1-inf-3}
p_1(\pi|y_{1:n}) \approx  q_1(\pi), 
\quad
p_1(\mu_{1:n}|y_{1:n}) \approx \prod_{i=1}^n q_1(\mu_i), 
\end{equation}
and
\begin{equation}\label{eq:p1-inf-4}
\ln p_1(y_{1:n}) \approx \mbox{ELBO}\left(q_1(\pi,\mu_{1:n})\right).
\end{equation}
In the following, we first state the entailing variational distribution update
equations that constitute a coordinate-wise ascent variational inference (CAVI)
algorithm (cf. \fullref{sec:free-form-mean-field-variational-inference}. We then consider 
the Monte Carlo approach employed to estimate the model exceedance probabilities 
given the alternative hypothesis and finally document the approximation of 
the log marginal likelihood of the observed data given the alternative hypothesis 
furnished by the algorithm's ELBO. 

\paragraph{Variational distribution updates for $q(\mu_i)$}

As shown in \fullref{sec:proofs}, for algorithm iterations $k = 0,1,2,...$, the 
variational distribution updates for $\mu_i$ with $i = 1,...,n$ take the form
\begin{equation}\label{eq:q1-mu-i}
q^{(k+1)}_1(\mu_i) = \mbox{Mult}\left(\mu_i;\pi^{(k+1)}_i\right),
\mbox{where }
\pi_i^{(k+1)} := \left(\pi_{ij}^{(k+1)}\right)_{j = 1,...,m}
\end{equation}
with
\begin{equation}
\pi_{ij}^{(k+1)} := 
\frac{\exp\left(\ln p(y_i|\mu_i = e_j) + \psi\left(\alpha_j^{(k)}\right) - \psi\left(\alpha_s^{(k)}\right)\right)}
        {\sum_{\mu_i\in E_m}\exp\left(\ln p(y_i|\mu_i) + \psi\left(\alpha_j^{(k)}\right) - \psi\left(\alpha_s^{(k)}\right)\right)},
\end{equation}
where $\psi$ denotes the digamma function and $\alpha_s^{(k)} := \sum_{j=1}^m \alpha_j^{(k)}$.
The $(k+1)$th variational distribution of $\mu_i$ is thus given by a Multinoulli
distribution with a parameter vector depending on the participant-specific log
model evidence conditional on the $j$th model, $\ln p(y_i|\mu_i = e_j)$ (again providing
the link between  participant- and model-specific maximum-likelihood 
estimation and the evaluation of PEPs at the group-level), and
the parameters of the variational distribution of $\pi$ on the $k$th iteration of the
algorithm. 

\paragraph{Variational distribution updates for $q(\pi)$}
As shown in \fullref{sec:proofs}, for algorithm iterations $k = 0,1,2,...$, the variational 
distribution updates for $\pi$ take the form
\begin{equation}\label{eq:q1-pi}
q^{(k+1)}_1(\pi) := \mbox{Dir}\left(\pi, \alpha^{(k+1)}\right)
\mbox{, where }
\alpha^{(k+1)}_j := \alpha_j + \beta_j^{(k)} 
\end{equation}
with
\begin{equation} 
\beta_j^{(k)} := \sum_{i=1}^n \pi_{ij}^{(k)}
\mbox{ for } i = 1,...,n, j = 1,...,m. 
\end{equation}
The $(k+1)$th variational distribution of $\pi$ is thus a Dirichlet distribution 
with a parameter vector depending on the prior distribution parameters
$\alpha_j$ (cf. \eqnref{eq:p-eta-1}) and the parameters of the variational distribution of 
$\mu$ on the $k$th iteration of the algorithm $\pi_{ij}^{(k)}$ (cf. \eqnref{eq:q1-mu-i}).

\vspace{2mm}
\noindent The variational updates of \eqnref{eq:q1-mu-i} and \eqnref{eq:q1-pi} are iterated alternately 
until the algorithm converges. We denote the iteration of convergence by $c$,
such that upon the convergence of the algorithm, we have
\begin{equation}
p_1(\pi|y_{1:n}) \approx  q^{(c)}_1(\pi) = \mbox{Dir}\left(\pi, \alpha^{(c)}\right)
\end{equation}
and
\begin{equation}
p_1(\mu_{1:n}|y_{1:n}) \approx \prod_{i=1}^n q_1^{(c)}(\mu_i) = \prod_{i=1}^n \mbox{Mult}\left(\mu_i;\pi^{(c)}_i\right).
\end{equation}

\subsubsection*{Exceedance probability}

To evaluate the exceedance probability of model $j$ conditional on $\eta = 1$,
\citet{stephan2009} and \citet{rigoux2014} use a Monte Carlo integration approach. To illustrate
this, let
\begin{equation}
\pi^{(1)}, ...,\pi^{(s)} \sim \mbox{Dir}\left(\pi;\alpha^{(c)}\right)
\end{equation}
denote a set of $s$ independent samples from the converged variational distribution
on $\pi$, and for $j = 1,...,m$ and $k = 1,...,s$, let
\begin{equation}
\mathbb{I}_j : [0,1]^m \to \{0,1\},
\pi^{(k)} \mapsto \mathbb{I}_j\left(\pi^{(k)}\right)
:= 
\begin{cases}
1 & \mbox{ if } \pi_j^{(k)} =    \max\lbrace \pi_{1}^{(k)},...,\pi_{m}^{(k)} \rbrace \\
0 & \mbox{ if } \pi_j^{(k)} \neq \max\lbrace \pi_{1}^{(k)},...,\pi_{m}^{(k)} \rbrace \\
\end{cases}
\end{equation}
denote the indicator function of the event that $\pi_j$ is the largest of 
the model probabilities $\pi_1,...,\pi_m$. Then \citet{stephan2009} and \citet{rigoux2014} use
\begin{equation}
\hat{\mathbb{P}}(\pi_j = \max\{\pi_j\}_{j=1}^m|\eta = 1, y_{1:n}) := \frac{1}{s} \sum_{k=1}^s \mathbb{I}_j\left(\pi^{(k)}\right)
\end{equation}
as an estimate of the exceedance probability of the $j$th model conditional on $\eta = 1$.
In our implementation of the approach, we set $s := 10^6$ and use the `scipy.stats`
module to sample from the Dirichlet distribution with parameter $\alpha^{(c)}$.

\subsubsection*{Log marginal likelihood}
Finally, as shown in \fullref{sec:proofs}, the ELBO approximation of the log marginal likelihood
in the current scenario evaluates to 
\begin{align}\label{eq:elbo-p1}
\begin{split}
& \mbox{ELBO}\left(q_1^{(c)}(\pi,\mu_{1:n})\right) = \\  
&  \ln \Gamma\left(\sum_{j = 1}^{m} \alpha_j \right)
- \sum_{j=1}^m \ln  \Gamma(\alpha_j)
+ \sum_{j=1}^m (\alpha_j-1) \left(\psi\left(\alpha_j^{(c)}\right) - \psi\left(\alpha_s^{(c)}\right)\right)
\\
&
\quad + \sum_{i=1}^n \sum_{j=1}^m \pi_{ij}^{(c)} \left(\psi\left(\alpha_j^{(c)}\right) - \psi\left(\alpha_s^{(c)}\right)\right)
\\
&
\quad + \sum_{i=1}^n \sum_{j}^m \pi_{ij}^{(c)} \ln p(y_i|\mu_i = e_j)
\\
&
\quad - \ln \Gamma\left(\sum_{j = 1}^{m} \alpha_j^{(c)} \right)
      + \sum_{j=1}^m \ln \Gamma\left(\alpha_j^{(c)}\right)
      - \sum_{j=1}^m \left(\alpha_j^{(c)}-1\right) \left(\psi\left(\alpha_j^{(c)}\right) - \psi\left(\alpha_s^{(c)}\right)\right)
\\
&
\quad - \sum_{i=1}^n \sum_{j=1}^m \pi_{ij}^{(c)} \ln \pi_{ij}^{(c)}, 
\end{split}
\end{align}
where $\Gamma$ denotes the Gamma function. Note that the $\alpha_j$'s refer
to the parameters of the prior distribution of $\pi$ (cf. \eqnref{eq:p-eta-1}, while the
$\alpha_j^{(c)}$s refer to the parameters of the converged variational distribution
of $\pi$. Upon convergence of the algorithm, we thus have
\begin{equation}
\ln p_1(y_{1:n}) \approx 
\mbox{ELBO}\left(q_1^{(c)}(\pi,\mu_{1:n})\right).
\end{equation}

\subsubsection*{Summary}

In summary, the protected exceedance probability $\varphi_j$ of model $j$ can be 
evaluated based on \eqnref{eq:varphi-j} and \eqnref{eq:varphi-j-ext} by approximating 
the participant-specific log model evidence conditional on the $j$th model using the corresponding BIC score  
\begin{equation}\label{eq:lme-bic}
\ln p(y_i|\mu_i = e_j) := \mbox{BIC}_{ij},
\end{equation}
and setting
\begin{equation} 
\mathbb{P}(\pi_j = \max\{\pi_j\}_{j=1}^m|\eta = 0, y_{1:n}) := \frac{1}{m},
\end{equation}
\begin{equation} 
\mathbb{P}(\pi_j = \max\{\pi_j\}_{j=1}^m|\eta = 1, y_{1:n})
:= 
\hat{\mathbb{P}}(\pi_j = \max\{\pi_j\}_{j=1}^m|\eta = 1, y_{1:n}),
\end{equation}
\begin{equation} 
\mathbb{P}(y_{1:n}|\eta = 0) := \mbox{ELBO}\left(q_1^{(c)}(\mu_{1:n})\right) 
\end{equation}
and 
\begin{equation}
\mathbb{P}(y_{1:n}|\eta = 1) := \mbox{ELBO}\left(q_1^{(c)}(\pi,\mu_{1:n})\right).
\end{equation}

\clearpage

\subsection{Free-form mean-field variational inference}\label{sec:free-form-mean-field-variational-inference}
In this Appendix, we review the theoretical background of free-form mean-field 
variational inference in condensed form, with proofs deferred to \fullref{sec:proofs}. For 
extended introductions to variational inference, see, for example, \citet{fox2012}, 
\citet{ostwald2014}, and \citet{blei2017}. We commence by introducing the fundamental problem of probabilistic inference.

\begin{definition}[Fundamental problem of probabilistic inference]
Let $y := (y_1,...,y_n)$ denote a set of observable random 
entities modeling observed data, and let $z := (z_1,...,z_m)$ denote a set of latent random 
entities. Assume that $y$ and $z$ form a \textit{probabilistic model} with PDF/PMF
\begin{equation}
p(y,z) = p(y|z)p(z).
\end{equation}
Then, the \textit{fundamental problem of probabilistic inference} is to approximate (1) the conditional distribution $p(z|y)$, and 
(2) the \textit{log model evidence} $\ln p(y) = \ln \int p(y,z)\,dz$.

\end{definition}

In Bayesian parameter estimation problems, the conditional distribution $p(z|y)$
is often referred to as the \textit{posterior distribution}, and the log model evidence
is also referred to as the \textit{log marginal likelihood}. Note that for any fixed 
dataset, the log model evidence is a fixed scalar quantity. In our current application
to the hypothesis-conditional GMIs, the set of observable random entities $y$ is given by
the set of observed participant behavioral datasets $y_1,...,y_n$, while the set of 
latent random entities $z$ is given by the union of the model probabilities $\pi$ 
and the unobserved participant model indicators $\mu_1,...,\mu_n$. The following
theorem is central to the variational inference approach.

\begin{theorem}[Log model evidence decomposition]\label{thm:log-model-evidence-decomposition}
Let $p(y,z)$ denote a probabilistic model, and let $q(z)$ denote the PDF/PMF of a 
distribution  over the latent random variables, referred to as \textit{the variational distribution}. 
Then the following log model evidence decomposition holds
\begin{equation}
\ln p(y) = \mbox{ELBO}(q(z)) + \mbox{KL}(q(z)||p(z|y)),
\end{equation}
where
\begin{equation}
\mbox{ELBO}(q(z)) := \int q(z) \ln \left(\frac{p(y,z)}{q(z)}\right)\,dz
\end{equation}
is referred to as the \textit{evidence lower bound}, and
\begin{equation}
\mbox{KL}(q(z)||p(z|y)) := \int q(z) \ln \left(\frac{q(z)}{p(z|y)}\right)\,dz
\end{equation}
is referred to as the \textit{Kullback-Leibler divergence} between $q(z)$ and  $p(z|y)$.
\end{theorem}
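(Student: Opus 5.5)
The plan is to verify the decomposition directly from the definitions by rewriting the integrand of the ELBO via the chain rule of probability. The only structural fact needed is the factorization $p(y,z) = p(z|y)\,p(y)$, valid wherever $p(y) > 0$. I will assume that $q$ is absolutely continuous with respect to $p(z|y)$, so that both integrals are well defined and the logarithms are finite $q$-almost everywhere; in the PMF case, integrals are read as sums.

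First, substitute $p(y,z) = p(z|y)p(y)$ into the ELBO and split the logarithm:
\begin{equation*}
\ln\left(\frac{p(y,z)}{q(z)}\right) = \ln p(y) + \ln\left(\frac{p(z|y)}{q(z)}\right) = \ln p(y) - \ln\left(\frac{q(z)}{p(z|y)}\right).
\end{equation*}
Next, integrate this identity against $q(z)$ and use linearity of the integral. The term $\ln p(y)$ does not depend on $z$, and $\int q(z)\,dz = 1$, so it integrates to $\ln p(y)$. The remaining term integrates to $-\mbox{KL}(q(z)\|p(z|y))$ by definition. This gives $\mbox{ELBO}(q(z)) = \ln p(y) - \mbox{KL}(q(z)\|p(z|y))$, and rearranging yields the claimed decomposition.

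The only delicate point is the justification for splitting the integral into two separately integrable pieces, which requires the KL term to be finite. If the KL divergence is $+\infty$, then the ELBO is $-\infty$ and the identity still holds in the extended reals. I will note this briefly instead of treating it in detail, in keeping with the paper's level of rigor.
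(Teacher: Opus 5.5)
Your argument is correct: substituting $p(y,z)=p(z|y)\,p(y)$, splitting the logarithm, and using $\int q(z)\,dz=1$ is the standard derivation. The paper itself never writes out a proof of this theorem (the Proofs subsection omits it despite the forward reference), so there is no different route to compare against.

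The one imprecision is your closing remark. If $\mbox{KL}(q(z)\|p(z|y))=+\infty$, then $\mbox{ELBO}(q(z))+\mbox{KL}(q(z)\|p(z|y))$ is the undefined expression $-\infty+\infty$. The statement that survives in the extended reals is therefore the form you actually derived, $\mbox{ELBO}(q(z))=\ln p(y)-\mbox{KL}(q(z)\|p(z|y))$, which with the usual conventions also remains true when $q$ is not absolutely continuous with respect to $p(z|y)$.
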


The variational distribution $q(z)$ serves as an approximation to the conditional
distribution $p(z|y)$, and, as alluded to below, the ELBO serves as an approximation to
the log model evidence. Commonly, $q(z)$ is of parameterized functional form, and
we refer to its parameters as \textit{variational parameters}. Owing to its origin in
statistical physics, the ELBO is often referred to as \textit{variational free energy}.
The Kullback-Leibler divergence serves as a distance measure between the variational
distribution $q(z)$ and the conditional distribution $p(z|y)$. The following 
properties of the Kullback-Leibler divergence enable variational inference. The
first of these is a direct consequence of Jensen's inequality \citep{jensen1906}.

\begin{theorem}[Kullback-Leibler divergence]\label{thm:kullback-leibler-divergence}
Let $\mbox{KL}\left(q(z)||p(z)\right)$ denote the Kullback-Leibler divergence 
of two PMFs/PDFs $q(z)$ and $p(z)$. Then,
\begin{equation}
\mbox{KL}\left(q(z)||p(z)\right) > 0 \mbox{ for } q(z) \neq p(z)
\end{equation}
and
\begin{equation}
\mbox{KL}\left(q(z)||p(z)\right) = 0 \mbox{ for } q(z) = p(z).
\end{equation}
\end{theorem}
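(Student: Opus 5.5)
The plan is to prove Gibbs' inequality via Jensen's inequality applied to the strictly concave logarithm. I will first rewrite the negative divergence as an expectation under $q$,
\begin{equation*}
-\mbox{KL}\left(q(z)||p(z)\right) = \int q(z)\ln\left(\frac{p(z)}{q(z)}\right)\,dz = \mathbb{E}_{q(z)}\left(\ln \frac{p(z)}{q(z)}\right),
\end{equation*}
where the integral runs over the support of $q$. For PMFs the integral is replaced by a sum, and the argument is otherwise identical.

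By Jensen's inequality for the concave function $\ln$,
\begin{equation*}
\mathbb{E}_{q(z)}\left(\ln \frac{p(z)}{q(z)}\right) \le \ln \mathbb{E}_{q(z)}\left(\frac{p(z)}{q(z)}\right) = \ln \int_{q>0} p(z)\,dz \le \ln 1 = 0.
\end{equation*}
This gives $\mbox{KL}\left(q(z)||p(z)\right) \ge 0$. The equality case is then immediate: if $q = p$, the integrand is $q(z)\ln 1 = 0$ everywhere, so the divergence vanishes.

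The main obstacle is the strict inequality for $q \neq p$. Here $q \neq p$ must be read as differing on a set of positive measure, since densities are only determined almost everywhere. I would argue by contradiction. If the divergence were $0$, both inequalities above would be equalities. Equality in the second one forces $\int_{q>0} p = 1$, so $p$ puts no mass outside the support of $q$. Equality in Jensen for the strictly concave $\ln$ forces $p(z)/q(z)$ to be $q$-almost surely constant, say equal to $c$. Integrating over the support of $q$ then gives $c = \int_{q>0} p = 1$, so $p = q$ almost everywhere, which is a contradiction.

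Alternatively, to avoid invoking the equality conditions of Jensen's inequality, I would use the elementary bound $\ln x \le x - 1$, which holds with equality only at $x = 1$. Applied pointwise to $x = p/q$, it gives the same chain of inequalities. Strictness then follows because a nonnegative integrand $q(z)\left(p(z)/q(z) - 1 - \ln(p(z)/q(z))\right)$ that integrates to zero must vanish almost everywhere.
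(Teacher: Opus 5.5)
Your proposal is correct and follows the route the paper indicates: the paper gives no detailed proof of this theorem (none appears in the proofs section) and only remarks that it is a direct consequence of Jensen's inequality, which is your main argument. You also supply the details the paper leaves out, namely reading $q(z) \neq p(z)$ as a difference on a set of positive measure and using the equality condition of Jensen's inequality for the strictly concave logarithm together with $\int_{q>0} p(z)\,dz = 1$ to get strictness.
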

The non-negativity of the Kullback-Leibler divergence endows the ELBO with its
characterizing property:
\begin{theorem}[Evidence lower bound]\label{thm:evidence-lower-bound}
The ELBO is a lower bound of the log model evidence,
\begin{equation}
\mbox{ELBO}(q(z)) \le \ln p(y).
\end{equation}
\end{theorem}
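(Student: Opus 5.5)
The statement to prove is the evidence lower bound, ELBO$(q(z)) \le \ln p(y)$. I will obtain it as an immediate corollary of the two preceding results, \autoref{thm:log-model-evidence-decomposition} and \autoref{thm:kullback-leibler-divergence}, rather than by a fresh Jensen argument.

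First, I fix an arbitrary variational distribution $q(z)$ and invoke \autoref{thm:log-model-evidence-decomposition}, which gives
\begin{equation*}
\ln p(y) = \mbox{ELBO}(q(z)) + \mbox{KL}(q(z)||p(z|y)).
\end{equation*}
Second, I apply \autoref{thm:kullback-leibler-divergence} with $p(z)$ replaced by the conditional PDF/PMF $p(z|y)$. This substitution is legitimate because $p(z|y)$, for fixed $y$, is a probability density or mass function in $z$. I then split into two cases:
\begin{itemize}
\item if $q(z) \neq p(z|y)$, the divergence is strictly positive;
\item if $q(z) = p(z|y)$, the divergence vanishes.
\end{itemize}
In either case, $\mbox{KL}(q(z)||p(z|y)) \ge 0$.

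Finally, I rearrange the decomposition to $\mbox{ELBO}(q(z)) = \ln p(y) - \mbox{KL}(q(z)||p(z|y))$ and use the nonnegativity to conclude $\mbox{ELBO}(q(z)) \le \ln p(y)$. Equality holds exactly when $q(z) = p(z|y)$, which I would note as a remark since it motivates maximizing the ELBO.

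There is no real obstacle; the only point needing care is that the decomposition must be well defined. This requires $\ln p(y)$ to be finite ($p(y) > 0$ for the observed $y$) and the ELBO and KL integrals to exist, so that the rearrangement does not produce $\infty - \infty$. I would state this as a standing assumption, namely that $q$ is absolutely continuous with respect to $p(z|y)$. If that fails, the KL divergence is $+\infty$ and the bound holds trivially, with the ELBO equal to $-\infty$.
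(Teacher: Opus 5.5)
Your proposal is correct and follows the paper's route: the paper gives no separate written proof of this theorem in its Proofs section, but introduces it with the remark that the non-negativity of the Kullback--Leibler divergence endows the ELBO with this property, which is exactly your argument of combining the log model evidence decomposition with $\mbox{KL}(q(z)||p(z|y)) \ge 0$. Your added care about finiteness of $\ln p(y)$ and the case of infinite KL divergence (where the ELBO is $-\infty$) makes explicit what the paper leaves implicit.
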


The fixed scalar nature of $\ln p(y)$ for a given dataset and the fact that
the ELBO is always smaller than or equal to $\ln p(y)$ suggest two equivalent options
for solving the fundamental problem of probabilistic inference using a variational
approach: either (1) minimizing $\mbox{KL}(q(z)||p(z|y))$ with respect to $q(z)$,
such that $q(z)\to p(z|y)$ and $\mbox{ELBO}(q(z)) \to \ln p(y)$, or maximizing
 $\mbox{ELBO}(q(z))$ with respect to $q(z)$, such that likewise 
$\mbox{ELBO}(q(z)) \to \ln p(y)$ and $q(z) \to p(z|y)$. The free-form 
variational inference approach discussed below follows the latter approach. Its
final building block is the notion of a \textit{mean-field variational approximation},
which merely refers to an independence assumption on $q(z)$

\begin{definition}[Mean-field variational approximation and inference]\label{def:mean-field-variational-approximation-and-inference}
Let $z = (z_1,...,z_m)$ denote the latent random variables of a probabilistic 
model and let $\mathcal{S} := (z_{s_1},...,z_{s_S})$ denote a partition of $z$ 
into $S$ mutually exclusive subsets, such that $\cup_{i = 1}^S z_{s_i} = z$ 
and  $z_{s_i} \cap z_{s_j} = \emptyset$ for $i \neq j$. Then the assumption that
$z_{s_1}, ..., z_{s_S}$ form subsets of independent variables, i.e.,
\begin{equation}
q(z) := \prod_{i = 1}^S q(z_{s_i}),
\end{equation}
where $q(z_{s_i})$ denotes the PDF/PMF of the distribution of the elements of 
$z_{s_i}, i = 1,...,S$ is referred to as a \textit{mean-field approximation}. 
\end{definition}

Clearly, variational mean-field approximations neglect potential non-independencies 
in the conditional distribution $p(z|y)$. Commonly employed variational mean-field
approximations are a full factorization $q(z) = \prod_{i = 1}^m q(z_i)$ or a 
binary factorization $q(z) = q(z_s)q(z_{\setminus s})$, where $Z_s \subset z$ and 
$z_{\setminus s} := z \setminus Z_s$. In our current application, a full factorization 
is employed. We are now in a position to formulate the general free-form mean-field
variational inference approach that forms the foundation for inference in the 
current application and gives rise to the variational distribution update equations
of the main text.

\begin{theorem}[Free-form mean-field variational inference]\label{thm:free-form-mean-field-variational-inference}
Let $p(y,z)$ denote a probabilistic model comprising observable random variables 
$\upsilon = (y_1,...,y_n)$ and latent random variables $z = (z_1,...,z_m)$. Let $q(z)$ 
denote a variational distribution and assume that $q(z)$ factorizes in a binary 
manner, i.e., $q(z) = q(z_s)q(z_{\setminus s})$ with $Z_s \subset z$ 
and $z_{\setminus s} := z \setminus Z_s$. Then, setting
\begin{equation}
q(z_s) := \frac{1}{\gamma_s} \exp \left(\int q(z_{\setminus s}) \ln p(y,z) \,dz_{\setminus s}\right),
\end{equation}
where $\gamma_s$ denotes a normalization constant independent of $z_s$, maximizes 
the evidence lower bound with respect to $q(z_s)$.
\end{theorem}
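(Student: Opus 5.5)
Since $z_s$ enters the ELBO only through the factor $q(z_s)$, I will rewrite $\mbox{ELBO}(q(z))$ as a functional of $q(z_s)$ alone, holding $q(z_{\setminus s})$ fixed. I will then recognise that functional as a negative Kullback--Leibler divergence plus a constant, and finish with \autoref{thm:kullback-leibler-divergence}.

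\emph{Step 1: isolate $q(z_s)$.} Substituting the binary factorization $q(z)=q(z_s)q(z_{\setminus s})$ into the definition of the ELBO in \autoref{thm:log-model-evidence-decomposition} gives
\begin{equation*}
\mbox{ELBO}(q(z)) = \int q(z_s)\left(\int q(z_{\setminus s})\ln p(y,z)\,dz_{\setminus s}\right)dz_s - \int q(z_s)\ln q(z_s)\,dz_s - \int q(z_{\setminus s})\ln q(z_{\setminus s})\,dz_{\setminus s}.
\end{equation*}
Two facts are used here. The logarithm of the product splits into a sum. Each factor integrates to one, so the factor that does not appear in a given entropy term can be integrated out; this is the exchangeability argument of \autoref{thm:exchangeability}. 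The last term does not depend on $q(z_s)$, so I denote it by $c$.

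\emph{Step 2: introduce the candidate distribution.} Set
\begin{equation*}
\ln \tilde q(z_s) := \int q(z_{\setminus s})\ln p(y,z)\,dz_{\setminus s} - \ln\gamma_s,
\end{equation*}
where $\gamma_s := \int \exp\left(\int q(z_{\setminus s})\ln p(y,z)\,dz_{\setminus s}\right)dz_s$. This choice makes $\tilde q(z_s)$ a normalized PDF/PMF, and it is exactly the distribution proposed in the theorem. Substituting into Step 1 yields
\begin{equation*}
\mbox{ELBO}(q(z)) = \int q(z_s)\ln\frac{\tilde q(z_s)}{q(z_s)}\,dz_s + \ln\gamma_s + c = -\mbox{KL}\big(q(z_s)\,\|\,\tilde q(z_s)\big) + \ln\gamma_s + c .
\end{equation*}

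\emph{Step 3: conclude.} Neither $\ln\gamma_s$ nor $c$ depends on $q(z_s)$. By \autoref{thm:kullback-leibler-divergence} the KL term is nonnegative and equals zero exactly when $q(z_s)=\tilde q(z_s)$. Hence the ELBO is maximized with respect to $q(z_s)$ by the stated choice.

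The main obstacle is making Step 2 rigorous rather than the algebra itself. One must ensure that $\gamma_s$ is finite and positive, so that $\tilde q$ is a genuine distribution, and that the expected log-joint is integrable. For the GMI models considered here both conditions hold, since the relevant quantities are Dirichlet/Multinoulli expectations. I would state these as standing assumptions, and note that the mixed sum/integral case follows by replacing integrals with sums.
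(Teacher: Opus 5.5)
Your proof is correct. Substituting $q(z)=q(z_s)q(z_{\setminus s})$ gives $\mbox{ELBO}(q(z)) = -\mbox{KL}(q(z_s)\,\|\,\tilde q(z_s)) + \ln\gamma_s + c$, where $\gamma_s$ and $c$ do not depend on $q(z_s)$. \autoref{thm:kullback-leibler-divergence} then shows that $\tilde q(z_s)$ is the maximizer, unique up to null sets. The standing assumptions you name are the right ones: $0<\gamma_s<\infty$, and finiteness of the three terms in Step 1 so that splitting the ELBO never yields $\infty-\infty$. One small wording issue: $z_s$ does not enter the ELBO "only through $q(z_s)$", since $\ln p(y,z)$ also depends on it. What you mean, and what Step 1 shows, is that the free argument $q(z_s)$ appears only in the first two terms.

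There is no in-paper proof to compare against. The appendix says proofs are deferred to \autoref{sec:proofs}, but that subsection only derives the GMI-specific results \eqref{eq:q0-mu-i}, \eqref{eq:q1-mu-i}, \eqref{eq:q1-pi}, \eqref{eq:elbo-p0} and \eqref{eq:elbo-p1}. Those derivations apply this theorem and evaluate $\int q(z_{\setminus s})\ln p(y,z)\,dz_{\setminus s}$ for each model; the theorem itself is left to the cited tutorials.

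Your negative-KL-plus-constant argument is the standard one. It fills that gap using only results the paper already states: the ELBO definition in \autoref{thm:log-model-evidence-decomposition} and KL nonnegativity. The other common derivation sets the functional derivative of the ELBO, with a Lagrange multiplier for normalization, to zero. Compared with that, your route needs no variational calculus and gives global optimality and uniqueness directly, not just a stationarity condition.

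One correction to your closing remark. It holds for the alternative-hypothesis model and for the reduced null model $p_{\tilde 0}(\mu_{1:n},y_{1:n})$. It does not hold for the unreduced null model, where the point-mass prior on $\pi$ means $\ln p(y,z)$ is not a log-density. This is why the paper conditions on $\pi = 1_m m^{-1}$ before applying the theorem.
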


The functional form of the maximizing variational distributions depends on the 
right-hand side of the proportionality statement; hence, "free-form". Note that
exchanging the roles of $q(z_s)$ and $q(z_{\setminus s})$ maximizes the ELBO 
with respect to $q(z_{\setminus s})$. This implies the following coordinate-wise 
ELBO maximization approach, referred to as the coordinate-ascent variational inference
(CAVI) algorithm.

\begin{definition}[Coordinate-ascent variational inference (CAVI)]\label{def:coordinate-ascent-variational-inference}
Let $p(y,z)$ denote a probabilistic model comprising observable random variables
$y = (y_1,...,y_n)$ and latent random variables $z = (z_1,...,z_m)$. 
Let $q(z)$ denote the PDF/PMF of a variational distribution and assume that $q(z)$ 
factorizes in a binary manner. Then the following algorithm, referred to as 
\textit{coordinate-ascent variational inference}, maximizes the evidence lower bound:

\vspace{2mm}

\noindent \textit{Initialization}
\vspace{2mm}

\noindent (0)  Initialize $q^{(0)}(z_s)$ and $q^{(0)}(z_{\setminus s})$ appropriately, 
e.g., by setting  $q^{(0)}(z_s) := \smallint p(z)\,dz_{\setminus s}$ and 
$q^{(0)}(z_{\setminus s}) :=  \smallint p(z)\,dz_s$, evaluate 
$\mbox{ELBO}\left(q^{(0)}(z_s)q^{(0)}(z_{\setminus s})\right)$, and select a 
convergence criterion $\delta > 0$.

\vspace{2mm}
\noindent \textit{Iterations}
\vspace{2mm}

\noindent For $k = 0,1,2,...$
\vspace{1mm}

(1) Set 
\begin{equation}
q^{(k+1)}(z_s) := \frac{1}{\gamma_s} \exp \left(\smallint q^{(k)}(z_{\setminus s}) \ln p(y,z) \,dz_{\setminus s} \right)  
\end{equation}\label{eq:q-update-1}

(2) Set 
\begin{equation}
q^{(k+1)}(z_{\setminus s}) := \frac{1}{\gamma_{\setminus s}} \exp \left(\smallint q^{(k+1)}(z_s) \ln p(y,z) \,dz_s \right)  
\end{equation}\label{eq:q-update-2}

(3) Evaluate 
\begin{equation}
\mbox{ELBO}\left(q^{(k+1)}(z_s), q^{(k+1)}(z_{\setminus s})\right) 
\end{equation}

\vspace{1mm}

\noindent until

\begin{equation}\label{eq:elbo-update}
\big\vert\mbox{ELBO}\left(q^{(k+1)}(z_s)q^{(k+1)}(z_{\setminus s})\right) -
\mbox{ELBO}\left(q^{(k)}(z_s)q^{(k)}(z_{\setminus s})\right)\big\vert < \delta
\end{equation}

\vspace{2mm}

\noindent \textit{Return}

\begin{equation}
p(z|y) \approx q^{(k+1)}(z_s)q^{(k+1)}(z_{\setminus s}) 
\mbox{ and } 
\ln p(y) \approx \mbox{ELBO}\left(q^{(k+1)}(z_s)q^{(k+1)}(z_{\setminus s})\right) 
\end{equation} 
\end{definition}

\clearpage
\subsection{Bayesian information criterion}\label{sec:bic}

Let 
\begin{equation}
p(y,\theta) = p(y|\theta)p(\theta)
\end{equation}
denote a probabilistic model, i.e., a joint distribution of data $y$ and parameters $\theta$. The Bayesian information criterion (BIC) was proposed by \citet{schwarz1978} as an approximation of the log marginal likelihood 
\begin{equation}
\ln p(y) = \ln \int p(y,\theta)\,d\theta.
\end{equation}
The BIC is given by
\begin{equation}\label{eq:bic}
\mbox{BIC} := \ell(\hat{\theta}) - \frac{k}{2}\ln n,
\end{equation} 
 where $\ell(\hat{\theta}$ denotes a maximum value of the log-likelihood  $\ln p(y|\theta)$ of the model, $k$ denotes the dimension of $\theta$ (i.e., the number of 
 scalar parameters of the probabilistic model), and $n$ denotes the
 number of data observations. The BIC formula rests on Laplace's integral approximation method, which we first briefly review for the case of a multivariate real-valued function. To this end, let $f : \mathbb{R}^k \to \mathbb{R}$ denote a twice-differentiable convex multivariate real-valued function with a minimum at $\tilde{x} \in \mathbb{R}^k$. Then \textit{Laplace's integral approximation method} corresponds to approximating the integral
\begin{equation}
I_n := \int_{\mathbb{R}^k} \exp(-nf(x))\,dx
\end{equation}
for $n \in \mathbb{N}$ as
\begin{equation}\label{eq:laplace_int_approx}
I_n \approx 
\left(\frac{2\pi}{n} \right)^{\frac{k}{2}}
|H^f(\tilde{x})|^{-\frac{1}{2}}
\exp(-nf(\tilde{x})),
\end{equation}
where $|H^f(\tilde{x})|$ denotes the determinant of the Hessian of $f$ evaluated at $\tilde{x}$. For the motivation and derivation of Laplace's integral approximation method, see, e.g., \citet{lauritzen2009}. Using Laplace's integral approximation method, it can readily be shown that an approximation of the log marginal likelihood  $\ln p(y)$ is furnished by
\begin{equation}\label{eq:lml_approx}
\ln p(y) \approx
\frac{k}{2}\ln 2\pi - \frac{k}{2} \ln n - \frac{1}{2} \ln |H^f(\hat{\theta})| + \ln p(y|\hat{\theta}) + \ln p(\hat{\theta}),
\end{equation} 
where $\hat{\theta}$ denotes a maximum of the function 
\begin{equation}
\ell : \mathbb{R}^k \to \mathbb{R}, \theta \mapsto \ell(\theta) := \ln p(y|\theta).
\end{equation}
This may be seen as follows: Define
\begin{equation}
f : \mathbb{R}^k \to \mathbb{R}, 
\theta \mapsto f(\theta) := -\frac{1}{N}\left(\ln p(y|\theta) + \ln p(\theta)\right),
\end{equation}
and let $\tilde{\theta}$ denote a minimum location of $f$. Then, we have with the Laplace integral approximation of eq. \eqref{eq:laplace_int_approx}
\begin{align*}
\ln p(y) 
& = \ln \int p(y|\theta)p(\theta)\,d\theta																						\\
& = \ln \int \exp\left(\ln\left(p(y|\theta)p(\theta)\right)\right)\,d\theta														\\
& = \ln \int \exp\left(\ln p(y|\theta) + \ln p(\theta)\right)\,d\theta															\\
& = \ln \int \exp\left(-n\left(-\frac{1}{n}\left(\ln p(y|\theta) + \ln p(\theta)\right)\right)\right)\,d\theta					\\
& = \ln \int \exp\left(-n f(\theta)\right)\,d\theta																				\\
& \approx 
    \ln \left(\left(\frac{2\pi}{n} \right)^{\frac{k}{2}}|H^f(\tilde{\theta})|^{-\frac{1}{2}}\exp(-nf(\tilde{\theta}))\right)	\\
& = \frac{k}{2}\ln 2\pi - \frac{k}{2} \ln n - \frac{1}{2} \ln |H^f(\tilde{\theta})| + \ln p(y|\tilde{\theta}) + \ln p(\tilde{\theta}).
\end{align*}
The BIC formula then results from (1) neglecting the constant terms $\frac{k}{2}\ln 2\pi$ and $\ln p(\hat{\theta})$ and (2) neglecting the term $- \frac{1}{2} \ln |H^f(\tilde{\theta})|$, which for $n \to \infty$ is bounded from above. With $\ln p(y|\hat{\theta}) = \ell(\hat{\theta})$ it then follows that
\begin{equation}
\ln p(y) \approx \ell(\hat{\theta}) - \frac{k}{2}\ln n = \mbox{BIC}.
\end{equation}

\clearpage
\subsection{Proofs}\label{sec:proofs}

\textit{Proof of \eqnref{eq:varphi-j}}
\vspace{2mm}

We have
\begin{align}
\begin{split}
& \mathbb{P}(\pi_j = \max\{\pi_j\}_{j =1}^m|y_{1:n})  \\
& = \mathbb{P}(\pi_j = \max\{\pi_j\}_{j =1}^m, \eta = 0|y_{1:n})  + \mathbb{P}(\pi_j = \max\{\pi_j\}_{j =1}^m, \eta = 1|y_{1:n}) \\
& = \mathbb{P}(\pi_j =\max\{\pi_j\}_{j =1}^m| \eta = 0, y_{1:n})\mathbb{P}(\eta = 0|y_{1:n})  \\
& + \mathbb{P}(\pi_j =\max\{\pi_j\}_{j =1}^m| \eta = 1, y_{1:n})\mathbb{P}(\eta = 1|y_{1:n}) \\
& = \mathbb{P}(\pi_j =\max\{\pi_j\}_{j =1}^m| \eta = 0, y_{1:n})\mathbb{P}(\eta = 0|y_{1:n})  \\
& + \mathbb{P}(\pi_j =\max\{\pi_j\}_{j =1}^m| \eta = 1, y_{1:n})(1-\mathbb{P}(\eta = 0|y_{1:n})).  \\
\end{split}
\end{align}
$\hfill\Box$
\vspace{2mm}

\textit{Proof of \eqnref{eq:p-eta-giv-y}}
\vspace{2mm}

We have
\begin{align}
\begin{split}
\mathbb{P}(\eta = 0|y_{1:n}) 
& =  \frac{\mathbb{P}(\eta = 0, y_{1:n})}{\mathbb{P}(y_{1:n})} \\
& =  \frac{\mathbb{P}(\eta = 0)\mathbb{P}(y_{1:n}|\eta = 0)}{\sum_{\eta \in \{0,1\}}\mathbb{P}(\eta, y_{1:n})} \\
& =  \frac{\mathbb{P}(\eta = 0)\mathbb{P}(y_{1:n}|\eta = 0)}{\sum_{\eta \in \{0,1\}}\mathbb{P}(\eta)\mathbb{P}(y_{1:n}|\eta)} \\
& =  \frac{\frac{1}{2}\mathbb{P}(y_{1:n}|\eta = 0)}{\sum_{\eta \in \{0,1\}}\frac{1}{2}\mathbb{P}(y_{1:n}|\eta)} \\
& =  \frac{\mathbb{P}(y_{1:n}|\eta = 0)}{\sum_{\eta \in \{0,1\}}\mathbb{P}(y_{1:n}|\eta)} \\
& =  \frac{\mathbb{P}(y_{1:n}|\eta = 0)}{\mathbb{P}(y_{1:n}|\eta = 0) + \mathbb{P}(y_{1:n}|\eta = 1)} \\
& =  \frac{\frac{\mathbb{P}(y_{1:n}|\eta = 0)}{\mathbb{P}(y_{1:n}|\eta = 0)}}{\frac{\mathbb{P}(y_{1:n}|\eta = 0) + \mathbb{P}(y_{1:n}|\eta = 1)}{\mathbb{P}(y_{1:n}|\eta = 0)}} \\
& =  \frac{1}{\frac{\mathbb{P}(y_{1:n}|\eta = 0)}{\mathbb{P}(y_{1:n}|\eta = 0)}+\frac{\mathbb{P}(y_{1:n}|\eta = 1)}{\mathbb{P}(y_{1:n}|\eta = 0)}} \\
& =  \frac{1}{1+\frac{\mathbb{P}(y_{1:n}|\eta = 1)}{\mathbb{P}(y_{1:n}|\eta = 0)}} \\
& =  \frac{1}{1+\exp\left(\ln\left(\frac{\mathbb{P}(y_{1:n}|\eta = 1)}{\mathbb{P}(y_{1:n}|\eta = 0)}\right)\right)} \\
& =  \frac{1}{1+\exp\left(\ln \mathbb{P}(y_{1:n}|\eta = 1) - \ln \mathbb{P}(y_{1:n}|\eta = 0)\right)}.
\end{split}
\end{align}
$\hfill\Box$
\vspace{2mm}

\textit{Proof of \eqnref{eq:q0-mu-i}}
\vspace{2mm}

We first note that the definition of $p(\pi|\eta = 0)$ in \eqnref{eq:p-eta-0} implies that 
\begin{equation}
p(\pi = 1_m m^{-1}|\eta = 0) = 1,
\end{equation}
where $1_m \in \mathbb{R}^m$ denotes a vector of all 1's. We then note that, instead
of inference in the model specified by \eqnref{eq:q0-mu-i}, it suffices to delineate
the conditional distribution of $\mu_{1:n}$ given $y_{1:n}$ in the further reduced
model
\begin{equation}
p_{\tilde{0}}(\mu_{1:n},y_{1:n}) := p_0(\mu_{1:n},y_{1:n}|\eta = 0, \pi = 1_mm^{-1}),  
\end{equation}
because 
\begin{align}
\begin{split}
p_0(\mu_{1:n},y_{1:n}|\eta = 0, \pi = 1_mm^{-1})    
&  = \frac{p(\pi = 1_mm^{-1},\mu_{1:n},y_{1:n}|\eta = 0)}{p(\pi = 1_mm^{-1}|\eta = 0)} \\
&  = \frac{p(\pi = 1_mm^{-1},\mu_{1:n},y_{1:n}|\eta = 0)}{1} \\
&  = p(\pi = 1_mm^{-1},\mu_{1:n},y_{1:n}|\eta = 0). \\
\end{split}
\end{align}
For $\pi = 1_m m^{-1}$, the GMI model conditional on $\eta = 0$ and the GMI 
model conditional on $\eta = 0$ \textit{and} $\pi = 1_mm^{-1}$ are identical, and the 
respective distributions depend on $\mu_{1:n}$ and $y_{1:n}$ only. From \fullref{sec:free-form-mean-field-variational-inference} and the definitions in \eqnref{eq:p0-inf-1} and \eqnref{eq:p0-inf-2} of the current inference scenario, we next see that the CAVI algorithm update for the variational distribution on $\mu_i$ 
for $p_{\tilde{0}}(\mu_{1_n}, y_{1:n})$ takes the form
\begin{equation}
q(\mu_i) 
:= \frac{1}{\gamma_{\mu_i}}
\exp(I_{\mu_i})
\mbox{ with }
I_{\mu_i} :\int \prod_{\substack{k = 1 \\ k \neq i}}^n q(\mu_k) \ln p(\mu_{1:n},y_{1:n}) \,d\mu_{1:n\backslash i}
\mbox{ for }
i = 1,...,n,
\end{equation}
where we omitted the iteration superscripts and the hypothesis subscripts for ease of notation. 
To show the validity of \eqnref{eq:q0-mu-i}, we rewrite $I_{\mu_i}$ in terms of the model indicator
conditional log model evidence $\ln p(y_i|\mu_i)$. We have

\begin{align*}
&
I_{\mu_i} 
\\
& = 
\int
\prod_{\substack{k=1 \\ k\neq i}}^{n} q(\mu_k) 
\ln p(\mu_{1:n},y_{1:n}) 
d\mu_{1:n\backslash i}
\\
& = 
\int 
\prod_{\substack{k=1 \\ k\neq i}}^{n} q(\mu_k) 
\ln\left(\prod_{k=1}^{n}p(\mu_k|\pi)p(y_k|\mu_k)\right)
d\mu_{1:n\backslash i}
\\
& = 
\int\prod_{\substack{k=1 \\ k\neq i}}^{n} q(\mu_k) 
\left(\sum_{k=1}^{n} \ln p(\mu_k|\pi)+\sum_{k=1}^{n} \ln p(y_k|\mu_k)\right) 
d\mu_{1:n\backslash i} 
\\
& = 
\int
  \prod_{\substack{k=1 \\ k\neq i}}^{n} q(\mu_k) \sum_{k=1}^{n} \ln p(\mu_k|\pi) 
+ \prod_{\substack{k=1 \\ k\neq i}}^{n} q(\mu_k) \sum_{k=1}^{n} \ln p(y_k|\mu_k) 
\,d\mu_{1:n\backslash i} 
\\
& = 
\int \prod_{\substack{k=1 \\ k\neq i}}^{n} q(\mu_k) 
\sum_{k=1}^{n} \ln p(\mu_k|\pi) 
d\mu_{1:n\backslash i}
+ 
\int \prod_{\substack{k=1 \\ k\neq i}}^{n} q(\mu_k) 
\sum_{k=1}^{n} \ln p(y_k|\mu_k) 
d\mu_{1:n\backslash i}
d\pi  
\\
& = 
\int \prod_{\substack{k=1 \\ k\neq i}}^{n} q(\mu_k) \sum_{\substack{k=1 \\ k \neq i}}^{n} \ln p(\mu_k|\pi) 
+ \prod_{\substack{k=1 \\ k\neq i}}^{n} q(\mu_k) \ln p(\mu_i|\pi) 
d\mu_{1:n\backslash i}
\\
&
+ 
\int\prod_{\substack{k=1 \\ k\neq i}}^{n} q(\mu_k) 
  \sum_{\substack{k=1 \\ k \neq i}}^{n} \ln p(y_k|\mu_k)
+ \prod_{\substack{k=1 \\ k\neq i}}^{n} q(\mu_k) \ln p(y_i|\mu_i) 
d\mu_{1:n\backslash i}
\\
& =
\int \prod_{\substack{k=1 \\ k\neq i}}^{n} q(\mu_k) \sum_{\substack{k=1 \\ k \neq i}}^{n} \ln p(\mu_k|\pi) 
d\mu_{1:n\backslash i}
+
\int
\prod_{\substack{k=1 \\ k\neq i}}^{n} q(\mu_k) \ln p(\mu_i|\pi) 
d\mu_{1:n\backslash i}
\\
&
+ 
\int 
\prod_{\substack{k=1 \\ k\neq i}}^{n} q(\mu_k) \sum_{\substack{k=1 \\ k \neq i}}^{n} \ln p(y_k|\mu_k) 
d\mu_{1:n\backslash i}
+ 
\int
\prod_{\substack{k=1 \\ k\neq i}}^{n} q(\mu_k) \ln p(y_i|\mu_i) 
d\mu_{1:n\backslash i}
\\
&
= 
\int
\prod_{\substack{k=1 \\ k\neq i}}^{n} q(\mu_k) \ln p(\mu_i|\pi) 
d\mu_{1:n\backslash i}
+ 
\int
\prod_{\substack{k=1 \\ k\neq i}}^{n} q(\mu_k) \ln p(y_i|\mu_i) 
d\mu_{1:n\backslash i}
+
c_1
\\
&
= 
\ln p(\mu_i|\pi)  \int
\prod_{\substack{k=1 \\ k\neq i}}^{n} q(\mu_k) 
d\mu_{1:n\backslash i}
+ 
\ln p(y_i|\mu_i)  \int
\prod_{\substack{k=1 \\ k\neq i}}^{n} q(\mu_k) 
d\mu_{1:n\backslash i}
+
c_1
\\
&
= \ln p(\mu_i|\pi) + \ln p(y_i|\mu_i)  
+
c_1
\\
&
=  \ln \left(\prod_{j=1}^m \left(\frac{1}{m}\right)^{\mu_{ij}}\right)  + \ln p(y_i|\mu_i)  
+
c_1
\\
&
=  
-\sum_{j=1}^m  \mu_{ij} \ln m   + \ln p(y_i|\mu_i)  
+
c_1
\\
&
=  
\ln p(y_i|\mu_i) - \ln m + c_1
\\
&
=  
\ln p(y_i|\mu_i) + c_2,
\end{align*}
where we defined the constant terms
\begin{align}
\begin{split}
c_1  
:= 
\int\prod_{\substack{k=1 \\ k\neq i}}^{n} q(\mu_k) 
\sum_{\substack{k=1 \\ k \neq i}}^{n} \ln p(\mu_k|\pi) 
d\mu_{1:n\backslash i}
+ 
\int\prod_{\substack{k=1 \\ k\neq i}}^{n} q(\mu_k) \sum_{\substack{k=1 \\ k \neq i}}^{n} 
\ln p(y_k|\mu_k) 
d\mu_{1:n\backslash i}
\end{split}
\end{align}
and 
\begin{equation}
c_2 := c_1 - \ln m,
\end{equation}
which do not vary with $\mu_i$ and the last equality follows from the
fact that for $\mu_i = e_j, j = 1,...,m$ it holds that $\mu_{ij} = 1$ for 
$i = j$ $\mu_{ij} = 0$ for $i \neq j$. For the variational update of $q(\mu_i)$, we thus obtain
\begin{align}
\begin{split}
q(\mu_i) 
& = 
\frac{1}{\gamma_{\mu_i}} \exp\left(I_{\mu_i}\right)
\\
\Leftrightarrow
q(\mu_i) 
& =
\frac{1}{\gamma_{\mu_i}}
\exp\left(\ln p(y_i|\mu_i)  + c_2 \right)
\\
\Leftrightarrow
q(\mu_i) 
& =
\frac{1}{\gamma_{\mu_i}}
\exp\left(\ln p(y_i|\mu_i) \right)
\exp(c_2)
\\
\Leftrightarrow
q(\mu_i) 
& =
\frac{\exp(c_2)\exp\left(\ln p(y_i|\mu_i)\right)}{\sum_{\mu_i \in E_m} \exp(c_2)\exp\left(\ln p(y_i|\mu_i)\right)}
\\
\Leftrightarrow
q(\mu_i) 
& =
\frac{\exp\left(\ln p(y_i|\mu_i)\right)}{\sum_{\mu_i \in E_m} \exp\left(\ln p(y_i|\mu_i)\right)}.
\end{split}
\end{align}   
Finally, with $\mu_i \in E_m$, we may re-express the above in Multinoulli distribution
form by defining
\begin{equation}
\pi_{ij}^q 
:= \frac{\exp\left(\ln p(y_i|\mu_i = e_j)\right)}
        {\sum_{\mu_i\in E_m}\exp\left(\ln p(y_i|\mu_i)\right)},
\end{equation}
such that 
\begin{equation}
q(\mu_i) = \prod_{j=1}^m \left(\pi_{ij}^q\right)^{\mu_{i_j}}.
\end{equation}
In summary, we have, now re-introducing the algorithm iteration superscripts instead
of the variational superscript $q$ as well as the hypothesis subscript,
\begin{equation}
q^{(k+1)}_0(\mu_i) 
= \mbox{Mult}\left(\mu_i; \pi_i^{(k+1)}\right),
\end{equation}
where 
\begin{equation}
\pi_i^{(k+1)} := \left(\pi_{ij}^{(k+1)}\right)_{j = 1,...,m}
\mbox{ with }
\pi_{ij}^{(k+1)} := 
\frac{\exp\left(\ln p(y_i|\mu_i = e_j)\right)}
        {\sum_{\mu_i\in E_m}\exp\left(\ln p(y_i|\mu_i)\right)}.
\end{equation}
$\hfill\Box$
\vspace{2mm}

\textit{Proof of \eqnref{eq:elbo-p0}}
\vspace{2mm}

The ELBO in the current scenario is given by
\begin{equation}\label{eq:elbo-p0-1}
\mbox{ELBO}(q(\mu_{1:n})) 
:= \int q(\mu_{1:n})\ln \left(\frac{p(\mu_{1:n},y_{1:n})}{q(\mu_{1:n})}\right)\,d\mu_{1:n}.                           
\end{equation}
In the following, we first rewrite the right-hand side of \eqnref{eq:elbo-p0-1} as a sum
of three integral terms. We then consider these integral terms and finally assemble the form
provided in \eqnref{eq:elbo-p0}. Thus, we first note that

\begin{align}
\begin{split}
& \mbox{ELBO}(q(\mu_{1:n})) 
\\
& 
:= \int q(\mu_{1:n})\ln \left(\frac{p(\mu_{1:n},y_{1:n})}{q(\mu_{1:n})}\right)\,d\mu_{1:n}                           
\\
&  
= \int q(\mu_{1:n})\left( \ln p(\mu_{1:n},y_{1:n}) - \ln q(\mu_{1:n})\right)\,d\mu_{1:n}                            
\\
&  
= \int q(\mu_{1:n})\ln p(\pi,\mu_{1:n},y_{1:n})\,d\mu_{1:n}  
- \int q(\mu_{1:n})\ln q(\mu_{1:n})\,d\mu_{1:n}                                                                      
\\
&  
= \int\prod_{i=1}^n q(\mu_i)\ln \left(\prod_{i=1}^{n}p(\mu_i|\pi)p(y_i|\mu_i)\right)\,d\mu_{1:n}   
- \int\prod_{i=1}^n q(\mu_i)\ln \left(\prod_{i=1}^n q(\mu_i)\right)\,d\mu_{1:n}                               
\\
&  
= \int\prod_{i=1}^n q(\mu_i)\left(\sum_{i=1}^{n} \ln p(\mu_i|\pi) + \sum_{i=1}^n \ln p(y_i|\mu_i) \right)\,d\mu_{1:n}   
- \int\prod_{i=1}^n q(\mu_i)\left(\sum_{i=1}^n \ln q(\mu_i)\right)\,d\mu_{1:n}    
\\
&  
= \int\prod_{i=1}^n q(\mu_i)\sum_{i=1}^{n} \ln p(\mu_i|\pi)\,d\mu_{1:n} 
+ \int\prod_{i=1}^n q(\mu_i)\sum_{i=1}^n \ln p(y_i|\mu_i)\,d\mu_{1:n}   
\\
& \quad - \int\prod_{i=1}^n q(\mu_i)\sum_{i=1}^n \ln q(\mu_i)\,d\mu_{1:n}.
\end{split}
\end{align}
We consider the remaining integral terms in turn. For the first term, we have
\begin{align}
\begin{split}
T_1 
&
:= 
\int\prod_{i=1}^n q(\mu_i)\sum_{i=1}^{n} \ln p(\mu_i|\pi)\,d\mu_{1:n} 
\\
& 
=  \sum_{i=1}^n \int q(\mu_i)\ln p(\mu_i|\pi)\,d\mu_i  
\\
& 
=  \sum_{i=1}^n \int q(\mu_i)\ln \left(\prod_{i=1}^m \pi_j^{\mu_{ij}}\right)\,d\mu_i  
\\
& 
=  \sum_{i=1}^n \int q(\mu_i) \left(\sum_{i=1}^m \mu_{ij} \ln \pi_j \right)\,d\mu_i  
\\
& 
=  \sum_{i=1}^n \sum_{\mu_i \in E_m} q(\mu_i)\sum_{i=1}^m \mu_{ij} \ln \pi_j  
\\
& 
=  \sum_{i=1}^n \sum_{j=1}^m \pi_{ij}^q \ln \pi_j,  
\end{split}
\end{align}
where we defined
\begin{equation}
 \pi_{ij}^q := q(\mu_i) \mbox{ for } \mu_i = e_j \mbox{ with } e_j \in E_m.
\end{equation}
For the second term, we have
\begin{align}
\begin{split}
T_2 
& 
:= 
\int\prod_{i=1}^n q(\mu_i)\sum_{i=1}^n \ln p(y_i|\mu_i)\,d\mu_{1:n}   
\\
& 
= 
\sum_{i=1}^n \int q(\mu_i)\ln p(y_i|\mu_i)\,d\mu_i 
\\
& 
= 
\sum_{i=1}^n \sum_{\mu_i \in E_m} q(\mu_i) \ln p(y_i|\mu_i)
\\
& 
= 
\sum_{i=1}^n \sum_{j}^m \pi_{ij}^q \ln p(y_i|\mu_i = e_j).
\end{split}
\end{align}
Finally, for the third term, we have
\begin{align}
\begin{split}
T_3  
& = \int\prod_{i=1}^n q(\mu_i)\sum_{i=1}^n \ln q(\mu_i)\,d\mu_{1:n}  
\\  
& = \sum_{i=1}^n \int q(\mu_i)\ln q(\mu_i)\,d\mu_{i}   
\\  
& = \sum_{i=1}^n \sum_{\mu_i \in E_m} q(\mu_i)\ln q(\mu_i) 
\\  
& = \sum_{i=1}^n \sum_{j=1}^m \pi_{ij}^q \ln \pi_{ij}^q.
\end{split}
\end{align}
In summary, we thus obtain  
\begin{align}
\begin{split}
\mbox{ELBO}(q(\mu_{1:n})) 
&
= \sum_{i=1}^n \sum_{j=1}^m \pi_{ij}^q \ln \pi_j  
+ \sum_{i=1}^n \sum_{j}^m \pi_{ij}^q \ln p(y_i|\mu_i = e_j) 
- \sum_{i=1}^n \sum_{j=1}^m \pi_{ij}^q \ln \pi_{ij}^q  
\\
&
= \sum_{i=1}^n \sum_{j=1}^m \pi_{ij}^q \ln\left(\frac{1}{m}\right)  
+ \sum_{i=1}^n \sum_{j}^m \pi_{ij}^q \ln p(y_i|\mu_i = e_j) 
- \sum_{i=1}^n \sum_{j=1}^m \pi_{ij}^q \ln \pi_{ij}^q  
\\
&
= \sum_{i=1}^n \sum_{j}^m \pi_{ij}^q \ln p(y_i|\mu_i = e_j) 
- \sum_{i=1}^n \sum_{j=1}^m \pi_{ij}^q \ln m   
- \sum_{i=1}^n \sum_{j=1}^m \pi_{ij}^q \ln \pi_{ij}^q  
\\
&
= \sum_{i=1}^n \sum_{j=1}^m \pi_{ij}^q \left(\ln p(y_i|\mu_i = e_j)-\ln m-\ln \pi_{ij}^q\right).  
\end{split}
\end{align} 
Replacing the variational superscripts on $p_{ij}$ with the iteration counter
at convergence $(c)$ yields the ELBO as provided in \eqnref{eq:elbo-p0}.
$\hfill\Box$
\vspace{2mm}

\textit{Proof of \eqnref{eq:q1-mu-i}}
\vspace{2mm}

From \ref{sec:free-form-mean-field-variational-inference} and the definitions of
\eqnref{eq:p1-inf-1} and \eqnref{eq:p1-inf-2} in the current inference scenario, we see that the
CAVI algorithm update for the variational distribution on $\mu_i$ takes the form
\begin{equation}\label{eq:q1-mu-1}
q(\mu_i) := \frac{1}{\gamma_{\mu_i}}
\exp\left(I_{\mu_i}\right)
\mbox{ with }
I_{\mu_i} := 
\int q(\pi)\prod_{\substack{k=1 \\ k\neq i}}^{n} q(\mu_k)\ln p(\pi, \mu_{1:n},y_{1:n}) d\mu_{1:n\backslash i} d\pi,
\end{equation}
where we omitted the iteration superscripts and the hypothesis subscripts for
ease of notation. To show the validity of \eqnref{eq:q1-mu-i}, we rewrite $I_{\mu_i}$ in 
terms of the model indicator conditional log model evidence $\ln p(y_i|\mu_i)$ 
and the variational parameters of $q(\pi)$. 

\vspace{2mm}
\noindent We first note that for $i = 1,...,n$
\begin{align*}
& I_{\mu_i}
\\ 
& = 
\int
q(\pi) 
\prod_{\substack{k=1 \\ k\neq i}}^{n} q(\mu_k) 
\ln p(\pi,\mu_{1:n},y_{1:n}) 
d\mu_{1:n\backslash i}
d\pi 
\\
& = 
\int q(\pi) 
\int
\prod_{\substack{k=1 \\ k\neq i}}^{n} q(\mu_k) 
\ln p(\pi,\mu_{1:n},y_{1:n}) 
d\mu_{1:n\backslash i}
d\pi 
\\
& = 
\int q(\pi) 
\int 
\prod_{\substack{k=1 \\ k\neq i}}^{n} q(\mu_k) 
\ln\left(p(\pi)\prod_{k=1}^{n}p(\mu_k|\pi)p(y_k|\mu_k)\right)
d\mu_{1:n\backslash i}
d\pi  
\\
& = 
\int q(\pi) 
\int\prod_{\substack{k=1 \\ k\neq i}}^{n} q(\mu_k) 
\left(\ln p(\pi)+\sum_{k=1}^{n} \ln p(\mu_k|\pi)+\sum_{k=1}^{n} \ln p(y_k|\mu_k)\right) 
d\mu_{1:n\backslash i}
d\pi  
\\
& = 
\int q(\pi)
\int
\prod_{\substack{k=1 \\ k\neq i}}^{n} q(\mu_k) \ln p(\pi) 
+ \prod_{\substack{k=1 \\ k\neq i}}^{n} q(\mu_k) \sum_{k=1}^{n} \ln p(\mu_k|\pi) 
+ \prod_{\substack{k=1 \\ k\neq i}}^{n} q(\mu_k) \sum_{k=1}^{n} \ln p(y_k|\mu_k) 
\,d\mu_{1:n\backslash i}
\,d\pi  
\\
& = 
\int q(\pi) 
\int \prod_{\substack{k=1 \\ k\neq i}}^{n} q(\mu_k)
\ln p(\pi) 
d\mu_{1:n\backslash i} 
\\
& \quad 
+
\int \prod_{\substack{k=1 \\ k\neq i}}^{n} q(\mu_k) 
\sum_{k=1}^{n} \ln p(\mu_k|\pi) 
d\mu_{1:n\backslash i} 
\\
& \quad 
+
\int \prod_{\substack{k=1 \\ k\neq i}}^{n} q(\mu_k) 
\sum_{k=1}^{n} \ln p(y_k|\mu_k) 
d\mu_{1:n\backslash i}
d\pi  
\\
& = 
\int q(\pi) 
\ln p(\pi) 
\int 
\prod_{\substack{k=1 \\ k\neq i}}^{n} q(\mu_k) 
d\mu_{1:n\backslash i}
d\pi
\\
& \quad 
+
\int q(\pi) 
\int \prod_{\substack{k=1 \\ k\neq i}}^{n} q(\mu_k) \sum_{\substack{k=1 \\ k \neq i}}^{n} \ln p(\mu_k|\pi) 
+ \prod_{\substack{k=1 \\ k\neq i}}^{n} q(\mu_k) \ln p(\mu_i|\pi) 
d\mu_{1:n\backslash i}
d\pi
\\
& \quad 
+
\int q(\pi) 
\int\prod_{\substack{k=1 \\ k\neq i}}^{n} q(\mu_k) 
  \sum_{\substack{k=1 \\ k \neq i}}^{n} \ln p(y_k|\mu_k)
+ \prod_{\substack{k=1 \\ k\neq i}}^{n} q(\mu_k) \ln p(y_i|\mu_i) 
d\mu_{1:n\backslash i}
d\pi
\\
& =
\int q(\pi) \ln p(\pi) d\pi  
\\
& \quad 
+
\int q(\pi)
\left(
\int \prod_{\substack{k=1 \\ k\neq i}}^{n} q(\mu_k) \sum_{\substack{k=1 \\ k \neq i}}^{n} \ln p(\mu_k|\pi) 
d\mu_{1:n\backslash i}
+
\int
\prod_{\substack{k=1 \\ k\neq i}}^{n} q(\mu_k) \ln p(\mu_i|\pi) 
d\mu_{1:n\backslash i}
\right)
d\pi
\\
& \quad 
+ \int q(\pi)
\left(
\int 
\prod_{\substack{k=1 \\ k\neq i}}^{n} q(\mu_k) \sum_{\substack{k=1 \\ k \neq i}}^{n} \ln p(y_k|\mu_k) 
d\mu_{1:n\backslash i}
+ 
\int
\prod_{\substack{k=1 \\ k\neq i}}^{n} q(\mu_k) \ln p(y_i|\mu_i) 
d\mu_{1:n\backslash i}
\right)
d\pi
\\
& =
\int q(\pi) \ln p(\pi) d\pi  
\\
& \quad 
+
\int q(\pi)
\int \prod_{\substack{k=1 \\ k\neq i}}^{n} q(\mu_k) \sum_{\substack{k=1 \\ k \neq i}}^{n} 
\ln p(\mu_k|\pi) 
d\mu_{1:n\backslash i}
d\pi
+ 
\int q(\pi)
\int\prod_{\substack{k=1 \\ k\neq i}}^{n} q(\mu_k) 
\ln p(\mu_i|\pi) 
d\mu_{1:n\backslash i}
d\pi
\\
& \quad 
+
\int q(\pi)
\int
\prod_{\substack{k=1 \\ k\neq i}}^{n} q(\mu_k) \sum_{\substack{k=1 \\ k \neq i}}^{n} 
\ln p(y_k|\mu_k) 
d\mu_{1:n\backslash i}
d\pi
+ 
\int q(\pi)
\int\prod_{\substack{k=1 \\ k\neq i}}^{n} q(\mu_k) 
\ln p(y_i|\mu_i) 
d\mu_{1:n\backslash i}
d\pi
\\
& =
\int q(\pi)
\int\prod_{\substack{k=1 \\ k\neq i}}^{n} q(\mu_k) 
\ln p(\mu_i|\pi) 
d\mu_{1:n\backslash i}
d\pi
+ 
\int q(\pi)
\int\prod_{\substack{k=1 \\ k\neq i}}^{n} q(\mu_k) 
\ln p(y_i|\mu_i) 
d\mu_{1:n\backslash i}
d\pi
+
c 
\\
& =
\int q(\pi) \ln p(\mu_i|\pi) 
\int\prod_{\substack{k=1 \\ k\neq i}}^{n} q(\mu_k) 
d\mu_{1:n\backslash i}
d\pi
+ 
\int q(\pi) \ln p(y_i|\mu_i) 
\int\prod_{\substack{k=1 \\ k\neq i}}^{n} q(\mu_k) 
d\mu_{1:n\backslash i}
d\pi
+
c 
\\
& =
\int q(\pi) \ln p(\mu_i|\pi) \,d\pi
+ 
\int q(\pi) \ln p(y_i|\mu_i) \,d\pi
+
c 
\\
& =
\ln p(y_i|\mu_i) 
+ 
\int q(\pi) \ln p(\mu_i|\pi) \,d\pi
+
c, 
\end{align*}
\noindent where we defined the constant term
\begin{align*}
c 
& := 
\int q(\pi) \ln p(\pi) d\pi 
\\
& \quad 
+
\int q(\pi)
\int\prod_{\substack{k=1 \\ k\neq i}}^{n} q(\mu_k) 
\sum_{\substack{k=1 \\ k \neq i}}^{n} \ln p(\mu_k|\pi) 
d\mu_{1:n\backslash i}
d\pi
\\
& \quad 
+
\int q(\pi)
\int\prod_{\substack{k=1 \\ k\neq i}}^{n} q(\mu_k) \sum_{\substack{k=1 \\ k \neq i}}^{n} 
\ln p(y_k|\mu_k) 
d\mu_{1:n\backslash i}
d\pi,
\end{align*}
\noindent which does not vary with $\mu_i$. 

\vspace{2mm}

\noindent For the remaining integral term, we then obtain
\begin{align}
\begin{split}
\int q(\pi) \ln p(\mu_i|\pi) \,d\pi
& = \int q(\pi) \ln \left(\prod_{j=1}^m \pi_j^{\mu_{ij}}\right)\,d\pi \\
& = \int q(\pi) \sum_{j=1}^m  \mu_{ij} \ln \pi_j \,d\pi\\
& =  \sum_{j=1}^m  \mu_{ij} \int   q(\pi) \ln \pi_j \,d\pi\\
& =  \sum_{j=1}^m  \mu_{ij} \left(\psi\left(\alpha_j^q\right) - \psi\left(\sum_{j=1}^m \alpha_j^q\right) \right)\\
& :=  \sum_{j=1}^m \mu_{ij} \left(\psi\left(\alpha_j^q\right) - \psi\left(\alpha_s^q\right) \right),\\
\end{split}
\end{align}
where $\psi$ denotes the Digamma function. Here, the last equality follows from the 
well-known properties of the Dirichlet distribution. Taken together, we thus
obtain for $\mu_i \in E_m$
\begin{equation}
I(\mu_i)
= \ln p(y_i|\mu_i) + \sum_{j=1}^m \mu_{ij} \left(\psi\left(\alpha_j^q\right) - \psi\left(\alpha_s^q\right) \right)  
= \ln p(y_i|\mu_i) + \psi\left(\alpha_j^q\right) + \psi\left(\alpha_s^q\right),
\end{equation}
where the last equality follows from the fact that for $\mu_i = e_j, j = 1,...,m$ it holds that
$\mu_{ij} = 1$ for $i = j$ $\mu_{ij} = 0$ for $i \neq j$. For the variational update
of $q(\mu_i)$, we thus obtain
\begin{align}
\begin{split}
q(\mu_i) 
& = 
\frac{1}{\gamma_{\mu_i}} \exp\left(I_{\mu_i}\right)
\\
\Leftrightarrow
q(\mu_i) 
& =
\frac{1}{\gamma_{\mu_i}}
\exp\left(\ln p(y_i|\mu_i) + \psi\left(\alpha_j^q\right) -\psi\left(\alpha_s^q\right)\right)
\\
\Leftrightarrow
q(\mu_i) 
& =
\frac{\exp\left(\ln p(y_i|\mu_i) + \psi\left(\alpha_j^q\right) - \psi\left(\alpha_s^q\right)\right)}
     {\sum_{\mu_i\in E_m}\exp\left(\ln p(y_i|\mu_i) + \psi\left(\alpha_j^q\right) - \psi\left(\alpha_s^q\right)\right)}.
\end{split}
\end{align}   
Finally, with $\mu_i \in E_m$, we may re-express the above in Multinoulli distribution
form by defining
\begin{equation}
\pi_{ij}^q 
:= \frac{\exp\left(\ln p(y_i|\mu_i = e_j) + \psi\left(\alpha_j^q\right) - \psi\left(\alpha_s^q\right)\right)}
        {\sum_{\mu_i\in E_m}\exp\left(\ln p(y_i|\mu_i) + \psi\left(\alpha_j^q\right) - \psi\left(\alpha_s^q\right)\right)},
\end{equation}
such that 
\begin{equation}
q(\mu_i) = \prod_{j=1}^m \left(\pi_{ij}^q\right)^{\mu_{i_j}}.
\end{equation}
In summary, we have, by re-introducing the algorithm iteration superscripts instead
of the variational superscript $q$, as well as the hypothesis subscript 
\begin{equation}
q^{(k+1)}_1(\mu_i) 
= \mbox{Mult}\left(\mu_i; \pi_i^{(k+1)}\right),
\mbox{ where }
\pi_i^{(k+1)} := \left(\pi_{ij}^{(k+1)}\right)_{j = 1,...,m}
\end{equation}
with
\begin{equation}
\pi_{ij}^{(k+1)} := 
\frac{\exp\left(\ln p(y_i|\mu_i = e_j) + \psi\left(\alpha_j^{(k)}\right) - \psi\left(\alpha_s^{(k)}\right)\right)}
        {\sum_{\mu_i\in E_m}\exp\left(\ln p(y_i|\mu_i) + \psi\left(\alpha_j^{(k)}\right) - \psi\left(\alpha_s^{(k)}\right)\right)}.
\end{equation}
$\hfill\Box$
\vspace{2mm}

\textit{Proof of \eqnref{eq:q1-pi}}
\vspace{2mm}

From \fullref{sec:free-form-mean-field-variational-inference} and the definitions of
\eqnref{eq:p1-inf-1} and \eqnref{eq:p1-inf-2} in the current inference scenario, we see that the
CAVI algorithm update for the variational distribution on $\pi$ takes the form
\begin{equation}\label{eq:q1-pi-1}
q(\pi) := \frac{1}{\gamma_\pi}
\exp\left(I_\pi\right)
\mbox{ with }
I_\pi := \int \prod_{i = 1}^n q(\mu_i) \ln p(\pi, \mu_{1:n},y_{1:n}) \,d\mu_{1:n},
\end{equation}
where here and in the following, we omit iteration superscripts and hypothesis 
subscripts for ease of notation. To show the validity of \eqnref{eq:q1-pi}, we proceed in two steps.
In Step (1), we first rewrite $I_\pi$ in terms of the variational probabilities 
\begin{equation}\label{eq:q1-pi-2}
\pi_{ij}^q := q(\mu_i) \mbox{ for } \mu_i = e_j \mbox{ with } e_j \in E_m,
\end{equation}
i.e., the variational probability of the $i$th participant model indicator corresponding
to the $j$th $m$-dimensional canonical unit vector. In Step (2), we then
substitute the resulting form of $I_\pi$ in \eqnref{eq:q1-pi-1} and derive the Dirichlet
form for the variational distribution on $\pi$.

\vspace{2mm}
\noindent \textit{Step (1)}
\vspace{2mm}

\noindent We first note that

\begin{align}
\begin{split}
& I_\pi
\\
&  := 
\int \prod_{i = 1}^n q(\mu_i) \ln p(\pi, \mu_{1:n},y_{1:n}) \,d\mu_{1:n} 
\\
& 
= \int \prod_{i = 1}^n q(\mu_i) \ln \left(p(\pi)\prod_{i=1}^{n}p(\mu_i|\pi)p(y_i|\mu_i) \right) \,d\mu_{1:n}
\\
& 
= \int \prod_{i = 1}^n q(\mu_i) \left(\ln p(\pi) + \sum_{i=1}^{n} \ln p(\mu_i|\pi) + \sum_{i=1}^n \ln p(y_i|\mu_i) \right) \,d\mu_{1:n}
\\
& 
=  \int \prod_{i = 1}^n q(\mu_i)\ln p(\pi)\,d\mu_{1:n} 
+  \int \prod_{i = 1}^n q(\mu_i)\sum_{i=1}^{n} \ln p(\mu_i|\pi)\,d\mu_{1:n}  
+  \int \prod_{i = 1}^n q(\mu_i)\sum_{i=1}^n \ln p(y_i|\mu_i)\,d\mu_{1:n}
\\
& 
=  \ln p(\pi) \int \prod_{i = 1}^n q(\mu_i)\,d\mu_{1:n} 
+  \sum_{i=1}^{n} \int q(\mu_i) \ln p(\mu_i|\pi)\,d\mu_i
+  c_1
\\
&
=  \ln p(\pi) 
+  \sum_{i=1}^{n} \int q(\mu_i) \ln p(\mu_i|\pi)\,d\mu_i
+  c_1,
\end{split}
\end{align}
where we have defined the constant term
\begin{equation}
c_1 := \int \prod_{i = 1}^n q(\mu_i)\sum_{i=1}^n \ln p(y_i|\mu_i)\,d\mu_{1:n},
\end{equation}
which does not depend on $\pi$. 
\vspace{2mm}

\noindent Substituting the functional forms of $p(\pi)$ and $p(\mu_i|\pi)$ yields 
\begin{align}
\begin{split}
I_\pi
& 
=  \ln \left(\frac{\Gamma\left(\sum_{j = 1}^{m} \alpha_j \right)}{\prod_{j = 1}^{m} \Gamma(\alpha_j)}
\prod_{j=1}^{m} \pi_j^{\alpha_j - 1}\right) 
+  \sum_{i=1}^{n} \int q(\mu_i) \ln \left(\prod_{j=1}^{m} \pi_j^{\mu_{ij}}\right)\,d\mu_i
+  c_1
\\
& =  \ln \left(\prod_{j=1}^{m} \pi_j^{\alpha_j - 1}\right) 
+  \sum_{i=1}^{n} \int q(\mu_i) \left(\sum_{j=1}^m \mu_{ij} \ln \pi_j \right) \,d\mu_i
+  c_1 + c_2
\\
& = \sum_{j=1}^m (\alpha_j - 1)\ln \pi_j 
+ \sum_{i=1}^{n} \int q(\mu_i) \left(\sum_{j=1}^m \mu_{ij} \ln \pi_j \right)\,d\mu_i 
+ c_3,
\end{split}
\end{align}
where we defined the additional constant terms
\begin{equation}
c_2 :=  \ln \left(\frac{\Gamma\left(\sum_{j = 1}^{m} \alpha_j \right)}{\prod_{j = 1}^{m} \Gamma(\alpha_j)}\right)
\mbox{ and }
c_3 := c_1 + c_2,
\end{equation}
which do not vary with $\pi$. We next consider the remaining integral term 
in further detail. Because $\mu_i$ is by definition an element of the set of 
$m$-dimensional canonical unit vectors $E_m$, we have
\begin{equation}
\int q(\mu_i) \left(\sum_{j=1}^m \mu_{ij} \ln \pi_j \right) \,d\mu_i
= \sum_{\mu_i \in E_m} q(\mu_i) \sum_{j=1}^m \mu_{ij} \ln \pi_j  
= \sum_{\mu_i \in E_m} q(\mu_i) \ln \pi_j,  
\end{equation}
where the last equality follows from the fact that $\mu_{ij} = 0$ for $j \neq i$ and $\mu_{ij} = 1$ for $j = i$. Next, let 
\begin{equation}
\pi_{ij}^q := q(\mu_i) \mbox{ for } \mu_i = e_j \mbox{ with } e_j \in E_m,
\end{equation}
such that $\pi_{ij}^q$ denotes the probability of $\mu_i$ being the $j$th $m$-dimensional
canonical unit vector. Then,
\begin{equation}
\int q(\mu_i) \left(\sum_{j=1}^m \mu_{ij} \ln \pi_j \right) \,d\mu_i
= \sum_{j = 1}^m \pi_{ij}^q \ln \pi_j.  
\end{equation}
We thus obtain
\begin{align}
\begin{split}
I_\pi 
& = \sum_{j=1}^m (\alpha_j - 1)\ln \pi_j  + \sum_{i=1}^{n}\sum_{j = 1}^m \pi_{ij}^q \ln \pi_j   
+ c_3
\\
& = \sum_{j=1}^m (\alpha_j - 1)\ln \pi_j  + \sum_{j=1}^m\sum_{i = 1}^n \pi_{ij}^q \ln \pi_j  
+ c_3 
\\
& = \sum_{j=1}^m\left((\alpha_j - 1)\ln \pi_j  + \sum_{i = 1}^n \pi_{ij}^q \ln \pi_j\right)
+ c_3
\\
& = \sum_{j=1}^m \left(\alpha_j  + \sum_{i = 1}^n \pi_{ij}^q -1 \right) \ln \pi_j 
+ c_3
\\
& = \sum_{j=1}^m \left(\alpha_j  + \beta_j -1 \right) \ln \pi_j 
+ c_3,
\end{split}
\end{align}
where we defined
\begin{equation}
\beta_j :=  \sum_{i = 1}^n \pi_{ij}^q.
\end{equation}

\vspace{2mm}
\noindent \textit{Step (2)}
\vspace{2mm}

\noindent Substitution of $I_\pi$ in \eqnref{eq:q1-pi-1} yields

\begin{align}
\begin{split}
q(\pi) & = \frac{1}{\gamma_\pi} \exp(I_\pi) \\
\Leftrightarrow
\ln q(\pi) & = \ln(\exp(I_\pi)) - \ln \gamma_\pi \\
\Leftrightarrow
\ln q(\pi) & = I_\pi - \ln \gamma_\pi \\
\Leftrightarrow
\ln q(\pi) & = \sum_{j=1}^m \left(\alpha_j  + \beta_j -1 \right) \ln \pi_j + c_3 - \ln \gamma_\pi \\
\Leftrightarrow
\ln q(\pi) & = \sum_{j=1}^m \left(\alpha_j  + \beta_j -1 \right) \ln \pi_j + c_4, \\
\end{split}
\end{align}
where we defined 
\begin{equation}
c_4 := c_3 -\ln \gamma_\pi.
\end{equation}
We thus see that the logarithm of the variational distribution $q(\pi)$ is of the form
of the logarithm of the functional kernel of a Dirichlet distribution with parameters 
\begin{equation}
\alpha^q_j := \alpha_j + \beta_j \mbox{ for } j = 1,...,m.
\end{equation}
As the normalization constant for this kernel is well-known as the normalization constant 
of the Dirichlet distribution, we thus have  
\begin{equation}
q(\pi) = \frac{\Gamma\left(\sum_{j = 1}^{m} \alpha_j^q \right)}{\prod_{j = 1}^{m} \Gamma(\alpha_j^q)}\prod_{j=1}^{m} \pi_j^{\alpha_j^q - 1}
= \mbox{Dir}(\pi;\alpha^q). 
\end{equation}
In summary, we have, now re-introducing the algorithm iteration superscripts in lieu
of the variational superscript $q$ as well as the hypothesis subscript, 
\begin{equation}
q^{(k+1)}_1(\pi) := \mbox{Dir}\left(\pi, \alpha^{(k+1)}\right),
\end{equation}
where
\begin{equation}
\alpha^{(k+1)}_j := \alpha_j + \beta_j^{(k)} 
\mbox{ with } 
\beta_j^{(k)} := \sum_{i=1}^n \pi_{ij}^{(k)}.
\end{equation}
$\hfill\Box$
\vspace{2mm}

\textit{Proof of \eqnref{eq:elbo-p1}} 
\vspace{2mm}
The ELBO in the current scenario is given by
\begin{equation}\label{eq:elbo-p1-a}
\mbox{ELBO}(q(\pi,\mu_{1:n})) 
:= \int q(\pi,\mu_{1:n})\ln \left(\frac{p(\pi,\mu_{1:n},y_{1:n})}{q(\pi,\mu_{1:n})}\right)\,d\mu_{1:n}d\pi.   
\end{equation}
In the following, we first rewrite the right-hand side of \eqnref{eq:elbo-p1-a} as a
sum of five integral terms. We then consider these integral terms and finally
assemble the form provided in Equation \eqnref{eq:elbo-p1}. Thus, we first note that
\begin{align*}
& \mbox{ELBO}(q(\pi,\mu_{1:n})) 
\\
& 
:= \int q(\pi,\mu_{1:n})\ln \left(\frac{p(\pi,\mu_{1:n},y_{1:n})}{q(\pi,\mu_{1:n})}\right)\,d\mu_{1:n}d\pi                           
\\
&  
= \int q(\pi,\mu_{1:n})\left( \ln p(\pi,\mu_{1:n},y_{1:n}) - \ln q(\pi,\mu_{1:n})\right)\,d\mu_{1:n}d\pi                            
\\
&  
= \int q(\pi,\mu_{1:n})\ln p(\pi,\mu_{1:n},y_{1:n})\,d\mu_{1:n}d\pi  
\\
& \quad 
-\int q(\pi,\mu_{1:n})\ln q(\pi,\mu_{1:n})\,d\mu_{1:n}d\pi                                                                      
\\
&  
= \int q(\pi)\int\prod_{i=1}^n q(\mu_i)\ln \left(p(\pi)\prod_{i=1}^{n}p(\mu_i|\pi)p(y_i|\mu_i) \right)\,d\mu_{1:n}d\pi  
\\
& \quad 
- \int q(\pi)\int\prod_{i=1}^n q(\mu_i)\ln \left(q(\pi)\prod_{i=1}^n q(\mu_i)\right)\,d\mu_{1:n}d\pi                             
\\
&  
= \int q(\pi)\int\prod_{i=1}^n q(\mu_i)\left(\ln p(\pi) 
+ \sum_{i=1}^{n} \ln p(\mu_i|\pi) + \sum_{i=1}^n \ln p(y_i|\mu_i) \right)\,d\mu_{1:n}d\pi   
\\
& \quad 
- \int q(\pi)\int\prod_{i=1}^n q(\mu_i)\left(\ln q(\pi) + \sum_{i=1}^n \ln q(\mu_i)\right)\,d\mu_{1:n}d\pi    
\\
&  
= \int q(\pi)\int\prod_{i=1}^n q(\mu_i)\ln p(\pi) \,d\mu_{1:n}d\pi    
\\
& \quad
+ \int q(\pi)\int\prod_{i=1}^n q(\mu_i)\sum_{i=1}^{n} \ln p(\mu_i|\pi)\,d\mu_{1:n}d\pi  
\\
& \quad
+ \int q(\pi)\int\prod_{i=1}^n q(\mu_i)\sum_{i=1}^n \ln p(y_i|\mu_i)\,d\mu_{1:n}d\pi  
\\
& \quad 
- \int q(\pi)\int\prod_{i=1}^n q(\mu_i)\ln q(\pi)\,d\mu_{1:n}d\pi    
\\
& \quad
- \int q(\pi)\int\prod_{i=1}^n q(\mu_i)\sum_{i=1}^n \ln q(\mu_i)\,d\mu_{1:n}d\pi
\\
&  
= \int q(\pi) \ln p(\pi) \int\prod_{i=1}^n q(\mu_i) \,d\mu_{1:n}d\pi 
\\
& \quad
+ \int q(\pi)\int\prod_{i=1}^n q(\mu_i)\sum_{i=1}^{n} \ln p(\mu_i|\pi)\,d\mu_{1:n}d\pi 
\\
& \quad
+ \int\prod_{i=1}^n q(\mu_i)\sum_{i=1}^n \ln p(y_i|\mu_i)\,d\mu_{1:n}\int q(\pi)\,d\pi  
\\
& \quad 
- \int q(\pi)\ln q(\pi)\int\prod_{i=1}^n q(\mu_i)\,d\mu_{1:n}d\pi   
\\
& \quad
- \int\prod_{i=1}^n q(\mu_i)\sum_{i=1}^n \ln q(\mu_i)\,d\mu_{1:n} \int q(\pi)\,d\pi   
\\
&  
= \int q(\pi) \ln p(\pi) \cdot 1 d\pi   
\\
& \quad
+ \int q(\pi)\int\prod_{i=1}^n q(\mu_i)\sum_{i=1}^{n} \ln p(\mu_i|\pi)\,d\mu_{1:n}d\pi  
\\
& \quad
+ \int\prod_{i=1}^n q(\mu_i)\sum_{i=1}^n \ln p(y_i|\mu_i)\,d\mu_{1:n} \cdot 1
\\
& \quad
- \int q(\pi)\ln q(\pi)\cdot 1 d\pi
\\
& \quad
- \int\prod_{i=1}^n q(\mu_i)\sum_{i=1}^n \ln q(\mu_i)\,d\mu_{1:n} \cdot 1
\\
& \quad 
= \int q(\pi) \ln p(\pi) d\pi
\\
& \quad    
+ \int q(\pi)\int\prod_{i=1}^n q(\mu_i)\sum_{i=1}^{n} \ln p(\mu_i|\pi)\,d\mu_{1:n}d\pi  
\\
& \quad
+ \int\prod_{i=1}^n q(\mu_i)\sum_{i=1}^n \ln p(y_i|\mu_i)\,d\mu_{1:n} 
\\
& \quad
- \int q(\pi)\ln q(\pi) d\pi    
\\
& \quad 
- \int\prod_{i=1}^n q(\mu_i)\sum_{i=1}^n \ln q(\mu_i)\,d\mu_{1:n}.
\end{align*}
We next consider the remaining integral terms in turn. To this end, we use the 
definition of
\begin{equation}
\pi_{ij}^q := q(\mu_i) \mbox{ for } \mu_i = e_j \mbox{ with } e_j \in E_m.
\end{equation}
repeatedly. For the first term, we have
\begin{align}
\begin{split}
T_1 
&
:= 
\int q(\pi) \ln p(\pi) d\pi
\\
&
= \int q(\pi) \ln\left(\frac{\Gamma\left(\sum_{j = 1}^{m} \alpha_j \right)}{\prod_{j = 1}^{m} \Gamma(\alpha_j)}\prod_{j=1}^{m} \pi_j^{\alpha_j - 1} \right)\,d\pi
\\
&
= 
\int q(\pi) 
\left(
  \ln \Gamma\left(\sum_{j = 1}^{m} \alpha_j \right)
- \sum_{j=1}^m \ln  \Gamma(\alpha_j)
+ \sum_{j=1}^m (\alpha_j - 1) \ln \pi_j
\right)
\,d\pi
\\
&
= 
  \int q(\pi) \ln \Gamma\left(\sum_{j = 1}^{m} \alpha_j \right) \,d\pi
- \int q(\pi) \sum_{j=1}^m \ln  \Gamma(\alpha_j) \,d\pi
+ \int q(\pi) \sum_{j=1}^m (\alpha_j - 1) \ln \pi_j\,d\pi
\\
&
= 
\ln \Gamma\left(\sum_{j = 1}^{m} \alpha_j \right)
- \sum_{j=1}^m \ln  \Gamma(\alpha_j)
+ \sum_{j=1}^m (\alpha_j - 1) \int q(\pi)\ln \pi_j\,d\pi
\\
&
= 
\ln \Gamma\left(\sum_{j = 1}^{m} \alpha_j \right)
- \sum_{j=1}^m \ln  \Gamma(\alpha_j)
+ \sum_{j=1}^m (\alpha_j - 1) \left(\psi(\alpha_j^q) - \psi(\alpha_s^q)\right).
\end{split}
\end{align}
For the second term, we have
\begin{align}
\begin{split}
T_2 
&
:= 
\int q(\pi)\int\prod_{i=1}^n q(\mu_i)\sum_{i=1}^{n} \ln p(\mu_i|\pi)\,d\mu_{1:n}d\pi 
\\
& 
= \int q(\pi) \sum_{i=1}^n \int q(\mu_i)\ln p(\mu_i|\pi)\,d\mu_i d\pi 
\\
& 
= \int q(\pi) \sum_{i=1}^n \int q(\mu_i)\ln \left(\prod_{i=1}^m \pi_j^{\mu_{ij}}\right)\,d\mu_i d\pi 
\\
& 
= \int q(\pi) \sum_{i=1}^n \int q(\mu_i) \left(\sum_{i=1}^m \mu_{ij} \ln \pi_j \right)\,d\mu_i d\pi 
\\
& 
= \int q(\pi) \sum_{i=1}^n \sum_{\mu_i \in E_m} q(\mu_i)\sum_{i=1}^m \mu_{ij} \ln \pi_j d\pi 
\\
& 
= \int q(\pi) \sum_{i=1}^n \sum_{j=1}^m \pi_{ij}^q \ln \pi_j\,d\pi 
\\
& 
= \sum_{i=1}^n \sum_{j=1}^m \pi_{ij}^q \int q(\pi) \ln \pi_j\,d\pi 
\\
& 
= \sum_{i=1}^n \sum_{j=1}^m \pi_{ij}^q \left(\psi(\alpha_j^q) - \psi(\alpha_s^q)\right).
\end{split}
\end{align}
For the third term, we have
\begin{align}
\begin{split}
T_3 
& 
:= 
\int\prod_{i=1}^n q(\mu_i)\sum_{i=1}^n \ln p(y_i|\mu_i)\,d\mu_{1:n}   
\\
& 
= 
\sum_{i=1}^n \int q(\mu_i)\ln p(y_i|\mu_i)\,d\mu_i 
\\
& 
= 
\sum_{i=1}^n \sum_{\mu_i \in E_m} q(\mu_i) \ln p(y_i|\mu_i)
\\
& 
= 
\sum_{i=1}^n \sum_{j}^m \pi_{ij}^q l_{ij},
\end{split}
\end{align}
where we defined the log model evidence terms
\begin{equation}
l_{ij} := \ln p(y_i|\mu_i)
\mbox{ for } \mu_i = e_j \mbox{ with } e_j \in E_m.
\end{equation}
For the fourth term, we have
\begin{align*}
T_4
&
= 
\int q(\pi)\ln q(\pi) d\pi   
\\
&
= \int q(\pi) \ln\left(\frac{\Gamma\left(\sum_{j=1}^m\alpha_j^q \right)}{\prod_{j=1}^{m}\Gamma(\alpha_j^q)}\prod_{j=1}^{m} \pi_j^{\alpha_j^q-1} \right)\,d\pi
\\
&
= 
\int q(\pi) 
\left(
  \ln \Gamma\left(\sum_{j = 1}^{m}\alpha_j^q\right)
- \sum_{j=1}^m\ln\Gamma(\alpha_j^q)
+ \sum_{j=1}^m(\alpha_j^q-1)\ln \pi_j
\right)
\,d\pi
\\
&
= 
  \int q(\pi)\ln \Gamma\left(\sum_{j=1}^m\alpha_j^q\right) \,d\pi
- \int q(\pi)\sum_{j=1}^m \ln\Gamma(\alpha_j^q) \,d\pi
+ \int q(\pi)\sum_{j=1}^m(\alpha_j^q- 1)\ln \pi_j\,d\pi
\\
&
= 
\ln \Gamma\left(\sum_{j = 1}^{m} \alpha_j \right)
- \sum_{j=1}^m \ln  \Gamma(\alpha_j)
+ \sum_{j=1}^m (\alpha_j - 1) \int q(\pi)\ln \pi_j\,d\pi
\\
&
= 
\ln \Gamma\left(\sum_{j = 1}^{m} \alpha_j^q \right)
- \sum_{j=1}^m \ln \Gamma(\alpha_j^q)
+ \sum_{j=1}^m (\alpha_j^q - 1) \left(\psi(\alpha_j^q) - \psi(\alpha_s^q)\right).
\end{align*}
Finally, for the fifth term, we have
\begin{align*}
T_5  
& = \int\prod_{i=1}^n q(\mu_i)\sum_{i=1}^n \ln q(\mu_i)\,d\mu_{1:n}  
\\  
& = \sum_{i=1}^n \int q(\mu_i)\ln q(\mu_i)\,d\mu_{i}   
\\  
& = \sum_{i=1}^n \sum_{\mu_i \in E_m} q(\mu_i)\ln q(\mu_i) 
\\  
& = \sum_{i=1}^n \sum_{j=1}^m \pi_{ij}^q \ln \pi_{ij}^q.    
\end{align*}
In summary, we thus obtain
\begin{align*}
\mbox{ELBO}(q(\pi,\mu_{1:n})) 
&
= 
  \ln \Gamma\left(\sum_{j = 1}^{m} \alpha_j \right)
- \sum_{j=1}^m \ln  \Gamma(\alpha_j)
+ \sum_{j=1}^m (\alpha_j - 1) \left(\psi(\alpha_j^q) - \psi(\alpha_s^q)\right)
\\
&
\quad + \sum_{i=1}^n \sum_{j=1}^m \pi_{ij}^q \left(\psi(\alpha_j^q) - \psi(\alpha_s^q)\right)
\\
&
\quad + \sum_{i=1}^n \sum_{j}^m \pi_{ij}^q l_{ij}
\\
&
\quad - \ln \Gamma\left(\sum_{j = 1}^{m} \alpha_j^q \right)
      + \sum_{j=1}^m \ln \Gamma(\alpha_j^q)
      - \sum_{j=1}^m (\alpha_j^q - 1) \left(\psi(\alpha_j^q) - \psi(\alpha_s^q)\right)
\\
&
\quad - \sum_{i=1}^n \sum_{j=1}^m \pi_{ij}^q \ln \pi_{ij}^q. 
\end{align*}
\eqnref{eq:elbo-p1} then follows immediately by replacing the variational superscripts
with the iteration of convergence superscript $(c)$.

\end{appendices}
\end{document}